\documentclass[11pt]{article}
\usepackage[margin=1in]{geometry}
\usepackage{amsmath,amssymb,amsthm,mathtools}
\usepackage{enumitem,microtype,xcolor}
\usepackage{graphicx,float}
\usepackage{aliascnt}
\usepackage[colorlinks=true,linkcolor=blue,citecolor=blue,urlcolor=blue]{hyperref}
\usepackage[nameinlink,noabbrev,capitalise]{cleveref}
\newtheorem{theorem}{Theorem}[section]
\newaliascnt{lemma}{theorem}
\newtheorem{lemma}[lemma]{Lemma}
\aliascntresetthe{lemma}
\newaliascnt{proposition}{theorem}
\newtheorem{proposition}[proposition]{Proposition}
\aliascntresetthe{proposition}
\newaliascnt{corollary}{theorem}
\newtheorem{corollary}[corollary]{Corollary}
\aliascntresetthe{corollary}
\theoremstyle{definition}
\newaliascnt{definition}{theorem}
\newtheorem{definition}[definition]{Definition}
\aliascntresetthe{definition}
\newaliascnt{conjecture}{theorem}
\newtheorem{conjecture}[conjecture]{Conjecture}
\aliascntresetthe{conjecture}
\theoremstyle{remark}
\newaliascnt{remark}{theorem}

\aliascntresetthe{remark}
\Crefname{theorem}{Theorem}{Theorems}
\Crefname{lemma}{Lemma}{Lemmas}
\Crefname{proposition}{Proposition}{Propositions}
\Crefname{corollary}{Corollary}{Corollaries}
\Crefname{definition}{Definition}{Definitions}
\Crefname{conjecture}{Conjecture}{Conjectures}
\Crefname{remark}{Remark}{Remarks}
\newcommand{\R}{\mathbb R}
\newcommand{\E}{\mathbb E}
\newcommand{\PP}{\mathbb P}
\newcommand{\Cov}{\operatorname{Cov}}
\newcommand{\Var}{\operatorname{Var}}
\newcommand{\tr}{\operatorname{tr}}
\newcommand{\vol}{\operatorname{vol}}
\newcommand{\Unif}{\operatorname{Unif}}
\newcommand{\TV}{\operatorname{TV}}
\newcommand{\Id}{I_n}
\newcommand{\bn}{B_2^n}
\newcommand{\norm}[1]{\left\lVert #1\right\rVert}
\newcommand{\ip}[2]{\langle #1,#2\rangle}
\newcommand{\op}{\mathrm{op}}
\newcommand{\Ot}{\widetilde O}
\newcommand{\Omegat}{\widetilde\Omega}

\numberwithin{equation}{section}
\title{Near-Optimal Oracle Bounds for Isotropic Rounding}

\author{
Yang P. Liu\\
\texttt{yangl7@andrew.cmu.edu}\\
Carnegie Mellon University
\and
Richard Peng\\
\texttt{yangp@cs.cmu.edu}\\
Carnegie Mellon University
\and
Alicia Stepin\\
\texttt{astepin@andrew.cmu.edu}\\
Carnegie Mellon University
\and
Colin Tang\\
\texttt{cstang@andrew.cmu.edu}\\
Carnegie Mellon University
}
\begin{document}
\maketitle
\begin{abstract}
We give optimal bounds, up to polylogarithmic factors, for rounding convex
bodies to near-isotropic position. A convex body
$B(0,r)\subseteq K\subseteq B(0,R)$ in $\R^n$ can be rounded using
$\Ot(n^3)$ membership queries; the upper bound extends to logconcave
distributions. We prove a matching $\Omegat(n^3)$ lower bound, even when
$R/r=n^{O(1)}$. The key lemma states that if $B(0,1)\subseteq K$ and
$\Cov(\Unif(K))\preceq\kappa I_n$, with $\kappa\ge1$, then approximate
uniform sampling from an initial density bounded by a constant times the
uniform density uses
$\Ot(n^2\sqrt\kappa)$ expected membership queries.
We prove this by simulating reflected kinetic dynamics using the analysis
of Eberle and L\"orler~\cite{EL26:journal}.
Combining this sampler with the ideas of Jia, Laddha, Lee, and
Vempala~\cite{JLLV26:journal} for rounding well-rounded bodies yields our
$\Ot(n^3)$-query rounding algorithm.
We also give a counterexample to an ellipsoid-growth conjecture from previous papers on this topic.
\end{abstract}

\section{Introduction}
Isotropic rounding is the problem of finding an affine change of
coordinates that centers a distribution and makes its variance comparable
in every direction. This normalization is used by algorithms for
sampling and volume estimation, whose query bounds depend on the
geometry of the input. For a convex body given by a membership oracle,
the algorithm must find these coordinates using only tests of whether
queried points belong to the body.

We prove nearly cubic upper and lower bounds for the number of queries
needed for isotropic rounding. The upper bound also applies to logconcave
distributions accessed through an evaluation oracle for a convex potential.
The key idea for the upper bound is to combine improved sampling based on
the relaxation estimate of Eberle and L\"orler~\cite{EL26:journal} with the ideas
of Jia, Laddha, Lee, and Vempala for rounding well-rounded
bodies~\cite{JLLV26:journal}. We implement reflected kinetic dynamics using
membership queries and use the resulting sampler in the rounding algorithm.
We first state the results, then explain this combination and its
consequence for volume estimation.

\subsection{Main results}
We specify the affine normalization and give matching bounds up to
polylogarithmic factors for convex bodies. We then state the extension
to logconcave distributions.

For a full-dimensional probability measure $\mu$ on $\R^n$ with finite
second moments and $X\sim\mu$, write $m_\mu=\E X$ and
$A_\mu=\Cov(X)=\E[(X-m_\mu)(X-m_\mu)^\top]$.
The measure is isotropic if $m_\mu=0$ and $A_\mu=I_n$, where $I_n$ is the
identity matrix. We ask for an approximate normalization with fixed
constants. In the definition below, $A\preceq B$ means that $B-A$ is
positive semidefinite, and $\|x\|_2=(\sum_i x_i^2)^{1/2}$ is the
Euclidean norm.

\begin{definition}[Near-isotropic rounding]\label{def:rounding}
A pair $(T,z)$, with $T\in\R^{n\times n}$ invertible and $z\in\R^n$,
rounds $\mu$ if $Y=T(X-z)$, for $X\sim\mu$, satisfies
\begin{equation}\label{eq:rounding}
 \tfrac12\Id\preceq\Cov(Y)\preceq2\Id
 \text{ and }\norm{\E Y}_2\le\tfrac12.
\end{equation}
\end{definition}

For a convex body $K$, the target is its uniform probability measure
$\Unif(K)$. In addition to the membership oracle, the algorithm is given
radii $0<r\le R$ satisfying $r\bn\subseteq K\subseteq R\bn$, where
$B_2^n=\{x\in\R^n:\|x\|_2\le1\}$ is the Euclidean unit ball.

Our upper bound is $\Ot(n^3)$ queries, where
$\Ot(f)$ denotes $O(f\log^C(2+nR/r))$ for some absolute constant $C>0$.
We use $\Omegat(f)$ for $\Omega(f/\log^C(2+nR/r))$.
Thus the upper bound may be multiplied by a fixed power of a logarithm,
and the lower bound may be divided by one.
\Cref{sec:preliminaries} gives the convention for auxiliary parameters.

The first theorem applies to every convex body satisfying the supplied
inner and outer radius bounds.
\begin{theorem}[Convex-body upper bound]\label{thm:upper-body}
There is a randomized membership-oracle algorithm that, given
$r\bn\subseteq K\subseteq R\bn$, returns a pair $(T,z)$ satisfying
\eqref{eq:rounding} with probability at least $5/6$ using
$\Ot(n^3)$ queries.
\end{theorem}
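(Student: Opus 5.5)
The plan follows the abstract: the key sampling lemma is combined with the JLLV scheme for rounding well-rounded bodies. The sampling lemma says that if $B(0,1)\subseteq K$ and $\Cov(\Unif(K))\preceq\kappa I_n$, then sampling from a warm start costs $\Ot(n^2\sqrt\kappa)$ queries.

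\textbf{Phase 1: reduce to a well-rounded body.} First I reduce to the case where the body is well rounded, meaning $\Cov(\Unif(K))\preceq \kappa I_n$ with $\kappa=\Ot(n)$ after translating so that $B(0,1)\subseteq K$. I would do this by annealing, as in standard rounding schemes. Start from $K_0=K\cap rB_2^n$, rescaled, which is trivially rounded. Then define $K_i=K\cap (1+1/n)^i r\bn$ for $i$ up to $O(n\log(R/r))$. Instead of rounding at every step, I would use the JLLV idea of growing the ball only while the current affine image stays well rounded. Warm starts transfer between consecutive bodies because $\vol(K_{i+1})\le e\vol(K_i)$, so the density ratio stays bounded by a constant. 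Each phase of doubling the radius in the current coordinates uses $\Ot(n)$ samples to re-estimate the covariance. Each sample costs $\Ot(n^2\sqrt\kappa)$.

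\textbf{Phase 2: lower $\kappa$.} Naively this gives $n\cdot n^2\sqrt n=n^{3.5}$ per round, so the JLLV refinement is needed here. JLLV round a well-rounded body, one with $\E\|X\|^2=O(n)$ or equivalently $\kappa$ large only in few directions, by repeatedly identifying the directions of large variance and scaling them down. The counting of directions then shows that the total cost sums to $\Ot(n^3)$. Concretely, at stage $j$ the current covariance has at most $n/2^j$ eigenvalues above $2^j$. Rather than applying the lemma's bound directly with $\kappa=2^j$, I would instead grow an ellipsoid. To keep $\kappa=O(1)$ in the lemma, I would intersect $K$ with an ellipsoid $E_j$ whose axes in the large directions are capped. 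The sampler then runs on $K\cap E_j$ with $\kappa=O(1)$ at cost $\Ot(n^2)$ per sample. The caps are raised geometrically, each raise preserving a constant-factor warm start. Each stage needs $\Ot(n)$ samples to estimate the top eigenspace via matrix concentration for logconcave vectors, for example Adamczak et al. Each stage has $\Ot(1)$ substeps, and there are $O(\log(nR/r))$ stages. This totals $\Ot(n)\cdot\Ot(n^2)=\Ot(n^3)$.

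\textbf{Finishing, and the main obstacle.} After the final stage I take $z$ as the empirical mean and $T$ as the inverse square root of the empirical covariance of $\Ot(n)$ samples. By logconcave covariance concentration this satisfies \eqref{eq:rounding} with constant probability. A union bound over the polylogarithmically many failure events of the stages gives success probability $5/6$, after boosting each stage by repetition or by taking a median of estimators. The main obstacle is the Phase 2 invariant. I must ensure that each time the truncating ellipsoid is enlarged, the new body $K\cap E'$ still has covariance $O(1)$ in the current coordinates, and that the sampler's expected query bound (not a worst-case bound) composes correctly. To handle the expectation I would use Markov's inequality with a budget cutoff. For the covariance invariant I would first try the JLLV argument: enlarging only directions whose variance is saturated at the cap keeps $\kappa$ constant, since logconcave marginals have exponentially decaying tails beyond their standard deviation.
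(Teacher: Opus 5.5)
Your Phase~2 depends on an invariant that is false, and that step is what carries the $\Ot(n^3)$ count. You need that after the truncating ellipsoid is enlarged by a constant factor (in the ``saturated'' directions), the body $K\cap E'$ still has covariance $O(1)$ in the current coordinates. This is the ellipsoid-growth conjecture of Jia, Laddha, Lee, and Vempala, and the paper refutes it. \Cref{thm:ellipsoid-growth-counterexample} gives a symmetric $K$ and an ellipsoid $E$ with $K\subseteq e^\lambda E$ such that the variance along some direction $v$ is larger by a factor $\Omega((n\lambda)^2)$ in $K\cap e^\lambda E$ than in $K\cap E$. Writing $A_1,A_2$ for the two covariances, the top eigenvalue of $A_1^{-1/2}A_2A_1^{-1/2}$ is at least $v^{\mathsf T}A_2v/v^{\mathsf T}A_1v$. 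So in the coordinates that make $K\cap E$ isotropic, one constant-factor step already gives $\kappa=\Omega(n^2)$. In that example $E$ is effectively a cylinder with axis $v$. The enlargement therefore acts only on the $n-1$ directions in which the cap binds, and the blow-up occurs in the uncapped direction. Your tail heuristic controls only the mass added along enlarged directions. It says nothing about how slices in other directions are reweighted: a factor $1+\lambda$ in the cross-section radii changes slice volumes by factors $e^{\Theta(n\lambda)}$, which flattens the profile along $v$ from width about $1/(n\lambda)$ to width one. The same example shows that doubling the radius in rounded coordinates, as in your Phase~1, need not even preserve well-roundedness. A sanity check points the same way: if $\kappa=O(1)$ survived constant-factor steps, the existing $\Ot(n^2\kappa)$ samplers would already give $\Ot(n^3)$. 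Your Phase~2 accounting never uses the square-root dependence on $\kappa$.

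What you are missing is an estimate that survives a constant-factor step, together with an accounting that tolerates $\kappa$ of order $n^2$. The paper does this in three steps. First, it cools $\mu_t\propto e^{-t\|x-x_0\|_2^2/2}\mathbf 1_K$ by halving $t$ and bounds $D_2(\mu_{t/2}\Vert\mu_t)=O(n\log(2+nR/r))$ by comparing normalizers (\Cref{cor:half-temperature-bridge}). Second, it converts this through centroid bodies into $\sqrt{\lambda_k}\le C(n+D)/k$ for the new covariance eigenvalues, where $D$ is that divergence (\Cref{lem:nuclear-transfer}). Hence in the previous coordinates the new target satisfies only $\tr\sqrt A+\|m\|_2=\Ot(n)$ (\Cref{lem:cooling-transition}), which permits one eigenvalue of order $n^2$, as in the counterexample. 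Third, \Cref{prop:logconcave-inner-rounder} runs subspace growing with a certified ball $B(z_i,\rho_i)$ whose radius nearly doubles at each stage. A stage needs only $\Ot(\rho_i)$ samples, each costing $\Ot(n^2\tr\sqrt A/\rho_i)$ with the $\sqrt\kappa$ sampler, so every stage costs $\Ot(n^3)$. An $\Ot(n^2\kappa)$ sampler would instead cost $\Ot(n^4)$ already at the first stage. There is also a secondary gap. Concentration for independent logconcave vectors (Adamczak et al.) does not apply to positions taken from one warm chain. The paper handles this with a fresh warm initialization for each batch and the Markov-chain matrix Bernstein inequality of Neeman, Shi, and Ward~\cite{NSW24:journal}.
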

The proof is in \Cref{subsec:gaussian-cooling-upper}.
The same $\Ot(n^3)$ upper bound extends to logconcave distributions given
by a convex-potential evaluation oracle, a known ball in a level set, and
a second-moment bound; \Cref{thm:upper-logconcave} states the precise
assumptions in \Cref{sec:upper-bound}.

The lower bound concerns the same affine output, including its centering
requirement. It holds even for bodies with polynomial aspect ratio.
\begin{theorem}[Membership lower bound]\label{thm:lower-main}
For all sufficiently large $n$, there are known radii $r_n,R_n$ with
$R_n/r_n=n^{O(1)}$ such that every randomized algorithm satisfying
\eqref{eq:rounding} with probability at least $2/3$ on every promised body
requires $\Omegat(n^3)$ membership queries in worst-case expectation.
\end{theorem}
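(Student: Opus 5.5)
We will prove \Cref{thm:lower-main} by Yao's minimax principle: we build a distribution over hard bodies with $R_n/r_n=n^{O(1)}$ and show that any deterministic algorithm making $o(n^3/\log^C n)$ queries fails on a constant fraction of them. The mechanism is that the algorithm must learn a hidden direction, or a hidden subspace, along which the body is elongated. Each membership query reveals only $\Ot(1/n)$ bits about it, roughly, while rounding requires $\Omega(n^2)$ bits. That gives $\Omega(n^3)$ in total.

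\textbf{Construction.} We pick a uniformly random orthonormal frame and let the body be elongated in a random subspace, or more generally carry a random covariance. Concretely, we take a random rotation $U$ and set
\[
K_U=U\,\mathrm{diag}(\lambda_1,\dots,\lambda_n)\,C,
\]
where $C$ is a fixed base body, for instance a cube or cross-polytope intersected with a ball, and the $\lambda_i$ take two values differing by a polynomial factor, on a random subspace of dimension $n/2$. Any pair $(T,z)$ satisfying \eqref{eq:rounding} determines the covariance $A_{K_U}$ up to a constant factor in the PSD order. By a packing or Fano argument over the Grassmannian, this identifies $U$'s subspace to within constant distance. That has $\Omega(n^2)$ bits of entropy: the metric entropy of the Grassmannian $G(n,n/2)$ at constant scale is $\Theta(n^2)$. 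Success probability $2/3$ then forces mutual information $\Omega(n^2)$ between the transcript and the subspace.

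\textbf{Per-query information.} The key step is to show that each query leaks only $\Ot(1/n)$ bits. A query $x$ returns $\mathbf 1[x\in K_U]$. We choose the base body and the eccentricity so that, conditional on the past transcript, the posterior over the subspace remains near-uniform in a suitable sense. Under that posterior, $\PP(x\in K_U)$ concentrates: $\|P_Ux\|$ for a near-uniform subspace fluctuates by only $O(1/\sqrt n)$ relative to its mean. We then define the boundary of $K_U$ as a threshold on $\|P_U x\|$ smoothed over a window of width $\sim 1/\sqrt n$, for instance by adding a random radial perturbation. With that choice, each answer is a coin with bias in $[\tfrac12\pm O(1/\sqrt n)]$, so it carries $O(1/n)$ bits. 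A chain-rule or adaptive data-processing argument, in the style of lower bounds for adaptive sensing, sums these to $O(q/n)$ over $q$ queries. Combining with the $\Omega(n^2)$ requirement gives $q=\Omegat(n^3)$.

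\textbf{Main obstacle.} Adaptivity is the hardest part. Once the algorithm has learned partial information, later queries may be far more informative, for example queries aligned with directions already partly identified. To control this, we plan to give the algorithm extra information for free. We would reveal the subspace in blocks, or use a ``hybrid'' argument in which the algorithm must separately find each of $\Omega(n)$ nearly independent directions, each hidden among $\Omega(n)$ coordinates. Each direction would then be a needle that costs $\Omega(n^2)$ queries, found one coordinate at a time with $O(1/n)$ information per query, justified via a KL bound against a reference body with no elongation. Keeping the aspect ratio polynomial and ensuring $K_U$ stays convex while being smoothed are the technical details we would check last.
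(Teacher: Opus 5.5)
There is a genuine gap, and it is in the step you call key. For $K_U=U\,\mathrm{diag}(\lambda)\,C$ with a fixed, known base body, the answer $\mathbf 1[x\in K_U]$ is a deterministic function of the hidden frame. Its mutual information with the frame therefore equals its entropy given the past, so an answer whose prior probability is $\tfrac12\pm O(1/\sqrt n)$ carries about one full bit, not $O(1/n)$ bits. An $O(1/n)$ bound needs the likelihood $\PP(\text{yes}\mid\text{frame},\text{past})$ to vary by only $O(1/\sqrt n)$ across frames. That requires randomness independent of the frame that the transcript does not itself reveal. A ``random radial perturbation'' built into the body is a fixed part of the body, so an adaptive algorithm learns it too. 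Controlling its posterior under adaptive queries is the entire difficulty, and you defer it.

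In fact your family is too easy to give more than $\Omegat(n^2)$. For an ellipsoid base, binary search along $O(n^2)$ generic directions gives the linear measurements $u^{\mathsf T}A^{-1}u$, hence $A$ itself. For a cube base, random directions from the center hit every near facet after $O(n\log n)$ trials, and each facet normal is recovered from $\Ot(n)$ nearby boundary points. So the short subspace, and with it the covariance, is found with $\Ot(n^2)$ queries. The closing hybrid/needle paragraph does not supply a construction that changes this.

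The missing idea is a noise source with vastly more degrees of freedom than the hidden parameter, together with an argument that the transcript reveals little about the parameter even if it reveals much about the noise. The paper takes the polar of a Gaussian polytope with $N=\lceil n^{100}\rceil$ independent facet vectors, shifted by a tiny common $\Xi\sim N(0,\tau^2I_n)$ with $\tau^2=O(\log n/n^2)$. The argument then runs in three steps:
\begin{enumerate}
\item Conditional on $\Xi$ and the seed, the posterior of the facets stays a product of restricted Gaussians.
\item The restricted-Gaussian mean-shift bound shows that each query, even one revealing the first violated facet, lowers the summed posterior variance by at most $2\log(N+1)$. This telescopes to a Fisher-information bound of $2q\log(N+1)$ for the location parameter.
\item Gaussian log-Sobolev converts this into $O(q\log^2n/n^2)$ bits about $\Xi$ for the seed and transcript together.
\end{enumerate}
The target is then only a direction. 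The centroid sits about $4s$ from the origin opposite $\Xi$, while the covariance is $(1+o(1))s^2I_n$, so any valid centering vector recovers the direction, and that needs $\Omega(n)$ bits. Your Grassmannian/Fano step is fine. To use it, though, you would need an analogous $O(1/n)$-bits-per-query bound for a covariance-type hidden parameter, and nothing in the proposal supplies one.
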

\Cref{sec:lower-bound} proves the explicit
$\Omega(n^3/\log^2 n)$ bound and gives the radii in
\Cref{thm:lower-rounding}.
Since the upper-bound algorithm succeeds with probability $5/6>2/3$,
this lower bound applies to its success guarantee as well.

\subsection{Relation to sampling and volume}
We compare our bounds with previous rounding and sampling algorithms,
then state the consequence for volume estimation.

Dyer, Frieze, and Kannan established randomized polynomial-time volume
approximation from oracle access~\cite{DFK91:journal}.
Lov\'asz and Simonovits developed the random-walk analysis underlying
subsequent improvements~\cite{LS93:journal}.
The sampling and geometric theory extends to logconcave
densities~\cite{LV07:journal}.

Previous rounding algorithms distinguish general inputs from
well-rounded bodies. Here a well-rounded body has a known contained unit
ball and $\mathbb E\|X\|_2^2=O(n)$ for $X\sim\Unif(K)$, after scaling
and translation.
Jia, Laddha, Lee, and Vempala give a nearly cubic algorithm for rounding
a well-rounded convex body and a general rounding bound
$\Ot(n^{7/2})$ using polylogarithmic bounds for the
Kannan--Lov\'asz--Simonovits (KLS)
constant~\cite{JLLV26:journal,K23:journal}.
Kook and Vempala obtain $\Ot(n^{7/2})$ rounding for general logconcave
distributions~\cite{KV25}.
We believe that these approaches may be hard to improve--Jia, Laddha, Lee, and
Vempala conjectured that doubling an ellipsoid whose intersection with the body
is nearly isotropic would leave the intersection well-rounded.
We disprove this conjecture with the counterexample in
\Cref{sec:symmetric-ellipsoid-growth-counterexample}: even for a symmetric body,
the intersection can start as isotropic, where the trace of the covariance is
$\Omega(n^2)$ after the enlargement.
For sufficiently large $n$, the enlarged intersection has covariance
$A\npreceq2I_n$, so it no longer satisfies \Cref{def:rounding} in those
coordinates.
This rules out their proposed way of taking larger ellipsoid-growth steps.
Our Gaussian-cooling algorithm instead rounds each new target under a weaker
moment condition, and does not assume that the target remains well-rounded after
cooling.

The sampling cost determines the query bound in these rounding algorithms.
A starting law $\nu$ is $M$-warm with respect to a target law
$\mu$ if $\nu(A)\le M\mu(A)$ for every measurable set
$A$; a constant-warm start has $M=O(1)$.
Thus its density is at most a fixed constant times the target density.
From such a start, Kook and Vempala obtain
$\Ot(n^2\max\{1,\|A_\mu\|_\op\})$ queries under their level-set
assumptions~\cite{KV25}, where $\|A\|_\op$ is the operator norm of $A$.
Their guarantee is in R\'enyi divergence; the
In-and-Out algorithm gives the corresponding covariance dependence for
convex bodies~\cite{KVZ24}.
Kook and Vempala also show that sampling from an arbitrary logconcave distribution from a cold start requires just $\Ot(n^{5/2})$ membership queries~\cite{KV26:arxiv}.

Our sampling bound is $\Ot(n^2\sqrt\kappa)$ expected membership queries
from a constant-warm start when $B_2^n\subseteq K\subseteq RB_2^n$ and
$\Cov(\Unif(K))\preceq\kappa I_n$, with $\kappa\ge1$.
It guarantees any prescribed total variation error
$\varepsilon\in(0,1/2)$; see \Cref{thm:sampler}.
The improvement is the square-root dependence on the covariance bound.
Combining this sampler with the rounding method of Jia, Laddha, Lee, and
Vempala~\cite{JLLV26:journal} and Gaussian cooling gives the nearly cubic
bound for general inputs.
\Cref{subsec:overview_upper} explains this combination.

For volume estimation, Cousins and Vempala use
$\Ot(n^3\varepsilon^{-2})$ membership queries to obtain relative error
$\varepsilon\in(0,1)$ when the body is well-rounded~\cite{CV18:journal}.
Combining their algorithm with \Cref{thm:upper-body} gives the same query
bound for every convex body satisfying $rB_2^n\subseteq K\subseteq RB_2^n$;
\Cref{cor:volume-reduction} gives the precise guarantee and reduction.
Rademacher and Vempala proved an $\Omegat(n^2)$ membership-query lower
bound for randomized volume approximation with sufficiently small
constant relative error, even on parallelotopes with polynomial
aspect ratio~\cite{RV08:journal}.

After the standard facts in \Cref{sec:preliminaries},
\Cref{sec:overview} explains the upper- and lower-bound arguments.
\Cref{sec:upper-bound,sec:lower-bound} give their proofs.
\Cref{sec:rounding-support} supplies the supporting rounding estimates,
\Cref{sec:symmetric-ellipsoid-growth-counterexample} gives the
ellipsoid-growth counterexample, and \Cref{sec:kinetic} proves the sampler.

\section{Preliminaries}
\label{sec:preliminaries}
We record notation and the logconcave facts used throughout the paper.
Other facts are introduced where they are needed.

\paragraph{Asymptotic notation.}
For auxiliary bounds, let $p_1,\ldots,p_k>0$ be the scalar input
parameters appearing in the statement, apart from the dimension.
They may include geometric and moment bounds, warmness, the number of
requested samples, and accuracy or failure probabilities.
Here $k$ is the number of such parameters, and $j$ indexes them.
The notation $\Ot(f)$ may suppress a factor
\[
 \log^C\left(2+n+\sum_{j=1}^k(p_j+p_j^{-1})\right).
\]
Polynomial dependence on any input is always displayed explicitly.
For example, in \Cref{thm:sampler} the hidden factor may depend
logarithmically on the outer radius, warmness, and reciprocal accuracy.
If the known contained ball has radius $r$, we first measure lengths in
units of $r$ and second moments in units of $r^2$.
Thus an outer-radius bound becomes $R/r$, and a second-moment bound
$S$ becomes $S/r^2$.
This is only a change of units; any resulting polynomial dependence is
still displayed in the stated bound.
Implicit constants may depend on parameters declared fixed.

We write $\mathcal L(X)$ for the law of a random variable $X$ and
$\operatorname{supp}\mu$ for the support of a measure $\mu$.
The symbols $c,C$ denote positive universal constants whose values may
change between occurrences. A subscript permits dependence on the
indicated parameter.
The Poincar\'e constant $C_P(\mu)$ is the least $C$ such that
\(\Var_\mu f\le C\E_\mu\norm{\nabla f}_2^2\)
for smooth test functions, and hence by closure for the corresponding
Sobolev space. Polylogarithmic KLS bounds imply
\begin{equation}\label{eq:kls}
 C_P(\mu)\le C\log(2+n)\norm{A_\mu}_\op
\end{equation}
for logconcave $\mu$~\cite{K23:journal}.

Recall that $M$-warmness means $\nu\le M\mu$.
For $f=d\nu/d\mu$, its chi-squared divergence is
$\chi^2(\nu\|\mu)=\int(f-1)^2\,d\mu$; warmness implies
$\chi^2(\nu\|\mu)\le M-1$.
We use $\|\nu-\mu\|_{\TV}=\sup_A|\nu(A)-\mu(A)|$ for total
variation distance, where the supremum is over measurable sets.
For any density $f=d\nu/d\mu$, Cauchy--Schwarz gives
$\|\nu-\mu\|_{\TV}\le\tfrac12\|f-1\|_{L^2(\mu)}$.

A logconcave distribution contains its covariance ellipsoid in its support.
For the uniform measure on a convex body, this gives a known interior ball
once the mean and covariance have been approximately normalized.

\begin{lemma}\label{lem:support-ball}
Let $\mu$ be a full-dimensional logconcave probability measure with mean
$m$ and covariance $A$. Then
\[
 m+A^{1/2}\bn\subseteq\operatorname{supp}\mu.
\]
\end{lemma}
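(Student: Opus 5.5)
The plan is to reduce to the isotropic case by an affine change of variables, and then separate any point outside the support by a hyperplane. This turns the claim into a one-dimensional moment inequality for logconcave laws on a half-line.

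\emph{Reduction.} Since $\mu$ is full-dimensional and logconcave, $A$ is positive definite. Let $X\sim\mu$ and set $Y=A^{-1/2}(X-m)$. Then $Y$ is logconcave with mean $0$ and covariance $\Id$. Its support is the image of $\operatorname{supp}\mu$ under the same affine bijection, so it suffices to prove $\bn\subseteq K:=\operatorname{supp}\mathcal L(Y)$. The set $K$ is closed, and it is convex because a logconcave measure is supported on a convex set. Since $K$ is closed, it is enough to show that every $y$ with $\|y\|_2<1$ lies in $K$; the closed ball then follows by taking closures.

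\emph{Separation.} Suppose $\|y\|_2<1$ but $y\notin K$. Since $K$ is closed and convex, there is a unit vector $\theta$ with $\ip{\theta}{x}\le\ip{\theta}{y}=:t$ for all $x\in K$. Moreover $t\le\|y\|_2<1$. The marginal $W=\ip{\theta}{Y}$ is a one-dimensional logconcave variable, since marginals of logconcave measures are logconcave. It has $\E W=0$, $\E W^2=1$, and $W\le t$ almost surely.

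\emph{One-dimensional inequality.} Put $Z=t-W\ge0$. Then $Z$ is logconcave on $[0,\infty)$, and $\E Z=t$, which forces $t>0$. We have $\Var Z=1$, so $\E Z^2=1+t^2$. I will invoke the standard moment comparison for nonnegative logconcave variables, $\E Z^2\le 2(\E Z)^2$, with equality for the exponential law. This gives $1+t^2\le 2t^2$, hence $t\ge1$, contradicting $t<1$.

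The main obstacle is justifying $\E Z^2\le 2(\E Z)^2$. I would prove it by comparing $Z$ with the exponential law of the same mean. Write the density of $Z$ as $e^{-\phi}$ with $\phi$ convex on $[0,\infty)$. Because $\phi$ is convex and the exponential log-density is linear, $\log$ of the density ratio is concave, so the two densities cross at most twice. Equal mass and equal mean then rule out a sign pattern that would let $Z$ have a heavier second moment. Concretely, the difference of the two densities changes sign in the pattern $-,+,-$ or fewer times. Pairing this with $x^2-ax-b$, chosen to vanish at the crossing points, and using that the zeroth and first moments agree, shows that $\int x^2$ against the difference is $\le0$. Alternatively I could cite the Borell/Barlow--Marshall--Proschan moment inequality $\E Z^p\le\Gamma(p+1)(\E Z)^p$ for logconcave $Z\ge0$ directly.
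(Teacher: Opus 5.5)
Your proof is correct and takes essentially the paper's route: both reduce, through a supporting halfspace, to the one-dimensional inequality $\operatorname{Var} W\le(\mathbb{E}W)^2$ for a nonnegative logconcave variable, which is equivalent to your $\mathbb{E}Z^2\le 2(\mathbb{E}Z)^2$. The only real difference is how that inequality is proved: the paper uses logconcavity of the survival function $S(t)=\mathbb{P}(W\ge t)$ to get $S(s+t)\le S(s)S(t)$, hence $\tfrac12\mathbb{E}W^2=\int_0^\infty\!\int_0^\infty S(s+t)\,ds\,dt\le(\mathbb{E}W)^2$, which is shorter than your density-crossing comparison with the exponential law (that comparison is also valid, including when the support of $Z$ starts strictly above $0$).
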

\begin{proof}
Fix a unit vector $u$ for which $\ell=\inf_{x\in\operatorname{supp}\mu}
\ip{u}{x}$ is finite. The nonnegative random variable
$W=\ip{u}{X}-\ell$ has a logconcave density. Its survival function
$S(t)=\PP(W\ge t)$ is logconcave by the marginal theorem for logconcave
functions~\cite{LV07:journal}, and $S(0)=1$.
Concavity of $\log S$ implies $S(s+t)\le S(s)S(t)$ for $s,t\ge0$.
Consequently,
\[
 \tfrac12\E W^2=\int_0^\infty tS(t)\,dt
 =\int_0^\infty\int_0^\infty S(s+t)\,ds\,dt
 \le (\E W)^2.
\]
Thus $\Var W\le(\E W)^2$, or
$\ip{u}{m}-\ell\ge\sqrt{u^\top Au}$. Every supporting halfspace of
$\operatorname{supp}\mu$ therefore contains $m+A^{1/2}\bn$.
\end{proof}
In particular, a body rounded as in \Cref{def:rounding} contains
$B(0,2^{-1/2}-1/2)$: its covariance ellipsoid has radius at least
$2^{-1/2}$ and its centroid is at distance at most $1/2$ from zero.
\Cref{cor:volume-reduction} uses this ball to apply the well-rounded
volume algorithm below.
\Cref{lem:level-covariance-ellipsoid} uses the same lemma for a
logconcave law conditioned on a level set, supplying the ellipsoid
needed in the subspace-growing and cooling steps.

\paragraph{Volume estimation for well-rounded bodies.}
\label{par:well-rounded-volume}
Cousins and Vempala~\cite{CV18:journal} give a membership-oracle
algorithm that, for a body containing the unit ball with second moment
$O(n)$, estimates its volume to relative error
$\varepsilon\in(0,1)$ using $\Ot(n^3\varepsilon^{-2})$ queries,
with any fixed constant success probability.
The cost is $\Ot(n^3)$ for constant relative error.
\Cref{cor:volume-reduction} checks these geometric assumptions after
applying our rounding map.

\paragraph{Logconcave marginals and moments.}
Integrating out coordinates preserves logconcavity~\cite{LV07:journal}.
We use this both for one-dimensional projections and for the position
marginal of the epigraph distribution in \Cref{sec:kinetic}.
For a real logconcave random variable $Z$ and $1\le p\le q$,
the standard moment comparison~\cite{LV07:journal} gives
\begin{equation}\label{eq:logconcave-moments}
 \left(\E|Z|^q\right)^{1/q}
 \le C\frac qp\left(\E|Z|^p\right)^{1/p}.
\end{equation}
In particular, its fourth moment is at most a universal constant times
the square of its second moment.
We use this in the cooling and moment-estimation arguments of
\Cref{sec:rounding-support}; the general comparison is also used in
\Cref{lem:nuclear-transfer}.

\paragraph{Gaussian tail bounds.}
Write $\varphi(t)=(2\pi)^{-1/2}e^{-t^2/2}$ for the standard normal
density and $\overline\Phi(t)=\int_t^\infty\varphi(u)\,du$ for its
upper tail.
The elementary Mills bounds are
\begin{equation}\label{eq:mills-bounds}
 \frac{t}{1+t^2}\varphi(t)
 \le\overline\Phi(t)
 \le\frac{\varphi(t)}{t},
 \qquad t>0.
\end{equation}
Indeed, integration by parts gives
\[
 \overline\Phi(t)
 =\frac{\varphi(t)}{t}
   -\int_t^\infty\frac{\varphi(u)}{u^2}\,du.
\]
The integral lies between zero and
$t^{-2}\overline\Phi(t)$, giving both inequalities.
We use these bounds in \Cref{subsec:lower-moments} to estimate the
maximum of the Gaussian facet projections.

\paragraph{Brascamp--Lieb variance inequality.}
For a probability density on $\R^n$ proportional to $e^{-W}$, where
$W$ is twice continuously differentiable and
$\nabla^2W$ is positive definite, the Brascamp--Lieb
inequality~\cite{BL76:journal} states that
\begin{equation}\label{eq:brascamp-lieb}
 \Var_\mu f
 \le\E_\mu\left[
   \nabla f^{\mathsf T}(\nabla^2W)^{-1}\nabla f\right]
\end{equation}
for smooth functions with finite variance and finite right-hand side.
We use its strongly logconcave consequence: if
$W(x)-\alpha\|x\|_2^2/2$ is convex for $\alpha>0$, then
\begin{equation}\label{eq:strong-logconcave-poincare}
 \Var_\mu f
 \le\alpha^{-1}\E_\mu\|\nabla f\|_2^2.
\end{equation}
This consequence also holds for proper closed potentials that can take
the value $+\infty$, including restriction to a convex set, by convex
approximation.
Linear functions give $\Cov(\mu)\preceq\alpha^{-1}I_n$.
We use this inequality for the conditional Gaussians in
\Cref{lem:lower-conditional-mean}, for cooling in
\Cref{lem:fine-cooling-renyi}, and for the position updates in
\Cref{subsec:kinetic-mixing}.

\section{Overview}
\label{sec:overview}
We first describe the $\Ot(n^3)$-query isotropic rounding algorithm, then we
proceed with the matching lower bound.

\subsection{The cubic rounding algorithm}
\label{subsec:overview_upper}

We outline the algorithm in \cref{thm:upper-body}.
It is known how to estimate the volume of a well-rounded body to constant
relative error using $\Ot(n^3)$ membership queries~\cite{CV18:journal}
(see \Cref{sec:preliminaries}), so one
approach to estimating the volume of an arbitrary body is first to make it
well-rounded
and then estimate its volume.
One way of making an arbitrary body well-rounded is to estimate its mean and
covariance and use them to bring it into nearly-isotropic position, as our
algorithm does.
This is also the approach taken in~\cite{JLLV26:journal}, which applies the
Cousins--Vempala volume algorithm~\cite{CV18:journal} after rounding.

We note that standard results show that sampling a point from a nearly-isotropic
body with a known contained ball of constant radius can be done using $\Ot(n^2)$
membership queries from a constant-warm start~\cite{KVZ24}.

For any body containing the unit ball, the
standard ball walk uses $\Ot(n^2\kappa)$ queries per sample from a
constant-warm start when the covariance satisfies $A\preceq\kappa I$, with
$\kappa\ge1$; see
\cite[Theorem~2.7]{JLLV26:journal} and the Poincar\'e bound \cref{eq:kls}.
We improve this to $\Ot(n^2\sqrt\kappa)$, and so in the case of the
nearly-isotropic body, we get a quadratic bound as $\kappa=O(1)$.

Our goal is to learn the covariance of the distribution by starting from an
easily sampled distribution concentrated near the center of the in-ball and
gradually weakening the Gaussian factor until the distribution approximates the
uniform distribution, see also~\cite{CV18:journal,KV25}.

For a convex body, we let the distribution at parameter $t$ be $\mu_t(x)\propto
e^{-t\|x-x_0\|_2^2/2}\mathbf 1_K(x)$, where $x_0$ is the center of the known
in-ball.
Initially $t$ is large, so the Gaussian is concentrated inside this ball and is
easy to sample by rejection.
We then repeatedly halve $t$ and estimate the mean and covariance of the new
distribution.
When $tR^2$ is sufficiently small, these estimates give a rounding of the
uniform distribution on $K$.
This family of restricted Gaussian distributions comes from the cooling approach
of Cousins and Vempala~\cite{CV18:journal}.

We differ from~\cite{JLLV26:journal} in our treatment of the outer rounding
loop: we halve $t$ at each cooling phase.
In contrast, \cite[Algorithm~1]{JLLV26:journal} considers $K\cap B(0,u)$ and
increases $u$ by a factor of $1+1/\sqrt n$, keeping the next body well-rounded
in the previous
coordinates, which ensures $\tr A+\|m\|_2^2=\Ot(n)$, and hence
$\|A\|_\op=\Ot(n)$.
This inner rounding procedure costs $\Ot(n^3)$, and the small outer steps
require $\Ot(\sqrt n)$ calls~\cite{JLLV26:journal}.

The counterexample in
\Cref{sec:symmetric-ellipsoid-growth-counterexample} shows that for a
symmetric convex body $K$ and a centered ellipsoid $E$, the intersection
$K\cap E$ can be isotropic while the trace of the covariance of $K\cap2E$ is $\Omega(n^2)$.
Thus, for sufficiently large $n$, the enlarged intersection fails the
covariance upper bound $A\preceq2I_n$ in \Cref{def:rounding} in the same
coordinates.
This also disproves the weaker $\Ot(n)$ trace bound conjectured
in~\cite{JLLV26:journal}.

In contrast, our larger cooling step means that we can't ensure the new
distribution is well-rounded, so we instead make do with a weaker guarantee.
Concretely, letting $A$ and $m$ be the covariance and mean of $\mu_{t/2}$ in the
coordinates that made $\mu_t$ nearly isotropic, we have the weaker bound
$ \tr A+\|m\|_2^2=\Ot(n^2)$, as per \cref{lem:cooling-transition}.
This bound applies to the largest variance as well, so we can start off with a
worst-case $\kappa = \Ot(n^2)$ at the start of each cooling phase.

We maintain a certified ball $B(z_i,\rho_i)$ contained in a level set of the
current potential, expressed in the current coordinates.
After rescaling this ball to unit radius,
we obtain a distribution with covaraince at most $\Ot(n^2) \cdot I$.
Then we sample with a particle that moves in straight lines inside the body
(or in the logconcave case, the epigraph of the potential),
reflects off the boundary,
and refrehses its velocity and take small Gaussian steps timed by a Poisson clock.
This process allows us to use a quantititave space-time divergence lemma
by Eberle and L\"{o}rler~\cite{EL26:journal} that bounds the relaxation time
of such reflected kinetic processes by the square root of the Poincar\'e constant,
rather than by the Poincar\'e constant itself as for the ball walk.
This constant is also $\Ot(n^2)$ by the KLS bound (\cref{eq:kls}),
thus giving the factor $\Ot(n)$ improvement.

Through the iterations within this phase, we improve the bound on $\kappa$,
making each sample cheaper, and take more samples to improve our estimate.
We do this using the updating scheme from~\cite{JLLV26:journal}: we make a
coarse estimate of the covariance and then stretch directions with small
estimated variance by a
factor of two.
The point of the coarse estimate is that we can maintain the desired query bound
by taking fewer samples while sampling is still expensive, with the goal of
identifying directions with small variance.
As sampling becomes cheaper, we take larger batches, ending with the $\Ot(n)$
samples needed for the final mean and covariance estimates.
Then, after the cooling we will also have we also have
$\tr\sqrt A+\|m\|_2=\Ot(n)$, and this stronger bound combined with the increased
number of possible samples gives the desired $\Ot(n^3)$ queries per phase;
see \cref{lem:cooling-transition,prop:logconcave-inner-rounder}.

For a Gaussian weight $e^{-t\|x-x_0\|_2^2/2}$, the length scale in each
direction is $t^{-1/2}$.
Within each cooling phase, we keep $t$ fixed and write the current coordinates
as $y=F_i x$.
If $P_i$ projects onto the directions of small estimated variance, the update is
$F_{i+1}=(I+P_i)F_i$, which stretches these directions by a factor of two.
The certified ball grows in these transformed coordinates, and its center is
updated as well.

We perform this halving process for $O(\log(2+nR/r))$ cooling phases, each
costing $\Ot(n^3)$ membership queries.
This gives the nearly cubic rounding bound.

\subsection{The rounding lower bound}
\label{subsec:overview_lower}

The crux of the lower bound is to hide a direction in the
center of mass of a random body and show that membership queries reveal little
information about this direction.

For volume estimation itself, the lower bound is (up to polylog factors)
about $n^2$ by Rademacher and Vempala~\cite{RV08:journal};
recently such bounds were extended to uniform sampling and linear
optimization by Vempala~\cite{V26:arxiv}.

We introduce a random body by taking the convex hull of a suitably chosen
polynomial number $N$ of independent points $Z_i\sim N(0,I_n)$, translating it
by an independent vector $\Xi\sim N(0,\tau^2I_n)$ with $\tau^2$ of order
$\log n/n^2$, and then taking the polar:
\[
 K_{\Xi,Z}=\{x:\langle Z_i+\Xi,x\rangle\le1
                    \text{ for every }i\}.
\]
With high probability, the body fits between two known concentric balls whose
radii differ by only a polynomial factor in $n$.
A uniform point in the body has almost the same variance in every direction.
Take $s$ to be the corresponding standard deviation, which is of order
$1/\sqrt{n\log n}$.
The center of mass $m$ is about $4s$ away from the origin, in the direction
opposite to $v=\Xi/\|\Xi\|_2$.
These properties are proved in
\cref{lem:lower-geometry,prop:lower-typical-moments}.

We show that any algorithm needs at least $\Omega(n^3/\log^2 n)$ membership
queries to return a unit vector $\widehat v$ with $\langle\widehat
v,v\rangle\ge1/2$ with constant probability.
The answers to $q$ adaptive queries contain at most $O(q\log^2 n/n^2)$
information about the hidden direction, whereas recovering it with constant
accuracy requires $\Omega(n)$ information.
This information bound holds even if the membership oracle also reveals the
first violated facet.
Since a successful rounding gives such a direction, this proves the lower bound.

\section{A Cubic Algorithm for Second-Moment Rounding}
\label{sec:upper-bound}

We prove \Cref{thm:upper-logconcave,thm:upper-body} by combining our
membership sampler, based on the relaxation estimate of Eberle and
L\"orler~\cite{EL26:journal}, with the subspace-growing method of Jia, Laddha,
Lee, and Vempala for rounding well-rounded bodies~\cite{JLLV26:journal}.
We first state the sampling and moment-estimation results, whose proofs
are in \Cref{sec:kinetic,subsec:upper-moments}.
The algorithmic argument then has two parts. Subspace growing gives a
cubic query bound when $\operatorname{tr}\sqrt A+\|m\|_2=\widetilde O(n)$,
where $m$ and $A$ are the input mean and covariance. Gaussian cooling
reduces each subsequent rounding call to this case. We prove both parts
here, using the supporting estimates from \Cref{sec:rounding-support}.
We treat logconcave distributions throughout; indicator potentials give
the convex-body result.

For logconcave distributions, we replace the membership oracle by an
evaluation oracle for a convex potential.
Specifically, let $d\mu\propto e^{-V}\,dx$, where
$V:\R^n\to\R\cup\{+\infty\}$ is proper, closed, and convex, with
$0<\int e^{-V}<\infty$.
A query at $x$ returns $V(x)$.
The geometric input consists of a known point $x_0$, radii
$0<r\le R$, and a fixed constant $h\ge1$ such that, for
$a=\inf V$,
\begin{equation}\label{eq:grounded-input}
 B(x_0,r)\subseteq\{V\le a+hn\}
 \text{ and }\E_\mu\norm{X-x_0}_2^2\le R^2,
\end{equation}
where $B(x_0,r)=x_0+rB_2^n$.
Thus the algorithm knows a ball in a specified level set and an upper
bound on the second moment about its center.
These assumptions give the same query upper bound as for convex bodies.

\begin{theorem}[Logconcave upper bound]\label{thm:upper-logconcave}
For every fixed $h\ge1$, there is a randomized algorithm that, under
Equation~\eqref{eq:grounded-input}, returns a pair $(T,z)$ satisfying
Equation~\eqref{eq:rounding} with probability at least $5/6$ using
$\Ot(n^3)$ evaluations of $V$.
\end{theorem}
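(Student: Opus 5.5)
Run Gaussian cooling on the family $\mu_t\propto e^{-V(x)-t\|x-x_0\|_2^2/2}$. At each temperature $t$, use the subspace-growing inner rounder of Jia, Laddha, Lee, and Vempala, driven by the kinetic sampler of \Cref{thm:sampler}, to compute a near-isotropic rounding of $\mu_t$. Then halve $t$ and repeat.

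\textbf{Initialization.} Take $t_0\asymp n/r^2$, up to a constant depending on $h$. Since $B(x_0,r)\subseteq\{V\le a+hn\}$, the Gaussian factor concentrates $\mu_{t_0}$ on $B(x_0,r/2)$, where $V$ varies by at most $O(hn)$. I expect a single Gaussian factor $e^{-V-t_0\|x-x_0\|^2/2}$ not to be directly samplable by rejection, because the density ratio there is $e^{O(hn)}$. So the start should be chained: begin with $t$ larger still and with potential $\lambda V$ for $\lambda$ cooled geometrically from $1/n$ up to $1$, or equivalently anneal the potential. Alternatively, invoke \Cref{thm:sampler} with a warm start obtained from the previous annealing stage. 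Strong logconcavity \eqref{eq:strong-logconcave-poincare} gives $\Cov(\mu_{t_0})\preceq t_0^{-1}I_n$, and \Cref{lem:support-ball} certifies a ball. Hence $\mu_{t_0}$ is rounded by the map $x\mapsto\sqrt{t_0}(x-\hat m)$, after $\Ot(n)$ samples at $\Ot(n^2)$ queries each.

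\textbf{Cooling phase.} Suppose $(T_t,z_t)$ rounds $\mu_t$. In these coordinates, \Cref{lem:cooling-transition} bounds the mean $m$ and covariance $A$ of $\mu_{t/2}$ by $\tr A+\|m\|_2^2=\Ot(n^2)$. Samples from $\mu_t$ are $O(1)$-warm for $\mu_{t/2}$ only after a chi-square correction. I would use the fine-cooling R\'enyi lemma (\Cref{lem:fine-cooling-renyi}) and, if needed, split the halving into $O(1)$ substeps $t\to t(1-1/\sqrt n)$. This should not hurt the query count, since the substeps only cost warmness and the inner rounder runs once per halving. I will assume the lemma supplies $O(1)$-warm starts for $\mu_{t/2}$.

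Next, \Cref{prop:logconcave-inner-rounder} rounds $\mu_{t/2}$. At step $i$ it holds a certified ball $B(z_i,\rho_i)$ inside a level set, as in \Cref{lem:level-covariance-ellipsoid}, and a bound $\kappa_i$ on the covariance after rescaling. It draws $\approx n^3/(n^2\sqrt{\kappa_i})$ samples with \Cref{thm:sampler}, estimates the covariance coarsely, and stretches the low-variance directions by $F_{i+1}=(I+P_i)F_i$. The phase cost is governed by $\tr\sqrt A+\|m\|_2=\Ot(n)$, which I take from the post-cooling bound in \Cref{lem:cooling-transition}. Summing over the stretching steps, the total is $\Ot(n^3)$. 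The final batch of $\Ot(n)$ samples, with the moment bound \eqref{eq:logconcave-moments}, gives the estimates required by \eqref{eq:rounding} for $\mu_{t/2}$ with failure probability $1/\mathrm{poly}$.

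\textbf{Termination and accounting.} Stop once $t\le c/(R^2\log)$. The second-moment bound $\E_\mu\|X-x_0\|^2\le R^2$ then makes $d\mu_t/d\mu$ close to constant in $L^2(\mu)$, so rounding $\mu_t$ with small constants also rounds $\mu$. Only a constant-factor loss is allowed in \eqref{eq:rounding}, so I would target tighter constants, e.g. $[0.9,1.1]$, throughout. There are $O(\log(nR/r))$ phases, each costing $\Ot(n^3)$. A union bound over failure probabilities $1/(6\cdot\#\text{phases})$, together with Markov's inequality to convert the expected query bounds into worst-case ones with truncation, gives success probability $5/6$.

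\textbf{Main obstacle.} The hardest step is making sure the inner rounder's hypotheses hold after each halving. It needs a certified ball in a level set in the new coordinates, warmness, and the $\tr\sqrt A$ bound rather than just the trace bound. I would take these directly from \Cref{lem:cooling-transition} and \Cref{lem:level-covariance-ellipsoid}.
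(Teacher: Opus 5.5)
\textbf{Verdict.} Your architecture matches the paper's: halve $t$, call \Cref{prop:logconcave-inner-rounder} using the level-set ball and $\operatorname{tr}\sqrt A+\|m\|_2=\Ot(n)$ from \Cref{lem:cooling-transition}, and remove the last Gaussian factor once $tR^2$ is small. The genuine gap is the warm start. You write that you ``will assume the lemma supplies $O(1)$-warm starts for $\mu_{t/2}$,'' but neither \Cref{lem:cooling-transition} nor \Cref{lem:level-covariance-ellipsoid} says anything about warmness.

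\textbf{Why your chaining repair fails.} A sample of $\mu_t$ is far from warm for $\mu_{t/2}$. \Cref{cor:half-temperature-bridge} only gives $D_2(\mu_{t/2}\Vert\mu_t)\le C_hnL$, and $\sup d\mu_t/d\mu_{t/2}=Z_{t/2}/Z_t$ can be $e^{\Omega(n)}$.
\begin{itemize}
\item Substeps $t\to t(1-1/\sqrt n)$ need $\Theta(\sqrt n)$ of them, not $O(1)$, to halve $t$.
\item They do not ``only cost warmness'': each substep requires actually sampling the intermediate law $\mu_s$.
\item In the coordinates that round $\mu_t$, the moment-transfer argument of \Cref{lem:nuclear-transfer} bounds $\|\Cov(\mu_s)\|_{\op}$ only in terms of $D_2(\mu_s\Vert\mu_t)$. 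That bound grows to $\Ot(n^2)$ as $s\to t/2$. Carrying even one sample down the chain with \Cref{thm:sampler}, at cost $\Ot(n^2\sqrt{\kappa_s})$ per step, can cost on the order of $n^{7/2}$ per halving, which is exactly the bound the paper is trying to beat.
\item Subspace growing picks new coordinates from each batch's samples. Reusing an earlier sample therefore does not give a start that is warm conditionally on those data-dependent coordinates.
\end{itemize}

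\textbf{The missing ingredient.} This is \Cref{prop:logconcave-fresh-initialization}, which the paper runs with fresh randomness before every estimation batch inside \Cref{prop:logconcave-inner-rounder}. In the current coordinates, with certified ball $B(z,\rho)$, it runs an auxiliary Gaussian cooling of $e^{-W(y)-u\|y-z\|_2^2/2}$ centred at $z$:
\begin{enumerate}
\item exact rejection at very large $u$;
\item $\Ot(\sqrt n)$ steps down to $u=\rho^{-2}$, at $\Ot(n^2)$ per sample;
\item steps with $\sqrt{u_j}-\sqrt{u_{j+1}}\approx1/(R_*\sqrt q)$ down to $u=R_*^{-2}$, where $R_*^2$ bounds $\mathbb E\|X-z\|_2^2$.
\end{enumerate}
The auxiliary Gaussian alone gives $\Cov\preceq u^{-1}I$ by Brascamp--Lieb, so no intermediate rounding is needed and each sample costs $\Ot(n^2/(\rho\sqrt u))$. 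The total is $\Ot(n^{5/2}+n^2R_*/\rho)$, and the endpoint law is constant-warm for the target by Jensen. Since $R_*/\rho=\Ot(\operatorname{tr}\sqrt A+\|m\|_2)=\Ot(n)$, each initialization costs $\Ot(n^3)$. This is a second place where the $\operatorname{tr}\sqrt A$ bound, not just the trace bound, is essential.

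\textbf{Smaller point at the first temperature.} Your concern about rejection at $t_0\asymp n/r^2$ is valid, but the paper avoids annealing $\lambda V$. It takes $t_0=64(n+1)(hn/r)^2$. Convexity and the level-set ball give $V(x)-V(x_0)\ge-\tfrac{hn}{r}\|x-x_0\|_2$ everywhere and $V(x)-V(x_0)\le\tfrac{hn}{r}\|x-x_0\|_2$ on $B(x_0,r)$. So rejection from a slightly wider Gaussian accepts with constant probability (\Cref{lem:high-temperature-rejection}), at the cost of only $O(\log(hn))$ extra halvings. You should also estimate the covariance there, rather than scaling by $\sqrt{t_0}$, to obtain the tight internal constants.
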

The proof is in \Cref{subsec:gaussian-cooling-upper}.
The indicator potential $V=0$ on $K$ and $V=+\infty$ off
$K$ recovers \Cref{thm:upper-body}.

\subsection{Sampling and moment estimation}\label{subsec:rounding-sampling}
We state the improved sampling bound and then deduce the
moment-estimation guarantee used by subspace growing. The proof of the
sampling theorem is in \Cref{sec:kinetic}: it combines the relaxation
estimate of Eberle and L\"orler~\cite{EL26:journal}, the Poincar\'e bound
\eqref{eq:kls}, and a membership-query implementation of reflected
kinetic dynamics.
\begin{theorem}[Membership sampling]\label{thm:sampler}
Let $\bn\subseteq K\subseteq R\bn$, let $\pi=\Unif(K)$, and suppose
$\Cov(\pi)\preceq\kappa\Id$ with $\kappa\ge1$. Given
$\varepsilon\in(0,1/2)$ and an initial sample $X_0\sim\nu$ with
$\nu\le M\pi$ for a fixed constant $M$, an algorithm uses
$\Ot(n^2\sqrt\kappa)$ expected membership queries and returns $X$ with
$\|\mathcal L(X)-\pi\|_{\TV}\le\varepsilon$.
\end{theorem}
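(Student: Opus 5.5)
We plan to prove \Cref{thm:sampler} by simulating a reflected kinetic process targeting $\pi$ and bounding its relaxation time with the Eberle--L\"orler space-time divergence estimate~\cite{EL26:journal}. The process has state $(x,v)\in K\times\R^n$ and stationary law $\pi\otimes N(0,\gamma^{-1}\Id)$ for a scale parameter $\gamma$ to be chosen. The particle moves as $\dot x=v$. It specularly reflects $v$ at $\partial K$, and it fully refreshes $v\sim N(0,\gamma^{-1}\Id)$ at the jumps of a Poisson clock of rate $\lambda$.

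The first step is to invoke the relaxation estimate. For refresh rate of order $C_P(\pi)^{-1/2}$, it gives $L^2$ decay
\[
\|P_t f-\pi(f)\|_{L^2}\le Ce^{-ct/\sqrt{C_P(\pi)}}\|f-\pi(f)\|_{L^2},
\]
in units where the velocity has unit variance per coordinate. By \eqref{eq:kls} and the hypothesis $\Cov(\pi)\preceq\kappa\Id$, we have $C_P(\pi)\le C\kappa\log(2+n)$. Warmness gives $\chi^2\le M-1$, and the bound $\|\nu-\mu\|_{\TV}\le\tfrac12\|f-1\|_{L^2}$ then shows that running to physical time $T=\Ot(\sqrt\kappa\log(M/\varepsilon))$ suffices. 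Velocity is initialized exactly Gaussian, so warmness transfers to the product measure.

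The second step is to implement this continuous process with membership queries and count the cost. A unit-speed-per-coordinate velocity has $\|v\|_2\approx\sqrt n$, so over time $T$ the particle travels Euclidean length $\approx\sqrt n\,T$. Because $\bn\subseteq K$, a step of length $\delta\approx 1/\sqrt n$ from a typical point in the bulk stays inside $K$ with constant probability; this is the usual ball-walk local conductance scale. We therefore discretize segments into steps of length $\delta\approx n^{-1/2}$ up to logarithmic factors. This gives $\Ot(\sqrt n T/\delta)=\Ot(nT)$ segment pieces, and each reflection is located by binary search to accuracy $\mathrm{poly}(\varepsilon/(nR))$ with $O(\log(nR/\varepsilon))$ queries. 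That count is $\Ot(n\sqrt\kappa)$, so to reach the claimed $n^2\sqrt\kappa$ we must account for the extra factor of $n$. The accounting is as follows. Rather than straight reflection in a continuous body, whose boundary normals require $\Ot(n)$ queries each, we use the paper's variant in which position updates are small Gaussian steps (\Cref{subsec:kinetic-mixing}, via \eqref{eq:strong-logconcave-poincare}) with membership rejection. Each membership query then advances time by $\approx 1/n$ in the relevant scale, giving $\Ot(n^2\sqrt\kappa)$ total queries. Estimating a normal, if needed, costs $\Ot(n)$ queries per reflection, and the expected number of reflections per unit time must be bounded by $\Ot(n)$ using the boundary-measure bound for convex bodies containing $\bn$ (surface area over volume at most $n$).

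The main obstacle is the discretization-error step: showing the implemented chain stays within $\varepsilon$ in TV of the ideal process over the whole horizon, uniformly over warm starts. Near boundary corners, reflections can cluster and normals can be ill-defined. The fallback for this step is to compare the two processes path-by-path. The implementation agrees with the exact process except on an event whose probability is bounded, using $M$-warmness and stationarity, by $M$ times the stationary probability of a bad configuration per step; we would make this estimate work by union-bounding over the $\Ot(n^2\sqrt\kappa)$ steps with per-step failure $\varepsilon/\mathrm{poly}$, choosing the truncation $\log$ factors accordingly. Expected rather than worst-case query counts arise because the number of reflections and Poisson events is random, with mean controlled by stationarity.
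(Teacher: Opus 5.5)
Your relaxation step is consistent with the paper's (Eberle--L\"orler, plus \eqref{eq:kls} and warmness). The implementation step, however, has a genuine gap, and its query accounting does not hold together.

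\textbf{The query count.} Cutting straight flights into pieces of length $n^{-1/2}$ is unnecessary, because by convexity one endpoint query certifies a whole straight segment. The resulting count $\Ot(n\sqrt\kappa)$ also falls below the target, which signals a missing cost rather than something to be inflated.

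Replacing the billiard by small Gaussian steps with membership rejection, so that ``each query advances time by $1/n$'', does not supply that cost. A Gaussian position step does not advance the flow. A chain driven by such steps relaxes on the diffusive scale $C_P(\pi)/h$, with $h\lesssim n^{-2}$ forced by the rejection rate. That means $\Ot(n^2\kappa)$ queries, which is exactly the bound the theorem improves.

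In the paper, the one-query update $\widehat R_h$ has a tiny $h$ and serves only the coupling. The billiard itself is simulated by querying the segment endpoint, bisecting to the first exit, estimating the normal from $2n$ gauge differences at a randomly radially displaced contact point (\Cref{lem:average-normal-oracle}), and reflecting. Your closing remark is the correct accounting: about $n$ reflections per unit time by the flux bound $S_K/(\sqrt{2\pi}V_K)\le n/\sqrt{2\pi}$, each needing an $\Ot(n)$-query normal. It becomes a proof only once the following issue is solved.

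\textbf{The missing idea: numerical error through reflections.} A reflection computed from an approximate contact point and an approximate normal multiplies the current phase error by a factor $D$. This $D$ is polynomial in $n$, $R$, and the inverses of the grazing speed, minimal flight time, boundary margin and local gauge-regularity constant (\Cref{lem:finite-query-billiard-segment}). Consequently $N$ reflections need precision $D^{-N}$, each normal costs about $nN\log D$ queries, and a segment with $N$ reflections costs $\Ot(nN^2)$.

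Your path-by-path comparison does not handle this. The simulated and exact trajectories never coincide, so a union bound over ``bad configurations'' produces no coupling. Full velocity refreshment resets the velocity error but not the position error, so errors compound over all $\approx nT$ reflections of the run.

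The paper resolves this as follows.
\begin{itemize}
\item It applies $Q_aR_h$ at Poisson times of rate $\lambda=n^{3/2}$, with partial refreshment $a=1-P^{-1/2}/\lambda$. These updates smooth in total variation, \eqref{eq:mixed-update-smoothing}, so an endpoint error $\delta$ costs only $\delta/\sqrt h+\delta/\sqrt{1-a^2}$ (see \eqref{eq:implemented-update-comparison}). The precision then only has to cover one interval.
\item This forces a re-proof of relaxation with these updates (\Cref{prop:mixed-refresh-relaxation}). Full refreshment at rate $n^{3/2}$ would make the dynamics diffusive and lose the $\sqrt P$ scaling. One needs effective damping $\lambda(1-a)=P^{-1/2}$ and a bound of order $\lambda h$ on the extra dissipation, whence $h\le\sqrt P/\lambda$.
\item Since the cost per interval is quadratic in its reflection count, expected counts are not enough. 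The paper bounds $\mathbb E\sum_i\min\{N_i,b\}^2$ via the short-flight estimate (\Cref{prop:short-free-flights}) and $\mathbb E\sum_i\ell_i^2\le2T/\lambda$ (\Cref{lem:poisson-interval-lengths}); this is what dictates $\lambda=n^{3/2}$.
\item It also bounds the exceptional events for grazing contacts, boundary margins and the normal estimator.
\end{itemize}
None of these ingredients appears in your plan.
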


\Cref{thm:logconcave-warm-membership-sampling} gives the logconcave
extension, including a joint approximation of the sample sequence by
positions of a stationary Markov chain. Combining this guarantee with
matrix concentration for Markov chains~\cite{NSW24:journal} and
standard logconcave moment estimates gives the following lemma.
Its proof, including the effect of the sampler's truncation, is in
\Cref{subsec:upper-moments}.
\begin{lemma}[Moment estimation]\label{lem:rounding-estimation-tools}
Let $\mu(dx)\propto e^{-W(x)}dx$ be a full-dimensional logconcave
probability measure with mean $m$ and covariance $A$, where $W$ is
proper, closed, and convex. Assume
$B(z,\rho)\subseteq\{W\le\inf W+Hn\}$ and
$A\preceq\kappa\rho^2I$, where $z\in\mathbb R^n$, $\rho>0$, and
$\kappa\ge1$ are known, and $H\ge1$ is fixed.
For $a,b>0$ and $0<\delta<1/4$, set
$k=\widetilde O(1+\operatorname{tr}(A)(a^{-1}+b^{-2}))$.
Given an evaluation oracle for $W$, a constant-warm initial sample,
and upper bounds on $\operatorname{tr}A$ and $\mathbb E_\mu\|X-z\|_2^2$,
there is a randomized algorithm that uses
$\widetilde O(n^2k\sqrt\kappa)$ evaluations in expectation and outputs
a vector $\widehat m\in\mathbb R^n$ and a matrix $\widehat A\succeq0$
such that, with probability at least $1-\delta$,
\[
 \tfrac12A-aI\preceq\widehat A\preceq2A+aI
 \text{ and }\|\widehat m-m\|_2\le b.
\]
Alternatively, an algorithm using $\widetilde O(n^3\sqrt\kappa)$
expected evaluations outputs an estimate $\widehat A$ satisfying
$\tfrac78A\preceq\widehat A\preceq\tfrac98A$ and
$\|A^{-1/2}(\widehat m-m)\|_2\le1/16$ with probability at least $1-\delta$.
\end{lemma}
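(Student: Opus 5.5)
The plan is to draw $k$ approximately stationary samples with the logconcave sampler of \Cref{thm:logconcave-warm-membership-sampling}, and to use the empirical mean and covariance as the estimates. The cost is clear: each sample costs $\Ot(n^2\sqrt\kappa)$ evaluations once we rescale so that $B(z,\rho)$ becomes the unit ball, so $k$ samples cost $\Ot(n^2k\sqrt\kappa)$. Warm starts for later samples come from running the chain from the previous output, which stays constant-warm.

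The argument proceeds in three steps.

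\emph{Step 1 (reduction to a stationary chain).} \Cref{thm:logconcave-warm-membership-sampling} couples the sample sequence $X_1,\dots,X_k$ with the positions $Y_1,\dots,Y_k$ of a stationary Markov chain with law $\mu$, except on an event of probability at most $\delta/4$. On that event we may analyse the $Y_j$ instead. The spacing between recorded samples is one relaxation time, so the chain has a constant spectral gap in $L^2(\mu)$ at the sampling times. This is the setting of the Markov-chain matrix Chernoff bound of~\cite{NSW24:journal}.

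\emph{Step 2 (truncation and concentration).} Set $Z_j=Y_j-m$. By \eqref{eq:logconcave-moments} and the standard norm concentration for logconcave vectors, $\|Z_j\|_2^2\le L:=C\log^2(k/\delta)\,\tr A$ with probability $1-\delta/(8k)$. We truncate to this ball, and moment comparison shows the truncation bias is at most $\delta\,\tr A/k$ times small factors. The Markov-chain matrix Bernstein bound then applies to the truncated rank-one matrices $Z_jZ_j^\top$, with $\|Z_jZ_j^\top\|\le L$ and variance proxy $\preceq L A$. After $k$ samples this gives
\[
\Big\|\tfrac1k\textstyle\sum Z_jZ_j^\top-A\Big\|\lesssim \sqrt{L\|A\|/k}+L/k .
\]
To get a relative bound, we apply Bernstein to $(A+aI)^{-1/2}Z_j$. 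Its squared norm is at most $\Ot(\tr A/a)$, so $k=\Ot(1+\tr A/a)$ yields $\tfrac12(A+aI)\preceq\widehat A_0+aI\preceq\tfrac32(A+aI)$. This rearranges to the stated sandwich $\tfrac12A-aI\preceq\widehat A\preceq 2A+aI$. The centering error is handled in the same step. The mean is estimated by a vector Bernstein/Chebyshev bound for Markov chains: $\E\|\bar Y-m\|_2^2\lesssim \tr A/k\le b^2$ when $k\gtrsim \tr A/b^2$, and a median-of-means over $O(\log(1/\delta))$ batches boosts the success probability. Replacing $m$ by $\widehat m$ in the covariance perturbs it by $(\widehat m-m)(\widehat m-m)^\top\preceq b^2 I$, which we absorb into $a$ after adjusting constants. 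The supplied upper bounds on $\tr A$ and $\E\|X-z\|_2^2$ are used only to set $k$ and the truncation radius without knowing $m$: we truncate around $z$ at radius $\Ot(\E\|X-z\|_2^2)^{1/2}$.

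\emph{Step 3 (the alternative guarantee).} Run the first part with $a=\rho^2/\mathrm{poly}$ small, and note that \Cref{lem:support-ball} applied together with $B(z,\rho)$ in a level set gives $A\succeq c\rho^2 I/n^{O(1)}$. This produces a coarse whitening $T=\widehat A^{-1/2}$ in which $TAT^\top$ has constant condition number. This first round costs $\Ot(n^2\sqrt\kappa\,(1+\tr A/a))$, and we must check that this stays within $\Ot(n^3\sqrt\kappa)$. If it does not, we instead iterate a halving of $a$ with doubling $k$, in the style of subspace growing. In the whitened coordinates, $\tr$ is $O(n)$, so taking $a,b$ constant fractions with $k=\Ot(n)$ gives the $\tfrac78$–$\tfrac98$ accuracy and $\|A^{-1/2}(\widehat m-m)\|\le1/16$ at cost $\Ot(n^3\sqrt\kappa)$.

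The main obstacle is Step 2. Matrix concentration for dependent, only approximately stationary samples with heavy (logconcave, not bounded) tails needs careful truncation that does not bias the lower bound $\tfrac12A$ in small-variance directions. The relative normalization by $(A+aI)^{-1/2}$ is what I would rely on to keep this uniform over directions.
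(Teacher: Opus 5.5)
Your Step 2 is essentially the paper's argument (\Cref{lem:additive-moment-estimate}: truncation, the Markov-chain matrix Bernstein bound, normalization by $(A+aI)^{-1/2}$). Step 3, however, has a genuine gap. A coarse whitening built from the additive estimate needs $a\lesssim\lambda_{\min}(A)$ for $TAT^{\mathsf T}$ to be well conditioned. It therefore needs $k\gtrsim\tr A/\lambda_{\min}(A)=n\cdot(\text{average eigenvalue})/\lambda_{\min}(A)$ samples, and this is not $\Ot(n)$ under the hypotheses, which only give $A\preceq\kappa\rho^2I$. So the first round does not fit in $\Ot(n^3\sqrt\kappa)$. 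Your fallback of iterating ``in the style of subspace growing'' is unspecified, and it is circular: \Cref{prop:logconcave-inner-rounder} itself calls this lemma, including its relative part in the final stage. Separately, \Cref{lem:support-ball} places $m+A^{1/2}B_2^n$ inside the support, so it gives no lower bound on $A$.

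The missing idea is that the empirical mean and covariance are affine-equivariant, so no algorithmic whitening is needed. Run the sampler in the given coordinates and return the ordinary sample moments. Then analyze them in the unknown coordinates $y=A_E^{-1/2}(x-m_E)$, where the covariance is $I$ with trace $n$. The same truncated Bernstein argument, with the symmetric dilation for the mean, gives constant relative errors from $k=\Ot(n)$ positions, at cost $\Ot(n^3\sqrt\kappa)$. Your final sentence contains exactly this observation; it applies directly, without any preliminary round.

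Step 1 also misstates what \Cref{thm:logconcave-warm-membership-sampling} provides. The comparison chain is stationary for $\widehat\pi$ on the phase space of the \emph{truncated epigraph}, not a chain with position law $\mu$. Its position marginal is the law of $X\sim\mu$ conditioned on an event in an enlarged probability space, with mean $m_E$ and covariance $A_E$ rather than $m$ and $A$. Your concentration bounds therefore center on $m_E,A_E$, and the total variation bound by itself says nothing about moments. You need \Cref{lem:joint-conditioning-moments}, together with a prescribed truncation mass, to close this:
\begin{itemize}
\item for the additive part, $p\le c\min\{1,a^2/\|A\|_{\op}^2,b^2/\|A\|_{\op}\}$, set using the known bound $\|A\|_{\op}\le\kappa\rho^2$;
\item for the relative part, a small constant $p$.
\end{itemize}
The theorem permits either choice at polylogarithmic cost. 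Relatedly, the contraction of $\mathcal K$ holds in $L^2(\widehat\pi)$ on phase space, not in $L^2(\mu)$, so apply the Bernstein bound to that chain with position-valued observables.

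Two smaller points. First, bounding the centering correction by $b^2I$ and absorbing it into $aI$ fails when $b^2\gg a$. It works only because $k\gtrsim\tr A/a$ already makes $\|(A+aI)^{-1/2}(\bar Y-m)\|_2$ small, which is how the paper bounds it. Second, median-of-means over dependent Markov-chain batches needs an additional concentration argument, which the dilation trick avoids.
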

The lemma assumes a constant-warm initial sample. To provide one up to
a prescribed total variation error,
\Cref{prop:logconcave-fresh-initialization} uses at most
$\widetilde O(n^{5/2}+n^2\sqrt{\mathbb E_\mu\|X-z\|_2^2}/\rho)$
evaluations and returns $\widehat X$ with
$\|\mathcal L(\widehat X)-\nu\|_{\TV}\le\delta$ for some $\nu\le C\mu$.
The second moment here is
$\operatorname{tr}A+\|m-z\|_2^2$.
We apply this construction after fixing the coordinates of each
estimation batch. Its cost and the error in approximating a constant-warm
law are included in the proof below.
Whenever moment quantities are used to set algorithm parameters, the
algorithm is given those quantities or upper bounds substituted in both
the parameters and query bounds.

\subsection{Rounding by subspace growing}
\label{subsec:subspace-growing}

We prove \Cref{prop:logconcave-inner-rounder} by combining
\Cref{lem:rounding-estimation-tools} with the subspace-growing method
of Jia, Laddha, Lee, and Vempala~\cite{JLLV26:journal}, extended to logconcave
distributions by Kook and Vempala~\cite{KV25}. At each stage, we estimate
the covariance and double directions with small estimated variance.

For convex bodies, the nearly cubic guarantee of
\cite[Theorem 1.3]{JLLV26:journal} assumes
$\mathbb E\|X\|_2^2=\operatorname{tr}A+\|m\|_2^2=\widetilde O(n)$.
The following bound instead gives nearly cubic rounding under
$\operatorname{tr}\sqrt A+\|m\|_2=\widetilde O(n)$, the weaker condition
that our cooling argument will supply. Initialization is included in
the query bound. The output tolerances leave room to remove the final
Gaussian factor.
\begin{proposition}[Rounding within a level set]
\label{prop:logconcave-inner-rounder}
Let $W$ be a proper closed convex potential, given by an evaluation
oracle, such that $\mu(dx)\propto e^{-W(x)}dx$ is a full-dimensional
probability measure with mean $m$ and covariance $A$. Let $X\sim\mu$.
Suppose $B_2^n\subseteq\{W\le\inf W+Hn\}$ for a fixed $H\ge1$.
For every $0<\delta<1/4$, an algorithm uses
$\Ot(n^3+n^2(\operatorname{tr}\sqrt A+\|m\|_2))$ evaluations and returns
an affine map $F$ such that, with probability at least $1-\delta$,
\[
 \tfrac34I\preceq\Cov(F(X))\preceq\tfrac43I
 \text{ and }\|\E [F(X)]\|_2\le\tfrac18.
\]
\end{proposition}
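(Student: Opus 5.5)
We will follow the subspace-growing scheme of \cite{JLLV26:journal}, with the ball walk replaced by the sampler of \Cref{lem:rounding-estimation-tools}. The algorithm maintains an affine map $F_i$ and works with the pushforward $\mu_i$ of $\mu$ under $F_i$, whose potential is $W\circ F_i^{-1}$. Throughout we maintain the invariant that $\mu_i$ has a certified ball $B(z_i,\rho_i)\subseteq\{W\circ F_i^{-1}\le\inf W+Hn\}$. Initially $F_0=I$, $z_0=0$, $\rho_0=1$.

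Write $A_i,m_i$ for the covariance and mean of $\mu_i$. At stage $i$ we carry a target threshold $\lambda_i$, the current lower scale, chosen so that every direction of $A_i$ with variance below $\lambda_i$ is a candidate for stretching. We run the first (coarse) mode of \Cref{lem:rounding-estimation-tools} with $a\approx\lambda_i$ and $b\approx\sqrt{\lambda_i}$, after normalizing so that $\rho_i=1$. Let $P_i$ be the projection onto the span of eigenvectors of $\widehat A_i$ with eigenvalue below $\lambda_i$, and set $F_{i+1}=(I+P_i)F_i$, recentered at $\widehat m_i$.

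Since $I+P_i$ only expands lengths, the ball $(I+P_i)B(z_i,\rho_i)$ contains $B((I+P_i)z_i,\rho_i)$, so the invariant is preserved. After rescaling, $\rho$ doubles in the directions where the variance is at most $O(\lambda_i)$. Following the potential argument of \cite{JLLV26:journal} (see also \cite{KV25}), after $O(\log(nR))$ stages every direction has variance $\Theta(1)$ relative to the certified ball, and $\kappa$ drops from its initial $\Ot(n^2)$-scale value to $O(1)$. Here the initial bound on $\kappa$ is only $\|A\|_\op\le(\tr\sqrt A)^2$. Once $\kappa=O(1)$, the second (fine) mode of the lemma with cost $\Ot(n^3)$ gives $\tfrac78A\preceq\widehat A\preceq\tfrac98A$ and accurate centering. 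Taking $F=\widehat A^{-1/2}(\cdot-\widehat m)$ then yields $\tfrac34I\preceq\Cov(F(X))\preceq\tfrac43I$ and $\|\E F(X)\|_2\le\tfrac18$. Fresh constant-warm starts for each batch come from \Cref{prop:logconcave-fresh-initialization}. Its cost $\Ot(n^{5/2}+n^2\sqrt{\E\|X-z\|^2}/\rho)$ is at most $\Ot(n^3+n^2(\tr\sqrt A+\|m\|_2))$, since $\sqrt{\tr A}\le\tr\sqrt A$. The TV errors are absorbed into $\delta$ via a union bound over polylogarithmically many stages.

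The main obstacle is the cost accounting. At stage $i$, the sample count is $k_i=\Ot(1+\tr(A_i)/\lambda_i)$ and each sample costs $n^2\sqrt{\kappa_i}$, so we must show that $\sum_i n^2k_i\sqrt{\kappa_i}=\Ot(n^3+n^2\tr\sqrt A)$. The plan is to index stages by dyadic variance scales $\lambda$. At scale $\lambda$, $\kappa\approx\|A\|_\op/\lambda$ in certified-ball units. Only the eigenvalues $\sigma_j^2\ge\lambda$ contribute substantively to $\tr A_i/\lambda$; the small directions are already stretched up to about $\lambda$, contributing only $O(n)$ samples. For the large directions, pairing the count $\sum_{\sigma_j^2\ge\lambda}\sigma_j^2/\lambda$ with the per-sample factor should telescope. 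To avoid over-counting, the covariance bound should be taken within the unrounded subspace rather than globally. Summed over dyadic $\lambda$, this should give a per-direction contribution of $\sigma_j$, and hence $\tr\sqrt A$ in total. I expect this bookkeeping to be delicate: one must check that only the unrounded block needs to be sampled at rate $\sqrt\kappa$, or else bound $k\sqrt\kappa$ by $\sum_j\sigma_j$ directly. If the naive global $\kappa$ is too lossy, the fallback is a geometric schedule in which $\lambda$ increases by a factor of $4$ per stage, with the Cauchy--Schwarz bound $\tr A/\sqrt{\lambda\|A\|_\op}\le\tr\sqrt A$-type estimates. The $\|m\|_2$ term enters through the initialization cost and the first centering.
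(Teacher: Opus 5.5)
Your framework matches the paper's: stretching directions of small estimated variance, a certified ball, fresh warm starts per batch, and a final relative-estimation batch. However, the two steps that carry the proof are missing.

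First, nothing in your scheme makes the certified ball grow. The inclusion $(I+P_i)B(z_i,\rho_i)\supseteq B((I+P_i)z_i,\rho_i)$ is true but only preserves the radius. The claim that ``$\rho$ doubles in the directions where the variance is small'' describes an ellipsoid, not a ball. The sampler's cost is governed by $\|A_i\|_{\op}/\rho_i^2$, where $\rho_i$ is the radius of a contained ball. With $\rho_i$ frozen, stretching never decreases this ratio (and rescaling the whole space leaves it unchanged), so the claimed decrease of $\kappa$ to $O(1)$ never happens.

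The missing idea is the second inclusion from \Cref{lem:level-covariance-ellipsoid}, after enlarging $H$ to its constant: in current coordinates, the fixed level set also contains $m_i+\tfrac12A_i^{1/2}B_2^n$. Take a convex combination of the old ball with this ellipsoid, with a small weight $w$ on the ellipsoid, and then apply $I+P_i$. The ball supplies width about $2\rho_i$ in the stretched directions. The ellipsoid supplies width in the unstretched ones, where the estimated variance exceeds the threshold. This is \Cref{lem:logconcave-level-expansion}. The paper uses the fixed threshold $(\tr\sqrt A)^2$, the weight $w=8\rho_i/\tr\sqrt A\le\eta/4$, and the center update $z_{i+1}=M_i(z_i+w(\widehat m_i-z_i))$; this update is why the mean must be known to within $\eta\tr\sqrt A$. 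The result is $\rho_{i+1}=2(1-12\eta)\rho_i$, hence $O(\log(2+\tr\sqrt A))$ growth stages.

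Second, the cost accounting you leave open does not go through as sketched. The per-sample bound from \Cref{lem:rounding-estimation-tools} depends on the global ratio $\|A_i\|_{\op}/\rho_i^2$, so you cannot charge only the unrounded block. With threshold and tolerance at the ball scale, as your ``current lower scale'' $\lambda_i$ suggests, the hoped-for per-direction telescoping is false. Take $A=\operatorname{diag}(S^2,1,\ldots,1)$ with $\rho_0=1$. The first batch already needs $\Omega(n)$ samples at a per-sample bound of order $n^2S$, i.e.\ about $n^3S$ queries, against the target $\Ot(n^3+n^2S)$.

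The paper instead keeps the threshold and the additive tolerances at the fixed scale of $\tr\sqrt A$ in current coordinates, namely $a=\eta^2(\tr\sqrt A)^2$ and $b=\eta\tr\sqrt A$. It then proves three bounds:
\begin{itemize}
\item $\|A_{i+1}\|_{\op}\le\|A_i\|_{\op}+9(\tr\sqrt A)^2$, from the lower estimate on the range of $P_i$;
\item the nuclear-norm bound $\tr\sqrt{A_i}=\|F_iA^{1/2}\|_*\le\|F_i\|_{\op}\tr\sqrt A\le C\rho_i\tr\sqrt A$;
\item consequently $\tr A_i\le\sqrt{\|A_i\|_{\op}}\,\tr\sqrt{A_i}\le C\rho_i\sqrt\ell(\tr\sqrt A)^2$.
\end{itemize}
The sample count is then $\Ot(\rho_i)$ and the per-sample cost is $\Ot(n^2\tr\sqrt A/\rho_i)$, so each growth stage costs $\Ot(n^2\tr\sqrt A)$. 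Once $\rho_i>\eta\tr\sqrt A/32$, one has $\kappa=\Ot(1)$ and the final batch costs $\Ot(n^3)$.

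These bounds, together with the recursion $\|m_{i+1}-z_{i+1}\|_2/\rho_{i+1}\le(\|m_i-z_i\|_2/\rho_i+8\eta)/(1-12\eta)$, also control $\sqrt{\mathbb{E}\|F_iX-z_i\|_2^2}/\rho_i$. That quantity determines the initialization cost in every batch. Your estimate $\sqrt{\tr A}\le\tr\sqrt A$ covers only the first batch.
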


We first describe the algorithm, then prove that its updates preserve a
known contained ball. We finish the proof of
\Cref{prop:logconcave-inner-rounder} by bounding the covariance growth,
the sampling cost, and the error of the final normalization.

\paragraph{Algorithm.}
Increase $H$, if necessary, to the constant in
\Cref{lem:level-covariance-ellipsoid}, and fix
$C=\{W\le\inf W+Hn\}$.
Set $\ell=\lceil\log_2(2+\operatorname{tr}\sqrt A)\rceil$ and
$\eta=(2000\ell)^{-1}$.
Throughout this algorithm, $A$ is the covariance of the original input.
The current covariance is $A_i=F_iAF_i^{\mathsf T}$ and its mean is
$m_i=F_im$.
Start with $F_0=I$, $z_0=0$, and $\rho_0=1$.
The algorithm maintains $B(z_i,\rho_i)\subseteq F_iC$.
\begin{enumerate}
\item While $\rho_i\le\eta\operatorname{tr}\sqrt A/32$, apply
\Cref{lem:rounding-estimation-tools} with $k=\Ot(\rho_i)$ to obtain
estimates satisfying
\begin{equation}\label{eq:subspace-growth-estimates}
 \begin{aligned}
 \tfrac12 A_i-\eta^2(\operatorname{tr}\sqrt A)^2I
 &\preceq\widehat A_i
 \preceq2A_i+\eta^2(\operatorname{tr}\sqrt A)^2I,\\
 \|\widehat m_i-m_i\|_2&\le\eta\operatorname{tr}\sqrt A.
 \end{aligned}
\end{equation}
Let $P_i$ project onto the eigenspaces of $\widehat A_i$ with
eigenvalues at most $(\operatorname{tr}\sqrt A)^2$, and put $M_i=I+P_i$.
Thus $M_i$ doubles these directions. Update
\[
 \begin{aligned}
 F_{i+1}&=M_iF_i,\\
 z_{i+1}&=M_i\left(z_i+
       \frac{8\rho_i}{\operatorname{tr}\sqrt A}(\widehat m_i-z_i)\right),\\
 \rho_{i+1}&=2(1-12\eta)\rho_i,
 \end{aligned}
\]
and increase $i$ by one.
\item Use the relative-estimation algorithm in
\Cref{lem:rounding-estimation-tools} to obtain a covariance estimate
$H_i$ and a mean estimate $\widehat m_i$. If $H_i\succ0$,
return $x\mapsto H_i^{-1/2}(F_ix-\widehat m_i)$; otherwise declare failure.
\end{enumerate}
Each estimation batch uses a fresh initial sample and the capped
attempts specified in \Cref{lem:subspace-initialization-and-budgets}.
The cost analysis below includes these calls.

The geometric step is to prove that the center and radius updates
preserve the contained ball. The old ball supplies width in the doubled
directions; the covariance ellipsoid supplies width in the remaining
directions. \Cref{lem:level-covariance-ellipsoid} supplies the required
ellipsoid inclusion in the fixed level set. The next lemma proves the
expansion from these two inclusions.
\begin{lemma}[Expansion inside a level set]
\label{lem:logconcave-level-expansion}
At a growth stage of the algorithm, suppose
\[
 B(z_i,\rho_i)\cup\bigl(m_i+\tfrac12A_i^{1/2}B_2^n\bigr)
 \subseteq F_iC.
\]
If \eqref{eq:subspace-growth-estimates} holds, then
$B(z_{i+1},\rho_{i+1})\subseteq F_{i+1}C$.
\end{lemma}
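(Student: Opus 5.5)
The plan is to pull the new ball back to the old coordinates and check containment through support functions. Since $F_{i+1}=M_iF_i$ with $M_i=I+P_i$ invertible, the claim is equivalent to
\[
 M_i^{-1}B(z_{i+1},\rho_{i+1})\subseteq F_iC,
\qquad
 M_i^{-1}=I-\tfrac12P_i .
\]
Write $T=\operatorname{tr}\sqrt A$, $\lambda=8\rho_i/T$, and $Q_i=I-P_i$. The pulled-back set is an ellipsoid. Its center is $w=z_i+\lambda(\widehat m_i-z_i)$, and its shape is $\rho_{i+1}(\tfrac12P_i+Q_i)B_2^n$. Since $F_iC$ is convex and contains both $B(z_i,\rho_i)$ and $m_i+\tfrac12A_i^{1/2}B_2^n$ by hypothesis, it contains the Minkowski combination
\[
 (1-\lambda)B(z_i,\rho_i)+\lambda\bigl(m_i+\tfrac12A_i^{1/2}B_2^n\bigr).
\]
This requires $\lambda\in[0,1]$, which holds because the loop condition $\rho_i\le\eta T/32$ gives $\lambda\le\eta/4$. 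It therefore suffices to show this combination contains the ellipsoid.

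Both sets are convex and compact, so I will compare support functions in every unit direction $u$. The ellipsoid has support function
\[
 \langle w,u\rangle+\rho_{i+1}\norm{(\tfrac12P_i+Q_i)u}_2
 \le\langle w,u\rangle+\rho_{i+1}\bigl(\tfrac12\norm{P_iu}_2+\norm{Q_iu}_2\bigr).
\]
The combination has support function
\[
 \langle z_i+\lambda(m_i-z_i),u\rangle+(1-\lambda)\rho_i+\tfrac\lambda2\sqrt{u^{\mathsf T}A_iu}.
\]
The two centers differ by $\lambda(\widehat m_i-m_i)$. By \eqref{eq:subspace-growth-estimates} this has norm at most $\lambda\eta T=8\eta\rho_i$, so it costs at most $8\eta\rho_i$ in the comparison.

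The key step, and the only place the estimates enter, is a lower bound on $u^{\mathsf T}A_iu$ in the undoubled directions. By construction $Q_i$ projects onto the eigenspaces of $\widehat A_i$ with eigenvalues above $T^2$. Since $\widehat A_i\succeq0$, this gives $\widehat A_i\succeq T^2Q_i$. The lower estimate in \eqref{eq:subspace-growth-estimates} then yields
\[
 u^{\mathsf T}A_iu\ge\tfrac12\bigl(T^2\norm{Q_iu}_2^2-\eta^2T^2\bigr),
 \qquad\text{so}\qquad
 \sqrt{u^{\mathsf T}A_iu}\ge\tfrac{T}{\sqrt2}\bigl(\norm{Q_iu}_2-\eta\bigr)_+ .
\]
Multiplying by $\lambda/2=4\rho_i/T$ gives a contribution of at least $2\sqrt2\,\rho_i(\norm{Q_iu}_2-\eta)$.

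It then remains to combine the terms. Using $\lambda\le\eta/4\le\eta$ and collecting the $\eta$-losses, the combination's support exceeds the center term $\langle w,u\rangle$ by at least
\[
 \rho_i\bigl(1+2\sqrt2\norm{Q_iu}_2-12\eta\bigr).
\]
The ellipsoid's excess is at most
\[
 2(1-12\eta)\rho_i\bigl(\tfrac12\norm{P_iu}_2+\norm{Q_iu}_2\bigr)\le(1-12\eta)\rho_i\bigl(1+2\norm{Q_iu}_2\bigr),
\]
using $\norm{P_iu}_2\le1$. The first quantity dominates the second, since $1-12\eta\le1$ and $2\sqrt2\ge2$. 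So the containment holds direction by direction, which proves the lemma.

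I expect the main obstacle to be the bookkeeping. One must confirm that $\lambda\le1$, and that the constants $8$, $32$ and $12\eta$ in the algorithm's updates make the mean error and the $\eta^2T^2$ slack fit exactly. A further point is that $P_i$ and $Q_i$ need not diagonalize $A_i$, which is why I would argue with support functions rather than axis by axis.
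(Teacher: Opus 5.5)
Your proof is correct and is essentially the paper's argument. It uses the same convex combination of $B(z_i,\rho_i)$ and $m_i+\tfrac12A_i^{1/2}B_2^n$ with weight $8\rho_i/\operatorname{tr}\sqrt A$, checked by support functions, except that you pull the new ball back by $M_i^{-1}=I-\tfrac12P_i$ rather than pushing the combination forward by $M_i$; as a result the mean error costs $8\eta\rho_i$ instead of $16\eta\rho_i$, and the paper's inequality $\sqrt{4-3b^2}+2b\ge2$ is replaced by the cruder $\|(\tfrac12P_i+Q_i)u\|_2\le\tfrac12+\|Q_iu\|_2$.

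One labeling slip: the bound $u^{\mathsf T}A_iu\ge\tfrac12(T^2\|Q_iu\|_2^2-\eta^2T^2)$ comes from the right-hand inequality $\widehat A_i\preceq2A_i+\eta^2T^2I$ (the paper's ``upper covariance estimate''), not from the lower one.
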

\begin{proof}
Take the convex combination of the ball and ellipsoid with weight
$8\rho_i/\operatorname{tr}\sqrt A$ on the ellipsoid, and apply $M_i$.
This weight is at most $\eta/4$, so the resulting set lies in $F_{i+1}C$.
Its center differs from $z_{i+1}$ by at most $16\eta\rho_i$.

Fix a unit vector $u$ and put $b=\|(I-P_i)u\|_2$.
Then $\|M_iu\|_2=\sqrt{4-3b^2}$.
On the range of $I-P_i$, the eigenvalues of $\widehat A_i$ exceed
$(\operatorname{tr}\sqrt A)^2$. The upper covariance estimate gives
\[
 \begin{aligned}
 2\|A_i^{1/2}M_iu\|_2^2
 &\ge (M_iu)^{\mathsf T}\widehat A_i(M_iu)
       -\eta^2(\operatorname{tr}\sqrt A)^2\|M_iu\|_2^2\\
 &\ge(\operatorname{tr}\sqrt A)^2(b^2-4\eta^2).
 \end{aligned}
\]
Thus $\|A_i^{1/2}M_iu\|_2
\ge(\operatorname{tr}\sqrt A)(b-2\eta)/\sqrt2$;
when $b<2\eta$, the right side is negative.
The support function of the convex combination after applying $M_i$,
measured from $z_{i+1}$, is therefore at least
\begin{equation}\label{eq:expansion-support-lower-bound}
 \rho_i\left[
 \left(1-\frac{8\rho_i}{\operatorname{tr}\sqrt A}\right)\sqrt{4-3b^2}
 +2\sqrt2(b-2\eta)-16\eta
 \right].
\end{equation}
For $0\le b\le1$, $\sqrt{4-3b^2}+2b\ge2$.
Together with $8\rho_i/\operatorname{tr}\sqrt A\le\eta/4$,
this shows that the support-function bound in
Equation~\eqref{eq:expansion-support-lower-bound} is at least
\[
 \begin{aligned}
 2\rho_i-(4\cdot2^{1/2}+16+\tfrac12)\eta\rho_i
 &\ge2(1-12\eta)\rho_i\\
 &=\rho_{i+1}.
 \end{aligned}
\]
The inequality in every direction proves the ball inclusion.
\end{proof}

\begin{proof}[Proof of \Cref{prop:logconcave-inner-rounder}]
First assume all moment estimates satisfy their stated guarantees. We
prove the ball invariant and bound the moments along the algorithm, then
apply \Cref{lem:rounding-estimation-tools} to compute its sampling cost.
We account for initialization and failure probabilities at the end.
By \Cref{lem:level-covariance-ellipsoid},
$m+\tfrac12A^{1/2}B_2^n\subseteq C$.
Its image under $F_i$ is
$m_i+\tfrac12A_i^{1/2}B_2^n$.
\Cref{lem:logconcave-level-expansion} consequently preserves the known
ball. Its radius grows by $2(1-12\eta)$, so there are at most
$2\ell$ growth stages.

We next bound the current covariance relative to the growing ball.
Its operator norm controls the cost per sample, and its trace controls
the number of samples in \Cref{lem:rounding-estimation-tools}.
The lower covariance estimate on the range of $P_i$ gives
$P_iA_iP_i\preceq3(\operatorname{tr}\sqrt A)^2P_i$.
Since $M_i^2=I+3P_i$, $A_{i+1}$ has the same eigenvalues as
$A_i^{1/2}(I+3P_i)A_i^{1/2}$.
Also $A_i^{1/2}P_iA_i^{1/2}$ and $P_iA_iP_i$ have the same nonzero
eigenvalues. Hence
\[
 \|A_{i+1}\|_\op
 \le\|A_i\|_\op+3\|P_iA_iP_i\|_\op
 \le\|A_i\|_\op+9(\operatorname{tr}\sqrt A)^2.
\]
Initially $\|A\|_\op\le(\operatorname{tr}\sqrt A)^2$.
Moreover, $\|F_i\|_\op\le2^i$ and
$\rho_i=[2(1-12\eta)]^i$, giving $\|F_i\|_\op\le C\rho_i$.
Writing $\|M\|_*$ for the sum of the singular values of a matrix $M$,
the nuclear norm inequality gives
$\operatorname{tr}\sqrt{A_i}
=\|F_iA^{1/2}\|_*
\le\|F_i\|_\op\operatorname{tr}\sqrt A$.
We obtain
\begin{equation}\label{eq:logconcave-inner-trace}
 \begin{aligned}
 \|A_i\|_\op&\le(1+9i)(\operatorname{tr}\sqrt A)^2,\\
 \operatorname{tr}\sqrt{A_i}&\le C\rho_i\operatorname{tr}\sqrt A,\\
 \operatorname{tr}A_i&\le
 C\rho_i(\operatorname{tr}\sqrt A)^2\sqrt\ell,
 \end{aligned}
\end{equation}
where the last bound follows by multiplying
$\sqrt{\|A_i\|_\op}$ and $\operatorname{tr}\sqrt{A_i}$.

Initialization also requires a second-moment bound about the known
center $z_i$. The center update and mean estimate imply
\[
 \frac{\|m_{i+1}-z_{i+1}\|_2}{\rho_{i+1}}
 \le
 \frac{\|m_i-z_i\|_2/\rho_i+8\eta}{1-12\eta}.
\]
Starting from $\|m_0-z_0\|_2=\|m\|_2$ and iterating at most $2\ell$
times gives $\|m_i-z_i\|_2/\rho_i\le C(\|m\|_2+1)$.
Combining this with \eqref{eq:logconcave-inner-trace} and $\rho_i\ge1$
gives the known bound
\begin{equation}\label{eq:subspace-second-moment}
 \frac{\sqrt{\mathbb E\|F_iX-z_i\|_2^2}}{\rho_i}
 \le C\left(\ell^{1/4}\operatorname{tr}\sqrt A+\|m\|_2+1\right).
\end{equation}

We now substitute these bounds into \Cref{lem:rounding-estimation-tools}.
For a growth batch, use
$a=\eta^2(\operatorname{tr}\sqrt A)^2$ and
$b=\eta\operatorname{tr}\sqrt A$.
By \eqref{eq:logconcave-inner-trace}, the required sample count is
$\Ot(1+\operatorname{tr}A_i/
(\eta^2(\operatorname{tr}\sqrt A)^2))=\Ot(\rho_i)$.
After rescaling the known ball to unit radius, the expected cost per
sample from a constant-warm start is
$\Ot(n^2\operatorname{tr}\sqrt A/\rho_i)$.
Thus the expected sampling work of a growth stage is
$\Ot(n^2\operatorname{tr}\sqrt A)$: the increase in sample count is
canceled by the decrease in cost per sample.
At termination, $\rho_i>\eta\operatorname{tr}\sqrt A/32$, so the expected
cost per sample is $\Ot(n^2)$ and the final estimation batch uses
$\Ot(n^3)$ expected queries.

By \eqref{eq:subspace-second-moment}, the initialization bound from
\Cref{prop:logconcave-fresh-initialization} is
$\Ot(n^{5/2}+n^2(\operatorname{tr}\sqrt A+\|m\|_2))$ per batch.
There are at most $2\ell+1$ batches. These initialization costs and
expected sampling costs from constant-warm laws sum to
$\Ot(n^3+n^2(\operatorname{tr}\sqrt A+\|m\|_2))$.
\Cref{lem:subspace-initialization-and-budgets} accounts for the
initialization error and converts the expected sampling costs into
query caps, using a logarithmic number of attempts. It gives the same
total query bound with failure probability at most $\delta$.

On success, the final estimates satisfy
$\tfrac78 A_i\preceq H_i\preceq\tfrac98 A_i$ and
$\|A_i^{-1/2}(\widehat m_i-m_i)\|_2\le1/16$.
Thus the returned map $F$ satisfies
$(8/9)I\preceq\Cov(F(X))\preceq(8/7)I$ and
$\|\E F(X)\|_2\le\sqrt{8/7}/16$, proving the stated bounds.
\end{proof}

\subsection{Gaussian cooling for general inputs}
\label{subsec:gaussian-cooling-upper}

We use Gaussian cooling~\cite{CV18:journal,KV25} to reduce the general input to
\Cref{prop:logconcave-inner-rounder}. After each halving of the
temperature, the previous affine coordinates satisfy
$\operatorname{tr}\sqrt A+\|m\|_2=\widetilde O(n)$ for the new target,
and a constant-radius ball remains in a level set. Thus each rounding
call costs $\widetilde O(n^3)$. We first verify these claims using the
logconcave estimates in \Cref{subsec:rounding-cooling-estimates}, then
prove the main theorems by initializing the sequence and removing the
final Gaussian factor.

Assume \eqref{eq:grounded-input} and write $a=\inf V$. For $t\ge0$, define
$\Phi_t(x)=V(x)+t\|x-x_0\|_2^2/2$, let
$Z_t=\int_{\mathbb R^n}e^{-\Phi_t(x)}\,dx$, and put
$\mu_t(dx)=Z_t^{-1}e^{-\Phi_t(x)}\,dx$. Thus $\mu_0=\mu$.
Write $a_t=\inf\Phi_t$, $m_t=\mathbb E_{\mu_t}X$, and
$A_t=\operatorname{Cov}(\mu_t)$. Set
$t_0=64(n+1)(hn/r)^2$ and $L=\log(2+nR/r)$.

We maintain slightly stronger rounding bounds so that the last Gaussian
factor can be removed without another rounding call. For $X_t\sim\mu_t$,
the current affine map $F$ will satisfy
\begin{equation}\label{eq:cooling-internal-rounding}
 \tfrac34I_n\preceq\operatorname{Cov}(F(X_t))\preceq\tfrac43I_n
 \quad\text{and}\quad
 \|\mathbb EF(X_t)\|_2\le\tfrac18.
\end{equation}

\Cref{cor:half-temperature-bridge} bounds the change in the covariance
and mean when the temperature is halved, and
\Cref{lem:level-covariance-ellipsoid} places a covariance ellipsoid in a
level set. The following lemma applies these two appendix estimates in
the current affine coordinates. Its conclusions are exactly the inputs
needed for a nearly cubic call to \Cref{prop:logconcave-inner-rounder},
after a constant rescaling.
\begin{lemma}[The next cooling phase]\label{lem:cooling-transition}
Let $0<t\le t_0$, and let $F$ be an invertible affine map satisfying
\eqref{eq:cooling-internal-rounding}. There is a universal constant $H_0$
such that
\[
 \tfrac14 B_2^n\subseteq
 \{y:\Phi_{t/2}(F^{-1}y)\le a_{t/2}+(H_0+h)n\}.
\]
For $X_{t/2}\sim\mu_{t/2}$, the next target also satisfies
\begin{align*}
 \operatorname{tr}\sqrt{\operatorname{Cov}(F(X_{t/2}))}
 &\le C_hnL\log(2n),\\
 \|\mathbb EF(X_{t/2})\|_2&\le C_hnL.
\end{align*}
\end{lemma}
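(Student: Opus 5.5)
We work throughout in the coordinates $y=F(x)$. Let $\tilde\mu_s$ denote the pushforward of $\mu_s$ under $F$, with potential $\tilde\Phi_s=\Phi_s\circ F^{-1}$, mean $\tilde m_s$ and covariance $\tilde A_s$. Since $F$ is affine, $\tilde\mu_s$ is logconcave, and the hypothesis \eqref{eq:cooling-internal-rounding} says that $\tfrac34I\preceq\tilde A_t\preceq\tfrac43I$ and $\|\tilde m_t\|_2\le\tfrac18$. The proof has two parts: the level-set inclusion of $\tfrac14B_2^n$, and the trace and mean bounds for $\tilde\mu_{t/2}$.

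\emph{Level-set inclusion.} First apply \Cref{lem:level-covariance-ellipsoid} to $\tilde\mu_t$. This gives $\tilde m_t+\tfrac12\tilde A_t^{1/2}B_2^n\subseteq\{\tilde\Phi_t\le a_t+H_0n\}$. The ellipsoid has semi-axes at least $\tfrac12\sqrt{3/4}>\tfrac38$, and its center lies within $\tfrac18$ of the origin, so it contains $\tfrac14B_2^n$. To pass to $\Phi_{t/2}$, use $\Phi_{t/2}\le\Phi_t$ pointwise. It then suffices to show $a_t-a_{t/2}\le hn$, since this gives $\Phi_{t/2}(x)\le\Phi_t(x)\le a_t+H_0n\le a_{t/2}+(H_0+h)n$ on the ball. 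For that comparison, take $x^*$ minimizing $\Phi_{t/2}$. The value $\Phi_t(x^*)=\Phi_{t/2}(x^*)+t\|x^*-x_0\|_2^2/4$ bounds $a_t$, so we need to control $t\|x^*-x_0\|^2$. I would instead compare both infima to $a=\inf V$. Every point of $B(x_0,r)$ has $V\le a+hn$, so $a_s\le a+hn+sr^2/2$. Conversely $a_s\ge a$. Hence $a_t-a_{t/2}\le hn+tr^2/2$, and because $t\le t_0$ this term is $O(n^3h^2)$, which is too large. So the minimizer approach is needed after all: if $x^*$ is far from $x_0$, convexity along the segment from $x^*$ to $B(x_0,r)$ bounds the gain. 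If this direct argument fails, the fallback is the appendix statement behind \Cref{cor:half-temperature-bridge}. The cleanest route is to use \Cref{lem:level-covariance-ellipsoid} directly for $\tilde\mu_{t/2}$, together with the moment bounds below, if that lemma is phrased for level sets relative to $a_{t/2}$. I expect this $a_t$ versus $a_{t/2}$ bookkeeping to be the main obstacle, and I would resolve it by invoking whichever appendix estimate controls the level-set shift under halving.

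\emph{Moment bounds.} Apply \Cref{cor:half-temperature-bridge} in the $y$ coordinates. The density ratio $d\mu_{t/2}/d\mu_t\propto e^{t\|x-x_0\|^2/4}$ is log-convex, and halving the temperature changes the second moments by at most polylogarithmic and dimension factors. I would use it in the form
\[
\tilde A_{t/2}\preceq C\,\tilde A_t+C\,t\,\Cov_{\tilde\mu_t}\!\bigl(\|F^{-1}y-x_0\|_2^2\bigr)\text{-type terms},
\]
or, more likely, in its packaged form $\operatorname{tr}\tilde A_{t/2}+\|\tilde m_{t/2}-\tilde m_t\|_2^2\le C_h(nL)^2$, which is the $\Ot(n^2)$ bound stated in the overview. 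The mean bound then follows from $\|\tilde m_{t/2}\|_2\le\tfrac18+C_hnL$. For the trace of the square root, the trace bound alone only gives $\operatorname{tr}\sqrt{\tilde A_{t/2}}\le\sqrt{n\operatorname{tr}\tilde A_{t/2}}=O(n^{3/2})$, which is not enough. I would therefore use the nuclear-norm transfer \Cref{lem:nuclear-transfer}. It controls each directional variance $u^\top\tilde A_{t/2}u$ by the logconcave moment comparison \eqref{eq:logconcave-moments} applied to the reweighting along an orthonormal eigenbasis. The intended spectral profile is that the $j$-th largest eigenvalue is at most $C(nL/j)^2$, because only a few directions can grow much. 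Summing square roots then gives $\sum_j nL/j=O(nL\log n)$, which is exactly the stated $C_hnL\log(2n)$. Establishing this $1/j$ decay of the spectrum from the appendix estimate is the second delicate step.
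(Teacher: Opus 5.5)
Your skeleton is the paper's: in the $F$-coordinates apply \Cref{lem:level-covariance-ellipsoid} to $\mu_t$, pass to $\Phi_{t/2}$ using $\Phi_{t/2}\le\Phi_t$ and $a_t-a_{t/2}\le hn$, and get the moment bounds from $D_2(\mu_{t/2}\Vert\mu_t)\le C_hnL$ through \Cref{lem:nuclear-transfer}. As written, though, the proposal is not yet a proof. It leaves open the two steps you call delicate, although \Cref{cor:half-temperature-bridge} already states both. It also omits the one step that genuinely has to be added.

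The level-set shift needs no segment argument. If $x^*$ minimizes $\Phi_{t/2}$, then $\tfrac t4\|x^*-x_0\|_2^2=\Phi_{t/2}(x^*)-V(x^*)\le\Phi_{t/2}(x_0)-\inf V=V(x_0)-\inf V\le hn$. Hence $a_t\le\Phi_t(x^*)\le a_{t/2}+hn$. Your fallback of applying \Cref{lem:level-covariance-ellipsoid} to $\tilde\mu_{t/2}$ would fail. In these coordinates the mean of $\tilde\mu_{t/2}$ can lie many of its own standard deviations from the origin. For instance, for a cone with apex near $x_0$ and $r\ll t^{-1/2}$, halving $t$ moves the mean by order $\sqrt n$ standard deviations along the axis. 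So that half-covariance ellipsoid need not contain $\tfrac14B_2^n$.

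For the moments, \Cref{cor:half-temperature-bridge} gives more than a trace bound. With $B=A_t^{-1/2}A_{t/2}A_t^{-1/2}$, it gives $\operatorname{tr}\sqrt B\le C_hnL\log(2n)$ and $\|A_t^{-1/2}(m_{t/2}-m_t)\|_2\le C_hnL$. This is exactly the $1/j$ spectral decay you want, but in the coordinates where $\mu_t$ is exactly isotropic. That decay comes from comparing volumes of the centroid bodies $Z_k$ on the top-$k$ eigenspace. Directional moment comparisons alone only bound every eigenvalue by $C(nL)^2$ and cannot produce the decay.

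What is missing is the passage to the merely near-isotropic $F$-coordinates. \Cref{lem:nuclear-transfer} requires exact isotropy, so you cannot simply apply it ``in the $y$ coordinates.'' Let $T$ be the linear part of $F$. The hypothesis $TA_tT^{\mathsf T}\preceq\tfrac43I$ gives $\|TA_t^{1/2}\|_{\mathrm{op}}\le\sqrt{4/3}$. Since $TA_{t/2}^{1/2}=(TA_t^{1/2})(A_t^{-1/2}A_{t/2}^{1/2})$ and the second factor has nuclear norm $\operatorname{tr}\sqrt B$, the nuclear-norm inequality yields $\operatorname{tr}\sqrt{TA_{t/2}T^{\mathsf T}}\le\|TA_t^{1/2}\|_{\mathrm{op}}\operatorname{tr}\sqrt B\le\sqrt{4/3}\,C_hnL\log(2n)$. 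Likewise $\|T(m_{t/2}-m_t)\|_2\le\sqrt{4/3}\,C_hnL$. Together with $\|\mathbb EF(X_t)\|_2\le\tfrac18$, this gives the mean bound.
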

\begin{proof}
Apply \Cref{lem:level-covariance-ellipsoid} in the coordinates $F$.
The level set at height $H_0n$ above $a_t$ contains half the covariance
ellipsoid about the transformed mean. By
\eqref{eq:cooling-internal-rounding}, it therefore contains the ball
of radius $\sqrt3/4-1/8>1/4$ about zero. Since
$\Phi_{t/2}\le\Phi_t$ and $a_t-a_{t/2}\le hn$ by
\Cref{cor:half-temperature-bridge}, this proves the asserted inclusion.

Let $T$ be the linear part of $F$ and put
$B=A_t^{-1/2}A_{t/2}A_t^{-1/2}$. The current covariance bound gives
$\|TA_t^{1/2}\|_\op\le\sqrt{4/3}$. Hence
\[
 \operatorname{tr}\sqrt{TA_{t/2}T^{\mathsf T}}
 \le\|TA_t^{1/2}\|_\op\operatorname{tr}\sqrt B
 \le C_hnL\log(2n),
\]
where the last inequality is \Cref{cor:half-temperature-bridge}.
The same corollary gives $\|A_t^{-1/2}(m_{t/2}-m_t)\|_2\le C_hnL$.
Together with $\|\E F(X_t)\|_2\le1/8$, this yields
$\|\E F(X_{t/2})\|_2\le1/8+\sqrt{4/3}\,C_hnL$, as required.
\end{proof}

We now prove the main upper bounds. The preceding lemma justifies each
rounding call. The first target is handled by the rejection sampler in
\Cref{lem:high-temperature-rejection}; the final target is compared with
the original law using \Cref{lem:cooling-endpoint}.
\begin{proof}[Proof of \Cref{thm:upper-logconcave,thm:upper-body}]
Fix a sufficiently small universal constant $\theta>0$. Put
$t_j=2^{-j}t_0$, and let $J$ be the first index for which
$t_JR^2\le\theta$. The known input bounds determine $J$, and
$J=O_h(L)$. Give the initialization and each of the $J$ rounding calls
failure probability at most $\delta=1/(6(J+1))$.

By \Cref{lem:high-temperature-rejection}, we can sample exactly and
independently from $\mu_{t_0}$ with a constant expected number of potential
evaluations per sample. The independent-sample case of the concentration argument in
\Cref{subsec:upper-moments} gives the relative moment estimates
\eqref{eq:logconcave-final-estimates} from $\widetilde O(n)$ samples.
They define an initial affine map $F_0$ satisfying
\eqref{eq:cooling-internal-rounding}. The prescribed
failure probability and the query cap for the rejection sampler change
this cost only by logarithmic factors.

Suppose $F_{j-1}$ satisfies \eqref{eq:cooling-internal-rounding} for
$\mu_{t_{j-1}}$. By \Cref{lem:cooling-transition}, the next law in these
coordinates has a known radius-$1/4$ ball in a level set of height
$(H_0+h)n$, together with the stated mean and covariance bounds.
To rescale this ball to the unit ball, use the potential
$W_j(y)=\Phi_{t_j}(F_{j-1}^{-1}(y/4))$.
Its normalized law is that of $4F_{j-1}(X_{t_j})$.
Multiplying the coordinates by four turns the known radius-$1/4$ ball
into the unit ball required by \Cref{prop:logconcave-inner-rounder}.
It also multiplies the mean norm and the sum of the square roots of the
covariance eigenvalues by four.
Apply that proposition with $H=H_0+h$, using
$4C_hnL\log(2n)$ as the bound on $\operatorname{tr}(A^{1/2})$ and
$4C_hnL$ as the bound on the mean norm.
Here $C_h$ is the constant from \Cref{lem:cooling-transition}, which
depends only on the fixed level-set height parameter $h$.
Compose its returned affine map with
$x\mapsto4F_{j-1}(x)$ to obtain $F_j$. On success, $F_j$ satisfies
\eqref{eq:cooling-internal-rounding} for $\mu_{t_j}$, completing the
induction.

After every successful phase, \Cref{lem:level-covariance-ellipsoid}
again places the radius-$1/4$ ball in the level set at height $H_0n$ of
the current potential. Thus the next call always uses height $H_0+h$;
the level-set heights do not accumulate over phases.

We pass the affine coordinates between temperatures and use the
initialization included in \Cref{prop:logconcave-inner-rounder} to
supply fresh initial samples for the new target. Each rounding call,
including this initialization, costs $\widetilde O(n^3)$,
since both moment bounds are $\widetilde O(n)$. There are $J=O_h(L)$ calls, so
the overall query bound is $\widetilde O(n^3)$. Each evaluation of $W_j$ uses one evaluation of
$V$. All failures covered by the cited algorithms are included in the
allocated error probabilities, so a union bound gives success probability
at least $5/6$.

It remains to transfer the final map $F_J$ to $\mu$. By
\Cref{lem:cooling-endpoint},
\begin{align*}
 (1-C\theta)A_\mu&\preceq A_{t_J}\preceq(1+C\theta)A_\mu,\\
 \|A_\mu^{-1/2}(m_{t_J}-m_\mu)\|_2&\le C\theta.
\end{align*}
For $X\sim\mu$ and sufficiently small $\theta$, the covariance comparison
and \eqref{eq:cooling-internal-rounding} give
$\tfrac12I_n\preceq\Cov(F_J(X))\preceq2I_n$.
If $T_J$ is the linear part of $F_J$, then
$\|T_JA_\mu^{1/2}\|_\op\le\sqrt2$, so
$\|\E F_J(X)\|_2\le1/8+\sqrt2C\theta\le1/2$.
Thus $(T_J,-T_J^{-1}F_J(0))$ is the required rounding pair.

For a convex body, take $V=0$ on $K$ and $V=+\infty$ outside $K$.
The input assumptions hold with $x_0=0$, and each potential evaluation
requires one membership query. This proves \Cref{thm:upper-body} as well.
\end{proof}

Finally, \Cref{lem:support-ball} turns the affine rounding guarantee into
the contained ball and second-moment bound required by the volume
algorithm of Cousins and Vempala~\cite{CV18:journal}. This gives the volume bound
stated in the introduction.
\begin{corollary}\label{cor:volume-reduction}
Suppose a membership algorithm rounds a body $K$ as in
\Cref{def:rounding} using at most $Q$ queries and failing with probability
at most $1/6$. For $\varepsilon\in(0,1)$, an algorithm uses
$Q+\Ot(n^3\varepsilon^{-2})$ queries and returns $\widehat V$ such that,
with probability at least $2/3$,
\[
 (1-\varepsilon)\vol(K)\le\widehat V\le(1+\varepsilon)\vol(K).
\]
\end{corollary}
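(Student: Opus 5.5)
The plan is to run the rounding algorithm first, use its output to put $K$ in a position where the Cousins--Vempala algorithm applies, and then undo the affine map on the volume estimate.

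First, run the given rounding algorithm with at most $Q$ queries. With probability at least $5/6$ it returns $(T,z)$ satisfying \eqref{eq:rounding}. Let $K'=T(K-z)$. A membership query for $K'$ at $y$ is one query for $K$ at $z+T^{-1}y$, and
\[
 \vol(K)=|\det T|^{-1}\vol(K').
\]
So it suffices to estimate $\vol(K')$ to relative error $\varepsilon$, which we do by running the Cousins--Vempala algorithm with failure probability $1/6$ and multiplying its output by $|\det T|^{-1}$.

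Second, we must check the geometric hypotheses in the paragraph on well-rounded volume, namely a known contained ball and an $O(n)$ second moment, for $Y\sim\Unif(K')$. By \Cref{lem:support-ball}, $K'$ contains $\E Y+\Cov(Y)^{1/2}\bn\supseteq\E Y+2^{-1/2}\bn$. Since $\|\E Y\|_2\le1/2$, this gives $B(0,\rho)\subseteq K'$ with $\rho=2^{-1/2}-1/2>0$, as remarked after that lemma. For the second moment,
\[
 \E\|Y\|_2^2=\tr\Cov(Y)+\|\E Y\|_2^2\le2n+\tfrac14.
\]
Rescaling by the constant $1/\rho$ produces a body containing the unit ball with second moment $O(n)$, while the volume changes only by the known factor $\rho^{-n}$. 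Hence the Cousins--Vempala algorithm uses $\Ot(n^3\varepsilon^{-2})$ queries.

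Third, we need the query bound to hold even when the rounding step fails. On failure the output $(T,z)$ may violate the hypotheses and the volume algorithm's cost is then uncontrolled. So we cap the second stage at the query budget the Cousins--Vempala algorithm is guaranteed to use on valid inputs, taking its worst-case or expectation-times-Markov bound and absorbing the constant into the failure probability. If the cap is reached, we output an arbitrary $\widehat V$. The total number of queries is then always $Q+\Ot(n^3\varepsilon^{-2})$. By a union bound over the two stages, each failing with probability at most $1/6$, the output is correct with probability at least $2/3$.

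The only delicate point is this third step: if the cited volume algorithm's guarantee is stated in expectation, Markov's inequality costs a constant in the failure probability. We handle this by running the volume algorithm with a smaller constant failure probability, which only changes logarithmic factors.
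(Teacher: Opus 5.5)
Your proposal is correct and matches the paper's proof: transform by $(T,z)$, use \Cref{lem:support-ball} to get the ball $B(0,2^{-1/2}-1/2)$, bound the second moment by $2n+1/4$, rescale by $\rho^{-1}$, run Cousins--Vempala with failure probability $1/6$, and conclude by a union bound. Your extra capping step makes the $Q+\Ot(n^3\varepsilon^{-2})$ query bound hold even when the rounding fails; it is a harmless refinement that the paper leaves implicit.
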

\begin{proof}
Let $K'=T(K-z)$ be the returned body and condition on a successful
rounding. By \Cref{lem:support-ball}, it contains
$B(0,\rho)$ for $\rho=1/\sqrt2-1/2>0$. For $Y\sim\Unif(K')$,
we also have $\mathbb E\|Y\|_2^2\le2n+1/4$.
The body $\rho^{-1}K'$ therefore contains the unit ball and has second
moment $O(n)$. Apply the well-rounded volume algorithm of
Cousins and Vempala~\cite{CV18:journal} with failure probability at most $1/6$,
and let $\widehat W$ be its volume estimate.
Each membership query to the transformed body costs one query to $K$.
Return $\widehat V=\rho^n\widehat W/|\det T|$. Since
$\vol(K)=\rho^n\vol(\rho^{-1}K')/|\det T|$, relative error is preserved.
A union bound gives the claimed success probability.
\end{proof}

\section{A Cubic Lower Bound for Isotropic Rounding}
\label{sec:lower-bound}

We construct bodies whose covariance is nearly a scalar matrix but whose
centroid points in an unknown direction.
A successful rounding map must locate this centroid, revealing the
hidden direction to constant accuracy.
Recovering that direction requires $\Omega(n)$ mutual information,
whereas the answers to $q$ adaptive queries contain at most
$O(q\log^2 n/n^2)$ mutual information.
Comparing these bounds gives $q=\Omega(n^3/\log^2 n)$.
We first establish the geometry of the construction, then bound the
information in an adaptive membership transcript, and finally prove
the recovery claim in \Cref{subsec:lower-decoding}.

The construction gives a lower bound even when the supplied radius ratio is
polynomial in the dimension. Here and below all oracle policies and output
maps are assumed measurable.

The following is the explicit form of \Cref{thm:lower-main}; its proof
is at the end of \Cref{subsec:lower-decoding}.
\begin{theorem}[Lower bound]
\label{thm:lower-rounding}
For all sufficiently large $n$, set $r_n=1/(3\sqrt n)$ and
$R_n=2/\sqrt{180\log n}$.
Suppose a randomized membership-oracle algorithm, given any convex body
$r_nB_2^n\subseteq K\subseteq R_nB_2^n$, returns an invertible matrix $T$
and a vector $z$ such that, with probability at least $2/3$, the following
bounds hold for $X\sim\Unif(K)$:
\begin{gather*}
 \frac12 I_n\preceq T\operatorname{Cov}(X)T^{\mathsf T}
 \preceq 2I_n,\\
 \left\|T\bigl(\mathbb E X-z\bigr)\right\|_2\le\frac12.
\end{gather*}
Then its worst-case expected number of membership queries is
$\Omegat(n^3)$.
\end{theorem}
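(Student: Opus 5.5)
We argue by reduction to direction recovery with an information-theoretic bound, following \Cref{subsec:overview_lower}.

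\textbf{Hard distribution.} Draw $Z_1,\dots,Z_N\sim N(0,I_n)$ independently with $N=n^{C}$, and an independent $\Xi\sim N(0,\tau^2I_n)$ with $\tau^2=c\log n/n^2$. Set $K=K_{\Xi,Z}=\{x:\ip{Z_i+\Xi}{x}\le1\ \forall i\}$ and $v=\Xi/\|\Xi\|_2$.

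\textbf{Step 1: geometry.} With probability $1-o(1)$, we show $r_nB_2^n\subseteq K\subseteq R_nB_2^n$. The inner radius holds because $\max_i\|Z_i+\Xi\|_2\le 3\sqrt n$. The outer radius follows from a net argument: for $N$ polynomial, in every direction $u$ we have $\max_i\ip{Z_i}{u}\gtrsim\sqrt{\log N}$, and $\Xi$ perturbs this by only $O(\sqrt{\log n/n})$. Next we show the covariance is $s^2(1+o(1))I_n$ with $s\asymp(n\log n)^{-1/2}$, and the centroid satisfies $\ip{m}{v}\le -cs$ with $\|m\|_2\le Cs$. Heuristically, the facets facing $+v$ have offsets $1/(\ip{Z_i}{u}+\ip{\Xi}{u})$, and the shift $\|\Xi\|_2\approx\sqrt{c\log n/n}$ is comparable to the gap in the extreme order statistics. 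We prove this via Mills bounds \eqref{eq:mills-bounds} on the maxima of Gaussian facet projections. When the event fails, we substitute a fixed body, costing only $o(1)$ in total variation.

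\textbf{Step 2: decoding.} Given a successful output $(T,z)$, write $w=T(m-z)$ with $\|w\|_2\le1/2$. Since $T\Cov T^{\mathsf T}\asymp I$ and $\Cov\approx s^2I$, the matrix $sT$ is close to orthogonal up to a factor $\sqrt2$. The point $z$ then lies within $O(s)$ of $m$ in a norm comparable to $\|\cdot\|_2/s$. This alone is too weak, because $\|m\|_2\asymp s$. The fix is to choose the constants so that $\|m\|_2\ge 4s$ while the uncertainty in $z$ is at most $\sqrt2\cdot\tfrac12\,s\cdot(1+o(1))$. With $\widehat v=-z/\|z\|_2$ this gives $\ip{\widehat v}{v}\ge1/2$. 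The algorithm therefore outputs $\widehat v$ with $\ip{\widehat v}{v}\ge1/2$ with probability at least $2/3-o(1)$. Since $v$ is uniform on the sphere and independent of $\|\Xi\|_2$, a Fano or cap-volume argument gives $I(v;\text{transcript},\text{randomness})\ge cn$: a cap of angle $\pi/3$ has measure $e^{-cn}$.

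\textbf{Step 3: information per query (main obstacle).} We strengthen the oracle to reveal the index of the first violated facet together with its normal $Z_i+\Xi$. By the chain rule, $I\le\sum_{t}I(\Xi;\text{answer}_t\mid\text{past})$, so it suffices to bound each term by $O(\log^2n/n^2)$. Conditioned on the past, revealed facets are known exactly, and the unrevealed $Z_j$ remain Gaussian given $\Xi$ and the constraints from earlier inside answers. A query $x$ tests whether $\max_j\ip{Z_j+\Xi}{x}\le1$. The effect of $\Xi$ enters only through the scalar $\ip{\Xi}{x}$, which has conditional standard deviation about $\tau\|x\|_2$, relative to the threshold scale of the Gaussian maximum.

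We would bound the per-query information by the conditional variance of the log-likelihood ratio $\log\bigl(p(\text{ans}\mid\Xi)/p(\text{ans})\bigr)$ over $\Xi$. For membership, the likelihood $\PP(\text{inside}\mid\Xi)$ is a product of Gaussian CDFs whose log-derivative in $\ip{\Xi}{x}$ is about $\|x\|_2^{-1}\cdot\sqrt{\log N}$ near the boundary, so the squared fluctuation is about $\tau^2\log N=O(\log^2 n/n^2)$. Revealed normals $Z_i+\Xi$ carry direct information about $\Xi$. However, $Z_i$ has variance $1$ per coordinate against $\tau^2$ for $\Xi$, so one normal leaks $\tfrac n2\log(1+\tau^2)=O(\log n/n)$ in total. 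That is too much if many facets are revealed. The plan is therefore to reveal only the scalar answer and violated index, or to show that the relevant leakage is only the projection onto already-queried directions.

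The delicate part is the adaptive conditioning. We would handle it with a Brascamp--Lieb bound \eqref{eq:strong-logconcave-poincare} on the conditional law of $\Xi$, which is a Gaussian restricted to convex constraints, as in \Cref{lem:lower-conditional-mean}. This controls how much its conditional mean can move.

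\textbf{Conclusion.} Summing gives $cn\le O(q\log^2n/n^2)$, hence $\E q=\Omega(n^3/\log^2n)$. By Markov's inequality this extends to worst-case expected queries.
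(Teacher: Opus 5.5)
Your Steps 1--2 follow the paper's outline: the same random polar, decoding from $\|z-m\|_2\le\sqrt{(1+o(1))/2}\,s$ against $m\approx-4sv$, and the cap-measure bound $I=\Omega(n)$. Step 1 does need more than Mills bounds. To get $\|m+csv\|_2=o(s)$ and $\operatorname{Cov}=(1+o(1))s^2I_n$ for a \emph{typical} polytope, you also need concentration of the radial integrals over $Z$. The paper gets this from near-independence of $H_Z(u)$ and $H_Z(w)$ when $|\langle u,w\rangle|\le 1/2$.

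The genuine gap is Step 3, where the whole cubic bound lives. Your log-derivative heuristic ($\approx\sqrt{\log N}/\|x\|_2$ in the scalar $\langle\Xi,x\rangle$) is valid only for the first query. After adaptive conditioning, the unrevealed facets are truncated Gaussians whose constraints involve $\langle\Xi,x_s\rangle$ for every earlier query, and nothing controls the likelihood pointwise. The tool you propose, the conditional-mean lemma applied to the posterior of $\Xi$, gives the wrong order. It says an answer of probability $p_j$ moves $\mathbb{E}[\Xi\mid\cdot\,]$ by at most $\tau\sqrt{2\log(1/p_j)}$, hence $\mathbb{E}\|\mathbb{E}[\Xi\mid T_q]\|_2^2\le2\tau^2q\log(N+1)$. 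Suppose recovery forced this to be $\Omega(n\tau^2)$; it does not, since a posterior split between two nearly antipodal directions has almost zero mean yet allows success with probability about $1/2$. Even then the factor $\tau^2$ cancels and you get only $q=\Omega(n/\log n)$. Movement of the posterior mean of $\Xi$ is the wrong currency. You need mutual information, with $\tau^2$ multiplying a $\tau$-independent quantity.

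The missing idea is to use strong logconcavity of the facets at scale one and pay $\tau^2$ only once, through the prior.
Fix $\Xi=\xi$ and the seed, and commit to the oracle returning the first violated index. Revealing the normal leaks too much, as you noticed, and the bare membership bit has an ``outside'' event that is a union. With the index oracle every answer cell is a product over $i$. So the posterior of $X_i=Z_i+\xi$ stays a product of $N(\xi,I_n)$ restricted to polyhedra $C_i(h)$ that do not depend on $\xi$.
The transcript is then a location family with score $\mathbb{E}_\xi[\sum_i(X_i-\xi)\mid T_q]$. Hence $\operatorname{Tr}\mathcal I_\xi(T_q)=Nn-\mathbb{E}_\xi V_\xi(T_q)$, where $V_\xi$ is the total posterior variance of the $X_i$.
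Apply the conditional-mean lemma to each $X_i$ separately and use $p_j=\prod_i\mu_{i,h,\xi}(A_{i,j})$. This shows one query lowers $V_\xi$ in expectation by at most $2\sum_jp_j\log(1/p_j)\le2\log(N+1)$. The product structure is what turns the sum over $N$ facets into a single $\log(1/p_j)$.
Gross' log-Sobolev inequality for $N(0,\tau^2I_n)$, applied to $\sqrt{p_t(\xi)}$ and summed over transcripts, then yields $I(\Xi;U,T_q)\le\tau^2q\log(N+1)=O(q\log^2n/n^2)$.
Alternatively, you can keep your chain rule. The posterior of $\Xi$ is $\tau^{-2}$-strongly logconcave, so its log-Sobolev inequality bounds each $I(\Xi;J_t\mid T_{t-1})$ by $\tau^2/2$ times the one-step Fisher information, which the same product argument bounds by $2\log(N+1)$.

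Two smaller fixes. First, do not substitute a fixed body off the good event: that makes the oracle depend on a global event and destroys the product posterior. Keep the prior unconditioned and use the event only in the success probability. Second, impose the deterministic cap $q=\lceil6Q\rceil$ via Markov before invoking the information bound, rather than summing over a random number of queries.
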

The same lower bound applies to logconcave rounding with a potential
evaluation oracle: for the indicator potential of a body, each evaluation
reveals exactly its membership bit.
These bodies also satisfy the input assumptions in
Equation~\eqref{eq:grounded-input} with $x_0=0$.

\subsection{Polars of Gaussian polytopes}
\label{subsec:lower-family}

Our bodies are polars of Gaussian polytopes with a small common translation.
For a set $A\subseteq\mathbb R^n$, its polar is
$A^\circ=\{x:\langle x,y\rangle\le1\text{ for every }y\in A\}$.
We keep the Gaussian experiment unconditioned throughout the information
argument; the required geometric event will have probability tending to one.
Set $N=\lceil n^{100}\rceil$, $a=\sqrt{180\log n}$,
$b=\sqrt{198\log n}$, and $\tau=4b/n$.
Draw $Z_1,\ldots,Z_N\sim N(0,I_n)$ and $\Xi\sim N(0,\tau^2I_n)$
independently, and put
\begin{equation}\label{eq:lower-family}
 \begin{aligned}
 P_Z&=\operatorname{conv}\{Z_1,\ldots,Z_N\},\\
 K_{\Xi,Z}&=(P_Z+\Xi)^\circ
 =\{x:\langle Z_i+\Xi,x\rangle\le1\text{ for all }i\}.
 \end{aligned}
\end{equation}
On the probability-one event $\Xi\ne0$, write
$v=\Xi/\|\Xi\|_2$ and $c_\Xi=\sqrt n\|\Xi\|_2/b$.
The direction $v$ is uniform on the unit sphere
$S^{n-1}=\{u\in\mathbb R^n:\|u\|_2=1\}$ and independent of $c_\Xi$.

The first estimate supplies the radius promise and the scale of the hidden
translation.
\begin{lemma}[Containment of the random bodies]
\label{lem:lower-geometry}
With probability $1-o(1)$, $c_\Xi\in[3.9,4.1]$ and
\begin{align*}
 aB_2^n&\subseteq P_Z\subseteq2\sqrt nB_2^n,\\
 r_nB_2^n&\subseteq K_{\Xi,Z}\subseteq R_nB_2^n.
\end{align*}
\end{lemma}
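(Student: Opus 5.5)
Three estimates are needed: the norm of $\Xi$, the inner and outer radii of $P_Z$, and the transfer of these radii to the polar $K_{\Xi,Z}$ through the translation. Each fails with probability $o(1)$, and a union bound finishes.

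\emph{Step 1 (the scalar $c_\Xi$).} We have $\|\Xi\|_2^2/\tau^2\sim\chi^2_n$, which concentrates at $n$ with fluctuations $O(\sqrt n)$. Hence $\|\Xi\|_2=\tau\sqrt n(1+O(n^{-1/2}))$ with probability $1-o(1)$. Since $\tau=4b/n$, this gives $c_\Xi=\sqrt n\|\Xi\|_2/b=4(1+o(1))\in[3.9,4.1]$. The same computation shows $\|\Xi\|_2\le 4.1b/\sqrt n=O(\sqrt{\log n/n})$.

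\emph{Step 2 (radii of $P_Z$).} For the outer bound, $\|Z_i\|_2^2\sim\chi^2_n$. A standard tail bound gives $\|Z_i\|_2\le\sqrt n+t$ with probability $1-e^{-t^2/2}$. Taking $t=\sqrt{2\cdot 101\log n}$ and a union bound over $N=n^{100}$ points gives $\max_i\|Z_i\|_2\le\sqrt n+O(\sqrt{\log n})\le2\sqrt n$. For the inner bound, $aB_2^n\subseteq P_Z$ is equivalent to $h_{P_Z}(u)=\max_i\langle Z_i,u\rangle\ge a$ for every unit $u$. I would use a $\gamma$-net $\mathcal N$ of $S^{n-1}$ with $\gamma=n^{-2}$, so $|\mathcal N|\le(3n^2)^n=e^{O(n\log n)}$. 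Fix $u$. The $\langle Z_i,u\rangle$ are i.i.d.\ standard normal, and by the Mills lower bound \eqref{eq:mills-bounds} each exceeds $a'=\sqrt{190\log n}$ with probability $p\ge c\,n^{-95}/\sqrt{\log n}$. Then
\[
 \PP\bigl(\max_i\langle Z_i,u\rangle<a'\bigr)\le(1-p)^N\le\exp(-Nn^{-95}/\sqrt{\log n})\le\exp(-n^{4}),
\]
which beats $|\mathcal N|$. Now pass from the net to the sphere. Write $u=u'+w$ with $u'\in\mathcal N$ and $\|w\|_2\le\gamma$. Then $h_{P_Z}(u)\ge h_{P_Z}(u')-\gamma\max_i\|Z_i\|_2\ge a'-2n^{-3/2}\ge a$. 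The gap between the net threshold $a'$ and $a$ absorbs this error.

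\emph{Step 3 (transfer to the polar).} Since $\Xi$ is a translation, $P_Z+\Xi\subseteq(2\sqrt n+\|\Xi\|_2)B_2^n\subseteq3\sqrt nB_2^n$. Polarity reverses inclusions, so $K_{\Xi,Z}\supseteq(3\sqrt n)^{-1}B_2^n=r_nB_2^n$. For the outer bound, $P_Z+\Xi\supseteq(a-\|\Xi\|_2)B_2^n$ because $aB_2^n\subseteq P_Z$. Step 1 gives $\|\Xi\|_2=o(1)$, so $a-\|\Xi\|_2\ge a/2$ for large $n$. Therefore $K_{\Xi,Z}\subseteq(2/a)B_2^n=R_nB_2^n$.

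The main obstacle is the inner-radius bound of Step 2. It requires the number of points $N=n^{100}$ and the threshold $a=\sqrt{180\log n}$ to be balanced so that, for each fixed direction, the failure probability is doubly exponentially small, overwhelming the $e^{O(n\log n)}$ net size. The outer radii and the translation are routine concentration.
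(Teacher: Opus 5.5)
Your proof is correct and follows the paper's argument: chi-square concentration for $c_\Xi$ and for $\max_i\|Z_i\|_2$; a net of the sphere on which the fixed-direction failure probability $\exp(-N\,\PP\{G_1\ge a'\})$ overwhelms the $e^{O(n\log n)}$ net size; and then transfer to the polar via $(a-\|\Xi\|_2)B_2^n\subseteq P_Z+\Xi\subseteq 3\sqrt n\,B_2^n$. The differences are cosmetic: the paper uses mesh $(4\sqrt n)^{-1}$ and threshold $a+1$ where you use $n^{-2}$ and $\sqrt{190\log n}$. Also, your per-direction bound $\exp(-n^{4})$ is not ``doubly exponential,'' but it is already enough.
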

\begin{proof}
The first assertion is the chi-square concentration bound applied to
$c_\Xi=4\|G\|_2/\sqrt n$, where $G\sim N(0,I_n)$. The same bound and a
union bound give $\max_i\|Z_i\|_2\le2\sqrt n$ with probability $1-o(1)$.
Take a $(4\sqrt n)^{-1}$-net of the sphere, of size
$\exp(O(n\log n))$. At a fixed net point $u$,
\[
 \mathbb P\{\max_i\langle Z_i,u\rangle<a+1\}
 \le\exp\bigl(-N\mathbb P\{G_1\ge a+1\}\bigr)
 =\exp(-n^{10-o(1)}).
\]
Here the last equality follows, for example, by integrating the Gaussian
density on $[a+1,a+1+(a+1)^{-1}]$. A union bound over the net, followed by
the norm bound, gives $\max_i\langle Z_i,u\rangle\ge a$ at every point of
the sphere. This is the support-function criterion for $aB_2^n\subseteq P_Z$.
Finally, $\|\Xi\|_2\le4.1b/\sqrt n=o(a)$, so for all sufficiently large $n$,
\((a/2)B_2^n\subseteq P_Z+\Xi\subseteq3\sqrt nB_2^n.\)
Taking polars gives the stated promise.
\end{proof}

\subsection{Centroid and covariance bounds}
\label{subsec:lower-moments}

We show that the translation in \eqref{eq:lower-family} moves the centroid
by a constant number of standard deviations while leaving the centered
covariance asymptotically spherical. The calculation uses radial integrals
and concentration of the inverse support function of the Gaussian polytope.
Let $\sigma$ denote uniform probability measure on $S^{n-1}$.
Set $H_Z(u)=\max\{a,\max_i\langle Z_i,u\rangle\}$ and
$\rho_{\xi,Z}(u)=(H_Z(u)+\langle\xi,u\rangle)^{-1}$.
For $\|\xi\|_2<a$, define the scalar
$D=\int\rho_{\xi,Z}(u)^n\,d\sigma(u)$, the vector
$L=\int u\rho_{\xi,Z}(u)^{n+1}\,d\sigma(u)$, and the matrix
$Q=\int uu^{\mathsf T}\rho_{\xi,Z}(u)^{n+2}\,d\sigma(u)$.
On the containment event in \Cref{lem:lower-geometry}, polar coordinates
give the exact identities
\begin{equation}
 m_{\xi,Z}=\frac n{n+1}\frac LD
 \quad\text{and}\quad
 M_{\xi,Z}=\frac n{n+2}\frac QD,
 \label{eq:lower-radial-moments}
\end{equation}
where $m_{\xi,Z}$ and $M_{\xi,Z}$ are the centroid and uncentered second
moment of $K_{\xi,Z}$.

The next estimate identifies the one-dimensional scale in these integrals.
\begin{lemma}[Inverse moments of a Gaussian maximum]
\label{lem:lower-inverse-moments}
Let $H=\max\{a,G_1,\ldots,G_N\}$ for independent standard normal $G_i$,
and let $A_k(s)=\mathbb E(H+s)^{-k}$ for $s>-a$. Uniformly for
$k\in\{n,n+1,n+2\}$ and $c\in[3.9,4.1]$,
\begin{align}
 \frac{A_k(cbz/n)}{A_k(0)}&\longrightarrow e^{-cz}
 &&\text{uniformly on compact sets of }z,\label{eq:lower-shift-limit}\\
 \frac{A_{n+j}(0)}{A_n(0)}&=b^{-j}(1+o(1))
 &&(j=1,2).\label{eq:lower-ratio-limit}
\end{align}
Moreover, for all sufficiently large $n$ and $|z|\le\sqrt n$,
\begin{equation}
 0\le\frac{A_k(cbz/n)}{A_k(0)}\le e^{10\max\{-z,0\}}.
 \label{eq:lower-envelope}
\end{equation}
\end{lemma}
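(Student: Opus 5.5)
The plan is to treat $A_k(s)$ as an expectation under an exponentially tilted law of $H$, show that this tilted law concentrates at $H=b(1+o(1))$, and read off \eqref{eq:lower-shift-limit} and \eqref{eq:lower-ratio-limit} from that concentration. The envelope \eqref{eq:lower-envelope} is a pointwise bound and needs no tilting. For $k\in\{n,n+1,n+2\}$ define the probability measure $\nu_k(dh)\propto h^{-k}\,\PP(H\in dh)$ on $[a,\infty)$. Then
\[
 \frac{A_k(s)}{A_k(0)}=\E_{\nu_k}\bigl(1+s/H\bigr)^{-k},
 \qquad
 \frac{A_{n+j}(0)}{A_n(0)}=\E_{\nu_n}H^{-j}.
\]
The heuristic for the location of $\nu_k$ is a Laplace computation. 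Away from the atom at $a$, the law of $H$ has density $N\varphi(h)\Phi(h)^{N-1}$, so the log-density of $\nu_k$ is roughly $-k\log h+\log(N\varphi(h))-N\overline\Phi(h)$. Its derivative vanishes where $N\varphi(h)\approx n/h$, that is, where $h^2\approx2\log(N/n)=198\log n=b^2$. This explains the choice of $b$.

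\textbf{Step 1: concentration of $\nu_k$.} Fix $\varepsilon=\varepsilon_n\to0$ slowly, for example $\varepsilon_n=(\log n)^{-1/4}$. I will show $\nu_k\{|H/b-1|>\varepsilon\}\to0$ uniformly in $k$. This is the main step.
\begin{itemize}
\item \emph{Lower bound on the normalizer.} Restrict to $I=[b,b+1/b]$. Using $\PP(H\le h)=\Phi(h)^N$ and the Mills bounds \eqref{eq:mills-bounds}, $N\overline\Phi(b)=\Theta(n/b^2)$. Hence $\PP(H\in I)\ge\exp(-Cn/\log n)$, and so $A_k(0)\ge (b+1/b)^{-k}\exp(-Cn/\log n)$.
\item \emph{Upper tail.} On $\{H>(1+\varepsilon)b\}$ we have $h^{-k}\le((1+\varepsilon)b)^{-k}$. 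Compared with the lower bound, this contributes a relative factor $\exp(-cn\varepsilon+Cn/\log n)\to0$, since $\varepsilon\gg1/\log n$.
\item \emph{Lower tail.} On $\{H<(1-\varepsilon)b\}$ we have $h^{-k}\le a^{-k}$ and $(b/a)^k\le 1.1^{n}$. Meanwhile $\PP(H\le(1-\varepsilon)b)\le\exp(-N\overline\Phi((1-\varepsilon)b))$, and $N\overline\Phi((1-\varepsilon)b)\ge n^{1+c\varepsilon}/\mathrm{polylog}$, which beats $n\log 1.1+Cn/\log n$ once $\varepsilon\gg\log\log n/\log n$.
\item \emph{The atom.} The atom at $H=a$ has mass $\Phi(a)^N\le\exp(-n^{10-o(1)})$, which is negligible by the same comparison.
\end{itemize}
The place where care is needed is choosing $\varepsilon$ so that the upper-tail and lower-tail requirements hold simultaneously. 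The window $\log\log n/\log n\ll\varepsilon\ll1$ is nonempty, and $\varepsilon_n=(\log n)^{-1/4}$ lies in it.

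\textbf{Step 2: the two limits.} Statement \eqref{eq:lower-ratio-limit} follows immediately: $H^{-j}$ is bounded by $a^{-j}$ and equals $b^{-j}(1+O(\varepsilon))$ on the concentration event. For \eqref{eq:lower-shift-limit}, put $s=cbz/n$ with $z$ in a compact set, so $s/H=O(1/n)$. Then $(1+s/H)^{-k}=\exp(-ks/H+O(k s^2/H^2))=\exp(-czb/H+O(1/n))$. On the concentration event this equals $e^{-cz}(1+o(1))$ uniformly in $c\in[3.9,4.1]$ and in $z$ on the compact set. Off the event the integrand is bounded by $e^{O(1)}$ and the $\nu_k$-mass is $o(1)$.

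\textbf{Step 3: the envelope.} For $z\ge0$ we have $s\ge0$, so the ratio lies in $[0,1]$. For $-\sqrt n\le z<0$, set $x=|s|/H\le 4.1b|z|/(na)\le 5/\sqrt n$. Then $-k\log(1-x)\le kx(1+x)\le(1+o(1))\,4.1(b/a)|z|$, and $b/a=\sqrt{1.1}<1.05$. This gives a bound below $10|z|$ pointwise for each $h\ge a$. Averaging over $\nu_k$ then yields \eqref{eq:lower-envelope}.
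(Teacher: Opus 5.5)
Your argument is correct and is essentially the paper's: you tilt by $H^{-k}$, show that the tilted law $\nu_k$ concentrates at $H=b(1+o(1))$ by comparing both tails with a lower bound on the normalizer, Taylor-expand to get \eqref{eq:lower-shift-limit} and \eqref{eq:lower-ratio-limit}, and bound $-k\log(1-q)$ pointwise to get the envelope. One slip does not affect the conclusion: $N\overline\Phi(b)=\Theta(n/b)$, not $\Theta(n/b^2)$, so your normalizer bound on $[b,b+1/b]$ is only $\exp(-Cn/\sqrt{\log n})$ and the upper tail requires $\varepsilon\gg(\log n)^{-1/2}$ rather than $\varepsilon\gg1/\log n$; the paper instead normalizes on $[t_n,t_n+3/t_n]$ with $N\overline\Phi(t_n)=n/b^2$, and your choice $\varepsilon_n=(\log n)^{-1/4}$ still meets the corrected requirement.
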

\begin{proof}
Use the Gaussian density and upper tail defined in
\Cref{sec:preliminaries}, with the Mills bounds in
Equation~\eqref{eq:mills-bounds}.
Let $\lambda=n/b^2$, choose $t_n$ from $N\overline\Phi(t_n)=\lambda$,
and put $w_n=t_n+3/t_n$. The bounds imply $t_n^2=b^2+O(\log\log n)$, $t_n>a$, and
$w_n/b\longrightarrow1$. They also give
$N\overline\Phi(w_n)\le\lambda/4$. Thus
\begin{equation}\label{eq:lower-normalizer}
 \begin{aligned}
 \mathbb P\{t_n<\max_iG_i\le w_n\}&\ge\tfrac12e^{-\lambda/2},\\
 A_k(0)&\ge\tfrac12w_n^{-k}e^{-\lambda/2}.
 \end{aligned}
\end{equation}
Indeed, the probability is
$(1-\overline\Phi(w_n))^N-(1-\overline\Phi(t_n))^N$, with
$(1-\overline\Phi(w_n))^N\ge e^{-\lambda/2}$ and
$(1-\overline\Phi(t_n))^N\le e^{-\lambda}$.

Give $H$ the tilted law $d\nu_k=H^{-k}d\mathbb P/A_k(0)$. For every fixed
$\varepsilon>0$, \eqref{eq:lower-normalizer} shows that
\[
 \nu_k\{H\ge(1+\varepsilon)b\}
 \le2\left(\frac{w_n}{(1+\varepsilon)b}\right)^k e^{\lambda/2}=o(1).
\]
If $(1-\varepsilon)b\ge a$, the lower bound in
Equation~\eqref{eq:mills-bounds} gives
$N\overline\Phi((1-\varepsilon)b)\ge n^{1+\delta_\varepsilon}$ for some
$\delta_\varepsilon>0$ and all sufficiently large $n$. Consequently
\[
 \nu_k\{H\le(1-\varepsilon)b\}
 \le2(w_n/a)^k\exp(\lambda/2-n^{1+\delta_\varepsilon})=o(1).
\]
Otherwise the lower-tail event is empty. Since $0<b/H\le b/a=\sqrt{1.1}$,
we have $\mathbb E_{\nu_k}|(b/H)^j-1|=o(1)$ for every fixed $j>0$.
The identity
\[
 \frac{A_k(cbz/n)}{A_k(0)}
 =\mathbb E_{\nu_k}\exp\left[-k\log\left(1+\frac{cbz}{nH}\right)\right]
\]
now proves \eqref{eq:lower-shift-limit} by a uniform Taylor expansion on
bounded $z$-intervals. Also
$A_{n+j}(0)/A_n(0)=\mathbb E_{\nu_n}H^{-j}$, proving
\eqref{eq:lower-ratio-limit}. For $z\ge0$, $(1+cbz/(nH))^{-k}\le1$.
For $z<0$, put $q=cb(-z)/(nH)$. Uniformly in the stated
range, $q=o(1)$, and
\[
 -k\log(1-q)\le\frac{kq}{1-q}\le10(-z)
\]
for sufficiently large $n$. This proves \eqref{eq:lower-envelope}.
\end{proof}

The spherical coordinate $\sqrt n\langle u,v\rangle$ becomes Gaussian.
The preceding lemma therefore computes the averages of all three radial
integrals.
\begin{lemma}[Expected radial moments]
\label{lem:lower-annealed}
Put $s_n=(b\sqrt n)^{-1}$. Uniformly for $c\in[3.9,4.1]$, $v\in S^{n-1}$,
and $\xi=cbv/\sqrt n$,
\begin{align*}
 \left\|\frac n{n+1}\frac{\mathbb E_ZL}{\mathbb E_ZD}+cs_nv\right\|_2
 &=o(s_n),\\
 \left\|\frac n{n+2}\frac{\mathbb E_ZQ}{\mathbb E_ZD}
 -s_n^2(I_n+c^2vv^{\mathsf T})\right\|_{\mathrm{op}}
 &=o(s_n^2).
\end{align*}
\end{lemma}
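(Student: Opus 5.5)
Under \(\mathbb E_Z\), for fixed \(u\in S^{n-1}\) the variable \(H_Z(u)\) has exactly the law of \(H\) in \Cref{lem:lower-inverse-moments}, because \(\langle Z_i,u\rangle\) are i.i.d.\ standard normal. Put \(z=\sqrt n\langle u,v\rangle\). Then \(\langle\xi,u\rangle=cbz/n\), and Fubini gives
\[
 \mathbb E_ZD=\int A_n(cbz/n)\,d\sigma(u),\qquad
 \mathbb E_ZL=\int u\,A_{n+1}(cbz/n)\,d\sigma(u),\qquad
 \mathbb E_ZQ=\int uu^{\mathsf T}A_{n+2}(cbz/n)\,d\sigma(u).
\]
The radius \(\|\xi\|_2\le4.1b/\sqrt n<a\), so these are finite. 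The plan is to reduce everything to one-dimensional integrals against the law of \(z\) under \(\sigma\), and then pass to the Gaussian limit.

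\textbf{Decomposition.} Write \(u=(z/\sqrt n)v+w\) with \(w\perp v\). By symmetry of \(w\) given \(z\) we have \(\mathbb E[w\mid z]=0\), and \(\mathbb E[ww^{\mathsf T}\mid z]=\frac{1-z^2/n}{n-1}(I_n-vv^{\mathsf T})\). Hence
\[
 \mathbb E_ZL=\frac{v}{\sqrt n}\int zA_{n+1},\qquad
 \mathbb E_ZQ=\frac{vv^{\mathsf T}}{n}\int z^2A_{n+2}+(I_n-vv^{\mathsf T})\int\frac{1-z^2/n}{n-1}A_{n+2},
\]
where each integral is against the law \(\gamma_n\) of \(z\). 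This law has density proportional to \((1-z^2/n)^{(n-3)/2}\) on \([-\sqrt n,\sqrt n]\), and it converges to \(N(0,1)\) with density uniformly dominated by \(Ce^{-z^2/2}\)-type bounds. Next divide each \(A_k\) by \(A_k(0)\) and use \eqref{eq:lower-ratio-limit}, which gives \(A_{n+j}(0)/A_n(0)=b^{-j}(1+o(1))\).

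\textbf{Limits.} By \eqref{eq:lower-shift-limit} and dominated convergence, using the envelope \eqref{eq:lower-envelope} \(e^{10\max\{-z,0\}}\) times polynomial factors, which is integrable against the Gaussian-dominated \(\gamma_n\), we get for \(p=0,1,2\)
\[
 \int z^pA_k(cbz/n)\,d\gamma_n/A_k(0)\to\mathbb E[G^pe^{-cG}],
\]
uniformly in \(c\in[3.9,4.1]\). The Gaussian values are \(e^{c^2/2}\), \(-ce^{c^2/2}\) and \((1+c^2)e^{c^2/2}\).

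\textbf{Assembling the ratios.} The mean direction is \(\frac{n}{n+1}\cdot\frac{b^{-1}}{\sqrt n}\cdot(-c)(1+o(1))=-cs_n(1+o(1))\), which is the first claim. In the \(vv^{\mathsf T}\) part of the second moment, \(\frac{1}{n}b^{-2}(1+c^2)=s_n^2(1+c^2)\). The orthogonal part is \(\frac{1}{n-1}b^{-2}(1+o(1))\), since \(\int z^2A/n\) is negligible, and this equals \(s_n^2(1+o(1))\). Together they give \(s_n^2(I_n+c^2vv^{\mathsf T})\) up to \(o(s_n^2)\) in operator norm, because the matrix is diagonal in the decomposition \(v\oplus v^\perp\).

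\textbf{Main obstacle.} The main obstacle is the domination step. The convergence in \eqref{eq:lower-shift-limit} holds only on compact sets, so I would control the tails \(|z|>Z_0\) uniformly in \(n\) using \eqref{eq:lower-envelope} together with the bound \(d\gamma_n/dz\le Ce^{-z^2/2+O(z^4/n)}\)-type estimates for \(|z|\le\sqrt n\). If that is too delicate near \(|z|\sim\sqrt n\), a cruder bound \((1-z^2/n)^{(n-3)/2}\le e^{-(n-3)z^2/(2n)}\) suffices against \(e^{10|z|}\). Uniformity in \(c\) and \(v\) follows from the uniformity in the lemma and the rotation invariance of the law of \(Z\).
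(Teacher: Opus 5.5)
Your proposal is correct and follows essentially the same route as the paper: reduce everything to one-dimensional integrals in $z=\sqrt n\langle u,v\rangle$, use rotational symmetry together with the identity $\mathbb E[ww^{\mathsf T}\mid z]=\frac{1-z^2/n}{n-1}(I_n-vv^{\mathsf T})$, pass to the Gaussian tilt identities by dominated convergence using the envelope \eqref{eq:lower-envelope}, and obtain the factors $b^{-1},b^{-2}$ from \eqref{eq:lower-ratio-limit}. The domination step you flag as the main obstacle is actually immediate: on $|z|\le\sqrt n$ one has $(1-z^2/n)^{(n-3)/2}\le e^{-(n-3)z^2/(2n)}\le e^{3/2}e^{-z^2/2}$, and the paper uses the comparable bound $Ce^{-z^2/4}$ for $n\ge6$.
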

\begin{proof}
For $U\sim\sigma$, the variable $G_n=\sqrt n\langle U,v\rangle$ has density
\[
 g_n(z)=\frac{\Gamma(n/2)}{\sqrt{\pi n}\Gamma((n-1)/2)}
 (1-z^2/n)^{(n-3)/2}\mathbf1_{\{|z|<\sqrt n\}}.
\]
It converges locally uniformly to $\varphi$, and for $n\ge6$ it is bounded
by $Ce^{-z^2/4}$. Thus \eqref{eq:lower-envelope} permits dominated
convergence even after multiplication by $1+z^2$. Then we have
\(\mathbb E_ZD=A_n(0)\mathbb E\frac{A_n(cbG_n/n)}{A_n(0)},\)
with analogous expressions for $L$ and $Q$. Rotational symmetry about $v$
makes $\mathbb E_ZL$ parallel to $v$ and gives only two eigenvalues for
$\mathbb E_ZQ$. For $G\sim N(0,1)$, their limits follow from
\[
 \frac{\mathbb E[Ge^{-cG}]}{\mathbb E[e^{-cG}]}=-c
 \quad\text{and}\quad
 \frac{\mathbb E[G^2e^{-cG}]}{\mathbb E[e^{-cG}]}=1+c^2,
\]
together with the identity, for a unit vector $w\perp v$,
$\mathbb E[\langle U,w\rangle^2\mid G_n]=(1-G_n^2/n)/(n-1)$.
Equation~\eqref{eq:lower-ratio-limit} supplies the factors $b^{-1}$ and
$b^{-2}$. All domination is uniform over the compact interval of $c$.
\end{proof}

We next show that the integrals for a typical polytope are close to their
averages. The large number of independent facets makes inverse support
values in two moderately separated directions nearly independent.
For a matrix $B$, write $\|B\|_F=(\sum_{i,j}B_{ij}^2)^{1/2}$ for
its Frobenius norm.
\begin{lemma}[Concentration of the radial integrals]
\label{lem:lower-concentration}
For $k\in\{n,n+1,n+2\}$, put
$W_{k,u}=(H_Z(u)+\langle\xi,u\rangle)^{-k}$, let
$m_{k,u}=\mathbb E_ZW_{k,u}$, and set $B_k=\int m_{k,u}\,d\sigma(u)$.
Uniformly for $\xi=cbv/\sqrt n$, $c\in[3.9,4.1]$, $v\in S^{n-1}$,
and sufficiently large $n$,
\[
 \frac{\operatorname{Var}_Z D}{B_n^2}
 +\frac{\mathbb E_Z\|L-\mathbb E_ZL\|_2^2}{B_{n+1}^2}
 +\frac{\mathbb E_Z\|Q-\mathbb E_ZQ\|_F^2}{B_{n+2}^2}
 \le n^{-19}.
\]
\end{lemma}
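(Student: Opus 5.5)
Write $\Delta_{k,u}=W_{k,u}-m_{k,u}$. Each radial integral is a spherical average of $W_{k,u}$, possibly weighted by $u$ or $uu^{\mathsf T}$. By Fubini,
\[
 \operatorname{Var}_Z D=\iint\operatorname{Cov}_Z(W_{n,u},W_{n,u'})\,d\sigma(u)\,d\sigma(u').
\]
For $L$ and $Q$ the same identity holds with the extra weights $\langle u,u'\rangle$ and $\langle u,u'\rangle^2$. Both weights have absolute value at most $1$. So all three terms reduce to bounding
\[
 \iint\bigl|\operatorname{Cov}_Z(W_{k,u},W_{k,u'})\bigr|\,d\sigma(u)\,d\sigma(u')\le n^{-19}B_k^2/3 .
\]
I split the pairs $(u,u')$ into near pairs with $\langle u,u'\rangle>1-\eta$ and far pairs, taking $\eta$ of order $(\log n)^{-1}$ or a small power of $n$; the exact choice is made at the end.

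\emph{Near pairs.} Cauchy--Schwarz gives $|\operatorname{Cov}|\le(\mathbb E W_{k,u}^2)^{1/2}(\mathbb E W_{k,u'}^2)^{1/2}$. By the tilted-law argument of \Cref{lem:lower-inverse-moments}, $\mathbb EW_{k,u}^2=A_{2k}(\langle\xi,u\rangle)$ is at most $e^{O(n)}\cdot m_{k,u}^2$. More precisely, $A_{2k}(0)/A_k(0)^2\le e^{\lambda}w_n^{2k}/(\text{typical }H)^{2k}$, which is $e^{O(n/\log n)}$ at worst, since $H$ concentrates at $b(1+o(1))$ under the tilt. The $\sigma\otimes\sigma$-measure of the cap $\{\langle u,u'\rangle>1-\eta\}$ is about $(2\eta)^{n/2}$, which is exponentially small in $n\log(1/\eta)$. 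I also need $m_{k,u}$ to be comparable to $B_k$. The envelope \eqref{eq:lower-envelope} and the Gaussian limit of $\sqrt n\langle u,v\rangle$ show that $m_{k,u}/B_k\le Ce^{10\max\{-z,0\}}$ with $z=\sqrt n\langle u,v\rangle$, and the factor $e^{10|z|}$ integrates harmlessly against the spherical density. So the near part is at most $e^{O(n/\log n)}(2\eta)^{n/2}B_k^2$, which is below $n^{-20}B_k^2$ once $\eta\le n^{-c}$ for a small constant $c$; even constant $\eta<1/8$ works if the $e^{O(n/\log n)}$ estimate holds.

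\emph{Far pairs.} This is the main step. The facets are independent, so I write $H_Z(u)=\max\{a,\max_iG_i(u)\}$ with $G_i(u)=\langle Z_i,u\rangle$. I decouple by tail events. Let $E_i$ be the event that $Z_i$ is near-maximal in both directions, meaning $G_i(u)>(1-\varepsilon)b$ and $G_i(u')>(1-\varepsilon)b$. For $\langle u,u'\rangle\le1-\eta$, the pair $(G_i(u),G_i(u'))$ is bivariate normal with correlation $\le1-\eta$. Hence
\[
 \mathbb P(E_i)\le\exp\bigl(-(1-\varepsilon)^2b^2/(2-\eta)\bigr)=n^{-99(1+\eta/2)+O(\varepsilon)},
\]
so $N\,\mathbb P(E_i)\le n^{1-\Omega(\eta)\cdot 99}$. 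I choose $\eta$ a small constant so that this is at most $n^{-30}$. By a union bound, with probability $1-n^{-30}$ no facet is near-maximal in both directions. Conditional on the facet sets attaining the two maxima being disjoint, $H_Z(u)$ and $H_Z(u')$ are functions of disjoint independent families. A standard coupling then replaces $(W_{k,u},W_{k,u'})$ by an independent copy.

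The covariance error is bounded by the Hölder terms $\mathbb E[W_{k,u}W_{k,u'}\mathbf 1_{\text{bad}}]$ and by the perturbation from conditioning. I control these using the tilted laws $\nu_k$, under which $H$ lies in $[(1-\varepsilon)b,(1+\varepsilon)b]$ except on events of probability $e^{-n^{\delta}}$. The tilted measure and the unconditioned measure differ in the bad event by at most the factor $e^{O(n/\log n)}$ from before. So I need $\mathbb P(\text{bad})\le e^{-Cn/\log n}$, which is not $n^{-30}$. I expect this mismatch to be the main obstacle.

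My fix is to work directly under the joint tilt. Under the law proportional to $W_{k,u}W_{k,u'}\,d\mathbb P$, the maxima sit near $b$, and the expected number of doubly-near-maximal facets is still polynomially small. This is because the tilt only reweights by $(H/b)^{-2k}$, which is $e^{O(\varepsilon n)}$ on the relevant window. I would then handle that window by splitting $N$ facets into those that exceed $(1-\varepsilon)b$ in at least one direction, which is a Poisson-like set of size $\approx n^{1+O(\varepsilon)}$.

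With this decoupling, the covariance is at most $n^{-20}m_{k,u}m_{k,u'}$. Integrating, using $m_{k,u}\le Ce^{10|z|}B_k$ as above, gives the far contribution $\le n^{-20}B_k^2$. Summing the three terms and the two regions gives $n^{-19}$ for large $n$. The bounds are uniform in $c$ and $v$ because every estimate used only the envelope \eqref{eq:lower-envelope} and $\|\xi\|_2=o(1)$ relative to $b$.
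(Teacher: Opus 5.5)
Your reduction to $\iint|\operatorname{Cov}_Z(W_{k,u},W_{k,u'})|\,d\sigma(u)\,d\sigma(u')$ is correct, and so is your bound for nearly parallel pairs. The paper splits at $|\langle u,w\rangle|=1/2$ and uses even cruder estimates there ($W_{k,u}\le(a-r)^{-k}$ and $m_{k,u}\ge p_n(b+r)^{-k}$). These lose about a factor $1.1^{n}$ against a cap of measure about $(3/4)^{n/2}$.

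One arithmetic correction: your union bound does not allow a small $\eta$. With correlation $1-\eta$ you get $N\,\mathbb P(E_i)\approx n^{100-198(1-\varepsilon)^2/(2-\eta)}=n^{1-\Theta(\eta)+O(\varepsilon)}$. This falls below $n^{-30}$ only when the correlation is at most about $0.52$. So the threshold must be near $1/2$, as in the paper; this does not hurt your near-pair estimate.

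The genuine gap is the far-pair step, which is the heart of the lemma.

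\emph{Why the proposal does not close it.} As you note, $\mathbb E[W_{k,u}W_{k,u'}]$ is dominated by configurations in which both maxima are atypically small, and these have $\mathbb P$-probability only $e^{-\Theta(n/\log n)}$. So a bad event of probability $n^{-30}$ can carry most of the mass. Your repair does not fix this:
\begin{itemize}
\item A reweighting factor $e^{O(\varepsilon n)}$ with $\varepsilon$ fixed swamps any polynomial bound.
\item The final splitting into facets exceeding $(1-\varepsilon)b$ is only sketched.
\item Even granting that no facet is doubly high, conditioning on which random facets attain the two maxima does not make $H_Z(u)$ and $H_Z(u')$ independent.
\item A coupling gives an additive error in probability, whereas you need the multiplicative bound $|\operatorname{Cov}|\le\varepsilon_n m_{k,u}m_{k,u'}$.
\end{itemize}

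\emph{The missing idea.} Compare distribution functions, which factor exactly over the independent facets. For $x,y\ge a$ and $\rho=\langle u,w\rangle$, we have $\mathbb P\{H_Z(u)\le x,H_Z(w)\le y\}=\Phi_\rho(x,y)^N$, while the product of the marginals is $\Phi(x)^N\Phi(y)^N$.
\begin{itemize}
\item For $|\rho|\le1/2$, inclusion--exclusion gives $|\Phi_\rho(x,y)-\Phi(x)\Phi(y)|\le\mathbb P\{G+G'>2a\}+\overline\Phi(x)\overline\Phi(y)\le Cn^{-120}/\sqrt{\log n}$.
\item Since $\Phi(x)\Phi(y)\ge1/4$, the $N$th powers agree up to a factor $1\pm C'n^{-20}/\sqrt{\log n}$, uniformly on $[a,\infty)^2$.
\item The layer-cake identity $(H_Z(u)+s)^{-k}=\int_a^\infty\mathbf 1_{\{H_Z(u)\le x\}}k(x+s)^{-k-1}\,dx$ writes $\mathbb E[W_{k,u}W_{\ell,w}]$ as an integral of this joint distribution function against a positive kernel.
\end{itemize}
The uniform multiplicative closeness therefore passes directly to $|\operatorname{Cov}(W_{k,u},W_{\ell,w})|\le\varepsilon_n m_{k,u}m_{\ell,w}$. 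You never need to know where the tilted mass sits. The floor $a$ in $H_Z$ is what makes uniformity over all thresholds available.
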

\begin{proof}
Write $\Phi$ for the standard one-dimensional Gaussian distribution
function. For $u,w\in S^{n-1}$ with $\rho=\langle u,w\rangle$, let $\Phi_\rho(x,y)$
be the distribution function of a standard bivariate normal pair with
correlation $\rho$. Independence over facets gives, for $x,y\ge a$,
\[
 \mathbb P\{H_Z(u)\le x,H_Z(w)\le y\}=\Phi_\rho(x,y)^N.
\]
If $|\rho|\le1/2$, inclusion--exclusion and the implication
$\{G>x,G'>y\}\subseteq\{G+G'>2a\}$ show that
\[
 |\Phi_\rho(x,y)-\Phi(x)\Phi(y)|
 \le\mathbb P\{G+G'>2a\}+\overline\Phi(x)\overline\Phi(y)
 \le Cn^{-120}/\sqrt{\log n}.
\]
Since $\Phi(x)\Phi(y)\ge1/4$ for large $n$ and $N\le2n^{100}$, it follows
that, uniformly on this region,
\begin{equation}
 \left|\frac{\Phi_\rho(x,y)^N}{\Phi(x)^N\Phi(y)^N}-1\right|
 \le C' n^{-20}/\sqrt{\log n}=:\varepsilon_n.
 \label{eq:lower-two-directions}
\end{equation}
For $s=\langle\xi,u\rangle$ the layer-cake identity is
\[
 (H_Z(u)+s)^{-k}
 =\int_a^\infty\mathbf1_{\{H_Z(u)\le x\}}k(x+s)^{-k-1}\,dx.
\]
Applying \eqref{eq:lower-two-directions} inside the resulting double integral
gives, whenever $|\langle u,w\rangle|\le1/2$,
\begin{equation}\label{eq:lower-weight-covariance}
 |\operatorname{Cov}(W_{k,u},W_{\ell,w})|
 \le\varepsilon_n m_{k,u}m_{\ell,w}.
\end{equation}

It remains to control $|\langle u,w\rangle|>1/2$. Put $r=\|\xi\|_2$ and
$p_n=\mathbb P\{H\le b\}$.
The upper bound in Equation~\eqref{eq:mills-bounds} gives
$p_n\ge\exp(-Cn/\sqrt{\log n})$. We have
$W_{k,u}\le(a-r)^{-k}$ and $m_{k,u}\ge p_n(b+r)^{-k}$,
while
\[
 (\sigma\otimes\sigma)\{(u,w):|\langle u,w\rangle|>1/2\}
 \le\exp\left[-\frac n2\log(4/3)+O(\log n)\right].
\]
Consequently, for an absolute $\gamma>0$,
\[
 \iint_{|\langle u,w\rangle|>1/2}
 \bigl(\mathbb E[W_{k,u}W_{\ell,w}]+m_{k,u}m_{\ell,w}\bigr)
 \,d\sigma(u)\,d\sigma(w)
 \le e^{-\gamma n}B_kB_\ell.
\]
Indeed, the logarithm of the integral divided by $B_kB_\ell$ is at most
\[
 -\frac n2\log(4/3)+(2n+4)\log\frac{b+r}{a-r}
 +O(n/\sqrt{\log n})+O(\log n).
\]
This is negative of order $n$ because $r/a=O(n^{-1/2})$ and
$\log(b^2/a^2)=\log(1.1)<\tfrac12\log(4/3)$.

Finally, express each of the three variances as a double integral of
covariances. The kernels for $D,L,Q$ are respectively
$1$, $\langle u,w\rangle$, and $\langle u,w\rangle^2$, all bounded in
absolute value by one. Equation~\eqref{eq:lower-weight-covariance} and the
exceptional-set estimate bound the sum by
$3(\varepsilon_n+e^{-\gamma n})\le n^{-19}$ for large $n$.
\end{proof}

Centering the second moment cancels its directional term. The covariance
is therefore nearly spherical, while the centroid retains the unknown
direction that rounding must recover.
\begin{proposition}[Centroid and covariance of the random bodies]
\label{prop:lower-typical-moments}
There is a deterministic sequence $\eta_n\to0$ such that the following
holds uniformly for $c\in[3.9,4.1]$ and $v\in S^{n-1}$.
Put $\xi=cbv/\sqrt n$, $s_n=1/(b\sqrt n)$, and
$C_{\xi,Z}=\operatorname{Cov}(\operatorname{Unif}(K_{\xi,Z}))$.
With probability $1-o(1)$ over $Z$,
\begin{align*}
 \|m_{\xi,Z}+cs_nv\|_2&\le\eta_ns_n,\\
 (1-\eta_n)s_n^2I_n&\preceq C_{\xi,Z}\preceq(1+\eta_n)s_n^2I_n.
\end{align*}
\end{proposition}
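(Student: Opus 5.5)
We combine the exact radial identities \eqref{eq:lower-radial-moments} with the annealed limits of \Cref{lem:lower-annealed} and the concentration bound of \Cref{lem:lower-concentration}. Fix $c\in[3.9,4.1]$ and $v\in S^{n-1}$, and put $\xi=cbv/\sqrt n$. Since $\|\xi\|_2=cb/\sqrt n=o(a)$, the integrals $D,L,Q$ are well defined for every $Z$. On the event $aB_2^n\subseteq P_Z\subseteq 2\sqrt nB_2^n$ from \Cref{lem:lower-geometry}, we have $H_Z(u)=\max_i\langle Z_i,u\rangle$. Hence $\rho_{\xi,Z}$ is exactly the radial function of $K_{\xi,Z}$, and \eqref{eq:lower-radial-moments} applies. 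That event has probability $1-o(1)$ uniformly in $\xi$, because it involves only $Z$.

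Note that $B_k=\mathbb E_Z\int W_{k,u}\,d\sigma(u)$, so $B_n=\mathbb E_ZD$. Chebyshev's inequality with \Cref{lem:lower-concentration} then gives, with probability at least $1-3n^{-1}$,
\[
|D-\mathbb E_ZD|\le n^{-9}B_n,\qquad
\|L-\mathbb E_ZL\|_2\le n^{-9}B_{n+1},\qquad
\|Q-\mathbb E_ZQ\|_F\le n^{-9}B_{n+2}.
\]
Here $B_{n+1}$ and $B_{n+2}$ are the integrals of $m_{k,u}$ without the kernels $u$ and $uu^{\mathsf T}$. We must compare them with $\|\mathbb E_ZL\|_2$ and $\|\mathbb E_ZQ\|_{\op}$. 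By the same Gaussian-limit computation as in \Cref{lem:lower-annealed}, $B_{n+j}/B_n=b^{-j}(1+o(1))$ for $j=1,2$, using \eqref{eq:lower-shift-limit}, \eqref{eq:lower-ratio-limit} and domination by \eqref{eq:lower-envelope}. Since $s_n=b^{-1}n^{-1/2}$, the errors become
\[
n^{-9}B_{n+1}/B_n=O(n^{-8.5}s_n),\qquad n^{-9}B_{n+2}/B_n=O(n^{-8}s_n^2),
\]
which are negligible. Dividing, and using $D\ge(1-n^{-9})\mathbb E_ZD$, we get
\[
\Bigl\|\frac LD-\frac{\mathbb E_ZL}{\mathbb E_ZD}\Bigr\|_2=o(s_n),\qquad
\Bigl\|\frac QD-\frac{\mathbb E_ZQ}{\mathbb E_ZD}\Bigr\|_{\op}=o(s_n^2).
\]
The first estimate uses $\|\mathbb E_ZL\|_2/\mathbb E_ZD=O(s_n)$ from \Cref{lem:lower-annealed}; the second uses the analogous $O(s_n^2)$ bound for $\mathbb E_ZQ$.

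Substituting into \eqref{eq:lower-radial-moments} and applying \Cref{lem:lower-annealed} gives $\|m_{\xi,Z}+cs_nv\|_2=o(s_n)$ and $\|M_{\xi,Z}-s_n^2(I_n+c^2vv^{\mathsf T})\|_{\op}=o(s_n^2)$. The covariance is $C_{\xi,Z}=M_{\xi,Z}-m_{\xi,Z}m_{\xi,Z}^{\mathsf T}$. Writing $m=-cs_nv+e$ with $\|e\|_2=o(s_n)$, we have $\|mm^{\mathsf T}-c^2s_n^2vv^{\mathsf T}\|_{\op}\le 2cs_n\|e\|_2+\|e\|_2^2=o(s_n^2)$. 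The directional terms cancel, so $\|C_{\xi,Z}-s_n^2I_n\|_{\op}=o(s_n^2)$, which is the two-sided bound.

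Uniformity needs care. Every $o(\cdot)$ above comes from lemmas stated uniformly in $c$ and $v$, and the probability bounds are explicit in $n$. So we can take $\eta_n$ to be the supremum of the relative error terms over the compact parameter set, plus a slowly decaying slack; this sequence is deterministic and tends to zero.

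The main obstacle is normalizing the concentration bound. \Cref{lem:lower-concentration} controls fluctuations of $L$ and $Q$ relative to $B_{n+1}$ and $B_{n+2}$, not relative to $\mathbb E_ZD$. We therefore need the ratio $B_{n+j}/B_n\approx b^{-j}$, which is not literally stated earlier. We would derive it directly from \eqref{eq:lower-ratio-limit} and \eqref{eq:lower-shift-limit}: integrate $A_{n+j}(cbG_n/n)$ over the spherical coordinate $G_n=\sqrt n\langle u,v\rangle$ with dominated convergence, exactly as in the proof of \Cref{lem:lower-annealed}. Even a crude polynomial bound on this ratio would suffice, because the $n^{-9}$ factor leaves a large margin.
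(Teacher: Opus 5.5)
Your proposal is correct and follows the paper's proof essentially step for step. Both arguments get concentration of $D,L,Q$ from \Cref{lem:lower-concentration} via Chebyshev/Markov (you use threshold $n^{-9}$ where the paper uses $n^{-3}$), and obtain the ratio $B_{n+j}/B_n=b^{-j}(1+o(1))$ from the argument of \Cref{lem:lower-annealed}. They then divide to replace the annealed ratios by $L/D$ and $Q/D$, and cancel the $c^2s_n^2vv^{\mathsf T}$ terms on the containment event of \Cref{lem:lower-geometry}. As you suspected, only an upper bound on $B_{n+j}/B_n$ is needed, and it follows at once from the pointwise bound $W_{n+j,u}\le(a-\|\xi\|_2)^{-j}W_{n,u}$.
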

\begin{proof}
\Cref{lem:lower-concentration} and Markov's inequality show, with
probability $1-O(n^{-13})$, that $|D-B_n|\le n^{-3}B_n$,
$\|L-\mathbb E L\|_2\le n^{-3}B_{n+1}$, and
$\|Q-\mathbb E Q\|_F\le n^{-3}B_{n+2}$.
The argument of \Cref{lem:lower-annealed} also gives
$B_{n+j}/B_n=b^{-j}(1+o(1))$ for $j=1,2$. Thus replacing the ratios of
expectations in that lemma by $L/D$ and $Q/D$ changes them by
$o(s_n)$ and $o(s_n^2)$, respectively. Intersect with the Gaussian-polytope containment event
$aB_2^n\subseteq P_Z\subseteq2\sqrt nB_2^n$ proved in
\Cref{lem:lower-geometry}, and use \eqref{eq:lower-radial-moments}.
This gives
\begin{align*}
 \|m_{\xi,Z}+cs_nv\|_2&=o(s_n),\\
 \|M_{\xi,Z}-s_n^2(I_n+c^2vv^{\mathsf T})\|_{\mathrm{op}}&=o(s_n^2).
\end{align*}
Since $c$ is bounded,
$\|m_{\xi,Z}m_{\xi,Z}^{\mathsf T}-c^2s_n^2vv^{\mathsf T}\|_{\mathrm{op}}=o(s_n^2)$.
Therefore $C_{\xi,Z}=M_{\xi,Z}-m_{\xi,Z}m_{\xi,Z}^{\mathsf T}$ satisfies
$\|C_{\xi,Z}-s_n^2I_n\|_{\mathrm{op}}=o(s_n^2)$.
Uniformity in the preceding estimates supplies a common sequence $\eta_n$.
\end{proof}

\subsection{Information in an adaptive transcript}
\label{subsec:lower-information}

We analyze an oracle that returns zero when the query point belongs to
the body and otherwise returns the index of its first violated facet.
Testing whether this answer is zero recovers the membership answer, so
a lower bound for this more informative oracle also applies to membership
queries.
Conditional on the common translation and
the algorithm's random seed, the facet vectors remain independent after
conditioning on a transcript. This factorization follows
Rademacher and Vempala~\cite{RV08:journal}; the argument
below uses it to control the information about a common Gaussian translation.
For a query $x$, let
\[
 J(x)=\begin{cases}
 0,&\langle Z_i+\Xi,x\rangle\le1\text{ for every }i,\\
 \min\{i:\langle Z_i+\Xi,x\rangle>1\},&\text{otherwise}.
 \end{cases}
\]
One call simulates a membership query by testing whether $J(x)=0$.

We use relative entropy
$D_{\mathrm{KL}}(\nu\|\mu)=\int\log(d\nu/d\mu)\,d\nu$, with value
$+\infty$ unless $\nu\ll\mu$. For random variables $Y,Z$, their
mutual information is the relative entropy of their joint law with
respect to the product of their marginal laws, denoted $I(Y;Z)$.

Condition on $\Xi=\xi$ and on the algorithm's private seed.
For a fixed answer history $h$, put $X_i=Z_i+\xi$. Each query restricts
each $X_i$ to a halfspace or leaves it unrestricted. Hence the conditional
law of the $X_i$ remains a product, with factors
\begin{equation}
 d\mu_{i,h,\xi}(x)\propto
 e^{-\|x-\xi\|_2^2/2}\mathbf1_{C_i(h)}(x)\,dx,
 \label{eq:lower-product-posterior}
\end{equation}
where each $C_i(h)$ is a convex polyhedron independent of $\xi$.
Only histories of positive probability need be considered.

A Gaussian restricted to a convex set has a dimension-free bound on the
change of its mean after conditioning. This is the only concentration fact
needed for the posterior argument.
\begin{lemma}[Conditional mean of a restricted Gaussian]
\label{lem:lower-conditional-mean}
Let $\mu(dx)\propto e^{-\|x-\xi\|_2^2/2}\mathbf1_C(x)\,dx$,
where $C\subseteq\mathbb R^n$ is convex and has positive Lebesgue measure. If $A$ is measurable and $p=\mu(A)>0$, then
\(\|\mathbb E_\mu[X\mid A]-\mathbb E_\mu X\|_2^2\le2\log(1/p).\)
\end{lemma}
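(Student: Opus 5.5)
The plan is to use the strong logconcavity of $\mu$ together with a transport-entropy (Talagrand-type) inequality, or equivalently a Gibbs variational argument. The density of $\mu$ is $e^{-W}$ with $W(x)=\|x-\xi\|_2^2/2$ on $C$ and $+\infty$ off $C$. Since $W-\|x\|_2^2/2$ is convex, \eqref{eq:strong-logconcave-poincare} applies with $\alpha=1$.

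Let $\nu=\mu(\cdot\mid A)$, so $d\nu/d\mu=\mathbf 1_A/p$ and $D_{\mathrm{KL}}(\nu\|\mu)=\log(1/p)$. Fix a unit vector $u$ and set $f(x)=\langle u,x\rangle-\mathbb E_\mu\langle u,X\rangle$. We need to show $\mathbb E_\nu f\le\sqrt{2\log(1/p)}$. Taking $u$ in the direction of $\mathbb E_\nu X-\mathbb E_\mu X$ then gives the claim.

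By the Donsker--Varadhan variational formula, for every $\lambda>0$,
\[
 \lambda\mathbb E_\nu f\le D_{\mathrm{KL}}(\nu\|\mu)+\log\mathbb E_\mu e^{\lambda f}.
\]
It therefore suffices to prove the sub-Gaussian bound $\log\mathbb E_\mu e^{\lambda f}\le\lambda^2/2$ for $1$-Lipschitz linear $f$ with mean zero. Optimizing over $\lambda$ then gives $\mathbb E_\nu f\le\sqrt{2\log(1/p)}$.

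The main obstacle is this Gaussian moment-generating bound for a restricted Gaussian. It does not follow from the Poincaré inequality alone, which only yields exponential concentration. I would derive it from a log-Sobolev inequality with constant $1$, which holds for $e^{-W}$ with $W$ $1$-strongly convex. This is the Bakry--Émery criterion, and it extends to convex restrictions by approximating $\mathbf 1_C$ with smooth convex penalties $k\,\mathrm{dist}(x,C)^2$ and passing to the limit. Herbst's argument then turns the log-Sobolev inequality into $\mathbb E_\mu e^{\lambda f}\le e^{\lambda^2/2}$.

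An alternative that avoids log-Sobolev is the Caffarelli contraction theorem: $\mu$ is the image of $N(\xi,I_n)$ under a $1$-Lipschitz map $T$. Then $f\circ T$ is $1$-Lipschitz under a standard Gaussian, and Gaussian concentration gives the same moment-generating bound. I would take the log-Sobolev route, because it needs only the strong convexity already recorded in the preliminaries plus the standard approximation remark made there.
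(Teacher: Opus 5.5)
Your proof is correct, and its outer skeleton matches the paper's. In both, the entropy (Donsker--Varadhan) variational inequality with $D_{\mathrm{KL}}(\nu\|\mu)=\log(1/p)$ reduces the lemma to the bound $\log\mathbb E_\mu e^{\langle t,X-\mathbb E_\mu X\rangle}\le\tfrac12\|t\|_2^2$. The paper plugs in $t=\mathbb E_\nu X-\mathbb E_\mu X$ directly rather than optimizing a scalar $\lambda$ along a unit vector $u$, but that difference is cosmetic.

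The two proofs differ in how that sub-Gaussian bound is obtained. The paper notes that every linear tilt $e^{\langle t,x\rangle}\mu(dx)$ is again a restricted Gaussian, namely $N(\xi+t,I_n)$ conditioned on $C$. The Hessian of $t\mapsto\log\mathbb E_\mu e^{\langle t,X\rangle}$ is exactly the covariance of that tilt. Applying \eqref{eq:strong-logconcave-poincare} to linear functions under each tilt bounds this Hessian by $I_n$ uniformly in $t$, and integrating twice from $t=0$ gives the bound.

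So your remark that the Gaussian moment bound does not follow from the Poincar\'e inequality alone is true for a single measure. Here, however, the Poincar\'e (Brascamp--Lieb) inequality applied to the whole exponential family of tilts, all of which share constant $1$, already suffices.

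Your route through Bakry--\'Emery and Herbst, or through Caffarelli's contraction theorem, is valid and proves more: sub-Gaussian concentration for every $1$-Lipschitz function, not only linear ones. It does import heavier tools than the lemma needs. The paper's argument uses only the variance inequality already recorded in the preliminaries together with the closure of restricted Gaussians under linear tilts.

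Both routes need the same approximation of $\mathbf 1_C$ by convex penalties. Note that $k\,\operatorname{dist}(x,C)^2$ is only $C^{1,1}$, so to use the textbook form of Bakry--\'Emery you would mollify it once more, as the paper does.
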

\begin{proof}
Every linear tilt of $\mu$ is again a Gaussian restricted to $C$.
The Brascamp--Lieb consequence in
Equation~\eqref{eq:strong-logconcave-poincare}, applied to linear functions,
gives
$\nabla_t^2\log\mathbb E_\mu e^{\langle t,X\rangle}\preceq I_n$
for every $t\in\mathbb R^n$. The assertion also holds for
nonsmooth $C$: replace its indicator potential by
$j\operatorname{dist}(x,\overline C)^2$, smooth this convex penalty, and
pass to the limit by Gaussian domination. It follows by integrating the
Hessian of the log moment-generating function that, for every $t\in\mathbb R^n$,
\[
 \log\mathbb E_\mu e^{\langle t,X-\mathbb E_\mu X\rangle}
 \le\tfrac12\|t\|_2^2.
\]
For $\nu=\mu(\cdot\mid A)$, the entropy variational inequality now gives
\[
 \langle t,\mathbb E_\nu X-\mathbb E_\mu X\rangle
 \le D_{\rm KL}(\nu\|\mu)+\tfrac12\|t\|_2^2
 =\log(1/p)+\tfrac12\|t\|_2^2.
\]
Take $t=\mathbb E_\nu X-\mathbb E_\mu X$.
\end{proof}

The following bound controls the Fisher information of the entire adaptive
transcript by its number of queries.
For a transcript law $p_\xi(t)$, the Fisher information matrix is
$\mathcal I_\xi=\sum_{t:p_\xi(t)>0}p_\xi(t)
\nabla_\xi\log p_\xi(t)\nabla_\xi\log p_\xi(t)^{\mathsf T}$.
\begin{lemma}[Fisher information in the transcript]
\label{lem:lower-fisher}
Fix the private seed of an adaptive policy making at most $q$ calls to $J$.
Let $T_q$ be its answer transcript and
$\mathcal I_\xi(T_q)$ its Fisher information
matrix in the common location parameter $\xi$. Then
\(\operatorname{Tr}\mathcal I_\xi(T_q)\le2q\log(N+1).\)
\end{lemma}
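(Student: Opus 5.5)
We plan to use the chain rule for Fisher information along the adaptive transcript, and then bound the contribution of each single query. With the private seed fixed, the $j$-th query $x_j$ is a deterministic function of the previous answers $T_{j-1}$. Hence
\[
 p_\xi(T_q)=\prod_{j=1}^q p_\xi(J_j\mid T_{j-1}),
\]
and the score splits into a sum of conditional scores $\nabla_\xi\log p_\xi(J_j\mid T_{j-1})$. These form a martingale difference sequence, since each has conditional mean zero given $T_{j-1}$. The cross terms therefore vanish in expectation, and
\[
 \operatorname{Tr}\mathcal I_\xi(T_q)=\sum_j\mathbb E\,\operatorname{Tr}\mathcal I_\xi(J_j\mid T_{j-1}=h).
\]
It thus suffices to show that one query, made after any positive-probability history $h$, has conditional Fisher information with trace at most $2\log(N+1)$.

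To compute the per-query information, we use the product posterior \eqref{eq:lower-product-posterior}. Given $h$ and $\xi$, the vectors $X_i$ are independent with laws $\mu_{i,h,\xi}$. Each of these is the standard Gaussian centered at $\xi$, restricted to $C_i(h)$, with $C_i(h)$ independent of $\xi$. The answer $J(x)=k$ is the event that $X_i\in\{\langle y,x\rangle\le1\}$ for all $i<k$ and $X_k\in\{\langle y,x\rangle>1\}$; the answer $0$ requires all $N$ constraints. Thus $p_\xi(k\mid h)=\prod_i\mu_{i,h,\xi}(A_{i,k})$ for explicit halfspaces or whole spaces $A_{i,k}$. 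For a restricted Gaussian, $\nabla_\xi\log\mu_{i,h,\xi}(A)=\mathbb E_{\mu_i}[X\mid A]-\mathbb E_{\mu_i}X$, because the $\xi$-derivative of $\log\int_{C\cap A}e^{-\|x-\xi\|^2/2}dx$ equals the conditional mean minus $\xi$, and the normalizer contributes the unconditional mean minus $\xi$. The conditional score for answer $k$ is therefore $g_k=\sum_i(\mathbb E_{\mu_i}[X\mid A_{i,k}]-\mathbb E_{\mu_i}X)$. We need to bound $\sum_k p_k\|g_k\|_2^2$.

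The main obstacle is that $g_k$ is a sum of up to $N$ vectors, and a crude triangle inequality would lose a factor of $N$. Our plan is to regard the answer as a function of the independent vectors $X_1,\dots,X_N$ and to use the variational characterization of Fisher information. The information in $J$ is at most the information in any finer statistic, but we want something coarser than the full vector. So we go through mutual information, not the full statistic: for a unit direction $e$,
\[
 \langle e,g_k\rangle=\mathbb E\Bigl[\Bigl\langle e,\sum_i(X_i-\mathbb E X_i)\Bigr\rangle\Bigm| J=k\Bigr].
\]
Since the $X_i$ are independent, the score of the joint law is $S=\sum_i(X_i-\mathbb E X_i)$. Then $g_J=\mathbb E[S\mid J]$, and $\operatorname{Tr}\mathcal I(J)=\mathbb E\|\mathbb E[S\mid J]\|_2^2$. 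Now we apply \Cref{lem:lower-conditional-mean}, or rather its sub-Gaussian step, to the product measure. The product of restricted Gaussians with identity covariance scale has a log-MGF of $\langle t,S\rangle$ bounded by $\tfrac12 N\|t\|^2$, which is too weak. Instead we exploit the structure of $J$: conditioning on $J=k$ tilts only coordinate $k$ to a halfspace, and tilts the coordinates $i<k$ mildly to complementary halfspaces. For each $i$, apply \Cref{lem:lower-conditional-mean} to $\mu_i$ with $p=\mu_i(A_{i,k})$. This gives $\|g_k\|_2\le\sum_i\sqrt{2\log(1/p_{i,k})}$. A better route is to use the one-dimensional structure: every shift is parallel to $x$, because $A_{i,k}$ is a halfspace with normal $x$. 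Projecting, each conditional mean shift has the sign fixed by whether the halfspace is kept or cut. Summing via the entropy argument applied to the product measure in the single direction $x/\|x\|$ then yields $\|g_k\|_2^2\le2\log(1/p_k)$ with $p_k=\prod_i p_{i,k}$.

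Given this bound, averaging finishes the argument: $\sum_k p_k\|g_k\|_2^2\le2\sum_k p_k\log(1/p_k)=2H(J)\le2\log(N+1)$, since $J$ takes $N+1$ values. Summing over the $q$ queries gives the claim. The only measure-theoretic point is differentiation under the integral, which is justified by Gaussian domination, and histories of zero probability are excluded as the statement allows. The step we expect to fail or need repair is the single-direction sub-Gaussian bound for the product measure $\prod_i\mu_i$ conditioned on $\{J=k\}$. If the direct entropy argument does not apply to the sum, we would bound each $p_{i,k}$-term separately and use $\sum_i\log(1/p_{i,k})=\log(1/p_k)$. This requires the per-coordinate shifts, all along $x$, to combine in $\ell_2$ rather than $\ell_1$. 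We would try to obtain that from their alternating-sign structure together with the convexity of $\log$-Laplace transforms.
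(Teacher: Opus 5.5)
Your reduction to a single query is sound. The chain rule for the score, with martingale-difference cross terms, is a valid substitute for the paper's telescoping. Also, $g_k=\sum_i\delta_{i,k}$ with $\delta_{i,k}=\mathbb E_{\mu_i}[X\mid A_{i,k}]-\mathbb E_{\mu_i}X$ is the correct conditional score.

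The gap is the key step: the pointwise bound $\|g_k\|_2^2\le2\log(1/p_k)$ is false. It fails already at the first query, where the $\mu_i$ are unrestricted $N(\xi,I_n)$. Put $u=x/\|x\|_2$ and $c=1/\|x\|_2-\langle\xi,u\rangle$, and write $\Phi=1-\overline\Phi$. Then:
\begin{itemize}
\item $\delta_{i,k}=-\frac{\varphi(c)}{\Phi(c)}u$ for every $i<k$;
\item $\delta_{k,k}=\frac{\varphi(c)}{\overline\Phi(c)}u$;
\item $\delta_{i,k}=0$ for $i>k$.
\end{itemize}
So $\|g_k\|_2^2$ grows quadratically in $k$, while $\log(1/p_k)=(k-1)\log(1/\Phi(c))+\log(1/\overline\Phi(c))$ grows only linearly. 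Concretely, $c=3$ and $k-1\approx10/\overline\Phi(3)\ll N$ give $\|g_k\|_2^2\approx870$ against $2\log(1/p_k)\approx33$.

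The shifts for $i<k$ all point along $-u$, so they add coherently. There is no alternating-sign cancellation for an $\ell_2$ combination to exploit. A sub-Gaussian bound for the product measure only gives the factor-$N$ loss you already noticed. Separately, your claim that every shift is parallel to $x$ fails once $h$ is nonempty. The restriction to the polyhedron $C_i(h)$ couples $\langle X,u\rangle$ with the orthogonal directions.

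The missing idea is that the cross terms vanish only after averaging over the answer. For $i\ne i'$, $X_i$ and $X_{i'}$ are independent given $h$. They are also independent given $(h,J=k)$ for each $k$, because $\{J=k\}$ is a product event. The law of total covariance therefore gives
\[
 \sum_kp_k\,\delta_{i,k}\delta_{i',k}^{\mathsf T}=\operatorname{Cov}(X_i,X_{i'}\mid h)-\mathbb E[\operatorname{Cov}(X_i,X_{i'}\mid h,J)\mid h]=0.
\]
Hence
\[
 \sum_kp_k\|g_k\|_2^2=\sum_kp_k\sum_i\|\delta_{i,k}\|_2^2
 \le2\sum_kp_k\sum_i\log\frac1{p_{i,k}}
 =2\sum_kp_k\log\frac1{p_k}\le2\log(N+1),
\]
where the inequality applies \Cref{lem:lower-conditional-mean} to each factor $\mu_{i,h,\xi}$ separately.

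This is exactly the content of the paper's potential $V_\xi(h)=\sum_i\operatorname{Tr}\operatorname{Cov}_\xi(X_i\mid h)$. By total variance and conditional independence, $\operatorname{Tr}\mathcal I_\xi(T_q)=\mathbb E\|\mathbb E[S\mid T_q]\|_2^2=Nn-\mathbb E V_\xi(T_q)$. Each query lowers $V_\xi$ by at most $2\log(N+1)$ in expectation. With this averaged identity in place of your pointwise bound, your chain-rule argument goes through.
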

\begin{proof}
Define
$V_\xi(h)=\sum_i\operatorname{Tr}\operatorname{Cov}_\xi(X_i\mid h)$.
For the next answer $J$, write $p_j=\mathbb P_\xi(J=j\mid h)$,
omitting zero-probability answers from all sums.
Under \eqref{eq:lower-product-posterior}, the answer event is a product
$\prod_iA_{i,j}$ with $p_j=\prod_i\mu_{i,h,\xi}(A_{i,j})$.
\Cref{lem:lower-conditional-mean} and total variance give
\begin{align*}
 V_\xi(h)-\mathbb E_\xi[V_\xi(h,J)\mid h]
 &=\sum_jp_j\sum_i
 \|\mathbb E_\xi[X_i\mid h,J=j]-\mathbb E_\xi[X_i\mid h]\|_2^2\\
 &\le2\sum_jp_j\log(1/p_j)\le2\log(N+1).
\end{align*}

For a transcript value $t$, its cell is a product of polyhedra independent
of $\xi$. Differentiating its Gaussian probability gives
\[
 \nabla_\xi\log\mathbb P_\xi(T_q=t)
 =\mathbb E_\xi\left[\sum_i(X_i-\xi)\,\middle|\,T_q=t\right].
\]
Differentiation is justified on each compact set of $\xi$ by an
integrable Gaussian times a linear polynomial. This is the usual
conditional-score identity~\cite{L82:journal}.
Let $S=\sum_i(X_i-\xi)$. Initially
$\operatorname{Tr}\operatorname{Cov}_\xi S=Nn$, and the product posterior
implies $\operatorname{Tr}\operatorname{Cov}_\xi(S\mid T_q)=V_\xi(T_q)$.
Therefore
\[
 \operatorname{Tr}\mathcal I_\xi(T_q)
 =\mathbb E_\xi\|\mathbb E_\xi[S\mid T_q]\|_2^2
 =Nn-\mathbb E_\xi V_\xi(T_q)
 \le2q\log(N+1),
\]
by telescoping the one-step estimate. An early-stopping policy can be padded
with the deterministic query $x=0$.
\end{proof}

The Gaussian prior turns the Fisher information bound into a bound on mutual
information. Let $\gamma_\tau=N(0,\tau^2I_n)$ and, for a nonnegative
integrable function $g$, define
$\operatorname{Ent}_{\gamma_\tau}(g)=
\int g\log(g/\int g\,d\gamma_\tau)\,d\gamma_\tau$,
with $\operatorname{Ent}_{\gamma_\tau}(0)=0$.
We use the normalization
$\operatorname{Ent}_{\gamma_\tau}(f^2)\le
2\tau^2\mathbb E_{\gamma_\tau}\|\nabla f\|_2^2$ of Gross' Gaussian
log-Sobolev inequality~\cite{G75:journal}.
\begin{proposition}[Mutual information in the transcript]
\label{prop:lower-information}
If $U$ is the independent private seed and $T_q$ is the transcript of any
policy making at most $q$ calls to $J$, then
\[
 I(\Xi;U,T_q)\le\tau^2q\log(N+1)
 =O\left(\frac{q\log^2 n}{n^2}\right).
\]
\end{proposition}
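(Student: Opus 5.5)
We reduce to the Fisher information bound of \Cref{lem:lower-fisher}, using the seed-by-seed structure and the Gaussian log-Sobolev inequality. First condition on the seed. Since $U$ is independent of $\Xi$, the chain rule gives
\[
 I(\Xi;U,T_q)=I(\Xi;U)+I(\Xi;T_q\mid U)=\mathbb E_U\, I(\Xi;T_q\mid U=u).
\]
It therefore suffices to prove $I(\Xi;T_q\mid U=u)\le\tau^2q\log(N+1)$ for every fixed seed $u$. With the seed fixed, the policy is deterministic. The transcript takes countably many values $t$, with likelihood $p_\xi(t)=\mathbb P_\xi(T_q=t)$, and $\Xi\sim\gamma_\tau$.

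Next, write the mutual information as an average of entropies. Let $\bar p(t)=\int p_\xi(t)\,d\gamma_\tau(\xi)$. Then
\[
 I(\Xi;T_q)=\sum_t\int p_\xi(t)\log\frac{p_\xi(t)}{\bar p(t)}\,d\gamma_\tau(\xi)
 =\sum_t\operatorname{Ent}_{\gamma_\tau}\bigl(p_\cdot(t)\bigr).
\]
For each $t$, apply Gross' inequality to $f_t=\sqrt{p_\cdot(t)}$:
\[
 \operatorname{Ent}_{\gamma_\tau}(p_\cdot(t))\le2\tau^2\int\|\nabla_\xi\sqrt{p_\xi(t)}\|_2^2\,d\gamma_\tau
 =\frac{\tau^2}{2}\int p_\xi(t)\|\nabla_\xi\log p_\xi(t)\|_2^2\,d\gamma_\tau.
\]
Summing over $t$ and exchanging sum and integral (Tonelli) gives
\[
 I(\Xi;T_q)\le\frac{\tau^2}{2}\int\operatorname{Tr}\mathcal I_\xi(T_q)\,d\gamma_\tau(\xi)\le\frac{\tau^2}{2}\cdot2q\log(N+1)=\tau^2q\log(N+1),
\]
where the last inequality is \Cref{lem:lower-fisher}. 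Finally, substitute $\tau^2=16b^2/n^2=16\cdot198\log n/n^2$ and $\log(N+1)=O(\log n)$ to obtain $O(q\log^2n/n^2)$.

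The main obstacle is justifying the log-Sobolev step. Gross' inequality needs $f_t$ to lie in the Gaussian Sobolev space, but $\sqrt{p}$ may fail to be differentiable where $p_\xi(t)$ vanishes. We handle this with a regularization: apply the inequality to $\sqrt{p_\xi(t)+\epsilon}$ and let $\epsilon\to0$.
\begin{itemize}
\item The entropy is lower semicontinuous, and the entropies converge by dominated convergence since $p\le1$.
\item The right side is monotone: $\|\nabla\sqrt{p+\epsilon}\|^2=\|\nabla p\|^2/(4(p+\epsilon))$ increases to $\|\nabla p\|^2/(4p)$ on $\{p>0\}$.
\item On $\{p=0\}$ the gradient vanishes almost everywhere.
\end{itemize}
In fact $p_\xi(t)$ is an integral of a positive Gaussian density over a cell, so it is either identically zero in $\xi$ or everywhere positive. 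The cell $\prod_i C_i$ is independent of $\xi$, which removes the degenerate case.

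Smoothness of $p_\xi(t)$ in $\xi$, with a locally bounded gradient, follows from the differentiation under the integral already justified in the proof of \Cref{lem:lower-fisher}. Measurability of the policy ensures that the sums over transcripts are well defined.
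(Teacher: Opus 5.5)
Your proposal is correct and follows the paper's proof: fix the seed, write $I(\Xi;T_q\mid U=u)=\sum_t\operatorname{Ent}_{\gamma_\tau}(p_t)$, apply Gross' inequality to $\sqrt{p_t}$, sum to get the prior-averaged Fisher trace, bound it by \Cref{lem:lower-fisher}, and average over the independent seed. Your $\epsilon$-regularization is harmless but not needed. Each cell has either zero or everywhere-positive probability, as you yourself observe and the paper also uses. There are also only finitely many transcripts, at most $(N+1)^q$, rather than countably many.
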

\begin{proof}
Fix $U=u$ and put $p_t(\xi)=\mathbb P_\xi(T_q=t\mid U=u)$.
Applying the Gaussian log-Sobolev inequality
to $\sqrt{p_t}$ yields
\[
 \operatorname{Ent}_{\gamma_\tau}(p_t)
 \le\frac{\tau^2}{2}\mathbb E_{\gamma_\tau}
 \bigl[p_t(\Xi)\|\nabla\log p_t(\Xi)\|_2^2\bigr].
\]
There are finitely many transcripts. Summing over them gives
$I(\Xi;T_q\mid U=u)$ on the left and the prior-averaged Fisher trace
times $\tau^2/2$ on the right. Positive-measure cells have positive Gaussian
probability for every $\xi$; null cells can be omitted. The bound in
\Cref{lem:lower-fisher} supplies the required integrability.
Average over $U$, use its independence of $\Xi$, and substitute
$\tau^2=16b^2/n^2$ and $\log(N+1)=O(\log n)$.
\end{proof}

\subsection{Recovering the hidden direction}
\label{subsec:lower-decoding}

We finish by showing that the centering vector returned by the rounding
algorithm identifies the hidden direction $v$.
Constant-probability recovery
of a uniform direction requires $\Omega(n)$ mutual information, which gives the
query lower bound.

The covariance lower bound on a successful output controls how far its
centering vector can lie from the true centroid.
\begin{lemma}[Recovery from a rounding map]
\label{lem:lower-decoder}
For all sufficiently large $n$, on the event in
\Cref{prop:lower-typical-moments}, every successful output $(T,z)$
in \Cref{thm:lower-rounding} satisfies $z\ne0$ and
\(\left\langle -\frac z{\|z\|_2},v\right\rangle\ge\frac12.\)
\end{lemma}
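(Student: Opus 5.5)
We work on the event of \Cref{prop:lower-typical-moments}, where $\|m+cs_nv\|_2\le\eta_ns_n$ with $m=m_{\xi,Z}$ and $c\in[3.9,4.1]$, and where $(1-\eta_n)s_n^2I_n\preceq C\preceq(1+\eta_n)s_n^2I_n$ with $C=C_{\xi,Z}$ the covariance. The plan is to show that a successful output forces $z$ to lie close to $m$ on the scale $s_n$. Since $m$ is about $4s_n$ away from the origin in direction $-v$, the vector $z$ then points roughly along $-v$.

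\emph{Step 1: bound $\|z-m\|_2$.} Success gives $\tfrac12I_n\preceq TCT^{\mathsf T}$. Because $C\preceq(1+\eta_n)s_n^2I_n$, this yields $\tfrac12I_n\preceq(1+\eta_n)s_n^2TT^{\mathsf T}$. Hence every singular value of $T$ is at least $\bigl(2(1+\eta_n)\bigr)^{-1/2}s_n^{-1}$, so $\|Ty\|_2\ge s_n\|y\|_2/\sqrt{2(1+\eta_n)}$ for every $y$. Applying this to $y=m-z$ and using $\|T(m-z)\|_2\le\tfrac12$, we get
\[
 \|z-m\|_2\le\tfrac12\sqrt{2(1+\eta_n)}\,s_n=\sqrt{(1+\eta_n)/2}\;s_n\le0.72\,s_n
\]
for large $n$. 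Only the covariance lower bound of the success guarantee and the covariance upper bound from \Cref{prop:lower-typical-moments} are used here.

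\emph{Step 2: locate $z$ relative to $-v$.} Write $z=-cs_nv+e$. Combining the centroid estimate with Step~1 gives $\|e\|_2\le(0.72+\eta_n)s_n\le0.75\,s_n$. Since $c\ge3.9$, we have $\|e\|_2<cs_n$, so $z\neq0$. Moreover $\langle -z,v\rangle=cs_n-\langle e,v\rangle\ge(3.9-0.75)s_n$ and $\|z\|_2\le(4.1+0.75)s_n$. Therefore
\[
 \left\langle-\frac z{\|z\|_2},v\right\rangle\ge\frac{3.15}{4.85}>\frac12.
\]

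The main point to check is Step~1: converting the Loewner lower bound on $TCT^{\mathsf T}$ into a lower bound on the smallest singular value of $T$, which uses that $C$ is nearly scalar. The rest is arithmetic with constants that have ample slack, since $\eta_n\to0$ absorbs all lower-order terms for sufficiently large $n$.
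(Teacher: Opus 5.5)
Your proof is correct and follows the paper's argument essentially step for step. Success gives $TC_{\xi,Z}T^{\mathsf T}\succeq\tfrac12 I_n$, and together with $C_{\xi,Z}\preceq(1+\eta_n)s_n^2I_n$ this bounds $\|T^{-1}\|_{\mathrm{op}}$. The centering guarantee then yields $\|z-m_{\xi,Z}\|_2\le\sqrt{(1+\eta_n)/2}\,s_n$, and the centroid estimate places $z$ near $-cs_nv$.

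The only difference is cosmetic. The paper rounds the error up to $\|z+cs_nv\|_2\le s_n$ and concludes with $(c-1)/(c+1)>\tfrac12$, while you carry explicit constants and obtain $3.15/4.85>\tfrac12$.
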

\begin{proof}
Write $m=m_{\xi,Z}$, $C=C_{\xi,Z}$, and $s=s_n$. Since
$TCT^{\mathsf T}\succeq I_n/2$ and $C\preceq(1+\eta_n)s^2I_n$,
\(\|T^{-1}\|_{\rm op}\le\sqrt{2(1+\eta_n)}\,s.\)
The centering guarantee then gives
$\|z-m\|_2\le\sqrt{(1+\eta_n)/2}\,s$. Consequently
\[
 \|z+csv\|_2\le
 \left(\eta_n+\sqrt{(1+\eta_n)/2}\right)s\le s
\]
for large $n$. Since $c\ge3.9$, this implies $z\ne0$ and
\(\left\langle-\frac z{\|z\|_2},v\right\rangle \ge\frac{c-1}{c+1}>\frac12.\)
\end{proof}

For completeness, we show that the information requirement for this recovery follows
directly from the exponentially small measure of a spherical cap.
\begin{lemma}[Information required to estimate a direction]
\label{lem:lower-direction-information}
Let $v\sim\Unif(S^{n-1})$, and let $\widehat v\in S^{n-1}$ be any
random vector. If $\mathbb P\{\langle v,\widehat v\rangle\ge1/2\}\ge p$ for a
fixed $p>0$, then $I(v;\widehat v)=\Omega_p(n)$.
\end{lemma}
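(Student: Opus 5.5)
The plan is a Fano-type argument: compare the joint law of $(v,\widehat v)$ with the product of its marginals and use the data-processing inequality for relative entropy applied to one indicator event. Let $E=\{(x,y):\langle x,y\rangle\ge1/2\}$. Under the joint law $P$ of $(v,\widehat v)$ we have $P(E)\ge p$. Under the product law $Q=\Unif(S^{n-1})\otimes\mathcal L(\widehat v)$, we condition on $\widehat v=y$ and use rotation invariance to get
\[
 Q(E)=\sigma\{u:\langle u,y\rangle\ge1/2\}=:\beta_n,
\]
the normalized measure of a spherical cap of angular radius $\pi/3$. This value does not depend on $y$.

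The first step bounds the cap measure. The density of $\langle U,e_1\rangle$ is proportional to $(1-s^2)^{(n-3)/2}$ on $(-1,1)$, with normalizing constant $O(\sqrt n)$. Integrating over $s\ge1/2$ gives
\[
 \beta_n\le C\sqrt n\,(3/4)^{(n-3)/2},
\]
so $\log(1/\beta_n)\ge\tfrac12 n\log(4/3)-O(\log n)$. The same estimate was already used in the proof of \Cref{lem:lower-concentration}.

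The second step applies data processing to the map $(x,y)\mapsto\mathbf1_E(x,y)$:
\[
 I(v;\widehat v)=D_{\mathrm{KL}}(P\|Q)\ge d\bigl(P(E)\,\|\,Q(E)\bigr),
\]
where $d(\alpha\|\beta)=\alpha\log(\alpha/\beta)+(1-\alpha)\log\frac{1-\alpha}{1-\beta}$ is binary relative entropy. If $P\not\ll Q$, the left side is $+\infty$ and there is nothing to prove. For $\alpha=P(E)\ge p$ and $\beta=\beta_n\le p$ (true for large $n$), monotonicity of $d(\cdot\|\beta)$ on $[\beta,1]$ gives
\[
 d(\alpha\|\beta)\ge d(p\|\beta)\ge p\log(1/\beta)-\log2,
\]
using the elementary bound $d(p\|\beta)\ge p\log(1/\beta)-h(p)$ with $h(p)\le\log 2$. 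Combining with the first step,
\[
 I(v;\widehat v)\ge\tfrac p2 n\log(4/3)-O(\log n)-\log 2=\Omega_p(n)
\]
for all sufficiently large $n$. For small $n$ the claim is vacuous because the implicit constant may depend on $p$.

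There is no real obstacle here. The only step needing care is the measure-theoretic justification that $D_{\mathrm{KL}}(P\|Q)$ equals the mutual information, together with the conditioning computation of $Q(E)$ via Fubini. Both hold because $\widehat v$ is an arbitrary measurable random vector and $E$ is closed.
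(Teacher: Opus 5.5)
Your proposal is correct and is the paper's argument. Both proofs apply data processing for $D_{\mathrm{KL}}$ to the indicator of $E$, note that $E$ has product-measure probability equal to the cap measure $\beta_n\le e^{-cn}$, and bound $d(P(E)\|\beta_n)\ge p\log(1/\beta_n)-\log 2$; you also make the cap estimate explicit.

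One small correction: the small-$n$ case is not ``vacuous.'' An independent $\widehat v$ has $I=0$ and still meets the hypothesis whenever $p\le\beta_n$. So $\Omega_p(n)$ has to be read as holding for $n\ge n_0(p)$, and the paper's proof also uses it that way.
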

\begin{proof}
A spherical cap $\{v:\langle v,w\rangle\ge1/2\}$ has uniform probability
$s_n^\star\le\exp(-c_0n)$, independently of the fixed unit vector $w$.
Under the product of the marginals of $v$ and $\widehat v$, the event
$E=\{\langle v,\widehat v\rangle\ge1/2\}$ therefore has probability
$s_n^\star$. Define the binary relative entropy by
$d(p\|q)=p\log(p/q)+(1-p)\log((1-p)/(1-q))$, with the usual
continuous interpretation at $p=0,1$. Data processing under the indicator
of $E$ gives
\[
 I(v;\widehat v)\ge d(\mathbb P(E)\|s_n^\star)
 \ge p\log(1/s_n^\star)-\log2=\Omega_p(n).
\]
\end{proof}

We now cap the query count and apply the information bounds to the
unconditioned Gaussian construction.
\begin{proof}[Proof of \Cref{thm:lower-rounding}]
Let $Q$ be the algorithm's worst-case expected query count on the promised
bodies. If $Q=\infty$ there is nothing to prove. Run it on the random body
\eqref{eq:lower-family} without conditioning its defining Gaussian variables.
Simulate membership using $J$ and stop after
$q=\max\{1,\lceil6Q\rceil\}$ queries. If it has returned an invertible
map with $z\ne0$, output $\widehat v=-z/\|z\|_2$; otherwise output a fixed
unit vector.

By \Cref{lem:lower-geometry} and
\Cref{prop:lower-typical-moments}, the body satisfies the required radius
bounds and has the stated centroid and covariance with probability
$1-o(1)$. For every
fixed body on this event, the original algorithm succeeds with probability
at least $2/3$. Markov's inequality makes the probability of exceeding the
query cap at most $1/6$. \Cref{lem:lower-decoder} therefore gives
\[
 \mathbb P\{\langle v,\widehat v\rangle\ge1/2\}\ge\frac12-o(1).
\]
This argument does not condition or otherwise change the Gaussian prior.
Since $v$ is a function of $\Xi$ and $\widehat v$ is a function of
$(U,T_q)$, \Cref{lem:lower-direction-information} and data processing
give $I(\Xi;U,T_q)=\Omega(n)$. \Cref{prop:lower-information}
then implies $q=\Omega(n^3/\log^2 n)$, and hence the same bound for $Q$.
Finally, $R_n/r_n=6\sqrt{n/(180\log n)}$ is polynomial in $n$.
\end{proof}

\section*{Acknowledgments}

We thank Santosh Vempala for suggesting this research direction.

\subsection*{LLM Usage Statement}

This project started classically among a subset of authors: YL, AS, and CT.
It has been significantly affected by the two most recent ChatGPT updates.
Those updates materially affected both the upper bound in \cref{sec:upper-bound}
and the lower bound in \cref{sec:lower-bound}.

As of April 2026, the status was an improved query bound
for symmetric convex bodies of about $n^{13/4 = 3.25}$
and a simplified version of the cooling path that matches
the previous $n^{7/2 = 3.5}$ bound from~\cite{JLLV26:journal,K23:journal}.

Shortly after the release of GPT 5.6 Sol in July 2026,
we obtained by direct prompting
an $n^{13/4 = 3.25}$ query algorithm for the asymmetric case
as well as a lower bound of about $n^3$ that is in \cref{sec:lower-bound}.
This improvement was also reproduced in early September 2026
by running K3max in Kimi Code.
Combining that with our earlier result led to a matrix-regularized
Gaussian cooling algorithm for symmetric convex bodies in about $n^3$
queries, and in turn a volume estimation routine taking
$\Ot(n^3\varepsilon^{-2})$ membership queries.
These two improved algorithms,
as well as a host of counterexamples to conjectures pointed out to us
by Santosh Vempala, are in a companion document hosted in a
repository on YL's homepage at
\url{https://yangpliu.github.io/repo/cubic-oracle-isotropic-rounding/paper.pdf}.

The symmetric counterexample to the ellipsoid-growth conjecture
in \cref{sec:symmetric-ellipsoid-growth-counterexample} was obtained
manually/classically around late August 2026,
with some motivation from the counterexamples mentioned above.

Finally, the algorithm in \cref{sec:upper-bound} was obtained by prompting
GPT 6 Astra to further improve the manuscript around mid September 2026.
We are in the process of investigating how this happened,
as the route it takes skips most of the steps in earlier versions,
yet shortly after obtaining it, our experiments by either directly prompting
or running GPT work on the statement were not able to duplicate this.

The authors take responsibility for all results and their presentations.

\bibliographystyle{alpha}
\bibliography{references}

\clearpage
\appendix
\section{Supporting estimates for rounding}\label{sec:rounding-support}
This appendix proves \Cref{lem:rounding-estimation-tools} and the
supporting estimates used in the subspace-growing and cooling arguments
of \Cref{sec:upper-bound}. We first compare moments along the Gaussian
schedule and place covariance ellipsoids in level sets. We then construct
initial distributions close to constant-warm laws and prove the moment
estimates, including the effect of the sampler's truncation. The final
lemma accounts for initialization errors and query caps in
\Cref{prop:logconcave-inner-rounder}.

\subsection{Gaussian cooling estimates}\label{subsec:rounding-cooling-estimates}
We bound the error at the end of cooling, compare moments when the
temperature is halved, and identify a level set containing a fixed
fraction of the covariance ellipsoid. We use the Gaussian family and
notation from \Cref{subsec:gaussian-cooling-upper}; in particular, the
underlying law $\mu$ is full-dimensional and logconcave.

At the endpoint, a small quadratic tilt approximately preserves every second moment
about every center. This gives a relative covariance comparison even when
the original support is unbounded.
\begin{lemma}[Removing the final Gaussian factor]\label{lem:cooling-endpoint}
Suppose $\E_\mu\norm{X-x_0}_2^2\le R^2$ and $tR^2\le\eta\le1/2$.
For $b\in\R^n$, define $M(b)=\E_\mu[(X-b)(X-b)^\top]$ and
$M_t(b)=\E_{\mu_t}[(X-b)(X-b)^\top]$.
There is a universal constant $C$ such that
$(1-C\eta)M(b)\preceq M_t(b)\preceq(1+C\eta)M(b)$.
Also $(1-C\eta)A_\mu\preceq A_t\preceq(1+C\eta)A_\mu$ and
$\norm{A_\mu^{-1/2}(m_t-m_\mu)}_2\le C\eta$.
\end{lemma}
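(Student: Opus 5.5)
We will write $\mu_t$ as a reweighting of $\mu$ with weight $w(x)=e^{-t\|x-x_0\|_2^2/2}/\E_\mu e^{-t\|X-x_0\|_2^2/2}$, so that $\E_{\mu_t}g=\E_\mu[gw]$. The plan is to compare $\E_{\mu_t}f^2$ with $\E_\mu f^2$ for every linear-affine test function $f(x)=\ip{u}{x-b}$. Uniformity in $u$ then gives the matrix comparison for $M_t(b)$ versus $M(b)$.

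Write $s=t\|X-x_0\|_2^2/2$. By Jensen, $Z:=\E_\mu e^{-s}\ge e^{-\E_\mu s}\ge e^{-\eta/2}$, which gives the upper bound $w\le e^{\eta/2}$ and hence $M_t(b)\preceq e^{\eta/2}M(b)\preceq(1+C\eta)M(b)$. For the lower bound, use $w\ge e^{-s}\ge 1-s$, so that
\[
 \E_{\mu_t}f^2\ge\E_\mu f^2-\tfrac t2\E_\mu\bigl[f^2\|X-x_0\|_2^2\bigr].
\]
The main obstacle is bounding the cross moment $\E_\mu[f^2\|X-x_0\|_2^2]$ by $C R^2\E_\mu f^2$ uniformly in $b$. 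Cauchy--Schwarz together with \eqref{eq:logconcave-moments} handles this. Since $f(X)$ is a one-dimensional logconcave variable, $(\E f^4)^{1/2}\le C\E f^2$. Since $\|X-x_0\|_2$ is a norm of a logconcave vector, the Borell-type reverse H\"older inequality (from the same moment comparison, applied to seminorms via \cite{LV07:journal}) gives $(\E\|X-x_0\|_2^4)^{1/2}\le C\E\|X-x_0\|_2^2\le CR^2$. Hence the subtracted term is at most $Ct R^2\E f^2\le C\eta\E f^2$, which yields $M_t(b)\succeq(1-C\eta)M(b)$. The ball-shift $b$ enters only through $f$, so the constant is uniform in $b$.

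For the mean, take $b=m_\mu$, so that $M(m_\mu)=A_\mu$. Since $M_t(m_\mu)=A_t+(m_t-m_\mu)(m_t-m_\mu)^\top$, the upper comparison immediately gives $A_t\preceq(1+C\eta)A_\mu$. For the mean shift, take $f(x)=\ip{u}{x-m_\mu}$ and write $\E_{\mu_t}f=\E_\mu[f(w-1)]$, since $\E_\mu f=0$. Cauchy--Schwarz bounds this by $(\E_\mu f^2)^{1/2}(\E_\mu(w-1)^2)^{1/2}$. From $|w-1|\le C(s+\eta)$ and the fourth-moment bound on $\|X-x_0\|_2$, we get $\E_\mu(w-1)^2\le C\eta^2$. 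Therefore $|\ip{u}{m_t-m_\mu}|\le C\eta\sqrt{u^\top A_\mu u}$ for every $u$, which is exactly $\|A_\mu^{-1/2}(m_t-m_\mu)\|_2\le C\eta$. Finally, the lower covariance bound follows from
\[
 A_t=M_t(m_\mu)-(m_t-m_\mu)(m_t-m_\mu)^\top\succeq(1-C\eta)A_\mu-C\eta^2A_\mu,
\]
where the rank-one term is controlled by the mean estimate just proved.
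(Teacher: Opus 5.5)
Your proof is correct and follows essentially the same route as the paper. Both arguments reweight $\mu$ by $e^{-tQ/2}$ with $Q=\|X-x_0\|_2^2$ and sandwich the weight using $1-tQ/2\le e^{-tQ/2}\le 1$ together with a lower bound on the normalizer. Both then bound $\mathbb{E}_\mu[\langle u,X-b\rangle^2Q]\le CR^2\,\mathbb{E}_\mu\langle u,X-b\rangle^2$ by Cauchy--Schwarz and logconcave fourth moments, and control the mean shift through $\mathbb{E}_\mu[Y(w-1)]$ using $\mathbb{E}_\mu Y=0$.

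The differences are cosmetic. The paper splits $Q$ into coordinates so that only the one-dimensional comparison \eqref{eq:logconcave-moments} is needed; the same expansion also justifies your bound $\mathbb{E}Q^2\le C(\mathbb{E}Q)^2$ without a seminorm version of Borell's lemma. The paper also gets the centered covariance bounds by minimizing $u^\top M_t(b)u$ over the centering in each direction, rather than by subtracting the rank-one mean term after the mean estimate as you do.
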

\begin{proof}
Fix $u\in\R^n$, and write $G=(u^\top(X-b))^2$ and
$Q=\norm{X-x_0}_2^2$.
The fourth-moment bound from Equation~\eqref{eq:logconcave-moments},
applied after each translation, and Cauchy--Schwarz give
\[
 \E_\mu[GQ]
 \le\sum_{j=1}^n
   (\E_\mu G^2)^{1/2}
   (\E_\mu(X_j-x_{0,j})^4)^{1/2}
 \le CR^2\E_\mu G.
\]
Since $1-tQ/2\le e^{-tQ/2}\le1$ and
$1-\eta/2\le\E_\mu e^{-tQ/2}\le1$, division by the normalizer proves
the claimed scalar comparison. This holds for every $u$. For centered
covariance, minimize the scalar second moment over the centering scalar
in each direction.
For the mean, put $Y=u^\top(X-m_\mu)$ and $w=e^{-tQ/2}$.
Since $\E Y=0$, the same fourth-moment estimate gives
\[
 |\E_{\mu_t}Y|\le\frac{t\E(|Y|Q)}{2\E w}
 \le CtR^2\sqrt{u^\top A_\mu u}.
\]
Taking the supremum over directions proves the last assertion.
\end{proof}

For probability measures $\nu,\mu$, write
$D_2(\nu\Vert\mu)=\log\int(d\nu/d\mu)^2\,d\mu$ when $\nu\ll\mu$,
and set it to $+\infty$ otherwise.

We bound the covariance eigenvalues collectively. The argument compares
centroid bodies in each principal subspace; it loses only a logarithm when
the eigenvalue bounds are summed.

For a logconcave probability measure $\rho$ on $\R^d$ and $p\ge1$, its symmetric
$L_p$ centroid body is defined through its support function
$h_C(u)=\sup_{x\in C}\langle u,x\rangle$ by
\(h_{Z_p(\rho)}(u) =\left(\int |u^\top x|^p\,d\rho(x)\right)^{1/p}.\)
These bodies commute with linear images.
Equation~\eqref{eq:logconcave-moments} gives
$Z_q(\rho)\subseteq C(q/p)Z_p(\rho)$ for $1\le p\le q$.
We also use the standard centroid-volume bounds and the slicing theorem:
for every centered full-dimensional logconcave law $\rho$ on $\R^d$,
\begin{equation}\label{eq:centroid-volume}
 c\det(\Cov\rho)^{1/(2d)}
 \le |Z_d(\rho)|^{1/d}
 \le C\det(\Cov\rho)^{1/(2d)}.
\end{equation}
Here $|C|$ denotes the Lebesgue volume of a convex set $C$.
The two-sided estimate follows from the centroid-body volume bounds
and the slicing theorem, in its equivalent logconcave
form~\cite{GSTV15:journal,KL25:journal}.

The next lemma bounds the sum of the square roots of the covariance
eigenvalues in terms of R\'enyi divergence. Its mean estimate handles recentering
between consecutive temperatures.
\begin{lemma}[Covariance comparison]\label{lem:nuclear-transfer}
Let $\mu$ and $\nu$ be full-dimensional logconcave laws on $\R^n$, with
$\mu$ isotropic and $D=D_2(\nu\Vert\mu)<\infty$. If
$\lambda_1\ge\cdots\ge\lambda_n>0$ are the covariance eigenvalues of
$\nu$, then $\sqrt{\lambda_k}\le C(n+D)/k$ for $1\le k\le n$.
The mean also satisfies $\norm{\E_\nu X}_2\le C(n+D)$. Consequently,
\(\tr\sqrt{\Cov\nu}\le C(n+D)\log(2n).\)
\end{lemma}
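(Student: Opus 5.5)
Let $f=d\nu/d\mu$, so $\int f^2\,d\mu=e^D$. The plan is to turn the $L^2$ control of $f$ into control of $\nu$ on sets and on moments, using the logconcave tail and volume estimates for the isotropic law $\mu$. There are three steps.

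\textbf{Step 1: the mean.} Fix a unit $u$. Cauchy--Schwarz gives, for any $p\ge1$,
\[
 \E_\nu|u^\top X|^{p/2}\le e^{D/2}\bigl(\E_\mu|u^\top X|^{p}\bigr)^{1/2}.
\]
By \eqref{eq:logconcave-moments}, $(\E_\mu|u^\top X|^p)^{1/p}\le Cp$. Taking $p\asymp n+D$ yields
\[
 \bigl(\E_\nu|u^\top X|^{p/2}\bigr)^{2/p}\le e^{D/p}\,Cp\le C(n+D),
\]
which bounds $|\langle u,\E_\nu X\rangle|$ and proves the mean estimate. The same computation shows $\nu$ is mostly supported in $C(n+D)Z_p(\mu)$.

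\textbf{Step 2: the $k$-th eigenvalue.} Let $E$ be the span of the top $k$ eigenvectors of $\Cov\nu$, and let $\nu_E$, $\mu_E$ be the projections to $E$. Both are logconcave, $\mu_E$ is isotropic, and $D_2(\nu_E\|\mu_E)\le D$ by data processing. Set $p=k$ and centre $\nu_E$ at its mean $m_E$, which costs only $C(n+D)$ by Step 1. Apply \eqref{eq:centroid-volume} to the centred $\nu_E$:
\[
 \bigl(\lambda_1\cdots\lambda_k\bigr)^{1/(2k)}\le C\,|Z_k(\nu_E)|^{1/k}.
\]
The left side is at least $\sqrt{\lambda_k}$. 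For the right side, use the moment transfer
\[
 h_{Z_{k}(\nu_E)}(u)\le e^{D/(2k)}\,h_{Z_{2k}(\mu_E)}(u)\le Ce^{D/(2k)}h_{Z_k(\mu_E)}(u),
\]
valid after recentring. This gives
\[
 |Z_k(\nu_E)|^{1/k}\le Ce^{D/(2k)}|Z_k(\mu_E)|^{1/k}\le Ce^{D/(2k)}
\]
by \eqref{eq:centroid-volume} for the isotropic $\mu_E$.

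This is too weak when $D\gg k$, because the factor $e^{D/(2k)}$ is exponential in $D/k$. It must be replaced by a factor of order $(n+D)/k$. To do this, replace $p=k$ by $p=\max(k,D)$ and use $Z_p(\mu_E)\subseteq C(p/k)Z_k(\mu_E)$. This gives
\[
 Z_k(\nu_E)\subseteq Z_p(\nu_E)\subseteq e^{D/p}Z_{2p}(\mu_E)\subseteq C\,\frac{p}{k}\,Z_k(\mu_E),
\]
so the volume radius is at most $C(k+D)/k$. After recentring, which adds $C(n+D)/\sqrt k$ in volume radius via the Step 1 bound on the mean in $E$, one gets $\sqrt{\lambda_k}\le C(n+D)/k$. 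The recentring shift is a bounded translate, so I expect it to be absorbed using $Z_k$ of a translated law inside $Z_k$ plus a segment.

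\textbf{Step 3: the trace.} Summing $\sqrt{\lambda_k}\le C(n+D)/k$ over $k\le n$ gives
\[
 \tr\sqrt{\Cov\nu}\le C(n+D)\log(2n).
\]

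The main obstacle is Step 2. One must justify the $L_p$ moment transfer from $\mu_E$ to $\nu_E$ with exponent $p\ge D$, so that the loss $e^{D/p}$ is constant, while the $p/k$ growth of centroid bodies supplies exactly the $(n+D)/k$ rate. One must also handle the centring mismatch between the slicing estimate \eqref{eq:centroid-volume}, which needs a centred law, and the non-centred $\nu_E$.
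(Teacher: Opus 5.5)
Your architecture matches the paper's: a Cauchy--Schwarz transfer of $L^p$ moments at an exponent $p\gtrsim D$ so that $e^{D/(2p)}$ stays bounded, the comparison $Z_{2p}\subseteq C(p/k)Z_k$, projection onto the top-$k$ eigenspace, and the volume estimate \eqref{eq:centroid-volume}. Step 1 is correct.

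The genuine gap is the recentring in Step 2. You flag it, but your proposed resolution does not give the claimed bound. If recentring adds $C(n+D)/\sqrt k$ to the volume radius, your chain yields only
\[
 \sqrt{\lambda_k}\le C\,\frac{k+D}{k}+C\,\frac{n+D}{\sqrt k}.
\]
The second term is not $O((n+D)/k)$. For $D=0$ and $k=n$ it gives $\sqrt{\lambda_n}\lesssim\sqrt n$ instead of $O(1)$. Summing over $k$ gives $(n+D)\sqrt n$ rather than $(n+D)\log(2n)$. The cause is that the Step 1 bound $\|m_E\|_2\le C(n+D)$ is isotropic. Treating the shift as a ball of that radius ignores the shape of $Z_k(\nu_E)$ and loses a factor $\sqrt k$ in each eigenvalue bound.

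The missing observation is that the mean is controlled by the centroid body of $\nu_E$ itself, not by Step 1. By Jensen, for every $u$,
\[
 |\langle u,m_E\rangle|\le\mathbb{E}_{\nu_E}|\langle u,X\rangle|\le h_{Z_k(\nu_E)}(u).
\]
Minkowski's inequality in $L^k(\nu_E)$ then gives $h_{Z_k(\nu_E-m_E)}\le 2h_{Z_k(\nu_E)}$, that is, $Z_k(\nu_E-m_E)\subseteq 2Z_k(\nu_E)$. Equivalently, the segment $[-m_E,m_E]$ in your \emph{plus a segment} idea already lies in the symmetric body $Z_k(\nu_E)$, so recentring costs only a factor two.

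So apply the moment transfer to the uncentred $\nu_E$, where no recentring is needed, and then recentre using this inclusion. With your choice $p=\max\{k,D\}$ this gives $\sqrt{\lambda_k}\le C\max\{k,D\}/k\le C(n+D)/k$. This is marginally sharper than what the paper's choice $p=\max\{n,D,2\}$ yields, and the trace bound follows by summation. The Step 1 mean bound is then needed only for the separate mean assertion, not for the eigenvalues.
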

\begin{proof}
Put $p=\max\{n,D,2\}$ and $m=\E_\nu X$. For every linear functional,
Cauchy--Schwarz applied to $d\nu/d\mu$ gives
\[
 \left(\E_\nu|u^\top X|^p\right)^{1/p}
 \le e^{D/(2p)}
       \left(\E_\mu|u^\top X|^{2p}\right)^{1/(2p)}.
\]
Write $\nu-m$ for the law of $X-m$ when $X\sim\nu$.
By Jensen and Minkowski, $Z_k(\nu-m)\subseteq2Z_k(\nu)$. Monotonicity of moments and the
logconcave moment comparison therefore yield, for $1\le k\le n$,
\begin{equation}\label{eq:centroid-transfer}
 Z_k(\nu-m)
 \subseteq2Z_p(\nu)
 \subseteq2e^{D/(2p)}Z_{2p}(\mu)
 \subseteq C\frac pk Z_k(\mu).
\end{equation}

Let $F$ be the span of eigenvectors corresponding to the $k$ largest
covariance eigenvalues of $\nu$, and let $\operatorname{proj}_F$ denote orthogonal projection
onto $F$. The projection of $\mu$ is isotropic, so
\eqref{eq:centroid-volume} gives, with volume measured in $F$,
\begin{align*}
 |\operatorname{proj}_F Z_k(\mu)|^{1/k}&\le C,\\
 |\operatorname{proj}_F Z_k(\nu-m)|^{1/k}
 &\ge c\left(\prod_{j=1}^k\lambda_j\right)^{1/(2k)}
 \ge c\sqrt{\lambda_k}.
\end{align*}
Taking volumes in the projection of \eqref{eq:centroid-transfer} proves
$\sqrt{\lambda_k}\le Cp/k$. Summing the harmonic series proves the
trace bound.

Finally, if $m\ne0$, take $u=m/\norm{m}_2$. The same change of measure,
followed by one-dimensional logconcave moment comparison, gives
\[
 \norm{m}_2\le\left(\E_\nu|u^\top X|^p\right)^{1/p}
 \le C\left(\E_\mu|u^\top X|^{2p}\right)^{1/(2p)}
 \le Cp\left(\E_\mu|u^\top X|^2\right)^{1/2}=Cp.
\]
Since $p\le2(n+D)$, the result follows.
\end{proof}

The next estimate applies to every temperature used by the algorithm,
including its initial high-temperature distribution. It bounds the change
in moments when the temperature is halved.
\begin{corollary}[Covariance comparison at half the temperature]
\label{cor:half-temperature-bridge}
Assume \eqref{eq:grounded-input}, with $h\ge1$ and $R\ge r$.
Put $t_0=64(n+1)(hn/r)^2$ and $L=\log(2+nR/r)$.
For every $0<t\le t_0$,
\(D_2(\mu_{t/2}\Vert\mu_t)\le C_hnL.\)
Consequently,
\begin{align*}
 \operatorname{tr}\sqrt{A_t^{-1/2}A_{t/2}A_t^{-1/2}}
 &\le C_hnL\log(2n),\\
 \|A_t^{-1/2}(m_{t/2}-m_t)\|_2
 &\le C_hnL.
\end{align*}
Moreover, $0\le a_t-a_{t/2}\le hn$.
\end{corollary}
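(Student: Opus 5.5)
We need to prove $D_2(\mu_{t/2}\|\mu_t)\le C_hnL$, the consequences via \Cref{lem:nuclear-transfer}, and $0\le a_t-a_{t/2}\le hn$.

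\emph{Rényi divergence.} The density ratio is $d\mu_{t/2}/d\mu_t=(Z_t/Z_{t/2})e^{t\|x-x_0\|^2/4}$. Hence
\[
 e^{D_2}=\frac{Z_t}{Z_{t/2}^2}\int e^{-V-t\|x-x_0\|_2^2/2+t\|x-x_0\|_2^2/2}\,dx=\frac{Z_tZ_0}{Z_{t/2}^2}.
\]
So it suffices to bound $\log Z_0+\log Z_t-2\log Z_{t/2}$. The function $s\mapsto\log Z_s$ is convex (its second derivative is a variance), so this quantity is a second difference. I would bound it as follows. Since $Z_t\le Z_{t/2}$, we have $D_2\le\log(Z_0/Z_{t/2})$. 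Then
\[
 \log(Z_0/Z_{t/2})=\int_0^{t/2}\E_{\mu_s}\|X-x_0\|_2^2/2\,ds.
\]
To control this, I would show $\E_{\mu_s}\|X-x_0\|^2\le\min(R^2,\,C_h n/s)$ up to constants. The bound $\E_{\mu_s}\|X-x_0\|^2\le R^2$ holds because the Gaussian tilt reduces the second moment about $x_0$: the weight $e^{-sQ/2}$ is decreasing in $Q$, so $Q$ and the weight are negatively correlated. For the bound $C_hn/s$, the potential $V+s\|x-x_0\|^2/2$ is $s$-strongly convex, so $\Cov\preceq s^{-1}I$ by \eqref{eq:strong-logconcave-poincare}. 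The mean's distance to $x_0$ is controlled because $V\le a+hn$ on $B(x_0,r)$ and $V\ge a$: the mode of the tilted law lies within roughly $\sqrt{2hn/s}+r$ of $x_0$, and standard logconcave mean–mode comparison then bounds the mean. Integrating $\min(R^2,C_hn/s)$ from $0$ to $t/2\le t_0$ gives $O(C_hn\log(t_0R^2/n))=C_hnL$, since $t_0R^2$ is polynomial in $nR/r$. \textbf{The main obstacle} is this mean-distance bound for the tilted law in the unbounded logconcave setting. As an alternative, I would use $\E\|X-x_0\|^2\le R^2$ directly, together with the fact that $\mu_s$ is a tilt of $\mu$. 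The restriction $t\le t_0$ matters only through the logarithm $L$.

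\emph{Consequences.} Apply \Cref{lem:nuclear-transfer} after the affine change $y=A_t^{-1/2}(x-m_t)$, which makes $\mu_t$ isotropic. Rényi divergence is invariant under this bijection and logconcavity is preserved. The image of $\mu_{t/2}$ has covariance $A_t^{-1/2}A_{t/2}A_t^{-1/2}$ and mean $A_t^{-1/2}(m_{t/2}-m_t)$. The lemma then gives both bounds with $n+D\le C_hnL$.

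\emph{Infima.} Since $\Phi_{t/2}\le\Phi_t$, we get $a_{t/2}\le a_t$. For the upper bound, evaluate at $x_0$: $a_t\le\Phi_t(x_0)=V(x_0)\le a+hn\le a_{t/2}+hn$, using $\Phi_{t/2}\ge V\ge a$ and $x_0\in B(x_0,r)\subseteq\{V\le a+hn\}$.
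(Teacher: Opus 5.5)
Your proposal is correct. You handle the consequences via \Cref{lem:nuclear-transfer} exactly as the paper does. Your bound $a_t\le\Phi_t(x_0)=V(x_0)\le a+hn\le a_{t/2}+hn$ is a slightly shorter route than the paper's, which evaluates $\Phi_t$ at a minimizer $y$ of $\Phi_{t/2}$ after showing $t\|y-x_0\|_2^2/4\le hn$.

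The genuine difference is the divergence bound. The paper also reduces to normalizers, $D_2\le\log(Z_0/Z_t)$, but then uses two crude volume estimates:
\begin{itemize}
\item Markov's inequality gives $\mu(B(x_0,\sqrt2R))\ge1/2$, hence $Z_0\le2e^{-a}\vol(B(x_0,\sqrt2R))$;
\item $\Phi_t\le a+hn+1/2$ on $B(x_0,\rho)$ with $\rho=\min\{r,t^{-1/2}\}$, hence $Z_t\ge e^{-a-hn-1/2}\vol(B(x_0,\rho))$.
\end{itemize}
That argument needs no information about the intermediate laws. Your thermodynamic integration needs a second-moment bound for every $\mu_s$ with $0<s\le t/2$. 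In exchange it returns $D_2\le C_hn\log(2+tR^2/n)$, where $r$ enters only through the constraint $t\le t_0$. This avoids the $n\log(R/r)$ term that the paper's bound carries even for small $t$.

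Two remarks on the step you call the main obstacle.

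First, drop the $+r$ in your location of the mode. A minimizer $y$ of $\Phi_s$ satisfies $a+s\|y-x_0\|_2^2/2\le\Phi_s(y)\le\Phi_s(x_0)\le a+hn$, so $\|y-x_0\|_2\le\sqrt{2hn/s}$. With the extra $r$, your moment bound would contain a term of order $r^2$. Its integral over $s\le t_0/2$ is of order $r^2t_0\asymp h^2n^3$, and the estimate would no longer give $C_hnL$.

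Second, no mean--mode comparison is needed. The scaling argument in the proof of \Cref{lem:high-temperature-cooling} gives $\mathbb{E}_{\mu_s}\Phi_s\le\inf\Phi_s+n\le V(x_0)+n\le a+(h+1)n$. Since $V\ge a$, this yields $s\,\mathbb{E}_{\mu_s}\|X-x_0\|_2^2\le2(h+1)n$ directly, even when $V$ takes the value $+\infty$.
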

\begin{proof}
Write $a=\inf V$. The normalizer identity and $Z_{t/2}\ge Z_t$ give
\[
 D_2(\mu_{t/2}\Vert\mu_t)
 =\log\frac{Z_tZ_0}{Z_{t/2}^2}
 \le\log\frac{Z_0}{Z_t}.
\]
Markov's inequality gives $\mu(B(x_0,\sqrt2R))\ge1/2$. Since
$e^{-V}\le e^{-a}$,
\(Z_0\le2e^{-a}\operatorname{vol}(B(x_0,\sqrt2R)).\)
Set $\rho=\min\{r,t^{-1/2}\}$. On $B(x_0,\rho)$, the input assumption
and $t\rho^2\le1$ imply
$V(x)+t\|x-x_0\|_2^2/2\le a+hn+1/2$. Hence
\(Z_t\ge e^{-a-hn-1/2}\operatorname{vol}(B(x_0,\rho)),\)
and therefore
$D_2(\mu_{t/2}\Vert\mu_t)
\le hn+\tfrac12+\log2+n\log(\sqrt2R/\rho)$.
For $t\le t_0$, $R/\rho\le8h n\sqrt{n+1}\,R/r$.
This proves the stated divergence bound.
Apply the affine map $x\mapsto A_t^{-1/2}(x-m_t)$ to both laws.
The image of $\mu_t$ is isotropic, so \Cref{lem:nuclear-transfer}
gives the two moment estimates.

Finally let $y$ minimize $\Phi_{t/2}$. Since
$\Phi_{t/2}(y)\le V(x_0)\le a+hn$ and $V(y)\ge a$, we have
$t\|y-x_0\|_2^2/4\le hn$. Evaluating $\Phi_t$ at $y$ gives
$a_t\le a_{t/2}+\tfrac t4\|y-x_0\|_2^2\le a_{t/2}+hn$.
The reverse inequality follows from $\Phi_t\ge\Phi_{t/2}$.
\end{proof}

The covariance ellipsoid can also be placed in a level set of the
potential. This is the form needed to preserve a contained ball when
rounding a nonuniform logconcave distribution.
\begin{lemma}[A covariance ellipsoid in a level set]
\label{lem:level-covariance-ellipsoid}
There is a universal constant $H$ with the following property.
Let $V$ be a proper closed convex potential such that
$\mu(dx)\propto e^{-V(x)}dx$ is a full-dimensional probability measure.
Put $m=\E_\mu X$, $A=\Cov(\mu)$, and $a=\inf V$.
Then $m+\tfrac12 A^{1/2}\bn\subseteq\{V\le a+Hn\}$.
\end{lemma}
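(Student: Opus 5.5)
Fix a point $y=m+\tfrac12A^{1/2}w$ with $\|w\|_2\le1$. The plan is to compare $V(y)$ with $a$ using two ingredients: a ball-like region of large $\mu$-mass around $m$ where $V$ is at most $a+O(n)$ on average, and convexity of $V$ along segments through $y$. First normalize: apply the affine map $x\mapsto A^{-1/2}(x-m)$. This preserves level-set membership, since $V$ composed with an affine map is still convex with the same infimum. So we may assume $\mu$ is isotropic and centered, and we must show $V\le a+Hn$ on $\tfrac12\bn$.

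The first input is the standard fact that a logconcave density $e^{-V}$ satisfies $\E_\mu V\le a+n$. Indeed, $V-a$ has mean at most $n$ under $\mu$; this is the classical bound $\E_\mu V-\inf V\le n$, following from $\int e^{-sV}$ being logconcave in $s$, or from the entropy bound of Bobkov--Madiman. Markov's inequality then gives $\mu\{V\le a+4n\}\ge3/4$.

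The second input is that in isotropic position this level set $S=\{V\le a+4n\}$, a convex set of $\mu$-mass at least $3/4$, must contain a neighborhood of the origin in every direction. Fix a unit $u$ and let $s=\sup_{x\in S}\langle u,x\rangle$. If $s<\tfrac12$, then $S$ lies in the halfspace $\{\langle u,x\rangle<\tfrac12\}$, so $\PP(\langle u,X\rangle\ge\tfrac12)\le1/4$. That alone is not a contradiction. The better plan is to use the one-dimensional marginal $\langle u,X\rangle$, which is isotropic logconcave, together with convexity of $V$ along rays: $V$ grows at most linearly beyond $S$. I would argue as follows. Let $x^*$ minimize $V$, so $V(x^*)=a$. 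Suppose $V(y)>a+Hn$. Convexity along rays from $y$ gives $V(x)\ge a+\lambda Hn$ on the set $y+\lambda^{-1}(\{\text{points beyond}\})$. More concretely, the sublevel set $\{V\le a+Hn\}$ is convex and excludes $y$, so a separating halfspace $\{\langle u,x\rangle\ge\langle u,y\rangle\}$ misses it. On the region $\{\langle u,x\rangle\ge\langle u,y\rangle+t\}$, convexity (interpolating to a point of $S$ near the center) forces $V\ge a+Hn\cdot(1+ct)$ roughly. Then $\E_\mu V\ge a+Hn\cdot\PP(\langle u,X\rangle\ge\tfrac12)$, and the latter is at least a universal constant $c_0$ by logconcave one-dimensional anti-concentration of an isotropic marginal, since $\langle u,y\rangle\le\tfrac12$. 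Combined with $\E V\le a+n$, this gives a contradiction once $H>1/c_0$.

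The main obstacle is making the last step rigorous. Outside the sublevel set $\{V\le a+Hn\}$ we only know $V>a+Hn$ on the far side of the separating hyperplane \emph{if} the minimizer lies on the near side. That does hold: $x^*$ is in the sublevel set, and convexity along segments from $x^*$ through the hyperplane shows $V\ge a+Hn$ on the whole far halfspace $\{\langle u,x\rangle\ge\langle u,y\rangle\}$. So $\E_\mu V\ge a+Hn\,\PP(\langle u,X\rangle\ge\tfrac12)\ge a+c_0Hn$, where $c_0>0$ is the universal lower bound on the tail at half a standard deviation for centered isotropic logconcave variables, for example $c_0\ge1/(2e)\cdot$const by Grünbaum-type estimates. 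This contradicts $\E_\mu V\le a+n$ for $H=2/c_0$, so the remaining care is only in justifying $\E_\mu V\le a+n$ for proper closed $V$ by approximation.
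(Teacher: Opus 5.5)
Your final argument is sound: the first-moment bound $\mathbb E_\mu(V-a)\le n$, strict separation of a bad point $y$ from the closed convex set $\{V\le a+Hn\}$, and a lower bound on the mass of the far halfspace. But the step you attribute to ``Gr\"unbaum-type estimates'' is not supplied by Gr\"unbaum. That inequality only controls $\mathbb P(Z\ge\mathbb E Z)$, not the tail half a standard deviation above the mean, so you still need an argument for $\mathbb P(\langle u,X\rangle\ge\frac12)\ge c_0$.

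Here is one using only the paper's tools. Put $Z=\langle u,X\rangle$ and $q=\mathbb P(Z\ge\frac12)$. Let $Z'$ be $Z$ conditioned on $\{Z<\frac12\}$; this is a logconcave law supported in $(-\infty,\frac12]$. Cauchy--Schwarz gives $-\sqrt q/(1-q)\le\mathbb E Z'\le0$. The fourth-moment bound $\mathbb E Z^4\le C_4$ gives $\operatorname{Var}Z'\ge(1-\sqrt{C_4q})/(1-q)-q/(1-q)^2$. Since \Cref{lem:support-ball} forces $\mathbb E Z'+(\operatorname{Var}Z')^{1/2}\le\frac12$, while the left side tends to $1$ as $q\to0$, $q$ is bounded below by a universal constant.

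Two smaller corrections. First, $s\mapsto\int e^{-sV}$ is log-convex, not logconcave. The bound $\mathbb E_\mu V\le a+n$ comes instead from the dilation inequality $\int e^{-s(V-a)}\le s^{-n}\int e^{-(V-a)}$ for $0<s\le1$, differentiated at $s=1$ (as in \Cref{lem:high-temperature-cooling}). It also follows from log-concavity of $s^n\int e^{-sV}$. Second, your remark about the minimizer lying on the near side is unnecessary: the far halfspace misses the sublevel set by the separation itself.

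With that step filled in, your proof is a genuinely different route. The paper bounds $\mu(\{V>a+Hn\})$ by the exponential moment $\mathbb E_\mu e^{(V-a)/2}\le2^n$ and conditions $\mu$ on the sublevel set. It then checks, via Cauchy--Schwarz and fourth moments, that the conditioned law still has mean within $1/8$ and covariance at least $\frac34I$. A single application of \Cref{lem:support-ball} in $\mathbb R^n$ then yields the whole ball.

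You instead reduce to one direction by separation and need only a constant bound on the excluded mass. So the first moment of $V$ already suffices, and the paper's exponential tail is more than needed. Both arguments in fact show that every convex set of $\mu$-mass at least $1-c$ contains $m+\frac12A^{1/2}B_2^n$.

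The price of your route is the one-dimensional tail lemma. Its natural proof, as above, is the paper's conditioning argument applied to a single marginal, so the two proofs share the same core.
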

\begin{proof}
Convexity and a change of scale about a minimizer of $V$ give
$\E_\mu e^{(V-a)/2}\le2^n$; an approximate minimizer gives the
same conclusion by passage to the limit. More explicitly, convexity
implies $V(x_*+2y)-a\ge2(V(x_*+y)-a)$ when $V(x_*)=a$.
Integrating its exponential proves the assertion.
Thus the complement of $C=\{V\le a+Hn\}$ has probability
$p\le e^{-(H/2-\log2)n}$.

Apply an affine transformation so that $m=0$ and $A=I$.
The fourth-moment bound for one-dimensional logconcave marginals gives
$\E\langle u,X\rangle^4\le C_4$ for every unit vector $u$.
For $\nu=\mu(\cdot\mid C)$, Cauchy--Schwarz therefore gives
$\|m_\nu\|_2\le\sqrt p/(1-p)$ and
$\E_\nu XX^{\mathsf T}\succeq(1-\sqrt{C_4p})I/(1-p)$.
Choosing the universal constant $H$ sufficiently large ensures
$\|m_\nu\|_2\le1/8$ and $\Cov(\nu)\succeq(3/4)I$.
The conditional measure is logconcave and
$\operatorname{supp}\nu\subseteq C$. By \Cref{lem:support-ball},
\[
 \tfrac12\bn\subseteq m_\nu+\sqrt{3/4}\bn
 \subseteq\operatorname{supp}\nu\subseteq C.
\]
Undo the affine transformation to conclude.
\end{proof}

\subsection{Initialization and moment estimation}
\label{subsec:rounding-initialization-moments}
We construct a fresh initial distribution for each sampling batch, then
prove the covariance and mean estimates used by subspace growing.
The final estimates account for the joint truncation in the sampler.

\subsubsection{Warm initial distributions}
We generate a new initial distribution whenever the algorithm changes
coordinates. The generator uses Gaussian cooling with small changes in
temperature. A bounded R\'enyi divergence supplies the constant warmness
needed in the analysis, up to a prescribed total variation error.

We use \Cref{thm:logconcave-warm-membership-sampling} to sample
from consecutive distributions along the Gaussian schedule. Only a fixed
warmness bound is needed in each call.

For $q>1$, the R\'enyi divergence of order $q$ is
$D_q(\nu\|\pi)=(q-1)^{-1}\log\int(d\nu/d\pi)^q\,d\pi$,
with value $+\infty$ unless $\nu$ is absolutely continuous with respect
to $\pi$. The next estimate approximates a distribution by a warm one
when their R\'enyi divergence is bounded.
\begin{lemma}[Approximation by a warm distribution]
\label{lem:warmness-clipping}
If $D_q(\nu\|\pi)\le1$ for some $q\ge2$, then there is a probability
measure $\nu'$ satisfying
$\|\nu-\nu'\|_{\mathrm{TV}}\le e^{-(q-1)}$ and $\nu'\le2e^2\pi$.
\end{lemma}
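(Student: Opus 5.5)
We will construct $\nu'$ by clipping the density of $\nu$ where it exceeds a fixed threshold and renormalizing. Write $f=d\nu/d\pi$ (finite since $D_q(\nu\|\pi)\le1<\infty$). The hypothesis says
\[
 \int f^q\,d\pi\le e^{q-1}.
\]
Set $\lambda=e^2$ and $B=\{f>\lambda\}$. Define $\nu'$ as $\nu$ conditioned on $B^c$, that is, $d\nu'/d\pi=f\mathbf 1_{B^c}/\nu(B^c)$. This is a probability measure as soon as $\nu(B^c)>0$, which will follow from the bound on $\nu(B)$ below.

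The key step is a Markov-type tail estimate. On $B$ we have $f\le f^q\lambda^{1-q}$, so
\[
 \nu(B)=\int_B f\,d\pi\le\lambda^{1-q}\int f^q\,d\pi\le e^{-2(q-1)}e^{q-1}=e^{-(q-1)}.
\]
Since $q\ge2$, this gives $\nu(B)\le e^{-1}<1/2$. Hence $\nu(B^c)\ge1/2$.

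Warmness then follows directly: $d\nu'/d\pi\le\lambda/\nu(B^c)\le2e^2$.

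For the total variation bound, use the standard identity for conditioning: $\|\nu-\nu(\cdot\mid B^c)\|_{\TV}=\nu(B)$. Indeed, $\nu'\ge\nu$ on $B^c$ and $\nu'=0$ on $B$. The supremum defining total variation is therefore attained at $B$, giving $\nu(B)-0=\nu(B)\le e^{-(q-1)}$.

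The only delicate point is choosing the threshold. It must make $\lambda^{1-q}e^{q-1}=e^{-(q-1)}$ exactly, which forces $\lambda=e^2$, while also keeping $\nu(B^c)\ge1/2$ so that the constant $2e^2$ comes out. Both requirements are met with $\lambda=e^2$ and $q\ge2$, so I expect no real obstacle.
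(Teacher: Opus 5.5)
Your proof is correct and follows the paper's argument: both threshold at $e^2$, bound $\nu(\{f>e^2\})\le e^{-2(q-1)}\int f^q\,d\pi\le e^{-(q-1)}$, and condition $\nu$ on $\{f\le e^2\}$, using $\nu(\{f\le e^2\})>1/2$ to obtain the warmness constant $2e^2$. You also verify explicitly that the total variation distance to the conditioned measure equals $\nu(\{f>e^2\})$, which the paper leaves implicit.
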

\begin{proof}
Put $f=d\nu/d\pi$ and $A=\{f\le e^2\}$. Since
$\int f^q\,d\pi\le e^{q-1}$,
\[
 \nu(A^c)=\int_{\{f>e^2\}}f\,d\pi
 \le e^{-2(q-1)}\int f^q\,d\pi\le e^{-(q-1)}.
\]
Conditioning $\nu$ on $A$ gives the result because $\nu(A)>1/2$.
The conditional measure is used only in a coupling argument and need not
be computed by the algorithm.
\end{proof}

The Gaussian variance controls the change of measure along the cooling
schedule. The second moment of the target distribution is the
only other geometric quantity in the estimate.
\begin{lemma}[A fine cooling step]
\label{lem:fine-cooling-renyi}
Let $\mu(dx)\propto e^{-V(x)}dx$ be logconcave with
$\E_\mu\|X\|_2^2\le R^2$, and set
$\mu_t(dx)\propto e^{-t\|x\|_2^2/2}\mu(dx)$.
For $0<s<t$ and $q\ge2$,
\(D_q(\mu_t\|\mu_s)\le\frac{qR^2(t-s)^2}{2s}.\)
\end{lemma}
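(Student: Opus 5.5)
The plan is to reduce the R\'enyi divergence to a second difference of the log-partition function along the Gaussian family, and then bound its curvature using the strongly logconcave Poincar\'e inequality \eqref{eq:strong-logconcave-poincare}.

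\textbf{Reduction to the log-partition function.} For $u\ge0$ set $Z_u=\int e^{-u\|x\|_2^2/2}\,d\mu(x)$ and $F(u)=\log Z_u$. Write $\delta=t-s$ and $w=s+q\delta$. Since
\[
 \frac{d\mu_t}{d\mu_s}(x)=\frac{Z_s}{Z_t}\,e^{-\delta\|x\|_2^2/2},
\]
a direct computation gives
\[
 \int\Bigl(\frac{d\mu_t}{d\mu_s}\Bigr)^q d\mu_s=\frac{Z_s^{q-1}Z_{w}}{Z_t^{q}}.
\]
Hence
\[
 (q-1)D_q(\mu_t\Vert\mu_s)=(q-1)F(s)+F(w)-qF(t).
\]
The function $F$ is the log moment-generating function of $-\|X\|_2^2/2$ under $\mu$, so it is convex and smooth on $(0,\infty)$. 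Its derivatives are
\[
 F'(u)=-\tfrac12\E_{\mu_u}\|X\|_2^2,
 \qquad
 F''(u)=\tfrac14\Var_{\mu_u}\bigl(\|X\|_2^2\bigr).
\]

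\textbf{Jensen gap.} The point $t$ is the convex combination of $s$ and $w$ with weights $(q-1)/q$ and $1/q$. For a function with $F''\le\beta$ on $[s,w]$, the Jensen gap of a two-point combination is at most $\tfrac\beta2$ times the variance of the two-point law. That variance here is
\[
 \frac{q-1}{q}\cdot\frac1q\cdot(q\delta)^2=(q-1)\delta^2.
\]
Multiplying the gap by $q$ gives
\[
 (q-1)F(s)+F(w)-qF(t)\le\tfrac{\beta}{2}\,q(q-1)\delta^2,
\]
and therefore
\[
 D_q(\mu_t\Vert\mu_s)\le\tfrac{\beta}{2}\,q\,\delta^2.
\]
The claimed bound therefore reduces to showing $\beta\le R^2/s$.

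\textbf{Curvature bound (the main step).} For $u\ge s$, the potential of $\mu_u$ is $V+u\|x\|_2^2/2$, which is $u$-strongly convex. Apply \eqref{eq:strong-logconcave-poincare} to $f(x)=\|x\|_2^2$, whose gradient is $\nabla f=2x$:
\[
 \Var_{\mu_u}\bigl(\|X\|_2^2\bigr)\le\frac4u\,\E_{\mu_u}\|X\|_2^2.
\]
It remains to show that $\E_{\mu_u}\|X\|_2^2\le R^2$. This holds because $\E_{\mu_u}\|X\|_2^2=-2F'(u)$ is nonincreasing in $u$ by convexity of $F$, so it is at most its value $\E_\mu\|X\|_2^2\le R^2$ at $u=0$. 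Combining these gives
\[
 F''(u)\le\frac{R^2}{u}\le\frac{R^2}{s}
 \quad\text{on }[s,w].
\]
This yields $\beta\le R^2/s$, and hence $D_q(\mu_t\Vert\mu_s)\le qR^2\delta^2/(2s)$, as claimed.

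The technical points needing care are the following.
\begin{itemize}[leftmargin=*]
\item Differentiating under the integral sign is justified by Gaussian domination for $u>0$.
\item The value $u=0$ is used only as a limit in the monotonicity argument.
\item The strongly logconcave Poincar\'e inequality is applied to a possibly nonsmooth potential $V$, which is covered by the convex-approximation remark following \eqref{eq:strong-logconcave-poincare}.
\end{itemize}
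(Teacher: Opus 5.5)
Your proposal is correct and follows the paper's proof. Like the paper, you write $(q-1)D_q(\mu_t\Vert\mu_s)$ as a second difference of the log-partition function $F$. You then bound $F''(u)=\tfrac14\operatorname{Var}_{\mu_u}(\|X\|_2^2)\le R^2/u$ using the strongly logconcave Poincar\'e inequality and the monotonicity of $\mathbb{E}_{\mu_u}\|X\|_2^2$ in $u$. Your two-point Jensen-gap bound is the paper's Taylor-remainder step in another form, and it gives the same numerator bound $q(q-1)\delta^2R^2/(2s)$.
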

\begin{proof}
Write $F(u)=\log\int e^{-V(x)-u\|x\|_2^2/2}\,dx$. Differentiation gives
\[
 F''(u)=\tfrac14\operatorname{Var}_{\mu_u}(\|X\|_2^2)
 \le\frac1u\mathbb E_{\mu_u}\|X\|_2^2\le\frac{R^2}{u}.
\]
The first inequality is Equation~\eqref{eq:strong-logconcave-poincare}
with $\alpha=u$ and $f(x)=\|x\|_2^2$.
The second follows because
$\mathbb E_{\mu_u}\|X\|_2^2$ is decreasing in $u$, as differentiation shows.
Direct integration of the likelihood ratio gives, with $\Delta=t-s$,
\[
 D_q(\mu_t\|\mu_s)
 =\frac{F(t+(q-1)\Delta)+(q-1)F(t-\Delta)-qF(t)}{q-1}.
\]
On the interval involved, $F''\le R^2/s$. Taylor's formula cancels the
constant and linear terms and bounds the numerator by
$q(q-1)\Delta^2R^2/(2s)$.
\end{proof}

We first sample a narrow Gaussian tilt by rejection.
We then lower its temperature while keeping consecutive distributions
close to constant-warm initial laws. This supplies a fresh initial
distribution in any fixed coordinates.

Let $\mu(dx)\propto e^{-V(x)}dx$ be full-dimensional, with a proper
closed convex potential. Suppose
$B(x_0,r)\subseteq\{V\le\inf V+hn\}$ and
$\E_\mu\|X-x_0\|_2^2\le R^2$, where $h\ge1$ is a fixed constant
and $R\ge r>0$ are known.
Translate $x_0$ to zero and subtract $V(x_0)$ from $V$, so $V(0)=0$.
Use the Gaussian family from \Cref{subsec:gaussian-cooling-upper} in these translated coordinates.

The next rejection sampler starts without any prior sample from $\mu$.
The point $x_0$ need not minimize the potential.
\begin{lemma}[Rejection sampling at high temperature]
\label{lem:high-temperature-rejection}
Put $L=hn/r$ and $t_0=64(n+1)L^2$. There is an algorithm returning
$X\sim\mu_{t_0}$ using $O(1)$ expected evaluations of $V$.
Independent repetitions give independent samples.
\end{lemma}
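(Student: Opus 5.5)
Propose from the Gaussian $N(0,t_0^{-1}I_n)$ (centered at $x_0=0$) and accept a proposal $Y$ with probability $\exp(-(V(Y)-c))$ for a suitable constant $c$, rejecting outright if $V(Y)=+\infty$. The accepted law then has density proportional to $e^{-V(y)-t_0\|y\|_2^2/2}$, which is exactly $\mu_{t_0}$. For this to be a valid acceptance probability we need $V\ge c$ on the whole space, so we take $c=\inf V$. That infimum is not known, so the first task is to lower-bound it using only the known quantities $V(0)=0$ and the level-set ball. Each trial uses one evaluation of $V$, and independent repetitions of the whole procedure give independent samples.

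\textbf{Step 1: a known lower bound on $V$.} Set $a=\inf V$. Since $V(0)=0$, we have $a\le0$. Because $B(0,r)\subseteq\{V\le a+hn\}$, convexity gives a Lipschitz-type bound: $V$ cannot drop below $a$, so along any ray the slope of $V$ near $0$ is controlled. Concretely, $V(x)\ge V(0)-hn\|x\|_2/r\ge -hn\|x\|_2/r$ for all $x$. This follows by writing $0$ as a convex combination of $x$ and the point $-r x/\|x\|_2$ in the ball, where $V\le a+hn\le hn$. Combining with the Gaussian factor, $V(x)+t_0\|x\|_2^2/2\ge -L\|x\|_2+t_0\|x\|_2^2/2\ge -L^2/(2t_0)$, which is a tiny constant. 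Hence we accept with probability $\exp(-V(Y)-L\|Y\|_2)\le1$. This is a valid rejection rule targeting the density $\propto e^{-V(y)-L\|y\|_2}e^{-t_0\|y\|^2/2}\cdot e^{L\|y\|_2}$, which would make the envelope nonuniform. To avoid that, I instead propose from the density $\propto e^{-t_0\|y\|^2/2-L\|y\|_2}$; this density is radial, so it is sampled exactly by drawing a uniform direction and a one-dimensional radius. We then accept with probability $e^{-V(Y)-L\|Y\|_2}\le1$. The accepted law is exactly $\mu_{t_0}$.

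\textbf{Step 2: constant acceptance rate.} The acceptance probability equals
\[
\frac{\int e^{-V-t_0\|y\|^2/2}}{\int e^{-t_0\|y\|^2/2-L\|y\|_2}}.
\]
For the denominator, bound $e^{-L\|y\|_2}\le1$, giving at most $(2\pi/t_0)^{n/2}$. For the numerator, restrict to $\|y\|_2\le s$ with $s=\min\{r,\cdot\}$. Convexity on the ball gives $V(y)\le hn\|y\|_2/r$ (interpolating between $V(0)=0$ and the value $\le hn$ on the sphere of radius $r$), using $a\le0$.

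Under the Gaussian $N(0,t_0^{-1}I)$, we have $\|y\|_2\approx\sqrt{n/t_0}=1/(8L\sqrt{1+1/n})$. So $L\|y\|_2\le1/4$ with probability at least $1/2$, and such $y$ also lie inside $B(0,r)$ because $1/(8L)\ll r/(hn)\le r$. On this event the integrand satisfies $e^{-V}\ge e^{-1/4}$. Hence the numerator is at least $\tfrac12e^{-1/4}(2\pi/t_0)^{n/2}$, and the acceptance probability is at least a universal constant. The expected number of evaluations is therefore $O(1)$.

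\textbf{Main obstacle.} The main obstacle is Step 1, namely a uniform lower bound on $V$ that uses only the known data and does not require knowing $\inf V$. The convexity slope argument through $0$ handles this, and the chosen $t_0$ makes the Gaussian scale $\ll r/(hn)$, so the constants work out in Step 2 as well.
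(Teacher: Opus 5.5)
Your Step 1 bound $V(x)\ge -L\|x\|_2$ is correct. The paper obtains it from a subgradient at $0$, and your interpolation through $-rx/\|x\|_2$ is equivalent. Your lower bound on $\int e^{-V-t_0\|y\|_2^2/2}$ in Step 2 is also correct. The gap is the sign of the radial tilt in the envelope.

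If you propose from $q(y)\propto e^{-t_0\|y\|_2^2/2-L\|y\|_2}$ and accept with probability $e^{-V(y)-L\|y\|_2}$, the accepted law has density proportional to $e^{-V(y)-t_0\|y\|_2^2/2-2L\|y\|_2}$. That is not $\mu_{t_0}$. To reach $\mu_{t_0}$ from that proposal you would have to accept with $e^{-V(y)+L\|y\|_2}$, which is the ratio implicit in your Step 2 formula. This is not a probability: for the indicator potential of a body it equals $e^{L\|y\|_2}>1$ at every nonzero point of $K$.

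The valid envelope is $q(y)\propto e^{-t_0\|y\|_2^2/2+L\|y\|_2}$ with acceptance $e^{-V(y)-L\|y\|_2}\le1$. The acceptance rate is then
\[
 \frac{\int e^{-V(y)-t_0\|y\|_2^2/2}\,dy}{\int e^{-t_0\|y\|_2^2/2+L\|y\|_2}\,dy}.
\]
Your bound $e^{-L\|y\|_2}\le1$ on the denominator is no longer available. You need $\mathbb{E}\,e^{L\|Y\|_2}=O(1)$ for $Y\sim N(0,t_0^{-1}I_n)$, which requires control of the upper tail of $\|Y\|_2$, not only its typical size.

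The missing estimate is short. With $\Delta=t_0/(n+1)$, the inequality $L\|y\|_2\le\Delta\|y\|_2^2/2+L^2/(2\Delta)$ bounds the denominator by $e^{L^2/(2\Delta)}(2\pi/(t_0-\Delta))^{n/2}=e^{1/128}(1+1/n)^{n/2}(2\pi/t_0)^{n/2}\le e^{1/2+1/128}(2\pi/t_0)^{n/2}$. Combined with your numerator bound, the acceptance rate is at least $\tfrac12e^{-3/4-1/128}$. Gaussian concentration of the $t_0^{-1/2}$-Lipschitz function $\|Y\|_2$ would also work.

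The paper uses this same inequality to dispense with the radial proposal altogether. It proposes from the slightly wider Gaussian $N(0,\sigma^2I_n)$ with $\sigma^{-2}=t_0-\Delta$, and accepts with probability $\exp(-V(G)-\Delta\|G\|_2^2/2-L^2/(2\Delta))$, whose exponent is nonpositive by the linear lower bound on $V$. Proposal density times acceptance probability is then proportional to the density of $\mu_{t_0}$. The acceptance rate is bounded below on the event $\|G\|_2^2\le2n\sigma^2$ exactly as in your Step 2.

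With the sign corrected and the denominator bounded, your radial-envelope version is also a complete proof. It just requires exact sampling of a non-Gaussian one-dimensional radius law, which the paper's Gaussian proposal avoids.
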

\begin{proof}
Since the domain of $V$ contains a neighborhood of zero, a subgradient
$g\in\partial V(0)$ exists. For every unit vector $u$,
$V(\pm ru)\le hn$. The subgradient inequalities therefore give
$|\langle g,u\rangle|\le hn/r=L$. Consequently,
$V(x)\ge\langle g,x\rangle\ge-L\|x\|_2$ for every $x\in\R^n$.
Convex interpolation between zero and the point $rx/\|x\|_2$ also gives
$V(x)\le L\|x\|_2$ whenever $\|x\|_2\le r$.

Set $\sigma^2=(1+1/n)/t_0$ and
$\Delta=t_0-\sigma^{-2}=t_0/(n+1)$. Draw $G\sim N(0,\sigma^2I)$
and accept it with probability
\(\exp\left(-V(G)-\frac\Delta2\|G\|_2^2-\frac{L^2}{2\Delta}\right).\)
Reject when $V(G)=+\infty$. The exponent is nonpositive, because
$-V(x)-\Delta\|x\|_2^2/2\le L\|x\|_2-\Delta\|x\|_2^2/2\le L^2/(2\Delta)$.
Multiplying the proposal density by this acceptance probability gives
a constant multiple of the density of $\mu_{t_0}$.

We have $\mathbb P\{\|G\|_2^2\le2n\sigma^2\}\ge1/2$.
On this event,
$L\|G\|_2\le\sqrt{2(n+1)L^2/t_0}<1/4$ and
$\Delta\|G\|_2^2/2\le1$. The definition of $\Delta$ also gives
$L^2/(2\Delta)=1/128$. Since $\|G\|_2\le1/(4L)\le r$,
we have $V(G)\le1/4$.
Thus acceptance probability is at least $e^{-2}/2$. Each proposal
uses one potential evaluation, and ordinary rejection sampling proves
the assertion.
\end{proof}

At temperatures above $r^{-2}$, the ball $B(0,t^{-1/2})$ remains in a
level set whose height is $O(hn)$ above the minimum. Rescaling this ball
to the unit ball gives the hypotheses of
\Cref{thm:logconcave-warm-membership-sampling}.
\begin{lemma}[Bounds at high temperature]
\label{lem:high-temperature-cooling}
For every $t>0$, we have $t\E_{\mu_t}\|X\|_2^2\le2(h+1)n$ and
$\Var_{\mu_t}(\|X\|_2^2)\le8(h+1)n/t^2$.
For $t\ge r^{-2}$,
$B(0,t^{-1/2})\subseteq\{\Phi_t\le a_t+hn+1/2\}$ and
$A_t\preceq t^{-1}I_n$.
For $q\ge2$ and $s=(1-\theta)t$ with $0<\theta\le1/2$,
\(D_q(\mu_t\|\mu_s)\le C(h+1)nq\theta^2.\)
\end{lemma}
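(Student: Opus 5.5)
The plan has four steps: a second-moment bound, a variance bound, a level-set inclusion with the covariance bound, and finally the R\'enyi estimate. We work in the translated coordinates with $V(0)=0$, so that $a=\inf V\ge -hn$, since $B(0,r)\subseteq\{V\le a+hn\}$ and $V(0)=0$.

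\emph{Second moment.} I would use an integration-by-parts (Stein/virial) identity for $\Phi_t=V+t\|x\|_2^2/2$:
\[
 \E_{\mu_t}\langle x,\nabla\Phi_t\rangle=n ,
\]
which holds for convex $V$ with the convention that the boundary term of an indicator contributes nonnegatively, giving $\le n$ after approximation. Convexity gives $\langle x,\nabla V(x)\rangle\ge V(x)-V(0)=V(x)\ge a\ge -hn$. Hence
\[
 t\,\E\|X\|_2^2\le n+hn .
\]
This already gives the first claim, since $(h+1)n\le 2(h+1)n$. If the boundary-term bookkeeping is delicate, I would instead use the scaling argument of \Cref{lem:level-covariance-ellipsoid}, comparing $Z_t$ with the integral over dilates.

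\emph{Variance of $\|X\|_2^2$.} Apply \eqref{eq:strong-logconcave-poincare} with $\alpha=t$ and $f=\|x\|_2^2$:
\[
 \Var\le t^{-1}\E\|2X\|_2^2=4t^{-1}\E\|X\|_2^2\le 8(h+1)n/t^2 .
\]

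\emph{Level set and covariance for $t\ge r^{-2}$.} On $B(0,t^{-1/2})\subseteq B(0,r)$ we have $V\le a+hn$ and $t\|x\|_2^2/2\le1/2$. Since $a_t\ge a$, it follows that $\Phi_t\le a_t+hn+1/2$ there. The covariance bound $A_t\preceq t^{-1}I_n$ is the linear-function case of \eqref{eq:strong-logconcave-poincare}.

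\emph{R\'enyi bound.} I would mimic \Cref{lem:fine-cooling-renyi}, but replace its crude bound on $F''$ by the variance bound above. With $F(u)=\log\int e^{-V-u\|x\|_2^2/2}$, we have $F''(u)=\tfrac14\Var_{\mu_u}\|X\|_2^2\le 2(h+1)n/u^2$. The divergence formula involves $F$ at the points $t+(q-1)\Delta$, $t-\Delta=s$ and $t$, with $\Delta=\theta t$. These points lie in $[s,\infty)$, where $u\ge t/2$, so $F''\le 8(h+1)n/t^2$ there. Taylor's formula then bounds the numerator by
\[
 C q(q-1)\Delta^2(h+1)n/t^2=Cq(q-1)\theta^2(h+1)n ,
\]
and dividing by $q-1$ gives the claim.

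The main obstacle is the first step: the identity must be justified rigorously for nonsmooth or extended-valued $V$. I would handle this by approximating $V$ with smooth convex potentials (Moreau envelopes) and passing to the limit by monotone or dominated convergence, which Gaussian decay makes routine.
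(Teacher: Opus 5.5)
Your proof is correct, and your last three steps are exactly the paper's: Brascamp--Lieb for the variance and the covariance, the inclusion via $a_t\ge\inf V$, and the Taylor bound with $F''\le 8(h+1)n/t^2$ on $[s,\infty)$.

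The genuine difference is the second-moment bound.

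\begin{itemize}
\item The paper dilates the potential level about the minimizer of $\Phi_t$. Convexity gives $J(\sigma)\le\sigma^{-n}J(1)$ for $J(\sigma)=\int e^{-\sigma(\Phi_t-\inf\Phi_t)}$ and $0<\sigma\le1$. The left derivative at $\sigma=1$ gives $\E_{\mu_t}\Phi_t\le\inf\Phi_t+n\le\Phi_t(0)+n=n$. Then $\Phi_t\ge a+t\|x\|_2^2/2$ yields $t\E_{\mu_t}\|X\|_2^2\le2(n-a)\le2(h+1)n$.
\item Your virial inequality $\E_{\mu_t}\langle X,\nabla\Phi_t\rangle\le n$, combined with $\langle x,\nabla V(x)\rangle\ge V(x)-V(0)$, uses the degree-two homogeneity of the quadratic. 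It gives the sharper $t\E_{\mu_t}\|X\|_2^2\le n-a\le(h+1)n$, a factor of two that is irrelevant for the lemma.
\item The paper's route needs no gradients, boundary terms, or smoothing, only convexity and a one-sided derivative of a scalar integral. Its intermediate bound $\E\Phi_t\le\inf\Phi_t+n$ is also a general logconcave fact.
\end{itemize}

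The regularity issue you flag can be avoided without Moreau envelopes by dilating space about the origin instead. For $\lambda>1$ one has $\int e^{-\Phi_t(\lambda x)}\,dx=\lambda^{-n}Z_t$. Since $0\in\operatorname{dom}V$ and the domain is convex, $\lambda x$ can lie in $\operatorname{dom}V$ only if $x$ does. The one-sided bound $\Phi_t(\lambda x)\ge\Phi_t(x)+(\lambda-1)\Phi_t'(x;x)$, followed by monotone convergence as $\lambda\downarrow1$, then gives $\E_{\mu_t}\Phi_t'(X;X)\le n$ for any proper closed convex $V$. Here the directional derivative satisfies $\Phi_t'(x;x)\ge V(x)-V(0)+t\|x\|_2^2$ and replaces $\langle x,\nabla\Phi_t(x)\rangle$, which yields your sharper bound directly.
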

\begin{proof}
Put $a=\inf V$, so $-hn\le a\le0$. A change of scale about a minimizer of
$\Phi_t$ gives $J(s)\le s^{-n}J(1)$ for $0<s\le1$, where
$J(s)=\int e^{-s(\Phi_t(x)-\inf\Phi_t)}\,dx$.
Taking the left derivative of this logarithmic inequality at $s=1$
gives $\E_{\mu_t}\Phi_t\le\inf\Phi_t+n\le n$, since $\Phi_t(0)=0$.
Because $V\ge a$, it follows that
$t\E\|X\|_2^2/2\le n-a\le(h+1)n$.
Brascamp--Lieb, applied to the $t$-strongly convex potential
$\Phi_t$, gives
\[
 \Var_{\mu_t}(\|X\|_2^2)
 \le\frac4t\E_{\mu_t}\|X\|_2^2
 \le\frac{8(h+1)n}{t^2},
\]
and also $\Cov(\mu_t)\preceq t^{-1}I_n$.
These statements extend to proper convex potentials by convex approximation.

When $\|x\|_2\le t^{-1/2}\le r$, we have
$\Phi_t(x)\le a+hn+1/2\le\inf\Phi_t+hn+1/2$.
This proves the stated ball inclusion.

Let $F(u)=\log\int e^{-V(x)-u\|x\|_2^2/2}dx$.
Then $F''(u)=\Var_{\mu_u}(\|X\|_2^2)/4\le2(h+1)n/u^2$.
Writing $\Delta=t-s$, direct integration gives
\(D_q(\mu_t\|\mu_s) =\frac{F(t+(q-1)\Delta)+(q-1)F(s)-qF(t)}{q-1}.\)
On the interval in this expression, $u\ge s\ge t/2$.
Taylor's formula cancels the constant and linear terms, giving
\[
 D_q(\mu_t\|\mu_s)
 \le\frac{q\Delta^2}{2}\sup_{u\ge s}F''(u)
 \le C(h+1)nq\theta^2.
\]
\end{proof}

The preceding estimates give a cooling schedule with two stages.
We obtain a fixed query bound by capping independent runs and returning
the first that finishes. The proof also accounts for conditioning on
completion: the returned law is close in total variation to a
constant-warm law.
\begin{proposition}[Constructing a warm initial distribution]
\label{prop:logconcave-fresh-initialization}
Under \eqref{eq:grounded-input}, with fixed $h\ge1$ and known $R\ge r>0$,
for every $0<\varepsilon<1/4$ there is an algorithm using at most
$\Ot(n^{5/2}+n^2R/r)$ potential evaluations and returning $\widehat X$
such that, for some probability measure $\sigma\le2e^{1/2}\mu$,
$\|\mathcal L(\widehat X)-\sigma\|_{\TV}\le\varepsilon$.
All randomness can be chosen independently of the data used to fix
$V,x_0,r,R$.
\end{proposition}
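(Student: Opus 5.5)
We construct a Gaussian cooling schedule in two stages. We begin at the temperature $t_0=64(n+1)(hn/r)^2$ and end at a temperature $t_{\rm end}$ with $t_{\rm end}R^2\le\theta$ for a small constant $\theta$. Along the way we move a single sample from each law to the next using \Cref{thm:logconcave-warm-membership-sampling}. Since $t_{\rm end}R^2\le\theta$, \Cref{lem:cooling-endpoint} (or a direct likelihood-ratio computation) shows that the density of $\mu_{t_{\rm end}}$ relative to $\mu$ is at most $e^{tR^2\cdot O(1)}\cdot(\E_\mu e^{-t\|X\|^2/2})^{-1}\le(1-\theta/2)^{-1}$. Hence $\mu_{t_{\rm end}}\le e^{1/2}\mu$ for small $\theta$, up to the factor $2$ supplied by \Cref{lem:warmness-clipping}.

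\emph{Step 1 (start).} Draw an exact sample from $\mu_{t_0}$ with \Cref{lem:high-temperature-rejection}. This costs $O(1)$ expected evaluations, and we cap the number of evaluations at $O(\log(1/\varepsilon))$.

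\emph{Step 2 (high temperatures, $t\ge r^{-2}$).} Use the steps $t_{j+1}=(1-\theta_j)t_j$ with $\theta_j=c/\sqrt{n\log(\cdot)}$. By \Cref{lem:high-temperature-cooling} with $q=\Theta(\log(1/\varepsilon'))$, we get $D_q(\mu_{t_{j+1}}\|\mu_{t_j})\le1$. \Cref{lem:warmness-clipping} then shows that a sample of $\mu_{t_j}$ is within $\varepsilon'$ in total variation of a $2e^2$-warm start for $\mu_{t_{j+1}}$. Also by \Cref{lem:high-temperature-cooling}, after rescaling by $t^{1/2}$ the ball $B(0,t^{-1/2})$ becomes a unit ball inside a level set of height $O(hn)$, and the covariance satisfies $A_t\preceq t^{-1}I$. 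So $\kappa=O(1)$, and each step costs $\Ot(n^2)$. There are $\Ot(\sqrt n)$ steps per halving and $O(\log(nR/r))$ halvings, for a total of $\Ot(n^{5/2})$.

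\emph{Step 3 (low temperatures, $r^{-2}>t\ge t_{\rm end}$).} Here we use the ball $B(0,r)$ at height $hn+1/2$, rescaled to the unit ball, and the covariance bound $A_t\preceq\E_{\mu_t}\|X\|^2 I\le R^2I$, so $\kappa=R^2/r^2$. Each step therefore costs $\Ot(n^2R/r)$. \Cref{lem:fine-cooling-renyi} gives $D_q(\mu_t\|\mu_s)\le qR^2(t-s)^2/(2s)$. If we take $t-s=\theta_j t$ with $\theta_j^2tR^2\lesssim1/q$, then only $\Ot(1+\sqrt{tR^2})$ steps per halving are needed. Since $t\le r^{-2}$, this is $\Ot(R/r)$ per halving. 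Such a count would cost $\Ot(n^2R^2/r^2)$, which is too much.

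Step 3 is the main obstacle, and the fix is to rebalance. At temperature $t$, \Cref{lem:high-temperature-cooling} gives $A_t\preceq t^{-1}I$ for all $t$. Its Brascamp--Lieb argument holds for every $t$, so the covariance bound is $\min(R^2,t^{-1})$. With the ball of radius $r$, this gives $\kappa\le 1/(tr^2)$, and each step costs $\Ot(n^2/(r\sqrt t))$. The number of steps per halving is $\Ot(1+R\sqrt t)$. The product is $\Ot(n^2(1/(r\sqrt t)+R/r))$. Summing over the $O(L)$ dyadic scales, the geometric sum in $t^{-1/2}$ is dominated by its smallest $t$, where $1/(r\sqrt{t_{\rm end}})\approx R/r$, so the total is $\Ot(n^2R/r)$.

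Finally, we make the cost a hard cap. For each sampler call we run $O(\log(1/\varepsilon'))$ independent attempts and stop each at twice its expected cost; by Markov's inequality, every call then finishes with probability $1-\varepsilon'$. We set $\varepsilon'=\varepsilon/\Ot(\sqrt n L+L\,R/r)$ and sum the per-step total variation errors (sampler error, clipping error, and cap failure) through the coupling. This gives $\|\mathcal L(\widehat X)-\sigma\|_{\TV}\le\varepsilon$ with $\sigma\le2e^{1/2}\mu$, after combining the endpoint density bound with the final clipping. The sampler's randomness is fresh, so it is independent of the data used to fix $V,x_0,r,R$.
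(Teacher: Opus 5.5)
Your cooling schedule and cost accounting are essentially the paper's. You use the same two stages: $\Ot(\sqrt n)$ multiplicative steps above $r^{-2}$ at $\Ot(n^2)$ per call, then additive steps below $r^{-2}$ whose per-call cost $\Ot(n^2/(r\sqrt t))$ sums to $\Ot(n^2R/r)$. Stopping at $t_{\rm end}R^2\le\theta$ rather than at $R^{-2}$, where the paper uses Jensen to get $\mu_{R^{-2}}\le e^{1/2}\mu$, is harmless. One small slip: in Step 2 the divergence you need, and the one \Cref{lem:high-temperature-cooling} supplies, is $D_q(\mu_{t_j}\|\mu_{t_{j+1}})$, i.e.\ the previous law relative to the new target. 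You wrote the reverse.

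The genuine gap is the final paragraph, which is where the deterministic query bound has to be earned. It fails for two reasons.

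\emph{Retries share a starting point.} Retrying one sampler call $O(\log(1/\varepsilon'))$ times does not make it finish with probability $1-\varepsilon'$. Every attempt of the call at $t_{j+1}$ starts from the same point $X_j$, since you hold only one sample from the previous law, so the attempts are independent only conditionally on $X_j$. Markov's inequality bounds $\mathbb P(\text{cost}>2Q_j)\le1/2$ only after averaging over $X_j$. The conditional exceedance probability can equal one on a set of starting points of probability up to $1/2$, and then every retry fails. Forcing per-call failure $\varepsilon/J$ through Markov alone would need caps of order $JQ_j/\varepsilon$. With $J=\Ot(\sqrt n+R/r)$ calls, that is a polynomial loss.

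\emph{Conditioning on completion is not a small TV error.} Even with truly independent attempts, returning the first completed one yields the law conditioned on completion. That law is within a density factor $2$ of the uncapped law, not within $\varepsilon'$ in total variation. So cap failures cannot be summed as TV errors through your coupling.

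The paper avoids both problems as follows.
\begin{enumerate}
\item It caps the \emph{whole} run, started from a fresh rejection sample, at $4Q_*$, where $Q_*$ bounds the expected total work of comparison runs from exact warm laws.
\item A coupling event $G$ with $\mathbb P(G^c)\le\theta$ shows that a run completes (event $A$) with probability at least $3/4-\theta$.
\item It takes $\sigma=\mathcal L(Y\mid A\cap G)$ for the exact reference endpoint $Y\sim\mu_{R^{-2}}$. This gives $\sigma\le2\mu_{R^{-2}}\le2e^{1/2}\mu$ and $\|\mathcal L(X\mid A)-\sigma\|_{\TV}\le2\theta$.
\item It repeats entire independent runs $\lceil\log_2(2/\varepsilon)\rceil$ times.
\end{enumerate}
That conditioning on completion is the actual source of the factor $2$ in $\sigma\le2e^{1/2}\mu$. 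It does not come from \Cref{lem:warmness-clipping}, whose $2e^2$ warmness only serves the intermediate targets.
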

\begin{proof}
Put $\theta=\varepsilon/8$ and take
$q=\lceil C_h\log((n+2)(2+R/r)/\theta)\rceil$ for a sufficiently
large constant. One uncapped run begins with the exact rejection sample
from \Cref{lem:high-temperature-rejection}.
While $t>r^{-2}$, decrease $t$ by the factor
$1-c_h/\sqrt{nq}$, except for a last step ending exactly at $r^{-2}$.
\Cref{lem:high-temperature-cooling} gives
$D_q(\mu_t\|\mu_s)\le1$ for each consecutive pair $t>s$.
The number of steps satisfies $J_1\le C_h\sqrt{nq}\log(2+hn)$.
For the next target $\mu_s$, the map $x\mapsto\sqrt s\,x$ sends its known ball
to $B_2^n$. The bounds $sA_s\preceq I_n$ and
$s\mathbb E_{\mu_s}\|X\|_2^2\le2(h+1)n$ give the required covariance
and second-moment bounds, and the level-set height remains $O(hn)$. By \Cref{thm:logconcave-warm-membership-sampling},
each call from a distribution with constant warmness costs
$\Ot(n^2)$ evaluations in expectation.

From $r^{-2}$ to $R^{-2}$ use
$t_{j+1}=\max\{R^{-2},t_j-c\sqrt{t_j}/(R\sqrt q)\}$.
\Cref{lem:fine-cooling-renyi} gives
$D_q(\mu_{t_j}\|\mu_{t_{j+1}})\le1$.
Throughout this part of the schedule,
$B(0,r)\subseteq\{\Phi_t\le a_t+hn+1/2\}$ and
$A_t\preceq t^{-1}I_n$, so each sampling call costs
$\Ot(n^2/(r\sqrt t))$ evaluations.
Except for the final step,
$c/(2R\sqrt q)\le\sqrt{t_j}-\sqrt{t_{j+1}}\le c/(R\sqrt q)$.
Thus the number of calls satisfies $J_2\le C(R/r)\sqrt q$.
Summing the reciprocals of these square roots gives
$\sum_jt_{j+1}^{-1/2}\le CR\sqrt q\log(2+R/r)$.
When $R=r$, this part of the schedule is empty.

Apply \Cref{lem:warmness-clipping} at each transition. The exact
preceding target is within $e^{-(q-1)}$ of a law that is $2e^2$-warm
for the next target. Choose each sampling error at most
$\theta/(4(J_1+J_2+1))$. The choice of $q$ makes the sum of the
clipping errors at most $\theta/4$.

We couple the endpoint and the work of the complete run simultaneously.
In the analysis, maintain a reference point with the exact target law
at each temperature. Couple this point to the clipped warm initial law
for the next target, run a comparison sampler from that law, and couple
its output to a reference point with the next exact target law.
The initial rejection sample supplies the first reference point exactly.
Use the same sampler randomness in the actual and comparison runs
whenever their initial states agree. A union bound over the clipping
and sampling couplings gives an event $G$ with
$\mathbb P(G^c)\le\theta$ on which every actual and comparison call
agrees, including its query count, and the final output equals the final
reference point. Continue the reference and comparison construction even
after an earlier coupling fails. Each reference point therefore retains
its exact target marginal, and each comparison input retains its prescribed
warm marginal; the comparison calls need not be independent.

Write $X$ for the uncapped actual output, $Y$ for the final reference
point, and $Q$ for the actual query count. Then
$Y\sim\pi_*:=\mu_{R^{-2}}$. The expected sum of comparison query
counts, including the initial rejection sampler, has a known upper bound
$Q_*=\Ot(n^{5/2}+n^2R/r)$ by the preceding schedule estimates.
On $G$, this sum equals $Q$. Consequently,
$\mathbb P\{Q>4Q_*\}\le1/4+\theta$.

Run the actual algorithm with a cap of $\lceil4Q_*\rceil$ evaluations,
stopping an attempt before any query that would exceed this cap.
Let $A$ be its completion event and write $p=\mathbb P(A)$.
The preceding bound gives $p\ge3/4-\theta\ge1/2$.
Let $\sigma=\mathcal L(Y\mid A\cap G)$.
Since $\mathbb P(A\cap G)\ge p-\theta\ge1/2$, we have
$\sigma\le2\pi_*$. On $G$, the outputs $X$ and $Y$ agree, so
$\sigma=\mathcal L(X\mid A\cap G)$ as well.
It follows that
$\|\mathcal L(X\mid A)-\sigma\|_{\TV}
\le\mathbb P(G^c\mid A)\le\theta/p\le2\theta$.

Repeat the capped attempt independently at most
$K=\lceil\log_2(2/\varepsilon)\rceil$ times, returning the output of
the first completed attempt. If none completes, return $x_0$.
Denote the returned point by $\widehat X$.
Conditional on at least one completion, the returned law is exactly
$\mathcal L(X\mid A)$, since the attempts are independent and identical.
The probability of no completion is at most $(1-p)^K\le\varepsilon/2$.
Thus $\|\mathcal L(\widehat X)-\sigma\|_{\TV}
\le2\theta+\varepsilon/2\le\varepsilon$.
The deterministic query count is at most
$K\lceil4Q_*\rceil=\Ot(n^{5/2}+n^2R/r)$.

Finally, Jensen's inequality gives
$Z_{R^{-2}}/Z_0=\int e^{-\|x\|_2^2/(2R^2)}\,d\mu(x)\ge e^{-1/2}$.
Hence $\pi_*\le e^{1/2}\mu$, so $\sigma\le2e^{1/2}\mu$ as claimed.
\end{proof}

The rejection sampler can also provide the first covariance-estimation
batch directly. At $t_0$ its independent outputs are exact; a batch of
$\Ot(n)$ samples costs $\Ot(n)$ evaluations in expectation. This avoids
any dependence issue in the initial empirical affine transformation.

\subsubsection{Covariance and mean estimates}
\label{subsec:upper-moments}
After generating a new initial distribution, we keep the current affine
map fixed while collecting its estimation batch. Matrix concentration for
Markov chains gives the same sample count, up to logarithms, as independent
sampling. We first prove concentration for a fixed chain, then bound
the effect of conditioning and combine the two estimates with the
sampler's joint-distribution guarantee.

For a Markov chain $(Z_j)$ with invariant law $\pi$, let
$Pf(z)=\mathbb E[f(Z_{j+1})\mid Z_j=z]$ be its transition operator
and $\Pi f=\int f\,d\pi$ the projection onto constant functions.
The operator norm of $P-\Pi$ is taken on $L^2(\pi)$.
The matrix Bernstein inequality of Neeman, Shi, and
Ward~\cite[Theorem 2.6 and Corollary 2.8]{NSW24:journal} applies
to stationary chains with $\|P-\Pi\|_{L^2(\pi)\to L^2(\pi)}\le1/2$.
For a centered observable $H$ taking values in symmetric $d\times d$
matrices, with $\|H\|_\op\le L$ almost surely and
$\|\mathbb E_\pi H^2\|_\op\le v$, it gives
\[
 \mathbb P\left\{\left\|\frac1k\sum_{j=1}^kH(Z_j)\right\|_\op\ge u\right\}
 \le C d^2\exp\left(-\frac{cku^2}{v+Lu}\right),
\]
where $d$ is the matrix dimension. The result does not require reversibility.
Starting from an $M$-warm law multiplies the right side by at most $M$.

The next consequence supplies the additive estimates used in the subspace
rounding algorithm. The full Markov state may include velocity; only the
position marginal is required to be logconcave.
\begin{lemma}[Covariance and mean estimates from a Markov chain]
\label{lem:additive-moment-estimate}
Let $(Z_j)_{j\ge0}$ be a Markov chain with invariant law $\pi$ and
transition operator $P$ satisfying
$\|P-\Pi\|_{L^2(\pi)\to L^2(\pi)}\le1/2$.
Assume $\mathcal L(Z_0)\le M\pi$ for fixed $M$.
Suppose that, under $\pi$, its position observable has a logconcave law with mean $m$ and
covariance $A$.
For $a,b>0$ and $0<\delta<1/4$, let $\widehat m$ and $\widehat A$
be the sample mean and sample covariance of a path of
\(k=\widetilde O\left(1+\frac{\operatorname{tr}A}{a}
  +\frac{\operatorname{tr}A}{b^2}\right)\)
retained positions. Then
$\tfrac12A-aI\preceq\widehat A\preceq2A+aI$ and
$\|\widehat m-m\|_2\le b$ with probability at least $1-\delta$.
\end{lemma}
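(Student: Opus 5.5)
The plan is to reduce to a truncated matrix Bernstein estimate applied to a whitened observable, after translating so that $m=0$. Both the covariance and the mean estimate follow from concentration of the empirical second moment in a \emph{regularized} frame. Set $S=A+aI$ and $Y=S^{-1/2}X$, so that $\Cov_\pi Y=S^{-1/2}AS^{-1/2}\preceq I$ and $\E_\pi\|Y\|_2^2=\tr(A S^{-1})\le\tr A/a$. The target covariance inequality $\tfrac12A-aI\preceq\widehat A\preceq2A+aI$ follows from $\|S^{-1/2}(\widehat M-A)S^{-1/2}\|_\op\le1/4$ together with a mean error small enough that $\widehat m\widehat m^{\mathsf T}\preceq\tfrac14 S$. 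Here $\widehat M$ is the uncentered empirical second moment about the true mean. Indeed $\widehat A=\widehat M-\widehat m\widehat m^{\mathsf T}$, and $A\pm\tfrac14 S$ gives the constants with room to spare.

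\textbf{Covariance step.} Apply the Neeman--Shi--Ward inequality quoted above to the centered matrix observable $H(z)=YY^{\mathsf T}-S^{-1/2}AS^{-1/2}$, truncated at $\|Y\|_2^2\le T$ with $T=C\log(kn/\delta)\,(1+\tr A/a)$. Logconcavity of the position marginal gives exponential tails for $\|Y\|_2$ around $(\E\|Y\|_2^2)^{1/2}$: use the moment comparison \eqref{eq:logconcave-moments} applied to the norm, or Borell's lemma. A union bound over the $k$ samples, using $M$-warmness to transfer from $\pi$ to the path, shows that truncation is inactive with probability $1-\delta/4$. The truncation shifts the mean of $H$ only by an exponentially small amount, which is absorbed. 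For the variance proxy, $\|\E H^2\|_\op\le\|\E\|Y\|_2^2YY^{\mathsf T}\|_\op\le C\E\|Y\|_2^2$. This uses Cauchy--Schwarz and the fourth-moment bound for logconcave marginals, giving $\E\langle u,Y\rangle^2\|Y\|_2^2\le C\|\Cov Y\|_\op\tr\Cov Y$. With $L=T$, $v=C(1+\tr A/a)$, $u=1/8$ and $d=n$, the bound is $\delta/4$ once $k\ge C(1+\tr A/a)\log(Mn/\delta)$, which is within the stated $\widetilde O$.

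\textbf{Mean step.} Apply the same inequality to the $(n+1)\times(n+1)$ symmetric dilation of the vector $X$ (with $m=0$). Use truncation at $\|X\|_2^2\lesssim\tr A\log(\cdot)$, $L\sim\sqrt{\tr A}\,\mathrm{polylog}$, $v\le\tr A$ and $u=b'=\min\{b,\sqrt a/4\}$. This gives $\|\widehat m\|_2\le b'$ once $k\gtrsim\tr A/b'^2+\sqrt{\tr A}/b'$. Both terms are bounded by $\widetilde O(1+\tr A/b^2+\tr A/a)$, the second by AM--GM. Then $\widehat m\widehat m^{\mathsf T}\preceq(a/16)I\preceq\tfrac14S$, and the replacement of $\widehat M$ about $0$ by the sample covariance costs exactly this rank-one term. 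A final union bound over the three failure events gives probability $1-\delta$.

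\textbf{Main obstacle.} I expect the hardest part to be the tail and truncation bookkeeping. The chain is started from an $M$-warm law rather than from stationarity, and the observable is unbounded, so the truncation must be handled carefully. I would first run the concentration for the stationary chain using truncated observables, re-centering them by their exact truncated $\pi$-means. I would then pass to the warm start via the multiplicative factor $M$, and check separately that the truncated and true means differ by at most $e^{-\Omega(\log(kn/\delta))}(1+\tr A)$, which is negligible against $a$ and $b$.
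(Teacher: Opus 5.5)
Your proposal is correct and follows the paper's proof essentially step for step: whitening by $(A+aI)^{-1/2}$, truncating, applying the Neeman--Shi--Ward Bernstein bound with the logconcave fourth-moment variance proxy, using a symmetric dilation for the mean, subtracting the empirical outer product, and transferring from the stationary path by warmness. The differences are minor. First, you get both mean requirements from one Euclidean estimate at scale $\min\{b,\sqrt a/4\}$, whereas the paper uses a whitened mean for the covariance step and a separate estimate on $(X-m)/b$. Second, your truncation level should carry $\log^2(kM/\delta)$ rather than a single logarithm: Borell-type tails are exponential in $\|Y\|_2$, not in $\|Y\|_2^2$. This changes only polylogarithmic factors, which the $\widetilde O$ absorbs.
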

\begin{proof}
First consider a stationary path, and let $X$ denote its position
observable under $\pi$. In the covariance analysis set
$Y=(A+aI)^{-1/2}(X-m)$, $C=\operatorname{Cov}(Y)$, and
$d_0=\max\{1,\operatorname{tr}C\}$. Then
$\|C\|_\op\le1$ and $d_0\le1+\operatorname{tr}(A)/a$.
Set $L=C_0\sqrt{d_0}\log(10k/\delta)$.
Standard logconcave norm tails give
$\mathbb P\{\|Y\|_2>L\}\le\delta/(10k)$.
Let $Y_L=Y\mathbf1_{\{\|Y\|_2\le L\}}$.
Increasing $C_0$ ensures $\|\mathbb E Y_L\|_2\le1/64$ and
$\|\mathbb E[Y_LY_L^{\mathsf T}]-C\|_\op\le1/64$.

The covariance matrix observable
$H=Y_LY_L^{\mathsf T}-\mathbb E[Y_LY_L^{\mathsf T}]$ satisfies
$\|H\|_\op\le L^2+1$ and $\|\mathbb E H^2\|_\op\le Cd_0$.
For the latter estimate, one-dimensional logconcave fourth-moment bounds
give, for every unit vector $u$ and an orthonormal basis $(e_i)$,
\[
 \mathbb E[\|Y\|_2^2\langle u,Y\rangle^2]
 =\sum_i\mathbb E[\langle e_i,Y\rangle^2\langle u,Y\rangle^2]
 \le C\operatorname{tr}(C)\,u^{\mathsf T}Cu\le Cd_0.
\]
Truncation can only decrease this positive quadratic form. Centering the
matrix observable subtracts the square of its mean from its second moment.
Matrix Bernstein therefore controls the empirical second moment in
operator norm from $\widetilde O(d_0)$ retained positions.

For the mean of $Y_L$, apply the same inequality to the symmetric dilation
\[
 \begin{pmatrix}
  0&(Y_L-\mathbb E Y_L)^{\mathsf T}\\
  Y_L-\mathbb E Y_L&0
 \end{pmatrix}.
\]
Its operator norm is $\|Y_L-\mathbb E Y_L\|_2\le2L$, and the operator
norm of its second moment is
$\operatorname{tr}\operatorname{Cov}(Y_L)\le\operatorname{tr}C\le d_0$.
Let $Y_1,\ldots,Y_k$ be the transformed retained positions.
With the same sample count, the matrix estimates and a union bound over
the truncation events give
\[
 \left\|\frac1k\sum_{j=1}^kY_j\right\|_2\le\frac18
 \text{ and }
 \left\|\frac1k\sum_{j=1}^kY_jY_j^{\mathsf T}-C\right\|_\op\le\frac18.
\]
Subtracting the outer product of the empirical mean gives
$\|(A+aI)^{-1/2}(\widehat A-A)(A+aI)^{-1/2}\|_\op\le1/4$.
Conjugation by $(A+aI)^{1/2}$ gives
$\tfrac34 A-\tfrac14aI\preceq\widehat A
\preceq\tfrac54 A+\tfrac14aI$, which is stronger than required. The regularizing transformation is only
used in the analysis; the estimates themselves are the ordinary sample
mean and sample covariance.

The Euclidean mean bound follows by the same vector-dilation argument
applied to $(X-m)/b$. Its trace covariance is $\operatorname{tr}(A)/b^2$, and
$\widetilde O(1+\operatorname{tr}(A)/b^2)$ retained positions suffice. Finally, domination of
the law of the entire Markov chain path by $M$ times its stationary path law transfers
all failure bounds from stationarity. Replace $\delta$ by $\delta/M$ in
the stationary calculation.
\end{proof}

The sampler works on an enlarged state space that includes auxiliary
coordinates and velocity. Its invariant position marginal is obtained by
conditioning on a truncation event. We bound the resulting changes in
mean and covariance before applying \Cref{lem:additive-moment-estimate}.

The event in the following statement may depend on auxiliary coordinates
as well as on the position.
\begin{lemma}[Moments under conditioning]
\label{lem:joint-conditioning-moments}
Let $X$ have a full-dimensional logconcave law with mean $m$ and covariance
$A$. On any joint probability space extending the law of $X$, let $E$ be
an event with $\mathbb P(E)=1-p$, where $0\le p\le1/4$.
Set $m_E=\mathbb E[X\mid E]$ and $A_E=\operatorname{Cov}(X\mid E)$.
Then $\|A^{-1/2}(m_E-m)\|_2\le\sqrt p/(1-p)$ and
$(1-C\sqrt p)A\preceq A_E\preceq(1+C\sqrt p)A$
for an absolute constant $C$.
\end{lemma}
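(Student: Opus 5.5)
We reduce to the isotropic case: applying $x\mapsto A^{-1/2}(x-m)$ (with the same affine map on the extended probability space), it suffices to show that if $X$ is isotropic logconcave and $\PP(E)=1-p$, then $\|m_E\|_2\le\sqrt p/(1-p)$ and $\|A_E-I\|_\op\le C\sqrt p$. The key inputs are the one-dimensional fourth-moment bound $\E\ip uX^4\le C_4$ for unit $u$, from \eqref{eq:logconcave-moments}, and Cauchy--Schwarz against the indicator of $E^c$. This is the same mechanism already used in the proof of \Cref{lem:level-covariance-ellipsoid}. The only novelty is that $E$ need not be measurable with respect to $X$, and the argument never uses that measurability.

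For the mean, fix a unit vector $u$. Since $\E\ip uX=0$, we have $\E[\ip uX\mathbf 1_E]=-\E[\ip uX\mathbf 1_{E^c}]$. Cauchy--Schwarz and $\E\ip uX^2=1$ bound its absolute value by $\sqrt p$. Dividing by $\PP(E)=1-p$ gives $|\ip u{m_E}|\le\sqrt p/(1-p)$. Taking the supremum over $u$ gives the first claim.

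For the covariance, write $\E[\ip uX^2\mathbf 1_E]=1-\E[\ip uX^2\mathbf 1_{E^c}]$. Cauchy--Schwarz with the fourth moment gives $0\le\E[\ip uX^2\mathbf 1_{E^c}]\le\sqrt{C_4p}$. Hence the conditional second moment
\[
 \E[\ip uX^2\mid E]\in\Bigl[\tfrac{1-\sqrt{C_4p}}{1-p},\,\tfrac1{1-p}\Bigr].
\]
Subtracting $\ip u{m_E}^2\le p/(1-p)^2$ gives
\[
 u^{\mathsf T}A_Eu\in\Bigl[\tfrac{1-\sqrt{C_4p}}{1-p}-\tfrac{p}{(1-p)^2},\,\tfrac1{1-p}\Bigr].
\]
Using $p\le1/4$, both endpoints lie within $C\sqrt p$ of $1$, since $p\le\sqrt p$ and $1/(1-p)\le1+2p$. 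This holds for every $u$, so $(1-C\sqrt p)I\preceq A_E\preceq(1+C\sqrt p)I$. Undoing the affine map gives the stated bounds in terms of $A$.

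There is no real obstacle here. The only points requiring care are that the conditional covariance is measured against the original (full-dimensional, hence invertible) $A$, and that all constants stay absolute. The latter holds because $p\le1/4$ keeps the denominators bounded away from zero.
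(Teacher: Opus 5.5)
Your proposal is correct and follows the paper's proof essentially step for step. You normalize to the isotropic case and bound $\mathbb{E}[\langle u,X\rangle\mathbf{1}_{E^c}]$ and $\mathbb{E}[\langle u,X\rangle^2\mathbf{1}_{E^c}]$ by Cauchy--Schwarz, the latter via the one-dimensional logconcave fourth-moment bound; you then divide by $1-p$, subtract the outer product of the conditional mean, and observe, as the paper does, that neither Cauchy--Schwarz step requires $E$ to be determined by $X$.
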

\begin{proof}
Apply an affine change of coordinates so that $m=0$ and $A=I$.
For every unit vector $u$, Cauchy--Schwarz gives
\(|\mathbb E[\langle u,X\rangle\mathbf1_{E^c}]|\le\sqrt p.\)
Since the unconditional mean vanishes, division by $1-p$ and taking the
supremum over $u$ prove the mean bound. The one-dimensional logconcave
fourth-moment bound also gives
\[
 0\le\mathbb E[\langle u,X\rangle^2\mathbf1_{E^c}]
 \le\sqrt{p\,\mathbb E\langle u,X\rangle^4}\le C\sqrt p.
\]
Consequently,
$(1-C\sqrt p)I/(1-p)\preceq\mathbb E[XX^{\mathsf T}\mid E]
\preceq I/(1-p)$.
Subtracting the outer product of the conditional mean and adjusting the
absolute constant proves the covariance bounds. Neither application of Cauchy--Schwarz requires $E$
to be determined by $X$.
\end{proof}

The construction in \Cref{lem:truncated-epigraph-parameters} makes
the stationary position marginal of the sampler the law of $X$ conditioned
on an event in an enlarged probability space. Its complement has
probability at most a prescribed $p$. Thus
\Cref{lem:joint-conditioning-moments} applies to this construction;
no moment comparison is inferred from total variation distance alone.

We record the tolerances used in \Cref{prop:logconcave-inner-rounder}.
Here $A$ is the covariance before any subspace-growing step, and $A_i$
and $m_i$ are the covariance and mean in the current coordinates.
Set $\ell=\lceil\log_2(2+\operatorname{tr}\sqrt A)\rceil$, and let
$0<\eta<1/100$ be the estimation tolerance. Suppose
$\|A_i\|_\op\le C\ell(\operatorname{tr}\sqrt A)^2$.
Let $m_{i,E}$ and $A_{i,E}$ be the mean and covariance after conditioning
on the truncation event. Choosing its excluded probability so that
$p\le c\eta^4/(1+\ell)^2$ ensures
$\|m_{i,E}-m_i\|_2\le\eta\operatorname{tr}\sqrt A/4$ and
$\tfrac{15}{16}A_i\preceq A_{i,E}\preceq\tfrac{17}{16}A_i$.
Apply \Cref{lem:additive-moment-estimate}
with tighter constant errors to this marginal. Its sample mean
$\widehat m_i$ and sample covariance $\widehat A_i$ satisfy
\begin{equation}\label{eq:growth-moment-estimates}
 \begin{gathered}
 \tfrac12A_i-\eta^2(\operatorname{tr}\sqrt A)^2I
 \preceq\widehat A_i
 \preceq2A_i+\eta^2(\operatorname{tr}\sqrt A)^2I,\\
 \|\widehat m_i-m_i\|_2\le\eta\operatorname{tr}\sqrt A
 \end{gathered}
\end{equation}
from $\widetilde O(1+\operatorname{tr}(A_i)/
 (\eta^2(\operatorname{tr}\sqrt A)^2))$ retained positions.
The error in the law of the simulated sample sequence is charged separately
by total variation coupling. The truncation parameter is fixed throughout
a batch.

At the final estimation stage, use the same matrix-concentration proof
after multiplying centered positions by the inverse square root of
the truncated marginal's covariance. From
$\widetilde O(n)$ retained positions, with a sufficiently small fixed
truncation mass and fixed relative moment errors, the sample mean
$\widehat m_i$ and sample covariance $H_i$ satisfy
\begin{equation}\label{eq:logconcave-final-estimates}
 \tfrac78 A_i\preceq H_i\preceq\tfrac98 A_i
 \text{ and }\|A_i^{-1/2}(\widehat m_i-m_i)\|_2\le\tfrac1{16}.
\end{equation}
In every batch, choose the total variation tolerance for the joint law
of the simulated samples to be at most its allocated failure probability.
This change of coordinates is used only in the analysis: the estimators
are the ordinary sample mean and sample covariance.
For $X\sim\mu$, the returned map
$F(x)=H_i^{-1/2}(F_ix-\widehat m_i)$ therefore satisfies
$\tfrac34I\preceq\operatorname{Cov}(F(X))\preceq\tfrac43I$ and
$\|\mathbb EF(X)\|_2\le1/8$.
These stricter internal bounds allow the final Gaussian factor to be
removed.

We now deduce the moment-estimation statement used in the main algorithm.
The sampling theorem compares an entire batch with one stationary chain,
so the concentration bound applies without an independence assumption.
\begin{proof}[Proof of \Cref{lem:rounding-estimation-tools}]
Translate the known center to zero and rescale its ball to unit radius.
The covariance and tolerances become $A/\rho^2$, $a/\rho^2$, and
$b/\rho$, leaving $\operatorname{tr}(A)(a^{-1}+b^{-2})$ unchanged.
We map the sampled positions back before estimating their moments;
all moment comparisons below use the original coordinates.
Apply \Cref{thm:logconcave-warm-membership-sampling} with a joint total
variation tolerance at most $\delta/4$. Its stationary position marginal
is obtained by conditioning the exact target on an event in an enlarged
probability space. By choosing the excluded probability $p$ sufficiently
small, \Cref{lem:joint-conditioning-moments} makes the mean and covariance
errors as small as required. For the additive estimates, one may take
$p\le c\min\{1,a^2/\|A\|_\op^2,b^2/\|A\|_\op\}$, with a sufficiently small
universal $c$. In the notation of \Cref{lem:joint-conditioning-moments},
the conditional covariance satisfies $\operatorname{tr}A_E\le
C\operatorname{tr}A$.

Apply the proof of \Cref{lem:additive-moment-estimate}, with tighter
absolute constants, to the stationary chain. It gives the stated
additive errors from
$\widetilde O(1+\operatorname{tr}(A)(a^{-1}+b^{-2}))$
positions after accounting
for conditioning. For the relative estimates, use a sufficiently small
constant $p$ and apply the same proof after multiplying centered positions
by $A_E^{-1/2}$. The sample count is then
$\widetilde O(n)$, as in \eqref{eq:logconcave-final-estimates}.
Only the analysis uses these unknown coordinate transformations.

Take the concentration failure probability at most $\delta/2$.
Coupling the simulated batch with the stationary one adds at most
$\delta/4$ to this probability. The algorithm returns $\widehat m$ and
$\widehat A$, the empirical mean and covariance of the $k$ simulated
positions in the original coordinates. Since
$A\preceq\kappa\rho^2I$, the sampler's expected cost after rescaling
is $\widetilde O(n^2k\sqrt\kappa)$. Taking $k=\widetilde O(n)$ gives
the relative-estimation bound.
\end{proof}

The affine coordinates depend on earlier samples, so each batch uses
fresh randomness after those coordinates are fixed. We repeat a capped
batch when its sampler runs out of queries. We bound the probability
that any completed attempt returns incorrect moment estimates.
\begin{lemma}[Initialization and query budgets for subspace growing]
\label{lem:subspace-initialization-and-budgets}
Consider the algorithm in \Cref{subsec:subspace-growing}, with the
parameters defined there and the input assumptions of
\Cref{prop:logconcave-inner-rounder}.
Suppose \eqref{eq:logconcave-inner-trace} and
\eqref{eq:subspace-second-moment} hold after every successful preceding
batch. For each batch, let $Q_i$ be a deterministic upper bound on its
expected sampling work from a $2e^{1/2}$-warm initial law, using the
tolerances specified below.
Then the batches can be implemented with total failure probability at
most $\delta$ and total query count at most
\[
 \Ot\left(n^{5/2}+n^2(\operatorname{tr}\sqrt A+\|m\|_2)
              +\sum_iQ_i\right).
\]
The work bound holds even after an earlier failed estimate.
\end{lemma}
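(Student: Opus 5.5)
We implement each batch as follows. Fix the current coordinates $F_i$, the center $z_i$ and the radius $\rho_i$; these depend only on earlier batches. Then draw fresh randomness. First we call \Cref{prop:logconcave-fresh-initialization} for the transformed potential $W_i(y)=W(F_i^{-1}y)$, using the known ball $B(z_i,\rho_i)$ and the second-moment bound \eqref{eq:subspace-second-moment} as the outer radius. This returns $\widehat X$ within total variation $\delta'$ of some $\sigma_i\le2e^{1/2}\mu_i$, at a deterministic cost $\Ot(n^{5/2}+n^2(\ell^{1/4}\operatorname{tr}\sqrt A+\|m\|_2+1))$. We then run the sampling and estimation procedure of \Cref{lem:rounding-estimation-tools} from $\widehat X$, with a cap of $\lceil 4Q_i\rceil$ evaluations.

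If the cap is hit, we discard the attempt and repeat everything, including initialization, independently. We allow at most $K=\lceil\log_2(4(2\ell+1)/\delta)\rceil$ attempts. If none completes, we declare failure. Because the sampler's work is measured against a fixed deterministic bound $Q_i$, and the cap and attempt count do not depend on earlier estimate quality, the work bound holds regardless of earlier failures.

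For the failure analysis, we couple $\widehat X$ with a draw from $\sigma_i$, which fails with probability at most $\delta'$. Under the coupled warm start, Markov's inequality gives completion probability at least $3/4-\delta'\ge1/2$. So all $K$ attempts fail to complete with probability at most $2^{-K}$.

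The key point is that conditioning on completion could bias the output. We handle this as in the proof of \Cref{prop:logconcave-fresh-initialization}. Let $B$ be the event that the estimates are wrong. Then
\[
 \PP(B\mid\text{complete})\le\frac{\PP(B)}{\PP(\text{complete})}\le 2\bigl(\delta_{\mathrm{est}}+\delta'\bigr).
\]
Here $\delta_{\mathrm{est}}$ is the failure probability of \Cref{lem:rounding-estimation-tools} from the exact warm law $\sigma_i$, and it bounds the uncapped probability of wrong estimates. This is the step I expect to be the delicate one: the warm-start guarantee must be applied to the uncapped run, and truncation must be accounted for only through this ratio.

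Finally, we choose $\delta_{\mathrm{est}}=\delta'=\delta/(8(2\ell+1))$. A union bound over the at most $2\ell+1$ batches then gives total failure probability at most $\delta$. All logarithmic factors coming from $\delta'$, $\delta_{\mathrm{est}}$ and $K$ are absorbed into $\Ot$.

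Summing over batches, each batch costs at most
\[
 K\bigl(\Ot(n^{5/2}+n^2(\operatorname{tr}\sqrt A+\|m\|_2))+4Q_i+1\bigr),
\]
where $\ell^{1/4}$ is polylogarithmic. There are $O(\ell)$ batches, so the total is
\[
 \Ot\Bigl(n^{5/2}+n^2(\operatorname{tr}\sqrt A+\|m\|_2)+\sum_iQ_i\Bigr).
\]
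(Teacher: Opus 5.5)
Your proposal is correct and follows essentially the paper's argument: fresh initialization via \Cref{prop:logconcave-fresh-initialization} in the fixed coordinates of each batch, a capped sampler run coupled to a run from the $2e^{1/2}$-warm law, a logarithmic number of independent attempts whose caps and counts come from the asserted bounds rather than empirical ones, and a union bound over the $2\ell+1$ batches. The difference is only in how you handle the wrong-estimate event. The paper bounds the unconditional probability that an attempt finishes with wrong estimates by $3\alpha$ and union-bounds over all $K$ attempts, hence $\alpha=\beta/(4K)$. You instead use that the first completed attempt has the law conditioned on completion and divide by the completion probability, as in the proof of \Cref{prop:logconcave-fresh-initialization}, which avoids the factor $K$. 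The paper also explicitly declares failure on undefined operations, such as a center of infinite potential, a detail you leave implicit.
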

\begin{proof}
Put $N=2\ell+1$, $\beta=\delta/(2N)$,
$K=\lceil\log_2(1/\beta)\rceil$, and $\alpha=\beta/(4K)$.
Conditional on a successful preceding history, the current potential,
coordinates, and known ball are fixed.
Use the growth or final moment tolerances specified before
\eqref{eq:growth-moment-estimates} and
\eqref{eq:logconcave-final-estimates}, with statistical and joint
simulation errors each at most $\alpha$.

For each batch, make at most $K$ independent attempts in these same
coordinates. In an attempt, first use
\Cref{prop:logconcave-fresh-initialization} with error $\alpha$ and
its deterministic query budget, using fresh randomness.
Then run the moment-estimation sampler with a cap of $8Q_i$ queries.
Return the first batch that finishes within this cap; if no attempt
finishes, declare failure.

Couple an attempt's initial point to a point from the warm law with
disagreement probability at most $\alpha$, and use the same subsequent
randomness. The warm comparison run has expected sampling work at most
$Q_i$, so it exceeds the cap with probability at most $1/8$.
Thus the actual attempt fails to finish with probability at most
$1/8+\alpha\le1/4$. The warm run's moment estimates fail with probability
at most $2\alpha$. On agreement of the initial points, both runs are
identical through the cap. Therefore the unconditional probability that
the actual attempt finishes with incorrect estimates is at most
$3\alpha$.

Independence of the attempts bounds the probability that none finishes
by $4^{-K}\le\beta$. A union bound over the $K$ attempts bounds the
probability of returning incorrect estimates by
$3K\alpha\le\beta$. This union bound does not condition on completion.
The completed batch is used only for its estimates; the next batch
receives a fresh initial distribution.
The conditional failure probability of a batch is consequently at most
$2\beta$. Summing over the first failed batch gives total failure
probability at most $2N\beta=\delta$.

By \eqref{eq:subspace-second-moment}, each initialization costs
$\Ot(n^{5/2}+n^2(\operatorname{tr}\sqrt A+\|m\|_2))$ deterministically.
Here the additive constant in the second-moment bound contributes
$O(n^2)$, absorbed by $n^{5/2}$.
Every attempt also makes at most $8Q_i$ sampling queries.
There are at most $NK$ initialization calls and at most $K$ sampling
attempts per batch. Since $N$ is logarithmic in the moment bound and
$K=O(\log(N/\delta))$, their total work has the claimed bound.

All attempt counts, sample sizes, and caps use the asserted moment
bounds and the deterministic stage radius, rather than unverified
empirical bounds. They therefore also bound work after a failed
estimate. If a proposed center has infinite potential, a computed
rejection probability lies outside $[0,1]$, or a final covariance to
be inverted is not positive definite, declare failure.
\end{proof}

\begingroup
\renewcommand{\E}[2]{\mathbb{E}_{#1}\left[#2\right]}
\newcommand{\V}[2]{\Var_{#1}\left[#2\right]}

\section{A symmetric counterexample to ellipsoid growth}
\label{sec:symmetric-ellipsoid-growth-counterexample}
This appendix gives a human-generated symmetric counterexample to the
ellipsoid-growth conjecture of Jia, Laddha, Lee, and
Vempala~\cite{JLLV26:journal}.
We first state the conjectured property and the counterexample, then prove
the geometric construction using estimates for intersections of cubes and
balls.
The remaining subsections prove the density estimates used in the
construction.

For a convex body $L$, write $A_L=\Cov(\Unif(L))$.
All bodies whose covariance we compare below are symmetric about the
origin, so their means vanish and their covariances equal their uncentered
second moments.
Such a body is isotropic when $A_L=I_n$.
It is near-isotropic in the sense of \Cref{def:rounding} when
$\tfrac12 I_n\preceq A_L\preceq2I_n$; the centering condition is automatic.

We show that doubling an ellipsoid can take an isotropic intersection
outside these covariance bounds in the same coordinates.
The construction also disproves the weaker trace bound conjectured
in~\cite{JLLV26:journal}.
We state that bound explicitly for symmetric bodies with an exactly
isotropic intersection.
\begin{conjecture}
\label{conj:jllv-ellipsoid-growth}
Let $K\subset\R^n$ be a convex body symmetric about the origin, and let
$E\subset\R^n$ be an ellipsoid centered at the origin.
If $A_{K\cap E}=I_n$, then $\tr A_{K\cap2E}=\Ot(n)$.
\end{conjecture}

The next theorem shows that even a smaller enlargement can increase a
directional variance by a factor of order $(n\lambda)^2$.
After an affine change of coordinates, a constant enlargement therefore
takes an isotropic intersection to a body with covariance trace
$\Omega(n^2)$.
For sufficiently large $n$, this exceeds $2n$, so the enlarged intersection
cannot have covariance at most $2I_n$.

\begin{theorem}
\label{thm:ellipsoid-growth-counterexample}
There exist absolute constants $n_\ast,\lambda_\ast,C_\ast>0$ such that for all
$n\ge n_\ast$ and
$\frac1n\le\lambda<\lambda_\ast$, if we define
$\alpha:=\exp(\lambda)$, then there exists a symmetric convex body
$K$, a symmetric ellipsoid
$E$ centered at the origin, and a vector
$v\in\R^n$ such that
\[
K\subseteq\alpha E,
\qquad
\norm{v}_2=1,
\qquad
v^\top(A_{K\cap E})v\le C_\ast(n\lambda)^{-2}\cdot v^\top(A_{K\cap\alpha E})v.
\]
\end{theorem}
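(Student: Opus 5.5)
The plan is an explicit product construction built from a cube and a segment, with $E$ a ball in the cube coordinates stretched along the extra coordinate. Write $\R^n=\R^{n-1}\times\R$ with points $(x,y)$ and take $v=e_n$. Set
\[
 K=[-1,1]^{n-1}\times[-T,T],\qquad
 E=\{(x,y):\|x\|_2^2+y^2/s^2\le r^2\},
\]
with $r,s,T$ to be tuned as functions of $n$ and $\lambda$. The containment $K\subseteq\alpha E$ holds iff $(n-1)+T^2/s^2\le\alpha^2r^2$. Since $\alpha^2=e^{2\lambda}\approx1+2\lambda$, the natural choice is $r^2\approx(n-1)e^{-2\lambda}$ together with $T^2/s^2\approx c\lambda n$. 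Both bodies are symmetric, so only second moments matter. For $K\cap\alpha E=K$ the $y$-marginal is uniform on $[-T,T]$, so $v^\top A_{K\cap\alpha E}v=T^2/3$. It then remains to show that the $y$-marginal of $\Unif(K\cap E)$ has variance $O(T^2/(n\lambda)^2)$.

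The $y$-marginal density of $K\cap E$ is proportional to $g(r^2-y^2/s^2)$ on $|y|\le T$, where
\[
 g(\beta)=\vol\bigl([-1,1]^{n-1}\cap\sqrt\beta\,B_2^{n-1}\bigr).
\]
The main analytic input is therefore a two-sided estimate of $\frac{d}{d\beta}\log g(\beta)$ in the regime $\beta=(n-1)(1-\varepsilon)$ with $\varepsilon\asymp\lambda$, that is, near the corners of the cube. Heuristically, in corner coordinates $z_i=1-|x_i|$ the constraint reads $\sum_i(2z_i-z_i^2)\ge(n-1)\varepsilon$. This is a large-deviation (Cram\'er) event for i.i.d.\ uniform $z_i$, tilted so that $\E z_i\approx\varepsilon/2$. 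Its log-volume has derivative in $\beta$ of order $1/\varepsilon\asymp1/\lambda$, with a matching second-order control across the relevant window of width $T^2/s^2$. I would prove this with an exponential tilt (the density of $\sum z_i$ under a tilt by $e^{-\theta z}$, with $\theta\asymp1/\varepsilon$) plus a local CLT, or more crudely by log-concavity of $g(\beta)$ in $\beta$ (Brunn--Minkowski applied to sections) combined with matching value bounds at two points. These are the density estimates deferred to the later subsections.

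Given $\frac{d}{d\beta}\log g\ge c/\lambda$ on the window, the marginal density of $w=y^2/s^2$ decays at least like $e^{-cw/\lambda}$ from its maximum at $w=0$. Hence $\E[y^2]\lesssim s^2\lambda$ in $K\cap E$ (the full window $|y|\le T$ is allowed, but the mass is concentrated). The ratio is then
\[
 \frac{T^2}{s^2\lambda}\asymp n .
\]
This gives $n$ rather than $(n\lambda)^2$, which is the obstacle I expect.

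To reach $(n\lambda)^2$ I would refine the construction so that the growth in the $y$-direction is not paid for by the cube's own $2\lambda$ slack. First I would try a weighted ellipsoid on the cube coordinates, or replace the cube by a cube intersected with a ball. The goal is for the budget $\alpha^2r^2-(\text{radius}^2\text{ of the cross-section})$ available to $y$ to be of order $\lambda n\cdot(\text{scale})$, while the log-derivative of the cross-sectional volume stays of order $n$ rather than $1/\lambda$. For example, putting the cross-section of $K\cap E$ near a flat face, where adding $\delta$ to $\beta$ changes the log-volume by roughly $n\delta/\beta$, would make the width of $y$ in $K\cap E$ of order $s/\sqrt n$ while $T\asymp s\sqrt{\lambda n}$; this gives ratio $\lambda n^2$. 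Squeezing out the last factor of $\lambda$ requires choosing the regime so the log-derivative is about $n/\lambda\cdot(\text{scale})$ with budget $\lambda n$. Pinning down this regime, and proving the corresponding sharp cube--ball volume derivative bound uniformly for $1/n\le\lambda<\lambda_\ast$, is the crux of the proof; the remaining steps (containment, symmetry, the uniform-marginal variance) are routine.
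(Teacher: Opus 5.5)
Your proposal does not prove the theorem. It ends without a construction, and every construction you consider is a product $Q\times[-T,T]$, a class that provably cannot reach $(n\lambda)^2$.

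\textbf{The corner computation is wrong.} Take $\beta=(n-1)(1-\varepsilon)$ with $\varepsilon$ small. The ball $\sqrt\beta\,B_2^{n-1}$ then contains all but an exponentially small fraction of $[-1,1]^{n-1}$, since the typical value of $\norm{x}_2^2$ on the cube is $(n-1)/3$. Hence $\frac{d}{d\beta}\log g(\beta)$ is exponentially small. The rate $\asymp1/\varepsilon$ that your exponential tilt produces is the log-derivative of the \emph{excluded corner volume}, not of $g$. So in your cube-times-segment example the $y$-marginal of $K\cap E$ is nearly uniform on $[-T,T]$, and the ratio is $O(1)$, not $n$.

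\textbf{No product of this form can work.} Let $K=Q\times[-T,T]$ and $E=\{q(x)+y^2/s^2\le r^2\}$ with no $x$--$y$ cross term; normalize $q(x)=\norm{x}_2^2$. Put $F(\beta)=\vol_{n-1}(Q\cap\sqrt\beta B_2^{n-1})$ and $R_Q=\max_{x\in Q}\norm{x}_2$. The inclusion $Q\subseteq tQ$ for $t\ge1$ shows that $F(\beta)\beta^{-(n-1)/2}$ is nonincreasing. Consequently the $y$-marginal of $K\cap E$ stays within a factor $e^{-1/2}$ of its maximum for
\[
 y^2/s^2\le(r^2-R_Q^2)_++\min\{r^2,R_Q^2\}/n .
\]
On the other hand, $K\subseteq\alpha E$ forces $T^2/s^2\le\alpha^2r^2-R_Q^2$. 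A short case check then gives
\[
 v^\top A_{K\cap\alpha E}v\,/\,v^\top A_{K\cap E}v=O(1+n\lambda)
\]
for every $Q$. Weighted ellipsoids in the cube coordinates, cubes intersected with balls, and flat faces all fall in this class. Your flat-face estimate of $\lambda n^2$ mixes scales. It uses the log-derivative $n/\beta$ at $\beta\asymp1$, but the budget $T^2/s^2\asymp\lambda n$ requires $\beta\asymp n$. The ratio is scale-invariant, and at a consistent scale it is again $O(\lambda n)$.

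\textbf{The missing idea.} The concentration of $x_n$ in $K\cap E$ must come from $K$ itself tapering linearly in $|x_n|$, not from the quadratic term $y^2/s^2$ of $E$.

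The paper takes $E$ to be a limit of ellipsoids $E_L$ approaching the cylinder $\rho_1B_2^{n-1}\times\R$, with $\rho_1=e^{-\lambda}\tfrac14\sqrt{n-2}$. It builds a non-product $K$ whose slice in $K\cap E$ at height $\eta$ is
\[
 \bigl(\sigma_2-(\sigma_2-\sigma_1)|\eta|\bigr)B_\infty^{n-1}\cap\rho_1B_2^{n-1},
 \qquad \sigma_1=1,\ \ \sigma_2-\sigma_1\asymp\lambda .
\]
Here the ball cuts the cube \emph{below} its typical radius $\sqrt{n/3}$. This is the bulk regime of \Cref{lem:cube-ball-volume-ratio}, not the corner regime. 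In it, the log slice volume changes at rate $\asymp n$ per unit relative change of the cube side. It therefore drops by a constant once $|\eta|\approx1/(n\lambda)$, and \Cref{lem:slice-profile}(i) gives variance $O((n\lambda)^{-2})$.

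An additional constraint $(\norm{x}_2^2-x_n^2-x_i^2)^{1/2}\le g(|x_i|)$ handles the other side. It is inactive inside the $\rho_1$-cylinder. Inside the $\alpha\rho_1$-cylinder it keeps the slice at $\eta=0$ within a factor $3$ of the slice at $\eta=1$, so \Cref{lem:slice-profile}(ii) gives variance at least $1/9$ there.

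The analytic input you would still need is a sharp two-sided cube--ball volume-ratio estimate in this bulk regime. Your Cram\'er tilt plus local CLT is the right tool, but applied to $\sum_iU_i^2$ near $c^2n$ with $c^2<1/3$.
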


Choose $\lambda$ to be an absolute constant less than
$\min\{\lambda_\ast,\log2\}$, and take $K,E,v$ from
\Cref{thm:ellipsoid-growth-counterexample}.
For sufficiently large $n$, the theorem gives
\[
v^\top \left(A_{K\cap E}\right)v
\le
O(n^{-2}) \cdot v^\top\left(A_{K\cap\alpha E}\right)v.
\]
The intersection $K\cap E$ has nonempty interior, so $A_{K\cap E}$
is positive definite.
Setting
\[
K':=A_{K\cap E}^{-1/2}(K\cap\alpha E)
\qquad
\text{and}
\qquad
E':=A_{K\cap E}^{-1/2}E
\]
we have $K'\cap E'=A_{K\cap E}^{-1/2}(K\cap E)$, since
$\alpha>1$.
Thus $K'\cap E'$ is isotropic by the definition of
$A_{K\cap E}$ and symmetry.
Moreover, since $K\subseteq\alpha E$ and
$\alpha<2$, we have
\[
K'\subseteq\alpha E'\subseteq2E',
\]
and therefore $K'\cap2E'=K'\cap\alpha E'=K'$.
Consequently,
\[
\tr\left(A_{K'\cap2E'}\right)
=\tr\left(A_{K'\cap\alpha E'}\right)
=\tr\left(A_{K\cap E}^{-1/2}A_{K\cap\alpha E}A_{K\cap E}^{-1/2}\right)
\ge\frac{v^\top(A_{K\cap\alpha E})v}{v^\top(A_{K\cap E})v}
\ge\Omega(n^2),
\]
which is not $\widetilde O(n)$.
For sufficiently large $n$, the trace exceeds $2n$, and hence
$A_{K'\cap2E'}\npreceq2I_n$.
Thus $K'\cap2E'$ fails the covariance guarantee in \Cref{def:rounding}
in these coordinates, even though $K'\cap E'$ is isotropic.
Note that our counterexample $K'$ is a symmetric convex body.
Since $K'\cap E'$ is isotropic, \Cref{lem:support-ball} gives
$B_2^n\subseteq K'\cap E'\subseteq K'\cap2E'$.
The contained-ball condition therefore still holds, while both the
near-isotropic covariance bound and the weaker trace bound in
\Cref{conj:jllv-ellipsoid-growth} fail.

For a real random variable $X$ with density $f_X$, write
\[
 \phi_X(\xi)=\E{X\sim f_X}{e^{i\xi X}}
\]
for its characteristic function.

We write $B_\infty^n,B_2^n$ to denote the hypercube and the Euclidean
$n$-ball, respectively; i.e.
\[
B_\infty^n:=[-1,1]^n
\]
\[
B_2^n:=\{x\in\R^n:\norm{x}_2\le1\}.
\]
In what follows, it will be more convenient to work with the ``positive''
versions, i.e. intersecting with the positive orthant:
\[
B_{\infty,+}^n:=[0,1]^n
\]
\[
B_{2,+}^n:=B_2^n\cap[0,+\infty)^n.
\]

\begin{lemma}\label{lem:cube-ball-volume-ratio}
Let $0<c_1<c_2<(1/3)^{1/2}$ be real numbers.
There exist constants $d_1=d_1(c_2)$\footnote{A notation of the form
$d=d(c)$ means that
$d$ depends on
$c$, i.e.
$d$ is a function of
$c$, and thus
$d$ is not an absolute constant.} and
$d_2=d_2(c_1)$ and
$n_0=n_0(c_1,c_2)$, satisfying
$0<d_1<d_2<1$, such that for all
$n\ge n_0$, we have that
\[
d_1n\log\frac{\rho_2}{\rho_1}
\le \log\frac{\vol(B_{\infty,+}^n\cap\rho_2B_{2,+}^n)}{\vol(B_{\infty,+}^n\cap\rho_1B_{2,+}^n)}
= \log\frac{\vol(B_\infty^n\cap\rho_2B_2^n)}{\vol(B_\infty^n\cap\rho_1B_2^n)}
\le d_2n\log\frac{\rho_2}{\rho_1}
\]
for all $c_1n^{1/2}\le\rho_1\le\rho_2\le c_2n^{1/2}$.
\end{lemma}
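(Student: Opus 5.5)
The middle equality is immediate from symmetry. Both $B_\infty^n$ and $\rho B_2^n$ are invariant under coordinate sign flips, so each volume equals $2^n$ times its positive-orthant version, and the factor $2^n$ cancels in the ratio. The substance is the two-sided bound on the logarithmic derivative. Put $f(\rho)=\log\vol(B_\infty^n\cap\rho B_2^n)$. I would show that, for $\rho\in[c_1n^{1/2},c_2n^{1/2}]$,
\[
 d_1 n\le \rho f'(\rho)\le d_2 n,
\]
and then integrate over $\log\rho$ from $\log\rho_1$ to $\log\rho_2$.

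\textbf{Probabilistic rewriting.} Let $U$ be uniform on $[-1,1]^n$ and set $S=\|U\|_2^2=\sum_i U_i^2$. Then $\vol(B_\infty^n\cap\rho B_2^n)=2^n\PP\{S\le\rho^2\}$, so
\[
 \rho f'(\rho)=\frac{2\rho^2 g(\rho^2)}{\PP\{S\le\rho^2\}},
\]
where $g$ is the density of $S$. The summands $U_i^2$ are i.i.d.\ with mean $1/3$ and variance $4/45$. The hypothesis $c_2<3^{-1/2}$ places $\rho^2\le c_2^2n$ strictly below the mean $n/3$. We are therefore in a large-deviation regime with linear-size deviation, and I expect the ratio $g(x)/\PP\{S\le x\}$ to behave like $\theta^*(x/n)$. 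Here $\theta^*>0$ is the exponential tilt parameter at level $x/n$.

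\textbf{Exponential tilting.} Let $\Lambda(\theta)=\log\E{}{e^{-\theta U_1^2}}$ for $\theta\ge0$. For $y=x/n\in[c_1^2,c_2^2]$, choose $\theta^*(y)$ so that the tilted law of $U_1^2$ has mean $y$. This is possible because the tilted mean decreases continuously from $1/3$ toward $0$ as $\theta\to\infty$. Under the tilted measure, $S$ has mean $x$ and variance of order $n$, with a bounded density for $U_1^2$ after tilting. A local central limit theorem for the tilted sum, uniform over $\theta$ in a compact set, then gives
\[
 g(x)=e^{n\Lambda(\theta^*)+\theta^*x}\cdot\frac{1+o(1)}{\sqrt{2\pi n\sigma^2(\theta^*)}}.
\]
Integrating the same representation over $s\le x$ with weight $e^{\theta^*(s-x)}$ gives
\[
 \PP\{S\le x\}=e^{n\Lambda(\theta^*)+\theta^*x}\cdot\frac{1+o(1)}{\theta^*\sqrt{2\pi n\sigma^2}}.
\]
The characteristic-function notation $\phi_X$ introduced just above suggests proving the local CLT by Fourier inversion. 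The tilted density of $U_1^2$ is proportional to $t^{-1/2}e^{-\theta t}$ on $(0,1]$, and its characteristic function decays like $|\xi|^{-1/2}$. Hence the $n$-fold power is integrable once $n\ge3$, and a standard Edgeworth-free argument gives uniformity. Taking the ratio yields $g(x)/\PP\{S\le x\}=\theta^*(1+o(1))$, so
\[
 \rho f'(\rho)=2n\,y\,\theta^*(y)\,(1+o(1)),\qquad y=\rho^2/n.
\]

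\textbf{Constants and the main obstacle.} The function $y\mapsto y\theta^*(y)$ is continuous and positive on $(0,1/3)$, because $\theta^*(y)>0$ whenever $y<1/3$. It tends to $1/2$ as $y\to0$, since for large $\theta$ the tilted law is approximately $\mathrm{Gamma}(1/2,\theta)$, whose mean is $1/(2\theta)$. It tends to $0$ as $y\to1/3$. Heuristically $y\theta^*(y)<1/2$ throughout, because the ball term beats the cube only near zero. This yields the constants.
\begin{itemize}
\item The upper constant $d_2$ is $2\max_{y\ge c_1^2}y\theta^*(y)$ plus a small margin. I need this to be strictly less than $1$, which amounts to showing $y\theta^*(y)<1/2$ on $(0,1/3)$.
\item The lower constant $d_1$ is $2\min_{y\le c_2^2}y\theta^*(y)$ minus a small margin, which is positive because $c_2<3^{-1/2}$.
\end{itemize}
The inequality $y\theta^*(y)<1/2$ should follow from the Gamma comparison. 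The tilted law of $U_1^2$ is a $\mathrm{Gamma}(1/2,\theta)$ truncated to $[0,1]$, and truncation lowers the mean below $1/(2\theta)$. So $y<1/(2\theta^*)$ strictly, and compactness gives a uniform gap on $[c_1^2,c_2^2]$.

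The main obstacle I anticipate is the uniform local limit theorem together with the ratio asymptotics. These must hold with $o(1)$ errors uniform over $y$ in a compact subinterval of $(0,1/3)$, which is where $n_0(c_1,c_2)$ comes from. If the Fourier route gets messy, I would fall back on cruder two-sided bounds that suffice here. For the lower bound on $\rho f'$, I would compare $\PP\{S\le x\}$ with $\PP\{x-n^{-1/2}\cdots\}$ via the tilted Chebyshev inequality. For the upper bound, I would use Chernoff, $\PP\{S\le x\}\le e^{n\Lambda+\theta^*x}$, together with a matching lower bound from the tilted CLT on a window of width $\sqrt n$. Either route keeps the constants away from $0$ and $1$ with polynomial slack.
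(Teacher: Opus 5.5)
Your main route is correct and follows the paper's overall strategy, but it obtains the key density-to-distribution-function ratio by a different and more elementary argument. The shared skeleton is: exponentially tilt the law of $U^2$, prove a Fourier local limit theorem for the tilted sum, show that this ratio at the tilted mean equals the tilt parameter up to a factor $1+o(1)$, integrate $\rho f'(\rho)$ over $\log\rho$, and read off the constants from $y\theta^*(y)<1/2$. Your $\theta^*$ is the paper's $-s$, and your $y\theta^*(y)$ is its $|s|a_s$.

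The paper proves a local limit theorem for the logarithmic derivative: $\frac{d}{dt}\log f_V(t)$ at $t=na_s$ equals $-s+O(\max\{1,|s|\}/n)$ (\Cref{lem:density-log-derivative}). It does this by Fourier inversion of $\xi\phi^n$ with a parity cancellation. It then integrates this pointwise estimate in \Cref{lem:integrated-density}: over all $t\le na_s$ for one bound, and over a window of width $\log n/|s|$, using $\frac{d}{ds}a_s=v_s$, for the other.

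You instead apply the ordinary density local limit theorem to the tilted sum and compute $\mathbb{P}\{S\le x\}$ directly by integrating against $e^{-\theta^*(x-s)}$. That weight lives on scale $O(1)$ while the tilted density varies on scale $\sqrt n$, so a uniform $o(n^{-1/2})$ error already gives $g(x)/\mathbb{P}\{S\le x\}=\theta^*(1+o(1))$. This is all the lemma needs, since $\theta^*$ only ranges over a compact subset of $(0,\infty)$. The paper's route buys an explicit $O(1/n)$ error uniform over all $s\le s_0$, which this lemma does not use.

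The bound $y\theta^*(y)<1/2$ is also handled differently. The paper proves that $u\mapsto ua_{-u}$ is strictly increasing with limit $1/2$ (\Cref{lem:tilted-mean-monotone}); this puts the extremes at the endpoints and gives $d_1=|s_1|a_{s_1}$ explicitly. Your comparison $\mathbb{E}[X\mid X\le1]<\mathbb{E}X=1/(2\theta)$ for $X\sim\mathrm{Gamma}(1/2,\theta)$, plus continuity and compactness, is shorter. It gives the same dependence of $d_1$ on $c_2$ and of $d_2$ on $c_1$.

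Two corrections. First, the tilted density of $U_1^2$ is not bounded, since it keeps the $t^{-1/2}$ singularity at $0$. What the local limit theorem actually uses is what you also state: integrability of $|\phi_\theta|^n$ for $n\ge3$ from the $|\xi|^{-1/2}$ decay, together with $\sup_{|\xi|\ge\delta}|\phi_\theta(\xi)|<1$ uniformly for $\theta$ in a compact set.

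Second, your fallback with cruder bounds would not work, for two reasons.
\begin{enumerate}
\item As $c_1\to0$, $2\sup_{y\ge c_1^2}y\theta^*(y)\to1$. So $d_2<1$ requires the ratio $g/\mathbb{P}$ up to a factor tending to $1$, not up to polynomial or fixed constant factors. A window of width $\sqrt n$ loses a factor $e^{\theta^*\sqrt n}$.
\item The inequality must hold for $\rho_2/\rho_1$ arbitrarily close to $1$; the paper applies it with $\log(\rho_2/\rho_1)=\Theta(1/n)$. Hence any argument that loses an additive $O(\log n)$ in $\log\mathbb{P}\{S\le x\}$ is useless.
\end{enumerate}
The sharp local asymptotics of your main route are genuinely needed.
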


Let $U$ be a uniform random variable in the interval
$[0,1]$, and let
$f_U(t)$ denote the probability density function of
$U$; i.e.
\[
f_U(t)=\begin{cases}
1&\text{if }t\in[0,1],\\
0&\text{otherwise.}
\end{cases}
\]
Then we have
\[
f_{U^2}(t)=\begin{cases}
\dfrac1{2t^{1/2}}&\text{if }t\in(0,1],\\
0&\text{otherwise,}
\end{cases}
\]
where $f_{U^2}(t)$ is the probability density function of the random variable
$U^2$ (the square of
$U$).

Define the random variable
\[
V:=U_1^2+U_2^2+\cdots+U_n^2,
\]
i.e. the sum of $n$ independent copies of
$U^2$.
Note that
\begin{equation}\label{eq:cube-ball-slice-volume}
\vol(B_{\infty,+}^n\cap\rho B_{2,+}^n)=\int_0^{\rho^2}f_V(t)\,dt.
\end{equation}
In what follows, we will give bounds on $f_V$.

For $s\in\R$, define the function
\[
g_s(t):=f_{U^2}(t)\cdot\exp(st).
\]
For $p\ge0$, define the moments
\[
G_{s,p}:=\int_\R g_s(t)\cdot t^p\,dt.
\]
Let $X_s$ denote the random variable whose probability density function is given
by
$g_s(t)/G_{s,0}$ (i.e. proportional to
$g_s(t)$).
In particular, we have that $X_0$ is identically distributed with
$U^2$.
Note that $X_s$ is an exponential change of measure, which we will use as in the
proof of Cram\'er's theorem (large deviations).

For convenience, define
\[
a_s
:=G_{s,1}/G_{s,0}
=\E{X_s\sim g_s/G_{s,0}}{X_s}.
\]
Note that $a_s$ is a strictly increasing function of
$s$.

\begin{lemma}\label{lem:density-log-derivative}
There exist absolute constants $C_{\mathrm{deriv}},n_{\mathrm{deriv}}>0$ such
that for all
$s\in\R$ and
$n\ge n_{\mathrm{deriv}}$, we have
\[
\left|s+\left.\frac{d}{dt}\right|_{t=na_s}\log f_V(t)\right|
\le C_{\mathrm{deriv}}\frac{\max(1,|s|)}{n}.
\]
\end{lemma}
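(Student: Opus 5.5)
The plan is to use the exponential change of measure already set up before the statement, so that the question becomes a local limit estimate for the tilted sum at its own mean. Write $V_s$ for the sum of $n$ independent copies of $X_s$. Since $g_s(t)=f_{U^2}(t)e^{st}$, the density of $V_s$ is
\[
 f_{V_s}(t)=G_{s,0}^{-n}e^{st}f_V(t),
\]
because the tilt factor $e^{s(t_1+\cdots+t_n)}=e^{st}$ depends only on the sum. Taking the logarithmic derivative gives
\[
 \frac{d}{dt}\log f_V(t)=-s+\frac{d}{dt}\log f_{V_s}(t).
\]
It therefore suffices to show that $\bigl|(\log f_{V_s})'(na_s)\bigr|\le C\max(1,|s|)/n$. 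The point $na_s$ is exactly the mean of $V_s$.

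Let $\sigma_s^2=\Var X_s$ and let $\kappa_{3,s}$ be the third cumulant of $X_s$. I will standardize by putting $Y_s=(X_s-a_s)/\sigma_s$. By Fourier inversion,
\[
 f_{V_s}(na_s)=\frac{1}{2\pi\sigma_s\sqrt n}\int\phi_{Y_s}(\eta/\sqrt n)^n\,d\eta .
\]
The derivative $f_{V_s}'(na_s)$ is given by the same integral with an extra factor $-i\eta/(\sigma_s\sqrt n)$.

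A one-term Edgeworth expansion will handle both integrals. The density integral equals $\sqrt{2\pi}\,(1+O(1/n))$. The derivative integral equals $O(\kappa_{3,s}\sigma_s^{-3}/\sqrt n)$; this comes from the term $\tfrac{i^3\kappa_3}{6\sqrt n}\eta^3e^{-\eta^2/2}$ multiplied by $\eta$. Dividing the two gives
\[
 (\log f_{V_s})'(na_s)=O\!\left(\frac{\kappa_{3,s}/\sigma_s^3}{\sigma_s\,n}\right).
\]
This reduces the lemma to two scale facts, to be proved uniformly in $s$:
\begin{itemize}
\item $\sigma_s\ge c/\max(1,|s|)$;
\item the standardized moments $\E{}{|Y_s|^k}$ are bounded for $k\le4$.
\end{itemize}
I will check both by explicit asymptotics of $g_s$. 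For $s\to-\infty$, $|s|X_s$ converges to a $\mathrm{Gamma}(1/2,1)$ law. For $s\to+\infty$, $s(1-X_s)$ converges to an $\mathrm{Exp}(1)$ law, because $f_{U^2}$ is smooth and positive near $1$. On compact ranges of $s$ the laws depend continuously on $s$. So the family $\{Y_s:s\in[-\infty,\infty]\}$, with the two limits adjoined, is compact and nondegenerate.

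The main obstacle is making the Edgeworth remainder uniform in $s$, including the tails of the Fourier integrals. I will split the $\eta$-integral into two ranges.

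\textbf{Small frequencies, $|\eta|\le\delta\sqrt n$.} Here a Taylor expansion of $\log\phi_{Y_s}$ to fourth order works. The bounded fourth moments from the compactness above give uniform constants, and they also give $|\phi_{Y_s}(\xi)|\le e^{-\xi^2/4}$ for $|\xi|\le\delta$.

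\textbf{Large frequencies, $|\eta|>\delta\sqrt n$.} Here I need two uniform bounds:
\begin{itemize}
\item $\sup_{|\xi|\ge\delta}|\phi_{Y_s}(\xi)|\le1-c$, which follows from nonlattice absolute continuity together with compactness of the family;
\item an integrable tail $|\phi_{Y_s}(\xi)|\le C|\xi|^{-1/2}$, coming from the $t^{-1/2}$ singularity, or the bounded jump at an endpoint, of the density of $Y_s$.
\end{itemize}
Together these make $\int_{|\eta|>\delta\sqrt n}|\eta|\,|\phi_{Y_s}(\eta/\sqrt n)|^n\,d\eta$ exponentially small once $n\ge n_{\mathrm{deriv}}$; five factors already make $|\xi|\cdot|\xi|^{-5/2}$ integrable.

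The step I expect to need the most care is the tail bound $|\phi_{Y_s}(\xi)|\le C|\xi|^{-1/2}$ uniformly in $s$. I would try to prove it first by integrating by parts on the explicit density $\propto t^{-1/2}e^{st}$ after rescaling, treating the two regimes $s\le -1$ and $s\ge1$ separately.
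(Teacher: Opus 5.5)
Your proposal is correct and follows essentially the same route as the paper. The shared steps are: exponential tilting to reduce the claim to the log-derivative of the tilted sum at its mean; Fourier inversion for the standardized sum; a fourth-order expansion of the characteristic function in the central range, controlled by uniform third and fourth moment bounds; a uniform defect $|\phi|\le 1-c$ away from the origin together with a uniform $|\xi|^{-1/2}$ decay coming from the $t^{-1/2}$ singularity; and finally the scale bound $v_s^{-1/2}=\Theta(\max(1,|s|))$.

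The only differences are in sub-steps. You identify the explicit Edgeworth term, obtain the uniform defect by compactness of the standardized family (with the $\mathrm{Gamma}(1/2)$ and exponential limits adjoined), and get the decay by integration by parts. The paper instead bounds $\xi E_{\mathrm{im}}(\xi)=O(\xi^4)$ directly, gets the defect from a rectangle under the density graph, and gets the decay from a level-set (layer-cake) decomposition.
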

The Fourier-analytic proof appears below in
\Cref{subsec:fourier-density-derivative}.

\begin{lemma}\label{lem:integrated-density}
For all $s_0<0$, there exist constants
$n_{\mathrm{int}}=n_{\mathrm{int}}(s_0)>0$ and
$C_{\mathrm{int}}=C_{\mathrm{int}}(s_0)>0$ such that for all
$n\ge n_{\mathrm{int}}$ and
$s\le s_0$, we have
\[
-s-C_{\mathrm{int}}\frac{|s|}{n}
\le \frac{f_V(na_s)}{\int_0^{na_s}f_V(t')\,dt'}
\le -s+C_{\mathrm{int}}\frac{|s|}{n}.
\]
\end{lemma}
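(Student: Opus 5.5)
We will prove \Cref{lem:integrated-density} by integrating the log-derivative estimate of \Cref{lem:density-log-derivative} backwards from the point $na_s$. Write $\psi(t)=\log f_V(t)$ and fix $s\le s_0<0$. The key point is that as $t$ ranges over $(0,na_s]$, the parameter $s'$ with $na_{s'}=t$ ranges over $(-\infty,s]$, because $a_{s'}$ is strictly increasing with $a_{s'}\to0$ as $s'\to-\infty$. \Cref{lem:density-log-derivative} then gives
\[
\psi'(na_{s'})=-s'+O\bigl(\max(1,|s'|)/n\bigr).
\]
So on this range, $\psi'(t)\ge -s-C|s|/n$ up to a correction that grows as $s'\to-\infty$. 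Since $s\le s_0<0$, we have $\max(1,|s'|)\le C_{s_0}|s'|$, and we may absorb the $1$.

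For the upper bound on $f_V(na_s)/\int_0^{na_s}f_V$, we use a lower bound on the integral near the endpoint. For $t\in[na_s-\Delta,na_s]$ with $\Delta$ of order $1$, the monotonicity of $a$ gives $s'\le s$, hence $-s'\ge -s$. Therefore $\psi'(t)\ge(-s)(1-C/n)$, which yields $f_V(t)\ge f_V(na_s)e^{-(-s)(1-C/n)(na_s-t)}$. If the lower bound held on all of $(0,na_s]$, integrating would give
\[
\int_0^{na_s}f_V\ge f_V(na_s)\,\frac{1-e^{-(-s)(1-C/n)na_s}}{(-s)(1-C/n)},
\]
and the ratio would be at most $-s+C|s|/n$, provided the exponential term $e^{-|s|na_s}$ is $O(1/n)$. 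That last condition is where we need $na_s$ large. We expect $a_s\asymp1/|s|$ as $s\to-\infty$, since $X_s$ behaves like a Gamma$(1/2)$ variable at scale $1/|s|$, giving $a_s\approx1/(2|s|)$. Then $|s|na_s\asymp n$ and $e^{-cn}$ is negligible. We will verify $a_s|s|\ge c$ uniformly for $s\le s_0$ directly from $G_{s,1}/G_{s,0}$, via the substitution $t=u/|s|$ in the integrals.

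For the lower bound on the ratio, we need an upper bound on the integral, that is, $\psi'\le -s+C|s|/n$ on a region carrying most of the mass. Here $-s'$ grows as $t\to0$, so $\psi'(t)\ge -s$ throughout, and $f_V$ is log-increasing at rate at least $-s'\ge -s$. \Cref{lem:density-log-derivative} then gives
\[
\psi'(t)\ge -s-C_{\mathrm{deriv}}\max(1,|s'|)/n\ge(-s)(1-C/n)
\]
everywhere on $(0,na_s]$. This yields $f_V(t)\le f_V(na_s)e^{-(-s)(1-C/n)(na_s-t)}$, and integrating gives
\[
\int_0^{na_s}f_V\le f_V(na_s)\big/\bigl((-s)(1-C/n)\bigr),
\]
hence the ratio is at least $-s-C|s|/n$. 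Note that this same pointwise lower bound on $\psi'$ is what we integrate in the previous step; we only need to be careful about which inequality direction gives which bound. For the upper bound on the ratio, we need $\psi'\le -s+C|s|/n$ only on a window $[na_s-K\log n/|s|,\,na_s]$. On that window $s'$ is close to $s$: $a$ has derivative $\Var X_{s'}\asymp 1/s'^2$, so $|s'-s|=O(K\log n/n)\cdot|s|$. This gives a relative error of $O(\log n/n)$ in place of $O(1/n)$, which is slightly too weak.

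The main obstacle is therefore the upper bound on the ratio at the sharp $O(|s|/n)$ precision. Our first attempt will be to exploit the exponential decay: contributions from $t$ at distance $u/|s|$ below $na_s$ have weight $e^{-u}$, and on them $-s'$ exceeds $-s$ by about $u|s|/n$ in relative terms. Integrating the quadratic Taylor expansion
\[
\psi(na_s-x)\le\psi(na_s)-(-s)x+O\bigl(s^2x^2/n\bigr)+O(|s|x/n)
\]
against $e^{-|s|x}$ then gives a correction of $\int e^{-u}(u^2/n)\,du=O(1/n)$ relative error, which is what we need. The region of large $x$, where the Taylor bound degrades, contributes only $e^{-cn}$.
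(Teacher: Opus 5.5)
Your final plan is the paper's proof. The lower bound on the ratio comes from a lower bound on $\psi'$ valid on all of $(0,na_s]$, since each such $t$ equals $na_{s'}$ with $s'\le s$. The upper bound comes from a window of width about $\log n/|s|$ below $na_s$. On that window, $\frac{d}{ds}a_s=v_s\ge C_{\mathrm{var}-}^2/\max(1,|s|)^2$ turns the upper half of \Cref{lem:density-log-derivative} into a quadratic correction. That correction is linearized via $e^{-y}\ge1-y$ and integrated against $e^{-|s|x}$, giving relative error $O(1/n)$.

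Several inequalities in your write-up point the wrong way, however, and need fixing.
\begin{itemize}
\item A lower bound $\psi'\ge\kappa$ on $[t,na_s]$ gives $f_V(t)\le f_V(na_s)e^{-\kappa(na_s-t)}$. That is an upper bound on the integral, hence the \emph{lower} bound on the ratio. So your first-paragraph derivation of $f_V(t)\ge\cdots$ is invalid, and your clause ``that is, $\psi'\le -s+C|s|/n$'' in the lower-bound paragraph is reversed.
\item Your Taylor display must read $\psi(na_s-x)\ge\psi(na_s)-|s|x-O(\max(1,|s|)^2x^2/n)-O(\max(1,|s|)x/n)$. This follows from $\psi'(na_{s'})\le -s'+C_{\mathrm{deriv}}\max(1,|s'|)/n$ together with $s-s'\le x\max(1,|s|)^2/(cn)$ on the window.
\item The step $-s-C_{\mathrm{deriv}}\max(1,|s'|)/n\ge(-s)(1-C/n)$ fails as $s'\to-\infty$. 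Absorb the maximum before comparing: $|s'|-C_{\mathrm{deriv}}\max(1,|s'|)/n\ge|s'|(1-C_{s_0}/n)\ge|s|(1-C_{s_0}/n)$ for $n\ge C_{s_0}$. Alternatively, as the paper does, use that $y\mapsto y-C_{\mathrm{deriv}}\max(1,y)/n$ is nondecreasing once $n\ge C_{\mathrm{deriv}}$.
\end{itemize}
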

The proof is a sequence of inequalities starting from
\Cref{lem:density-log-derivative}; it appears
below in \Cref{subsec:integrated-density-bound}.

\begin{proof}[Proof of \Cref{lem:cube-ball-volume-ratio}]
We rewrite the expression
\begin{align*}
\log\frac{\vol(B_{\infty,+}^n\cap\rho_2B_{2,+}^n)}{\vol(B_{\infty,+}^n\cap\rho_1B_{2,+}^n)}
&=\log\frac{\int_0^{\rho_2^2}f_V(t)\,dt}{\int_0^{\rho_1^2}f_V(t)\,dt} \qquad\text{(from Equation~\eqref{eq:cube-ball-slice-volume})}\\
&=\int_0^{\log(\rho_2^2/\rho_1^2)}
\left(\frac{d}{dk}\left(\log\int_0^{e^k\rho_1^2}f_V(t)\,dt\right)\right)\,dk\\
&=\int_0^{\log(\rho_2^2/\rho_1^2)}
\frac{f_V(e^k\rho_1^2)}{\int_0^{e^k\rho_1^2}f_V(t)\,dt}\cdot e^k\rho_1^2\,dk.
\end{align*}
Now, by assumption, for each $k\in[0,\log(\rho_2^2/\rho_1^2)]$ we have
$e^k\rho_1^2\in[c_1^2n,c_2^2n]$, and also
$0<c_1^2<c_2^2<1/3=a_0$.

Since $a_s$ is a continuous function of
$s$ (from \Cref{lem:tilted-continuity}), and
$\lim_{s\searrow-\infty}a_s=0$ (from \Cref{lem:tilted-limits}), and
$a_s$ is a strictly increasing function of
$s$, we conclude that we can find
$s_2<s_1<0$ satisfying
$a_{s_1}=c_2^2$ and
$a_{s_2}=c_1^2$.

Now assume $n\ge n_{\mathrm{int}}(s_1)$, where
$n_{\mathrm{int}}$ is as in \Cref{lem:integrated-density}.
The lemma implies
\[
|a^{-1}(e^k\rho_1^2/n)|\cdot\left(1-\frac{C_{\mathrm{int}}}{n}\right)
\le \frac{f_V(e^k\rho_1^2)}{\int_0^{e^k\rho_1^2}f_V(t)\,dt}
\le |a^{-1}(e^k\rho_1^2/n)|\cdot\left(1+\frac{C_{\mathrm{int}}}{n}\right)
\]
where $C_{\mathrm{int}}=C_{\mathrm{int}}(s_1)$ is as in the lemma, and
$a^{-1}(e^k\rho_1^2/n)$ denotes the unique real number
$s$ satisfying
$a_s=e^k\rho_1^2/n$.
Multiplying by $e^k\rho_1^2$, we have
\[
|a^{-1}(e^k\rho_1^2/n)|\cdot e^k\rho_1^2\cdot\left(1-\frac{C_{\mathrm{int}}}{n}\right)
\le \frac{f_V(e^k\rho_1^2)}{\int_0^{e^k\rho_1^2}f_V(t)\,dt}\cdot e^k\rho_1^2
\le |a^{-1}(e^k\rho_1^2/n)|\cdot e^k\rho_1^2\cdot\left(1+\frac{C_{\mathrm{int}}}{n}\right).
\]
Note that $a^{-1}(e^k\rho_1^2/n)\in[s_2,s_1]$.
If
\[
s=a^{-1}(e^k\rho_1^2/n),
\]
then $e^k\rho_1^2=na_s$.
Moreover, \Cref{lem:tilted-mean-monotone} implies that the map $s\mapsto
|s|a_s=-sa_s$ is
decreasing on
$(-\infty,0)$.
Hence
\[
n|s_1|a_{s_1}
\le |a^{-1}(e^k\rho_1^2/n)|\cdot e^k\rho_1^2
\le n|s_2|a_{s_2}
<\frac n2.
\]
It follows that
\[
n|s_1|a_{s_1}\left(1-\frac{C_{\mathrm{int}}}{n}\right)
\le \frac{f_V(e^k\rho_1^2)}{\int_0^{e^k\rho_1^2}f_V(t)\,dt}\cdot e^k\rho_1^2
\le n|s_2|a_{s_2}\left(1+\frac{C_{\mathrm{int}}}{n}\right).
\]
Integrating over $k\in[0,\log(\rho_2^2/\rho_1^2)]$ gives
\[
2n|s_1|a_{s_1}\left(1-\frac{C_{\mathrm{int}}}{n}\right)\log\frac{\rho_2}{\rho_1}
\le \log\frac{\vol(B_{\infty,+}^n\cap\rho_2B_{2,+}^n)}{\vol(B_{\infty,+}^n\cap\rho_1B_{2,+}^n)}
\le 2n|s_2|a_{s_2}\left(1+\frac{C_{\mathrm{int}}}{n}\right)\log\frac{\rho_2}{\rho_1}.
\]
Now choose $d_2$ arbitrarily in the interval
$(2|s_2|a_{s_2},1)$, which is nonempty because
$|s_2|a_{s_2}<1/2$.
If
\[
n\ge \frac{C_{\mathrm{int}}}{d_2/(2|s_2|a_{s_2})-1},
\]
then the upper bound is at most
\[
d_2n\log\frac{\rho_2}{\rho_1}.
\]
If in addition $n\ge2C_{\mathrm{int}}$, then the lower bound is at least
\[
|s_1|a_{s_1}n\log\frac{\rho_2}{\rho_1}.
\]
Thus we may take $d_1=|s_1|a_{s_1}$, after increasing
$n_0$ if necessary, and the proof is complete.
\end{proof}

The next lemma concerns the one-dimensional slice-profile bounds.

\begin{lemma}\label{lem:slice-profile}
Let $h:\R\to[0,\infty)$ be an even, integrable, logconcave function with
$0<h(0)<\infty$, and let
$T$ have density
$h/\int_{\R}h$.
\begin{enumerate}[label=\textup{(\roman*)}]
\item \label{itm:slice-profile:1} If $h(\delta)\le e^{-\kappa}h(0)$ for some
  $\delta,\kappa>0$, then
\[
\V{T\sim h/\int_{\R}h}{T}\le \frac{6}{\min\{\kappa,1\}^2}\,\delta^2.
\]
\item \label{itm:slice-profile:2} If $h$ is supported on
  $[-1,1]$ and
  $h(1)\ge h(0)/C$ for some
  $C\ge1$, then
\[
\V{T\sim h/\int_{\R}h}{T}\ge \frac1{3C}.
\]
\end{enumerate}
\end{lemma}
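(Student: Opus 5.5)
Both parts reduce to elementary one-dimensional estimates for an even logconcave profile. Normalize $h(0)=1$ by scaling; this changes neither the law of $T$ nor the hypotheses. Since $h$ is even and logconcave, $\log h$ is concave and maximized at $0$, so $h$ is nonincreasing on $[0,\infty)$. Also $\E{}{T}=0$, so $\V{}{T}=\E{}{T^2}=\int_0^\infty t^2h/\int_0^\infty h$.

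For part \ref{itm:slice-profile:1}, set $\kappa'=\min\{\kappa,1\}$. Then $h(\delta)\le e^{-\kappa'}$, and concavity of $\log h$ with $\log h(0)=0$ gives, for $t\ge\delta$, $\log h(t)\le (t/\delta)\log h(\delta)\le -\kappa' t/\delta$. On $[0,\delta]$ we only use $h\le 1$.

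\emph{Numerator.} We have $\int_0^\infty t^2h\le \int_0^\delta t^2\,dt+\int_\delta^\infty t^2e^{-\kappa' t/\delta}dt$. The tail integral is at most $2(\delta/\kappa')^3$, so the numerator is at most $\delta^3/3+2\delta^3/\kappa'^3$.

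\emph{Denominator.} This is the delicate step. On $[0,\delta]$, concavity gives $h(t)\ge h(\delta)^{t/\delta}$, but $h(\delta)$ may be tiny, so this is not useful as a lower bound. Instead I would use monotonicity and split at the point where $h$ drops by $e^{-\kappa'}$. Let $t^*=\sup\{t:h(t)\ge e^{-1}\}$. If $t^*\ge\delta\kappa'$, say, then $\int_0^\infty h\ge t^*/e$. Otherwise, I would use a standard logconcave comparison: for $t\ge t^*$, $h(t)\le e^{-t/t^*}$, and apply the same computation with $\delta$ replaced by $t^*$.

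The cleanest route is probably this. The ratio $\int t^2h/\int h$ is scale-covariant, and for a logconcave profile with $h(0)=1$ it is at most $C(t^*)^2$, by the standard inequality $\V{}{T}\le 2(t^*)^2$, which follows from comparison with the exponential. Separately, $t^*\le\delta/\kappa'$: for $t\ge\delta$ we have $h(t)\le e^{-\kappa' t/\delta}$, which drops below $e^{-1}$ once $t>\delta/\kappa'$, and if $\delta/\kappa'\le\delta$ the monotonicity bound gives the same conclusion. This yields $\V{}{T}\le 2\delta^2/\kappa'^2$, comfortably within $6$.

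The comparison $\int_0^\infty t^2h\le 2(t^*)^2\cdot 2\int_0^\infty h$ needs some care; it is the main technical point. I would prove it by noting that $\int_0^{t^*}h\ge t^*/e$ while $\int_0^\infty t^2h\le (t^*)^3/3+\int_{t^*}^\infty t^2e^{-t/t^*}dt\le (t^*)^3(1/3+5/e)$. The resulting ratio constant is about $6.5$, a little large. I would tighten it by splitting at $\delta$ directly using $h\le1$ on $[0,\delta]$ and the lower bound $\int_0^\infty h\ge \int_0^{\delta}h\ge\delta e^{-\kappa'}$. The latter holds when $t^*\ge\delta$; when $t^*<\delta$, I would use the $t^*$ bound instead. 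In either case the constant $6$ is reachable with mild bookkeeping.

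Part \ref{itm:slice-profile:2} is easy. Since $h$ is nonincreasing on $[0,1]$ with $h(1)\ge h(0)/C$, we have $h(0)/C\le h(t)\le h(0)$ on $[0,1]$. Therefore
\[
\int_0^1t^2h\ge h(0)/(3C)\quad\text{and}\quad\int_0^1h\le h(0),
\]
and dividing gives $\V{}{T}\ge 1/(3C)$.
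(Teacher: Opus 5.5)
Part (ii) is the paper's argument verbatim. In part (i), the computation you single out as ``the main technical point'' is also exactly the paper's proof. The paper takes the point $a\le\delta/\min\{\kappa,1\}$ where $h$ falls to $e^{-1}h(0)$, uses $h\ge e^{-1}h(0)$ on $[0,a]$ and $h(t)\le h(0)e^{-t/a}$ beyond $a$, and divides.

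The only real problem is your arithmetic. The ratio is
\[
 \frac{(t^*)^3\bigl(\tfrac13+\tfrac5e\bigr)}{t^*/e}
 =\Bigl(\frac e3+5\Bigr)(t^*)^2\approx5.91\,(t^*)^2<6\,(t^*)^2,
\]
not about $6.5\,(t^*)^2$. Combined with your correct bound $t^*\le\delta/\kappa'$, this gives $\Var(T)\le(e/3+5)\delta^2/\kappa'^2<6\delta^2/\kappa'^2$ directly. Nothing needs tightening.

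You should therefore delete the other two routes in that paragraph. The ``standard inequality'' $\Var(T)\le2(t^*)^2$ is only asserted, and it is not needed.

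The proposed tightening is actually wrong. When $t^*\ge\delta$ you only know $h\ge e^{-1}$ on $[0,\delta)$, which gives $\int_0^\delta h\ge\delta/e$, not $\int_0^\delta h\ge\delta e^{-\kappa'}$. For a counterexample, take $h(t)=e^{-|t|/\delta}$ and $\kappa=0.1$. The hypothesis holds and $t^*=\delta$, but $\int_0^\delta h=(1-e^{-1})\delta<e^{-0.1}\delta$.

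One small point in your favor: defining $t^*$ as a supremum is slightly more robust than the paper's assertion that some $a$ satisfies $h(a)=e^{-1}h(0)$ exactly. That assertion fails when $h$ jumps to zero at the edge of its support, e.g.\ $h=\mathbf 1_{[-1,1]}$ with $\delta=2$. Your version of the same estimates goes through unchanged in that case.
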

\begin{proof}
Since $h$ is even and logconcave, it is nonincreasing on
$[0,\infty)$ and
$\E{T\sim h/\int_{\R}h}{T}=0$.

For part~\textup{(i)}, set $c:=\min\{\kappa,1\}$ and
$b:=\delta/c$.
Writing
\[
\psi(t):=\log h(0)-\log h(t)
\]
on the interior of the support, logconcavity of $h$ says that
$\psi$ is convex and
$\psi(0)=0$.
Since $b\ge\delta$,
\[
\psi(b)
\ge \frac{b}{\delta}\psi(\delta)
\ge\frac{\kappa}{c}
\ge1.
\]
Consequently, there is some $a\in(0,b]$ such that
$h(a)=e^{-1}h(0)$.

Convexity of $\psi$ then gives
$\psi(t)\ge t/a$ for
$t\ge a$, and hence
\[
h(t)\le h(0)e^{-t/a}\qquad(t\ge a).
\]
On the other hand, monotonicity gives $h(t)\ge h(a)=e^{-1}h(0)$ for
$0\le t\le a$.
Therefore
\[
\int_{\R}h(t)\,dt\ge \frac{2a}{e}h(0),
\]
while
\begin{align*}
\int_{\R}t^2h(t)\,dt
&\le 2h(0)\left(\int_0^a t^2\,dt+\int_a^\infty t^2e^{-t/a}\,dt\right)\\
&=2h(0)a^3\left(\frac13+\frac5e\right).
\end{align*}
Thus
\[
\V{T\sim h/\int_{\R}h}{T}
=\E{T\sim h/\int_{\R}h}{T^2}
\le\left(\frac e3+5\right)a^2
<6a^2
\le \frac{6}{c^2}\delta^2.
\]

For part~\textup{(ii)}, monotonicity gives $h(t)\ge h(1)\ge h(0)/C$ on
$[0,1]$, whereas
$h(t)\le h(0)$ everywhere on its support.
Hence
\[
\int_{-1}^1h(t)\,dt\le2h(0)
\]
and
\[
\int_{-1}^1t^2h(t)\,dt
\ge2\frac{h(0)}C\int_0^1t^2\,dt
=\frac{2h(0)}{3C}.
\]
Dividing proves $\V{T\sim h/\int_{\R}h}{T}\ge1/(3C)$.
\end{proof}

\begin{proof}[Proof of \Cref{thm:ellipsoid-growth-counterexample}]
Let $0<\sigma_1<\sigma_2<\rho_1<\rho_2$ be real numbers satisfying
$\rho_2/\rho_1=\alpha$, where the precise values of
$\sigma_1,\sigma_2,\rho_1$ will be specified later.
Assume $n\ge3$.
First, define the function $g:[0,\sigma_2]\to[0,\infty)$ by
\[
g(\sigma):=
\begin{cases}
(\rho_2^2-\sigma^2)^{1/2}&\text{if }0\le\sigma\le\sigma_1,\\[1ex]
\dfrac{\sigma_2-\sigma}{\sigma_2-\sigma_1}(\rho_2^2-\sigma_1^2)^{1/2}
+\dfrac{\sigma-\sigma_1}{\sigma_2-\sigma_1}(\rho_1^2-\sigma_2^2)^{1/2}&\text{if }\sigma_1\le\sigma\le\sigma_2.
\end{cases}
\]
In other words, $g$ is a circular arc on the interval
$[0,\sigma_1]$, and a line segment on the interval
$[\sigma_1,\sigma_2]$ (see \Cref{fig:ellipsoid-growth-profile}).
We will impose condition~\eqref{eq:ellipsoid-growth-concavity} below.
That condition also guarantees that the even extension
$\sigma\mapsto g(|\sigma|)$ is concave; the short verification is given
immediately after Equation~\eqref{eq:ellipsoid-growth-concavity}.

\begin{figure}[H]
    \centering
    \includegraphics[width=0.35\textwidth]{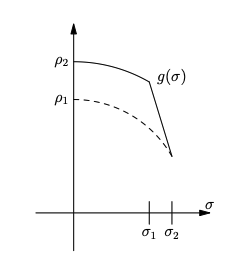}
    \caption{Schematic graph of $g$: a circular arc followed by a line segment.}
    \label{fig:ellipsoid-growth-profile}
\end{figure}

Set $v=e_n\in\R^n$, and introduce the intermediate cylinder
\[
\widehat E
:=\rho_1B_2^{n-1}\times\R
=\left\{x\in\R^n:(x_1^2+x_2^2+\cdots+x_{n-1}^2)^{1/2}\le\rho_1\right\},
\]
where we identify $\R^{n-1}$ with the hyperplane
$v^\perp:=\{x\in\R^n:\langle x,v\rangle=0\}$.
Since $\rho_2/\rho_1=\alpha$,
\[
\alpha\widehat E=\rho_2B_2^{n-1}\times\R.
\]
The cylindrical calculation will be transferred to a genuine ellipsoid at the
end of the proof.

Finally, define
\[
K:=\bigcap_{i=1}^{n-1}\left\{x\in\R^n:
|x_n|\le1\text{ and }|x_i|\le\sigma_2-(\sigma_2-\sigma_1)|x_n|
\text{ and }(\norm{x}_2^2-x_n^2-x_i^2)^{1/2}\le g(|x_i|)
\right\}.
\]
The set $K$ is symmetric and unconditional.
Under condition~\eqref{eq:ellipsoid-growth-concavity}, the function $g(|\cdot|)$
is concave, and therefore every set appearing in the intersection above is
convex; hence
$K$ is a convex body.
For each $\eta\in[-1,1]$, the slice
$K\cap\{x\in\R^n:x_n=\eta\}$ is an
$(n-1)$-dimensional unconditional convex body.

Now note that
\[
K\subseteq\alpha\widehat E
\]
(because $g(\sigma)\le(\rho_2^2-\sigma^2)^{1/2}$ for
$\sigma\in[\sigma_1,\sigma_2]$).
Thus we have
\[
K\cap\alpha\widehat E\cap\{x\in\R^n:x_n=0\}
=
\left(\bigcap_{i=1}^{n-1}\left\{x\in\R^{n-1}:|x_i|\le\sigma_2\text{ and }(\norm{x}_2^2-x_i^2)^{1/2}\le g(|x_i|)\right\}\right)\times\{0\}
\]
\[
K\cap\alpha\widehat E\cap\{x\in\R^n:x_n=1\}
=(\sigma_1B_\infty^{n-1}\cap\rho_2B_2^{n-1})\times\{1\}.
\]
For convenience, define the convex bodies $K_0,K_1\subset\R^{n-1}$ by
\[
K_0:=\bigcap_{i=1}^{n-1}\left\{x\in\R^{n-1}:|x_i|\le\sigma_2\text{ and }(\norm{x}_2^2-x_i^2)^{1/2}\le g(|x_i|)\right\}
\]
\[
K_1:=\sigma_1B_\infty^{n-1}\cap\rho_2B_2^{n-1}
\]
so that we have
\[
\vol_{n-1}(K\cap\alpha\widehat E\cap\{x\in\R^n:x_n=0\})=\vol_{n-1}(K_0)
\]
\begin{equation}\label{eq:ellipsoid-growth-outer-slice}
\vol_{n-1}(K\cap\alpha\widehat E\cap\{x\in\R^n:x_n=1\})=\vol_{n-1}(K_1).
\end{equation}

Now, let us assume that $\sigma_1,\sigma_2,\rho_1$ satisfy
\begin{equation}\label{eq:ellipsoid-growth-concavity}
\frac{\sigma_2}{(\rho_1^2-\sigma_2^2)^{1/2}}
<
\frac{(\rho_2^2-\sigma_1^2)^{1/2}-(\rho_1^2-\sigma_2^2)^{1/2}}{\sigma_2-\sigma_1}.
\end{equation}
Let
\[
m:=\frac{(\rho_1^2-\sigma_2^2)^{1/2}-(\rho_2^2-\sigma_1^2)^{1/2}}{\sigma_2-\sigma_1}
\]
be the slope of the linear segment.
Condition~\eqref{eq:ellipsoid-growth-concavity}, together with
\[
\frac{\sigma_2}{(\rho_1^2-\sigma_2^2)^{1/2}}
>
\frac{\sigma_1}{(\rho_2^2-\sigma_1^2)^{1/2}},
\]
gives
\[
m
< -\frac{\sigma_1}{(\rho_2^2-\sigma_1^2)^{1/2}}
=\left.\frac{d}{d\sigma}(\rho_2^2-\sigma^2)^{1/2}\right|_{\sigma=\sigma_1}.
\]
Thus the derivative does not jump upward at $\sigma_1$.
Since the circular arc is concave and the second piece is affine, $g$ is concave
and nonincreasing on
$[0,\sigma_2]$; because
$g'(0)=0$, its even extension
$g(|\sigma|)$ is concave on
$[-\sigma_2,\sigma_2]$.
This justifies the convexity assertion for $K$ above.

Condition~\eqref{eq:ellipsoid-growth-concavity} also gives
\[
g(\sigma)\ge(\rho_1^2-\sigma^2)^{1/2}
\]
for all $\sigma\in[0,\sigma_2]$: on
$[0,\sigma_1]$ this is immediate from
$\rho_2>\rho_1$, while on
$[\sigma_1,\sigma_2]$ it follows by comparing derivatives and using equality at
$\sigma_2$.
This implies that for each $\eta\in[-1,1]$, we have
\begin{equation}\label{eq:ellipsoid-growth-inner-slice}
K\cap\widehat E\cap\{x\in\R^n:x_n=\eta\}
=
\left((\sigma_2-(\sigma_2-\sigma_1)|\eta|)B_\infty^{n-1}\cap\rho_1B_2^{n-1}\right)\times\{\eta\}.
\end{equation}

Fix, for definiteness,
\[
c_1:=\frac1{100},
\qquad
c_2:=\frac14;
\]
then $0<c_1<c_2<(1/3)^{1/2}$ and
$0.1(c_2/c_1)=2.5$.
Choose $d_1,d_2,n_0$ as in \Cref{lem:cube-ball-volume-ratio} for these fixed
constants.
We choose $\lambda_\ast>0$ small enough that
\[
\lambda_\ast<\min\left\{\log2,\frac3{d_1},0.8\right\}.
\]
Henceforth assume $n-2\ge n_0$ and
$1/n\le\lambda<\lambda_\ast$.
We set
\[
\sigma_1=1,
\qquad
\sigma_2=1+\frac13d_1\lambda,
\qquad
\rho_1=\frac1\alpha c_2(n-2)^{1/2},
\qquad
\rho_2=c_2(n-2)^{1/2}.
\]
Our choice gives $\alpha=e^\lambda<2<0.1(c_2/c_1)$ and
$\sigma_2<2$.
In particular,
\[
\frac{\rho_1}{\sigma_2}
> \frac{c_2}{2\alpha}(n-2)^{1/2}
> 5c_1(n-2)^{1/2}.
\]
Also, since $\sigma_2=O(1)$ whereas
$\rho_1=\Theta(n^{1/2})$, we have
$\sigma_2<\rho_1$ once
$n$ is larger than an absolute constant.

Let us now check that condition~\eqref{eq:ellipsoid-growth-concavity} holds.
We have that
\[
(\rho_2^2-\sigma_1^2)^{1/2}-(\rho_1^2-\sigma_2^2)^{1/2}=\Theta(\lambda\cdot(n-2)^{1/2})
\]
(differentiate
\[
\left(c_2^2(n-2)e^{-2\lambda}-\left(1+\frac13d_1\lambda\right)^2\right)^{1/2}
\]
wrt $\lambda$)
\[
\sigma_2-\sigma_1=\Theta(\lambda)
\]
and hence we conclude that the right-hand side of
Equation~\eqref{eq:ellipsoid-growth-concavity} is
\[
\frac{(\rho_2^2-\sigma_1^2)^{1/2}-(\rho_1^2-\sigma_2^2)^{1/2}}{\sigma_2-\sigma_1}
=\Theta((n-2)^{1/2}).
\]
Since the LHS is $\Theta(1/(n-2)^{1/2})$, we see that
Equation~\eqref{eq:ellipsoid-growth-concavity} holds for sufficiently
large
$n$, as desired.

Thus we may proceed using Equation~\eqref{eq:ellipsoid-growth-inner-slice}.
We have
\[
\vol_{n-1}(K\cap\widehat E\cap\{x\in\R^n:x_n=\eta\})
=\vol_{n-1}((\sigma_2-(\sigma_2-\sigma_1)|\eta|)B_\infty^{n-1}\cap\rho_1B_2^{n-1}).
\]
Let us compare the volume of the slices at $\eta=0$ and at
$\eta=\frac1{n\lambda}$ (this latter slice is guaranteed to intersect
$K$ since
$\frac1{n\lambda}\le1$ by assumption).
We write
\begin{align*}
&\log\frac{\vol_{n-1}(K\cap\widehat E\cap\{x\in\R^n:x_n=0\})}{\vol_{n-1}(K\cap\widehat E\cap\{x\in\R^n:x_n=\frac1{n\lambda}\})}\\
&=\log\frac{\vol_{n-1}(\sigma_2B_\infty^{n-1}\cap\rho_1B_2^{n-1})}{\vol_{n-1}((\sigma_2-\frac13d_1\cdot\frac1n)B_\infty^{n-1}\cap\rho_1B_2^{n-1})}\\
&=(n-1)\log\frac{\sigma_2}{\sigma_2-\frac13d_1\cdot\frac1n}
+\log\frac{\vol_{n-1}(B_\infty^{n-1}\cap\frac{\rho_1}{\sigma_2}B_2^{n-1})}{\vol_{n-1}(B_\infty^{n-1}\cap\frac{\rho_1}{\sigma_2-\frac13d_1\cdot\frac1n}B_2^{n-1})}.
\end{align*}

By the choices above, $\sigma_2<2$ and
$\rho_1/\sigma_2>5c_1(n-2)^{1/2}$, so
$\rho_1/\sigma_2>c_1(n-1)^{1/2}$ for sufficiently large
$n$.
Also,
\[
\frac{\rho_1}{\sigma_2-\frac13d_1\cdot\frac1n}
\le\frac{\rho_1}{\sigma_1}
\le\frac{\rho_2}{\sigma_1}
=c_2(n-2)^{1/2}
<c_2(n-1)^{1/2}.
\]
We conclude that the hypotheses of \Cref{lem:cube-ball-volume-ratio} hold, and
thus we may apply
the lemma, yielding
\[
\log\frac{\vol_{n-1}(B_\infty^{n-1}\cap\frac{\rho_1}{\sigma_2}B_2^{n-1})}{\vol_{n-1}(B_\infty^{n-1}\cap\frac{\rho_1}{\sigma_2-\frac13d_1\cdot\frac1n}B_2^{n-1})}
\ge d_2(n-1)\log\frac{\sigma_2-\frac13d_1\cdot\frac1n}{\sigma_2}.
\]
Thus
\[
\log\frac{\vol_{n-1}(K\cap\widehat E\cap\{x\in\R^n:x_n=0\})}{\vol_{n-1}(K\cap\widehat E\cap\{x\in\R^n:x_n=\frac1{n\lambda}\})}
\ge(1-d_2)(n-1)\log\frac{\sigma_2}{\sigma_2-\frac13d_1\cdot\frac1n}.
\]
Since
\[
\frac{\sigma_2-\frac13d_1\cdot\frac1n}{\sigma_2}
=1-\frac1{\sigma_2}\cdot\frac13d_1\cdot\frac1n
<1-\frac12\cdot\frac13d_1\cdot\frac1n
\]
we have
\[
\log\frac{\sigma_2-\frac13d_1\cdot\frac1n}{\sigma_2}< -\frac16d_1\cdot\frac1n
\]
and hence
\[
\log\frac{\vol_{n-1}(K\cap\widehat E\cap\{x\in\R^n:x_n=0\})}{\vol_{n-1}(K\cap\widehat E\cap\{x\in\R^n:x_n=\frac1{n\lambda}\})}
\ge(1-d_2)(n-1)\cdot\frac16d_1\cdot\frac1n.
\]
For $n\ge2$, the right-hand side is at least the absolute constant
\[
\kappa_0
:=\frac{(1-d_2)d_1}{12}
>0.
\]
Define the slice-volume profile
\[
h_{\mathrm{small}}(\eta):=\vol_{n-1}\!\left(K\cap\widehat E\cap\{x\in\R^n:x_n=\eta\}\right).
\]
By the Pr\'ekopa--Leindler theorem, $h_{\mathrm{small}}$ is logconcave, and it
is even because
$K\cap\widehat E$ is symmetric.
The preceding estimate says
\[
h_{\mathrm{small}}\!\left(\frac1{n\lambda}\right)\le e^{-\kappa_0}h_{\mathrm{small}}(0).
\]
Applying \Cref{lem:slice-profile}\textup{(i)} with $\delta=1/(n\lambda)$
gives
\begin{equation}\label{eq:ellipsoid-growth-inner-variance}
v^\top(A_{K\cap\widehat E})v
=\V{x\sim\Unif(K\cap\widehat E)}{x_n}
 \le C_0\frac1{(n\lambda)^2},
\end{equation}
where $C_0:=6/\min\{\kappa_0,1\}^2$ is an absolute constant.

Let us return to Equation~\eqref{eq:ellipsoid-growth-outer-slice}.
We would like to show that the ratio
\[
\frac{\vol_{n-1}(K\cap\alpha\widehat E\cap\{x\in\R^n:x_n=0\})}{\vol_{n-1}(K\cap\alpha\widehat E\cap\{x\in\R^n:x_n=1\})}
\]
is not too large; i.e. that the volume of $K_0$ is not too much larger than the
volume of
$K_1$.
First, note that
\[
\begin{aligned}
K_0\subseteq K_1\cup\bigcup_{i=1}^{n-1}\bigl\{x\in\R^{n-1}:{}&
|x_i|\in[\sigma_1,\sigma_2],\quad
(\norm{x}_2^2-x_i^2)^{1/2}\le g(|x_i|),\\
& |x_j|\le|x_i|\text{ for each }j=1,2,\ldots,n-1\bigr\}.
\end{aligned}
\]
Thus
\[
\vol_{n-1}(K_0)
\le\vol_{n-1}(K_1)+2(n-1)\int_{\sigma_1}^{\sigma_2}\vol_{n-2}(\sigma B_\infty^{n-2}\cap g(\sigma)B_2^{n-2})\,d\sigma
\]
and since
\[
(\sigma_1B_\infty^{n-2}\cap(\rho_2^2-\sigma_1^2)^{1/2}B_2^{n-2})\times[-\sigma_1,\sigma_1]\subseteq K_1,
\]
we have
\begin{align*}
\frac{\vol_{n-1}(K_0)}{\vol_{n-1}(K_1)}
&\le1+\frac{n-1}{\sigma_1}\int_{\sigma_1}^{\sigma_2}
\frac{\vol_{n-2}(\sigma B_\infty^{n-2}\cap g(\sigma)B_2^{n-2})}{\vol_{n-2}(\sigma_1B_\infty^{n-2}\cap g(\sigma_1)B_2^{n-2})}\,d\sigma\\
&=1+\frac{n-1}{\sigma_1}\int_{\sigma_1}^{\sigma_2}
\left(\frac\sigma{\sigma_1}\right)^{n-2}
\frac{\vol_{n-2}(B_\infty^{n-2}\cap\frac{g(\sigma)}\sigma B_2^{n-2})}{\vol_{n-2}(B_\infty^{n-2}\cap\frac{g(\sigma_1)}{\sigma_1}B_2^{n-2})}\,d\sigma.
\end{align*}
Now
\[
\frac{g(\sigma_1)}{\sigma_1}
=\frac{(\rho_2^2-\sigma_1^2)^{1/2}}{\sigma_1}
<\frac{\rho_2}{\sigma_1}
=c_2(n-2)^{1/2},
\]
whereas
\[
\frac{g(\sigma)}\sigma
\ge\frac{g(\sigma_2)}{\sigma_2}
=\frac{(\rho_1^2-\sigma_2^2)^{1/2}}{\sigma_2}
>((5c_1(n-2)^{1/2})^2-1)^{1/2}
\]
by the assumption on $\alpha$, so
$g(\sigma)/\sigma>c_1(n-2)^{1/2}$ for sufficiently large
$n$.
We conclude that the hypotheses of \Cref{lem:cube-ball-volume-ratio} hold, and
thus we may apply
this lemma, yielding
\[
\log\frac{\vol_{n-2}(B_\infty^{n-2}\cap\frac{g(\sigma)}\sigma B_2^{n-2})}{\vol_{n-2}(B_\infty^{n-2}\cap\frac{g(\sigma_1)}{\sigma_1}B_2^{n-2})}
\le d_1(n-2)\log\left(\frac{g(\sigma)}{g(\sigma_1)}\cdot\frac{\sigma_1}\sigma\right).
\]
Thus
\begin{equation}\label{eq:ellipsoid-growth-volume-ratio}
\frac{\vol_{n-1}(K_0)}{\vol_{n-1}(K_1)}
\le1+\frac{n-1}{\sigma_1}\int_{\sigma_1}^{\sigma_2}
\exp\left(d_1(n-2)\log\frac{g(\sigma)}{g(\sigma_1)}+(1-d_1)(n-2)\log\frac\sigma{\sigma_1}\right)\,d\sigma.
\end{equation}
Let us define the auxiliary variable
\[
\gamma:=
\frac{(\rho_2^2-\sigma_1^2)^{1/2}-(\rho_1^2-\sigma_2^2)^{1/2}}{(\rho_2^2-\sigma_1^2)^{1/2}}
\cdot\frac{\sigma_1}{\sigma_2-\sigma_1}.
\]
Note that if we write
\[
\frac\sigma{\sigma_1}=1+\theta
\]
(thus $\theta\in[0,(\sigma_2-\sigma_1)/\sigma_1]$), we will have
\[
\frac{g(\sigma)}{g(\sigma_1)}=1-\gamma\theta.
\]
Thus
\begin{align*}
&d_1(n-2)\log\frac{g(\sigma)}{g(\sigma_1)}+(1-d_1)(n-2)\log\frac\sigma{\sigma_1}\\
&=d_1(n-2)\log(1-\gamma\theta)+(1-d_1)(n-2)\log(1+\theta)\\
&\le d_1(n-2)(-\gamma\theta)+(1-d_1)(n-2)\theta\\
&=(n-2)\cdot\theta\cdot(1-d_1-d_1\gamma).
\end{align*}
If we can show $1-d_1-d_1\gamma\le-1$, we would be able to conclude from
Equation~\eqref{eq:ellipsoid-growth-volume-ratio} that
\begin{align*}
\frac{\vol_{n-1}(K_0)}{\vol_{n-1}(K_1)}
&\le1+\frac{n-1}{\sigma_1}\int_{\sigma_1}^{\sigma_2}\exp(-(n-2)\cdot\theta)\,d\sigma\\
&=1+(n-1)\int_0^{(\sigma_2-\sigma_1)/\sigma_1}\exp(-(n-2)\cdot\theta)\,d\theta\\
&\le1+(n-1)\int_0^{+\infty}\exp(-(n-2)\cdot\theta)\,d\theta\\
&=1+\frac{n-1}{n-2}\\
&\le 3
\end{align*}
for $n\ge3$.
Define
\[
h_{\mathrm{large}}(\eta):=\vol_{n-1}\!\left(K\cap\alpha\widehat E\cap\{x\in\R^n:x_n=\eta\}\right).
\]
The function $h_{\mathrm{large}}$ is even and logconcave by symmetry and the
Pr\'ekopa--Leindler theorem, and it is supported on
$[-1,1]$.
The estimate above and Equation~\eqref{eq:ellipsoid-growth-outer-slice} give
\[
h_{\mathrm{large}}(1)\ge \frac13 h_{\mathrm{large}}(0).
\]
Thus \Cref{lem:slice-profile}\textup{(ii)} yields
\begin{equation}\label{eq:ellipsoid-growth-outer-variance}
v^\top(A_{K\cap\alpha\widehat E})v
=\V{x\sim\Unif(K\cap\alpha\widehat E)}{x_n}
 \ge \frac19.
\end{equation}
Combining Equations~\eqref{eq:ellipsoid-growth-inner-variance}
and~\eqref{eq:ellipsoid-growth-outer-variance} gives the required comparison for
the intermediate
cylinder $\widehat E$.

It suffices to show that $1-d_1-d_1\gamma\le-1$.
We know that
\[
\frac{(\rho_1^2-\sigma_2^2)^{1/2}}{(\rho_2^2-\sigma_1^2)^{1/2}}
<\frac{\rho_1}{\rho_2}
=e^{-\lambda}.
\]
Hence
\[
\frac{(\rho_2^2-\sigma_1^2)^{1/2}-(\rho_1^2-\sigma_2^2)^{1/2}}{(\rho_2^2-\sigma_1^2)^{1/2}}
>1-e^{-\lambda}
>\frac23\lambda
\]
assuming $\lambda<0.8$.
Thus
\[
\gamma
= \frac{(\rho_2^2-\sigma_1^2)^{1/2}-(\rho_1^2-\sigma_2^2)^{1/2}}{(\rho_2^2-\sigma_1^2)^{1/2}} \cdot\frac{\sigma_1}{\sigma_2-\sigma_1}
>\frac23\lambda\cdot\frac1{\frac13d_1\lambda}
=\frac2{d_1}
>\frac{2-d_1}{d_1}
\]
which implies $1-d_1-d_1\gamma\le-1$, as desired.

It remains to replace the cylinder by a genuine ellipsoid.
For $L>1/\alpha$, define
\[
\beta_L:=\left(1-\frac1{\alpha^2L^2}\right)^{-1/2}
\]
and
\[
E_L:=\left\{(x',x_n)\in\R^{n-1}\times\R:
\frac{\norm{x'}_2^2}{\beta_L^2\rho_1^2}+\frac{x_n^2}{L^2}\le1\right\}.
\]
Then $E_L$ is a symmetric ellipsoid centered at the origin.
Moreover, $K\subseteq\alpha\widehat E$ implies
$\norm{x'}_2\le\rho_2$ for every
$(x',x_n)\in K$, while the definition of
$K$ gives
$|x_n|\le1$.
Hence, for every $x\in K$,
\begin{align*}
\frac{\norm{x'}_2^2}{\alpha^2\beta_L^2\rho_1^2}+\frac{x_n^2}{\alpha^2L^2}
&\le \frac{\rho_2^2}{\beta_L^2\rho_2^2}+\frac1{\alpha^2L^2}\\
&=\beta_L^{-2}+\frac1{\alpha^2L^2}=1.
\end{align*}
Thus
\[
K\subseteq\alpha E_L.
\]

As $L\to\infty$, we have
$\beta_L\to1$, and the indicators of
$K\cap E_L$ and
$K\cap\alpha E_L$ converge almost everywhere to the indicators of
$K\cap\widehat E$ and
$K\cap\alpha\widehat E$, respectively.
Since $K$ is compact, dominated convergence applies both to the volumes and to
the integrals of
$x_n^2$.
All of these sets are symmetric, so their means are at the origin.
Consequently,
\[
v^\top A_{K\cap E_L}v\longrightarrow v^\top A_{K\cap\widehat E}v,
\qquad
v^\top A_{K\cap\alpha E_L}v\longrightarrow v^\top A_{K\cap\alpha\widehat E}v.
\]
Choose $L$ sufficiently large that, using
Equations~\eqref{eq:ellipsoid-growth-inner-variance}
and~\eqref{eq:ellipsoid-growth-outer-variance},
\[
v^\top A_{K\cap E_L}v\le \frac{2C_0}{(n\lambda)^2},
\qquad
v^\top A_{K\cap\alpha E_L}v\ge\frac1{18}.
\]
Then
\[
v^\top A_{K\cap E_L}v
\le 36C_0(n\lambda)^{-2}\,v^\top A_{K\cap\alpha E_L}v.
\]
Taking $E:=E_L$,
$C_\ast:=36C_0$, and
$n_\ast:=n_0+2$ proves the theorem, including the additional property
$K\subseteq\alpha E$.
\end{proof}

\subsection{Fourier inversion theorem for the density derivative}
\label{subsec:fourier-density-derivative}

\begin{proof}[Proof of \Cref{lem:density-log-derivative}]
Define the random variable
\[
Y_s:=(X_s)_1+(X_s)_2+\cdots+(X_s)_n,
\]
i.e. the sum of $n$ independent copies of
$X_s$.
Since $X_s$ is an exponential change of measure from
$U^2$, we have
\[
f_{Y_s}(t)=f_V(t)\cdot\exp(st)/G_{s,0}^n
\]
and hence
\[
\frac{d}{dt}\log f_{Y_s}(t)=s+\frac{d}{dt}\log f_V(t)
\]
for all $t$.
Thus, it suffices to show
\[
\left|\left.\frac{d}{dt}\right|_{t=na_s}\log f_{Y_s}(t)\right|
\le C_{\mathrm{deriv}}\frac{\max(1,|s|)}{n}
\]
for sufficiently large $n$, or equivalently,
\begin{equation}\label{eq:fourier-density-derivative-target}
\left|\frac{f'_{Y_s}(na_s)}{f_{Y_s}(na_s)}\right|
\le C_{\mathrm{deriv}}\frac{\max(1,|s|)}{n}.
\end{equation}
Let us define
\[
v_s
:=G_{s,2}/G_{s,0}-a_s^2
=\E{X_s\sim g_s/G_{s,0}}{X_s^2}-(\E{X_s\sim g_s/G_{s,0}}{X_s})^2
=\V{X_s\sim g_s/G_{s,0}}{X_s}.
\]
Define the normalized random variable
\[
\widetilde X_s:=\frac{X_s-a_s}{v_s^{1/2}},
\]
which has mean $0$ and variance
$1$.
Similarly define
\[
\widetilde Y_s
:=(\widetilde X_s)_1+(\widetilde X_s)_2+\cdots+(\widetilde X_s)_n
=\frac{Y_s-na_s}{v_s^{1/2}},
\]
which is the normalized form of $Y_s$.
Now observe
\[
\frac{f'_{Y_s}(na_s)}{f_{Y_s}(na_s)}
=\frac1{v_s^{1/2}}\cdot\frac{f'_{\widetilde Y_s}(0)}{f_{\widetilde Y_s}(0)},
\]
so to prove Equation~\eqref{eq:fourier-density-derivative-target}, it suffices
to give a lower bound on $v_s$ and an upper bound on
\[
\left|\frac{f'_{\widetilde Y_s}(0)}{f_{\widetilde Y_s}(0)}\right|.
\]
Observe that $f_{\widetilde Y_s}$ can be viewed as the convolution of
$n$ copies of
$f_{\widetilde X_s}$.
Now, we will use Fourier analysis, to transform the convolution into a pointwise
product.

Recall the characteristic function
\begin{equation}\label{eq:normalized-characteristic-function}
\phi_{\widetilde X_s}(\xi):=\E{X_s\sim g_s/G_{s,0}}{\exp(i\xi\widetilde X_s)}
=\int_\R f_{\widetilde X_s}(t)\exp(i\xi t)\,dt,
\end{equation}
so that we have
\begin{equation}\label{eq:normalized-sum-characteristic-function}
\phi_{\widetilde Y_s}(\xi)=\phi_{\widetilde X_s}(\xi)^n.
\end{equation}
We will use Fourier inversion in the following differentiated form.
Once
\[
\phi_{\widetilde Y_s}\in L^1(\R)
\qquad\text{and}\qquad
\xi\phi_{\widetilde Y_s}(\xi)\in L^1(\R)
\]
have been verified, Fourier inversion and differentiation under the integral
give
\[
f_{\widetilde Y_s}(t)=\frac1{2\pi}\int_\R\phi_{\widetilde Y_s}(\xi)e^{-i\xi t}\,d\xi
\]
and
\begin{equation}\label{eq:fourier-density-derivative-ratio}
\frac{f'_{\widetilde Y_s}(0)}{f_{\widetilde Y_s}(0)}
=
\frac{\int_\R\phi_{\widetilde Y_s}(\xi)\cdot(-i\xi)\,d\xi}
{\int_\R\phi_{\widetilde Y_s}(\xi)\,d\xi}.
\end{equation}
The required integrability is established below, before
Equation~\eqref{eq:fourier-density-derivative-ratio} is used.

We will show that for $\xi$ close to
$0$,
$\phi_{\widetilde Y_s}(\xi)$ is close to a Gaussian function.

From Equation~\eqref{eq:normalized-characteristic-function}, we have
\begin{align*}
\phi_{\widetilde X_s}(\xi)
&=\int_\R f_{\widetilde X_s}(t)(\cos(\xi t)+i\sin(\xi t))\,dt\\
&=\int_\R f_{\widetilde X_s}(t)\left(1-\frac12(\xi t)^2
+\left(\cos(\xi t)-\left(1-\frac12(\xi t)^2\right)\right)
+i\xi t+i(\sin(\xi t)-\xi t)\right)\,dt\\
&=1-\frac12\xi^2
+\int_\R f_{\widetilde X_s}(t)\left(\cos(\xi t)-\left(1-\frac12(\xi t)^2\right)\right)\,dt
+i\int_\R f_{\widetilde X_s}(t)(\sin(\xi t)-\xi t)\,dt,
\end{align*}
using the fact that $\E{X_s\sim g_s/G_{s,0}}{\widetilde X_s}=0$ and
$\E{X_s\sim g_s/G_{s,0}}{\widetilde X_s^2}=1$.
Now, setting
\[
E_{\mathrm{re}}(\xi):=\int_\R f_{\widetilde X_s}(t)\left(\cos(\xi t)-\left(1-\frac12(\xi t)^2\right)\right)\,dt
\]
\[
E_{\mathrm{im}}(\xi):=\int_\R f_{\widetilde X_s}(t)(\sin(\xi t)-\xi t)\,dt
\]
we have
\begin{equation}\label{eq:characteristic-function-expansion}
\phi_{\widetilde X_s}(\xi)=1-\frac12\xi^2+E_{\mathrm{re}}(\xi)+iE_{\mathrm{im}}(\xi).
\end{equation}
Now, recall the inequalities
\[
0
\le\cos u-\left(1-\frac12u^2\right)
\le\frac1{24}u^4
\]
\[
|\sin u-u|\le\frac16|u|^3,
\]
valid for all $u\in\R$.
Setting $u=\xi t$, we obtain
\[
0
\le E_{\mathrm{re}}(\xi)
\le\frac1{24}\xi^4\E{X_s\sim g_s/G_{s,0}}{\widetilde X_s^4}
\]
\[
|E_{\mathrm{im}}(\xi)|\le\frac16|\xi|^3\E{X_s\sim g_s/G_{s,0}}{|\widetilde X_s|^3}.
\]
Now, from \Cref{lem:tilted-moment-bounds}, there exist absolute constants
$M_3,M_4>0$ such that
for all
$s\in\R$, we have
$\E{X_s\sim g_s/G_{s,0}}{|\widetilde X_s|^3}\le M_3$ and
$\E{X_s\sim g_s/G_{s,0}}{\widetilde X_s^4}\le M_4$.
Thus we may write
\[
0
\le E_{\mathrm{re}}(\xi)
\le\frac1{24}M_4\xi^4
\]
\[
|E_{\mathrm{im}}(\xi)|\le\frac16M_3|\xi|^3.
\]
It follows that there exists an absolute constant $C_{\mathrm{in}}>0$ such that
for all
$\xi\in[-C_{\mathrm{in}},C_{\mathrm{in}}]$, we have
\[
\left|-\frac12\xi^2+E_{\mathrm{re}}(\xi)+iE_{\mathrm{im}}(\xi)\right|
\le\xi^2
\le\frac12.
\]
Thus, for such $\xi$, we may write
\begin{align*}
\log\left(1-\frac12\xi^2+E_{\mathrm{re}}(\xi)+iE_{\mathrm{im}}(\xi)\right)
&=-\frac12\xi^2+E_{\mathrm{re}}(\xi)+iE_{\mathrm{im}}(\xi)\\
&\quad+\sum_{k=2}^\infty\frac{(-1)^{k+1}}{k}
\left(-\frac12\xi^2+E_{\mathrm{re}}(\xi)+iE_{\mathrm{im}}(\xi)\right)^k,
\end{align*}
where the series is absolutely convergent.
We get the bound
\[
\log\left(1-\frac12\xi^2+E_{\mathrm{re}}(\xi)+iE_{\mathrm{im}}(\xi)\right)
=-\frac12\xi^2+E_{\mathrm{re}}(\xi)+iE_{\mathrm{im}}(\xi)+O(\xi^4),
\]
where $O(\xi^4)$ denotes a complex number that is bounded by some absolute
constant times
$\xi^4$.
From Equation~\eqref{eq:characteristic-function-expansion}, we conclude that for
$\xi\in[-C_{\mathrm{in}},C_{\mathrm{in}}]$, we have
\[
\phi_{\widetilde X_s}(\xi)=\exp\left(-\frac12\xi^2+E_{\mathrm{re}}(\xi)+iE_{\mathrm{im}}(\xi)+O(\xi^4)\right);
\]
i.e. there exists a choice for $O(\xi^4)$ that makes the previous equation hold
for all
$\xi\in[-C_{\mathrm{in}},C_{\mathrm{in}}]$.
Plugging into Equation~\eqref{eq:normalized-sum-characteristic-function} yields
\begin{equation}\label{eq:powered-characteristic-function}
\begin{aligned}
\phi_{\widetilde Y_s}(\xi)
&=\phi_{\widetilde X_s}(\xi)^n
=\exp\left(-\frac n2\xi^2+nE_{\mathrm{re}}(\xi)+inE_{\mathrm{im}}(\xi)+n\cdot O(\xi^4)\right)\\
&=\exp\left(-\frac n2\xi^2\right)
\exp\left(nE_{\mathrm{re}}(\xi)+inE_{\mathrm{im}}(\xi)+n\cdot O(\xi^4)\right)
\end{aligned}
\end{equation}
for all $\xi\in[-C_{\mathrm{in}},C_{\mathrm{in}}]$.

There exists an absolute constant $C_1>0$ such that for all
$\xi\in[-C_1n^{-1/3},C_1n^{-1/3}]$, we have
\[
|nE_{\mathrm{re}}(\xi)+inE_{\mathrm{im}}(\xi)+n\cdot O(\xi^4)|\le O(1),
\]
which implies
\begin{align*}
&\exp(nE_{\mathrm{re}}(\xi)+inE_{\mathrm{im}}(\xi)+n\cdot O(\xi^4))\\
&=1+nE_{\mathrm{re}}(\xi)+inE_{\mathrm{im}}(\xi)+n\cdot O(\xi^4)
+O(1)\cdot|nE_{\mathrm{re}}(\xi)+inE_{\mathrm{im}}(\xi)+n\cdot O(\xi^4)|^2\\
&=1+nE_{\mathrm{re}}(\xi)+inE_{\mathrm{im}}(\xi)+n\cdot O(\xi^4)+n^2\cdot O(\xi^6)
\end{align*}
for all $\xi\in[-C_1n^{-1/3},C_1n^{-1/3}]$.
As long as $n$ is larger than an absolute constant, we will have
$C_1n^{-1/3}\le C_{\mathrm{in}}$, and hence we may plug into
Equation~\eqref{eq:powered-characteristic-function}, which yields
\[
\phi_{\widetilde Y_s}(\xi)=\exp\left(-\frac n2\xi^2\right)
\left(1+nE_{\mathrm{re}}(\xi)+inE_{\mathrm{im}}(\xi)+n\cdot O(\xi^4)+n^2\cdot O(\xi^6)\right)
\]
for all $\xi\in[-C_1n^{-1/3},C_1n^{-1/3}]$.
In what follows, we assume $n$ is larger than said absolute constant.

Set
\[
I_n:=[-C_1n^{-1/3},C_1n^{-1/3}].
\]
We may conclude
\begin{align*}
\int_{I_n}\phi_{\widetilde Y_s}(\xi)\,d\xi
&=\int_{I_n}
\exp\left(-\frac n2\xi^2\right)
\left(1+nE_{\mathrm{re}}(\xi)+n\cdot O(\xi^4)+n^2\cdot O(\xi^6)\right)\,d\xi\\
&=\int_{I_n}
\exp\left(-\frac n2\xi^2\right)
\left(1+n\cdot O(\xi^4)+n^2\cdot O(\xi^6)\right)\,d\xi\\
&=\frac1{n^{1/2}}\left((2\pi)^{1/2}+O\left(n^{-1/6}\exp\left(-\frac{C_1^2}{2}n^{1/3}\right)\right)+O\left(\frac1n\right)\right)\\
&=\frac1{n^{1/2}}\left((2\pi)^{1/2}+O\left(\frac1n\right)\right)
\end{align*}
\begin{equation}\label{eq:fourier-numerator-central}
\begin{aligned}
\int_{I_n}\phi_{\widetilde Y_s}(\xi)\cdot(-i\xi)\,d\xi
&=\int_{I_n}
e^{-n\xi^2/2}\cdot(-i\xi)
\left(inE_{\mathrm{im}}(\xi)+n\cdot O(\xi^4)+n^2\cdot O(\xi^6)\right)\,d\xi\\
&=\int_{I_n}
e^{-n\xi^2/2}
\left(
\begin{aligned}
&n\cdot O(\xi^4)+n\cdot O(|\xi|^5)\\[-0.3ex]
&\qquad+n^2\cdot O(|\xi|^7)
\end{aligned}
\right)\,d\xi\\
&=\int_{I_n}
e^{-n\xi^2/2}(n\cdot O(\xi^4))\,d\xi\\
&=\frac1{n^{1/2}}\cdot O\left(\frac1n\right)
\end{aligned}
\end{equation}
using the fact that $E_{\mathrm{re}}(-\xi)=E_{\mathrm{re}}(\xi)$ and
$E_{\mathrm{im}}(-\xi)=-E_{\mathrm{im}}(\xi)$, as well as
\Cref{lem:aux-gaussian-tail}.

Let us return to~\eqref{eq:powered-characteristic-function}.
There exists an absolute constant $C_2>0$ such that
$[-C_2,C_2]\subseteq[-C_{\mathrm{in}},C_{\mathrm{in}}]$ and such that for all
$\xi\in[-C_2,C_2]$, we have
\[
|nE_{\mathrm{re}}(\xi)+inE_{\mathrm{im}}(\xi)+n\cdot O(\xi^4)|\le\frac n4\xi^2.
\]
From Equation~\eqref{eq:powered-characteristic-function}, we have
\[
|\phi_{\widetilde Y_s}(\xi)|
\le\exp\left(-\frac n2\xi^2+\frac n4\xi^2\right)
=\exp\left(-\frac n4\xi^2\right)
\]
for all $\xi\in[-C_2,C_2]$.
Hence, as long as $n$ is large enough so that
$C_1n^{-1/3}\le C_2$, we have
\begin{equation}\label{eq:fourier-middle-tail}
\begin{aligned}
\left|\int_{[-C_2,C_2]\setminus[-C_1n^{-1/3},C_1n^{-1/3}]}
\phi_{\widetilde Y_s}(\xi)\,d\xi\right|
&\le2C_2\cdot\exp\left(-\frac{C_1^2}{4}n^{1/3}\right)\\
\left|\int_{[-C_2,C_2]\setminus[-C_1n^{-1/3},C_1n^{-1/3}]}
\phi_{\widetilde Y_s}(\xi)\cdot(-i\xi)\,d\xi\right|
&\le2C_2^2\cdot\exp\left(-\frac{C_1^2}{4}n^{1/3}\right).
\end{aligned}
\end{equation}
In what follows, we assume $n$ is larger than said absolute constant.

Suppose $s\le0$.
Then $f_{X_s}(t)$ is a decreasing function of
$t$ on its support, and hence
$f_{\widetilde X_s}(t)$ is also a decreasing function of
$t$ on its support.
Recalling that the support of $\widetilde X_s$ is
\[
\left[-\frac{a_s}{v_s^{1/2}},\frac{1-a_s}{v_s^{1/2}}\right],
\]
we may write
\begin{align*}
\phi_{\widetilde X_s}(\xi)
&=\int_\R f_{\widetilde X_s}(t)\exp(i\xi t)\,dt\\
&=\int_{[-a_s/v_s^{1/2},(1-a_s)/v_s^{1/2}]}
\int_0^{f_{\widetilde X_s}(t)}\exp(i\xi t)\,dw\,dt\\
&=\int_0^{+\infty}
\int_{\{t\in[-a_s/v_s^{1/2},(1-a_s)/v_s^{1/2}]:f_{\widetilde X_s}(t)\ge w\}}
\exp(i\xi t)\,dt\,dw.
\end{align*}
Since $f_{\widetilde X_s}$ is a decreasing function, we see that for each
$w>0$, the set
\[
\left\{t\in\left[-\frac{a_s}{v_s^{1/2}},\frac{1-a_s}{v_s^{1/2}}\right]:f_{\widetilde X_s}(t)\ge w\right\}
\]
is an interval.
Hence we may apply \Cref{lem:aux-oscillatory-integral} to conclude
\begin{align*}
|\phi_{\widetilde X_s}(\xi)|
&\le\int_0^{+\infty}\min\left(
\vol_1\left(\left\{t\in\left[-\frac{a_s}{v_s^{1/2}},\frac{1-a_s}{v_s^{1/2}}\right]:f_{\widetilde X_s}(t)\ge w\right\}\right),
\frac2{|\xi|}\right)\,dw\\
&=\int_{-a_s/v_s^{1/2}}^{-a_s/v_s^{1/2}+2/|\xi|}f_{\widetilde X_s}(t)\,dt.
\end{align*}
If $s>0$, then
$f_{X_s}(t)$ is decreasing and then increasing (i.e. there exists a real number
$t_s^*\in(-a_s/v_s^{1/2},(1-a_s)/v_s^{1/2}]$ such that
$f_{X_s}(t)$ is decreasing for
$t\in(-a_s/v_s^{1/2},t_s^*)$ and increasing for
$t\in(t_s^*,(1-a_s)/v_s^{1/2}))$.
By splitting the integral into two parts at the point $t=t_s^*$, we may use
similar reasoning to conclude
\begin{align*}
|\phi_{\widetilde X_s}(\xi)|
&\le\int_{-a_s/v_s^{1/2}}^{-a_s/v_s^{1/2}+2/|\xi|}f_{\widetilde X_s}(t)\,dt
+\int_{\max((1-a_s)/v_s^{1/2}-2/|\xi|,t_s^*)}^{(1-a_s)/v_s^{1/2}}f_{\widetilde X_s}(t)\,dt\\
&\le\int_{-a_s/v_s^{1/2}}^{-a_s/v_s^{1/2}+2/|\xi|}f_{\widetilde X_s}(t)\,dt
+\frac2{|\xi|}f_{\widetilde X_s}\left(\frac{1-a_s}{v_s^{1/2}}\right).
\end{align*}
Using \Cref{lem:tilted-density-bound-lower,lem:tilted-density-bound-upper},
which give respectively the inverse-square-root
bound near the left endpoint and the uniform bound at the right endpoint, we see
that there exist absolute constants $C_{\mathrm{bound}-},C_{\mathrm{bound}+}>0$
such that
\[
|\phi_{\widetilde X_s}(\xi)|
\le C_{\mathrm{bound}-}\cdot2\left(\frac2{|\xi|}\right)^{1/2}
+C_{\mathrm{bound}+}\cdot\frac2{|\xi|}
\]
for all $s\in\R$ and all
$\xi\in\R\setminus[-C_2,C_2]$.
Thus, assuming $n\ge5$, there exists an absolute constant
$C_3>0$ such that
\begin{equation}\label{eq:fourier-outer-tail}
\begin{aligned}
\int_{\R\setminus[-C_3,C_3]}\left|\phi_{\widetilde Y_s}(\xi)\right|\,d\xi&\le2^{-n}\\
\int_{\R\setminus[-C_3,C_3]}|\xi|\left|\phi_{\widetilde Y_s}(\xi)\right|\,d\xi&\le2^{-n}.
\end{aligned}
\end{equation}
In particular, since both integrands are bounded on the compact interval
$[-C_3,C_3]$, we have
\[
\phi_{\widetilde Y_s}\in L^1(\R),
\qquad
\xi\phi_{\widetilde Y_s}(\xi)\in L^1(\R),
\]
so the Fourier inversion identities culminating in
Equation~\eqref{eq:fourier-density-derivative-ratio} are justified.

Now we turn our attention to the integral
\[
\int_{[-C_3,C_3]\setminus[-C_2,C_2]}\phi_{\widetilde Y_s}(\xi)\,d\xi
\]
(the integral evaluates to $0$ if
$C_3\le C_2$).
Let $\xi\in\R\setminus[-C_2,C_2]$ be arbitrary; we will show that there exists
an absolute constant
$0<C_{\mathrm{defect}}<1$ such that
$|\phi_{\widetilde X_s}(\xi)|\le C_{\mathrm{defect}}$.
Indeed, define the region
\[
R_{\widetilde X_s}:=\{(t,h)\in\R\times\R:0\le h\le f_{\widetilde X_s}(t)\},
\]
so that
\[
\phi_{\widetilde X_s}(\xi)=\iint_{R_{\widetilde X_s}}\exp(i\xi t)\,dt\,dh.
\]
\Cref{cor:tilted-common-constants} implies that there exist absolute constants
$L_-,L_+,L_{\mathrm{vert}}$ with
$L_-<L_+$ and
$0<L_{\mathrm{vert}}$, such that
\[
[L_-,L_+]\times[0,L_{\mathrm{vert}}]\subseteq R_{\widetilde X_s}.
\]
Thus we may write
\begin{align*}
\phi_{\widetilde X_s}(\xi)
&=\iint_{[L_-,L_+]\times[0,L_{\mathrm{vert}}]}\exp(i\xi t)\,dt\,dh
+\iint_{R_{\widetilde X_s}\setminus([L_-,L_+]\times[0,L_{\mathrm{vert}}])}\exp(i\xi t)\,dt\,dh\\
&=L_{\mathrm{vert}}\int_{L_-}^{L_+}\exp(i\xi t)\,dt
+\iint_{R_{\widetilde X_s}\setminus([L_-,L_+]\times[0,L_{\mathrm{vert}}])}\exp(i\xi t)\,dt\,dh,
\end{align*}
and thus
\begin{align*}
|\phi_{\widetilde X_s}(\xi)|
&\le L_{\mathrm{vert}}\left|\int_{L_-}^{L_+}\exp(i\xi t)\,dt\right|
+\iint_{R_{\widetilde X_s}\setminus([L_-,L_+]\times[0,L_{\mathrm{vert}}])}1\,dt\,dh\\
&=1+L_{\mathrm{vert}}\left(\left|\int_{L_-}^{L_+}\exp(i\xi t)\,dt\right|-(L_+-L_-)\right)\\
&\le1-L_{\mathrm{vert}}\cdot(L_+-L_-)\cdot
\min\left(0.1\left(\xi\cdot\frac{L_+-L_-}{2}\right)^2,0.4\right)
\end{align*}
by \Cref{lem:aux-oscillatory-integral-ii}.
Since $|\xi|\ge C_2$, we may take
\[
C_{\mathrm{defect}}=1-L_{\mathrm{vert}}\cdot(L_+-L_-)\cdot
\min\left(0.1\left(C_2\cdot\frac{L_+-L_-}{2}\right)^2,0.4\right),
\]
as desired.

Now that we have the bound $|\phi_{\widetilde X_s}(\xi)|\le C_{\mathrm{defect}}$
for all
$\xi\in\R\setminus[-C_2,C_2]$, we conclude
\begin{equation}\label{eq:fourier-defect-tail}
\begin{aligned}
\left|\int_{[-C_3,C_3]\setminus[-C_2,C_2]}\phi_{\widetilde Y_s}(\xi)\,d\xi\right|
&\le\max(0,2(C_3-C_2))\cdot C_{\mathrm{defect}}^n\\
\left|\int_{[-C_3,C_3]\setminus[-C_2,C_2]}\phi_{\widetilde Y_s}(\xi)\cdot(-i\xi)\,d\xi\right|
&\le\max(0,C_3^2-C_2^2)\cdot C_{\mathrm{defect}}^n.
\end{aligned}
\end{equation}
Finally, we assemble the bounds~\eqref{eq:fourier-numerator-central},
\eqref{eq:fourier-middle-tail}, \eqref{eq:fourier-defect-tail},
and~\eqref{eq:fourier-outer-tail} to conclude
\[
\int_\R\phi_{\widetilde Y_s}(\xi)\,d\xi
=\frac1{n^{1/2}}\left((2\pi)^{1/2}+O\left(\frac1n\right)\right)
\]
\[
\int_\R\phi_{\widetilde Y_s}(\xi)\cdot(-i\xi)\,d\xi
=\frac1{n^{1/2}}\cdot O\left(\frac1n\right),
\]
and plugging into Equation~\eqref{eq:fourier-density-derivative-ratio} yields
\[
\frac{f'_{\widetilde Y_s}(0)}{f_{\widetilde Y_s}(0)}=O\left(\frac1n\right).
\]
Now \Cref{cor:tilted-variance-scale} implies
$(1/v_s)^{1/2}=\Theta(\max(1,|s|))$; thus we may
multiply and obtain
\[
\frac1{v_s^{1/2}}\cdot\frac{f'_{\widetilde Y_s}(0)}{f_{\widetilde Y_s}(0)}
=O\left(\frac{\max(1,|s|)}{n}\right),
\]
which is Equation~\eqref{eq:fourier-density-derivative-target}, as desired.
\end{proof}

\subsection{Integrated density inequalities}
\label{subsec:integrated-density-bound}

\begin{proof}[Proof of \Cref{lem:integrated-density}]
The heuristic behind the proof is the following.
\Cref{lem:density-log-derivative} suggests that
\[
\left.\frac{d}{dt}\right|_{t=na_s}\log f_V(t)\approx-s,
\]
and if we further assume $\frac{d}{dt}\log f_V(t)$ is approximately constant for
$t$ close to
$na_s$, then we can combine the above two approximations, yielding
\[
\frac{d}{dt}\log f_V(t)\approx-s
\]
for $t$ close to
$na_s$, and thus
\[
f_V(t')\approx f_V(na_s)\cdot\exp(-s(t'-na_s))
\]
for $t'$ close to
$na_s$, and thus
\[
\frac{f_V(na_s)}{\int_0^{na_s}f_V(t')\,dt'}
\approx\frac1{\int_{-\infty}^0\exp(-st')\,dt'}
=-s.
\]
The remainder of this proof will involve quantifying the error in the two
approximations mentioned previously.

Let us first prove the lower bound.
Let $s\le s_0<0$ be fixed, and assume
$n\ge\max(n_{\mathrm{deriv}},C_{\mathrm{deriv}})$ (where the constants
$n_{\mathrm{deriv}},C_{\mathrm{deriv}}$ are from
\Cref{lem:density-log-derivative}).
For all $s'\le s$, we have
\begin{align*}
\left.\frac{d}{dt}\right|_{t=na_{s'}}\log f_V(t)
&\ge -s'-C_{\mathrm{deriv}}\cdot\frac{\max(1,|s'|)}n \qquad\text{(from \Cref{lem:density-log-derivative})}\\
&\ge -s-C_{\mathrm{deriv}}\cdot\frac{\max(1,|s|)}n \qquad\text{(since $n\ge C_{\mathrm{deriv}}$)}\\
&\ge -s-C_{\mathrm{deriv}}\max(1/|s_0|,1)\cdot\frac{|s|}{n}\\
&=-s-C_{\mathrm{int-lo}}\cdot\frac{|s|}{n}
\end{align*}
where we have defined
$C_{\mathrm{int-lo}}=C_{\mathrm{int-lo}}(s_0):=C_{\mathrm{deriv}}\max(1/|s_0|,1)$.
If $-s-C_{\mathrm{int-lo}}|s|/n$ is nonpositive then we are done, so henceforth
assume that
$-s-C_{\mathrm{int-lo}}|s|/n>0$.
Recalling that $a_s$ is a continuous function of
$s$ (from \Cref{lem:tilted-continuity}), and that
$\lim_{s\searrow-\infty}a_s=0$ (from \Cref{lem:tilted-limits}), we conclude
\[
\frac{d}{dt}\log f_V(t)\ge -s-C_{\mathrm{int-lo}}\cdot\frac{|s|}{n}
\]
holds for all $0<t\le na_s$.
Integrating yields
\[
f_V(t')\le f_V(na_s)\cdot\exp\left(\left(-s-C_{\mathrm{int-lo}}\cdot\frac{|s|}{n}\right)\cdot(t'-na_s)\right)
\]
for all $0<t'\le na_s$, and thus
\begin{align*}
\frac{f_V(na_s)}{\int_0^{na_s}f_V(t')\,dt'}
&\ge\frac1{\int_0^{na_s}\exp\left(\left(-s-C_{\mathrm{int-lo}}\cdot\frac{|s|}{n}\right)\cdot(t'-na_s)\right)\,dt'}\\
&=\frac1{\int_{-na_s}^0\exp\left(\left(-s-C_{\mathrm{int-lo}}\cdot\frac{|s|}{n}\right)\cdot t'\right)\,dt'}\\
&\ge\frac1{\int_{-\infty}^0\exp\left(\left(-s-C_{\mathrm{int-lo}}\cdot\frac{|s|}{n}\right)\cdot t'\right)\,dt'}\\
&=-s-C_{\mathrm{int-lo}}\cdot\frac{|s|}{n},
\end{align*}
which proves the lower bound (indeed, for the lower bound, we do not need
$n_{\mathrm{int}}$ to depend on
$s_0$).

Now we prove the upper bound.
Let $s\le s_0<0$ be fixed.
First assume $n\ge n_{\mathrm{deriv}}$.
Then we have, for all $s'\le s$,
\begin{align*}
f_V(na_{s'})
&=f_V(na_s)\cdot\exp\left(-\int_{na_{s'}}^{na_s}\frac{d}{dt}\log f_V(t)\,dt\right)\nonumber\\
&\ge f_V(na_s)\cdot
\exp\left(-\int_{na_{s'}}^{na_s}
\begin{aligned}[t]
\Bigl(&-a^{-1}(t''/n)\\[-0.3ex]
&+C_{\mathrm{deriv}}\cdot\frac{\max(1,|a^{-1}(t''/n)|)}{n}\Bigr)
\end{aligned}
\,dt''\right)
\qquad\text{(from \Cref{lem:density-log-derivative})}
\end{align*}
where we write $a^{-1}:(0,1)\to\R$ to be the inverse function of
$s\mapsto a_s$; i.e.
$a^{-1}(a_s)=s$ for all
$s$.
Since $a_s$ is a strictly increasing function of
$s$, we have that
$a^{-1}$ is also a strictly increasing function, and thus
\[
s'
\le a^{-1}(t''/n)
\le s
<0
\]
for all $t''\in[na_{s'},na_s]$.
Thus
\begin{equation}\label{eq:integrated-density-local-lower}
\begin{aligned}
f_V(na_{s'})
&\ge f_V(na_s)\cdot\exp\left(-\int_{na_{s'}}^{na_s}\left(-a^{-1}(t''/n)+C_{\mathrm{deriv}}\cdot\frac{\max(1,|s'|)}n\right)\,dt''\right)\\
&=f_V(na_s)\cdot\exp(-s(na_{s'}-na_s))\\
&\quad\cdot\exp\left(\int_{na_{s'}}^{na_s}\left((a^{-1}(t''/n)-a^{-1}(na_s/n))-C_{\mathrm{deriv}}\cdot\frac{\max(1,|s'|)}n\right)\,dt''\right).
\end{aligned}
\end{equation}
We would want to say
\[
a^{-1}(t''/n)-a^{-1}(na_s/n)
\approx(t''-na_s)\cdot\left.\frac{d}{dt}\right|_{t=na_s}a^{-1}(t/n)
=(t''-na_s)/n\cdot\left(\frac{d}{ds}a_s\right)^{-1}
\]
for all $t''\in[na_{s'},na_s]$.

First, differentiating the quotient defining $a_s$ gives
\[
\begin{aligned}
\frac{d}{ds}a_s
&=\frac{G_{s,1}'G_{s,0}-G_{s,1}G_{s,0}'}{G_{s,0}^2}\\
&=\frac{G_{s,2}G_{s,0}-G_{s,1}^2}{G_{s,0}^2}\\
&=\frac{G_{s,2}}{G_{s,0}}-\left(\frac{G_{s,1}}{G_{s,0}}\right)^2
=v_s.
\end{aligned}
\]
Here $G_{s,p}'=G_{s,p+1}$ by differentiation under the integral sign.
By \Cref{cor:tilted-variance-scale}, we have
\[
\frac{d}{ds}a_s\ge\frac{C_{\mathrm{var}-}^2}{\max(1,|s|^2)}
\]
for some absolute constant $C_{\mathrm{var}-}>0$.
Thus, for all $s''\in[s-\max(1,|s|),s]$, we have
\[
\left.\frac{d}{ds}\right|_{s=s''}a_s
\ge\frac{C_{\mathrm{var}-}^2}{\max(1,|s''|^2)}
\ge\frac{C_{\mathrm{var}-}^2}{\max(1,(|s|+\max(1,|s|))^2)}
=\frac{C_{\mathrm{var}-}^2}{(|s|+\max(1,|s|))^2}.
\]
Integrating yields
\begin{equation}\label{eq:mean-window}
\begin{aligned}
a_s-a_{s-\max(1,|s|)}
&=\int_{s-\max(1,|s|)}^s\left(\left.\frac{d}{ds}\right|_{s=s''}a_s\right)\,ds''\\
&\ge\max(1,|s|)\cdot\frac{C_{\mathrm{var}-}^2}{(|s|+\max(1,|s|))^2}\\
&\ge\max(1,|s|)\cdot\frac{C_{\mathrm{var}-}^2}{(2\max(1,|s|))^2}\\
&=\frac{C_{\mathrm{var}-}^2}{4}\cdot\frac1{\max(1,|s|)}.
\end{aligned}
\end{equation}
Let us assume $n\ge3$ is large enough so that
\[
\frac{\log n}{n} \cdot\frac1{|s_0|}
\le\frac{C_{\mathrm{var}-}^2}{4}\cdot\frac1{\max(1,|s_0|)}.
\]
Then, since
\[
\frac{|s_0|}{\max(1,|s_0|)}
=\min(1,|s_0|)
\le\min(1,|s|)
=\frac{|s|}{\max(1,|s|)},
\]
we have
\[
\frac{\log n}{n}\cdot\frac1{|s|}
\le\frac{C_{\mathrm{var}-}^2}{4}\cdot\frac1{\max(1,|s|)}.
\]
From Equation~\eqref{eq:mean-window}, we see that
\[
a_s-a_{s-\max(1,|s|)}\ge\frac{\log n}{n}\cdot\frac1{|s|},
\]
and hence there exists $s^*\in[s-\max(1,|s|),s]$ such that
\[
a_s-a_{s^*}=\frac{\log n}{n}\cdot\frac1{|s|}.
\]
We henceforth assume $s'\ge s^*$.

From above, we know that
\[
\left.\frac{d}{ds}\right|_{s=s''}a_s\ge\frac{C_{\mathrm{var}-}^2}{(|s|+\max(1,|s|))^2}
\]
for all $s''\in[s',s]$.
Thus, for all $t''\in[na_{s'},na_s]$, we have
\begin{align*}
a^{-1}(t''/n)-a^{-1}(na_s/n)
&=-\int_{t''}^{na_s}\left(\left.\frac{d}{dt}\right|_{t=t'''}a^{-1}(t/n)\right)\,dt'''\\
&=-\int_{t''}^{na_s}\frac1n\cdot\left(\left.\frac{d}{ds}\right|_{s=a^{-1}(t'''/n)}a_s\right)^{-1}\,dt'''\\
&\ge-\int_{t''}^{na_s}\frac1n\cdot\left(\frac{C_{\mathrm{var}-}^2}{(|s|+\max(1,|s|))^2}\right)^{-1}\,dt'''\\
&=(t''-na_s)/n\cdot\frac{(|s|+\max(1,|s|))^2}{C_{\mathrm{var}-}^2}.
\end{align*}
It follows that
\[
\int_{na_{s'}}^{na_s}(a^{-1}(t''/n)-a^{-1}(na_s/n))\,dt''
\ge-\frac12(na_s-na_{s'})^2\cdot\frac1n\cdot\frac{(|s|+\max(1,|s|))^2}{C_{\mathrm{var}-}^2}.
\]
Plugging into Equation~\eqref{eq:integrated-density-local-lower}, we have
\begin{align*}
f_V(na_{s'})
&\ge f_V(na_s)\cdot\exp(-s(na_{s'}-na_s))\\
&\quad\cdot\exp\left(-\frac12(na_s-na_{s'})^2\cdot\frac1n\cdot\frac{(|s|+\max(1,|s|))^2}{C_{\mathrm{var}-}^2}
-C_{\mathrm{deriv}}(na_s-na_{s'})\cdot\frac{\max(1,|s'|)}n\right)\\
&\ge f_V(na_s)\cdot\exp(-s(na_{s'}-na_s))\\
&\quad\cdot\exp\left(-\frac12(na_s-na_{s'})^2\cdot\frac1n\cdot\frac{(2\max(1,|s|))^2}{C_{\mathrm{var}-}^2}
-C_{\mathrm{deriv}}(na_s-na_{s'})\cdot\frac{2\max(1,|s|)}n\right)
\end{align*}
for all $s'\in[s^*,s]$.
Making the substitution $t'=na_{s'}$, we conclude that we have
\begin{align*}
f_V(t')
&\ge f_V(na_s)\cdot\exp(-s(t'-na_s))\\
&\quad\cdot\exp\left(-A_s(na_s-t')^2-B_s(na_s-t')\right)
\end{align*}
for all $t'\in[na_s-\frac{\log n}{|s|},na_s]$.

For $u<0$, write
\[
R(u):=\frac{\max\{1,|u|\}}{|u|},
\qquad
D(u):=1+\frac{4R(u)^2}{C_{\mathrm{var}-}^2}+2C_{\mathrm{deriv}}R(u).
\]
For the fixed $s<0$, also set
\[
A_s:=\frac{2\max\{1,|s|\}^2}{nC_{\mathrm{var}-}^2},
\qquad
B_s:=\frac{2C_{\mathrm{deriv}}\max\{1,|s|\}}n.
\]

Integrating, we have
\begin{align*}
\int_{na_s-\frac{\log n}{|s|}}^{na_s}f_V(t')\,dt'
&\ge f_V(na_s)\cdot\int_{na_s-\frac{\log n}{|s|}}^{na_s}\exp(-s(t'-na_s))\\
&\qquad\cdot\exp\left(-A_s(na_s-t')^2-B_s(na_s-t')\right)\,dt'\\
&\ge f_V(na_s)\cdot\int_{na_s-\frac{\log n}{|s|}}^{na_s}\exp(-s(t'-na_s))\\
&\qquad\cdot\left(1-A_s(na_s-t')^2-B_s(na_s-t')\right)\,dt'\\
&\ge f_V(na_s)\cdot\Bigl(\frac1{|s|}\cdot\left(1-\frac1n\right)
-\int_{na_s-\frac{\log n}{|s|}}^{na_s}\exp(-s(t'-na_s))\\
&\qquad\cdot\left(A_s(na_s-t')^2+B_s(na_s-t')\right)\,dt'\Bigr)\\
&\ge f_V(na_s)\cdot\Bigl(\frac1{|s|}\cdot\left(1-\frac1n\right)
-\int_{-\infty}^{0}\exp(-st')\\
&\qquad\cdot\left(A_s(-t')^2+B_s(-t')\right)\,dt'\Bigr)\\
&=f_V(na_s)\cdot\frac1{|s|}\cdot\left(1-\frac{D(s)}n\right)\\
&\ge f_V(na_s)\cdot\frac1{|s|}\cdot\left(1-\frac{D(s_0)}n\right)\\
&=f_V(na_s)\cdot\frac1{|s|}\cdot\left(1-\frac{C_{\mathrm{int-up}}}{2n}\right),
\end{align*}
where we have defined
\[
C_{\mathrm{int-up}}
=C_{\mathrm{int-up}}(s_0)
:=2D(s_0)
\]
We conclude
\begin{align*}
\frac{f_V(na_s)}{\int_0^{na_s}f_V(t')\,dt'}
&\le\frac{f_V(na_s)}{\int_{na_s-\frac{\log n}{|s|}}^{na_s}f_V(t')\,dt'}\\
&\le |s|\cdot\left(1-\frac{C_{\mathrm{int-up}}}{2n}\right)^{-1}\\
&\le |s|\cdot\left(1+\frac{C_{\mathrm{int-up}}}{n}\right)
\end{align*}
which holds as long as $n\ge C_{\mathrm{int-up}}$.
This proves the upper bound, as desired.
\end{proof}

\subsection{Auxiliary inequalities for the density estimates}
\label{subsec:ellipsoid-growth-auxiliary-inequalities}

\begin{lemma}\label{lem:aux-logconcave-ratio}
Let $L_0>0$, and let
$f:[0,+\infty)\to\R$ be a logconcave probability density function which is
continuous, such that
$f$ is nondecreasing on
$[0,L_0]$ and
$\int_0^{L_0}f(t)\,dt>0$.
Then
\[
\frac12L_0
\le \left.\frac{d}{dL}\frac{\int_0^L t^2\cdot f(t)\,dt}{\int_0^L f(t)\,dt}\right|_{L=L_0}
\le2L_0.
\]
\end{lemma}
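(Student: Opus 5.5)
The plan is to compute the derivative explicitly and reduce both inequalities to elementary one-dimensional estimates for a nonincreasing logconcave function. Write $F(L)=\int_0^Lf$ and $m_2(L)=\int_0^Lt^2f(t)\,dt/F(L)$. By continuity of $f$, the quotient rule gives
\[
 \left.\frac{d}{dL}m_2(L)\right|_{L=L_0}
 =\frac{f(L_0)\,\bigl(L_0^2F(L_0)-\int_0^{L_0}t^2f(t)\,dt\bigr)}{F(L_0)^2}
 =\frac{f(L_0)\int_0^{L_0}(L_0^2-t^2)f(t)\,dt}{F(L_0)^2}.
\]
Next we reflect about $L_0$. Set $g(u)=f(L_0-u)$ for $0\le u\le L_0$ and $g(u)=0$ for $u>L_0$. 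Then $g$ is logconcave on $[0,\infty)$, since it is the restriction of a logconcave function to an interval. It is also nonincreasing, because $f$ is nondecreasing on $[0,L_0]$, and $g(0)=f(L_0)>0$ because $\int_0^{L_0}f>0$. Since $L_0^2-t^2=u(2L_0-u)$ with $u=L_0-t$, the derivative equals
\[
 \frac{g(0)\int_0^{L_0}u(2L_0-u)g(u)\,du}{\bigl(\int_0^\infty g\bigr)^2}.
\]
On $[0,L_0]$ the weight satisfies $L_0u\le u(2L_0-u)\le2L_0u$. Both claims therefore reduce to the two-sided comparison
\[
 \tfrac12\Bigl(\int g\Bigr)^2\le g(0)\int_0^\infty ug(u)\,du\le\Bigl(\int g\Bigr)^2 .
\]

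For the left inequality, only monotonicity is needed. We write $(\int g)^2=2\iint_{0\le v<w}g(v)g(w)\,dv\,dw$ and bound $g(v)\le g(0)$. The inner integral over $v\in[0,w)$ is then at most $wg(0)$, which gives $(\int g)^2\le2g(0)\int wg(w)\,dw$. This yields the lower bound $L_0/2$.

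For the right inequality, we use logconcavity. For $v,w\ge0$, concavity of $\log g$ on its support gives $\log g(v)+\log g(w)\ge\log g(0)+\log g(v+w)$, that is $g(0)g(v+w)\le g(v)g(w)$; this holds trivially where $g(v+w)=0$. Then
\[
 g(0)\int_0^\infty ug(u)\,du=g(0)\iint_{v,w\ge0}g(v+w)\,dv\,dw\le\Bigl(\int g\Bigr)^2,
\]
where the equality comes from the change of variables $u=v+w$, since the segment $\{v+w=u\}$ has length $u$. This yields the upper bound $2L_0$.

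The only delicate step is the submultiplicativity $g(0)g(v+w)\le g(v)g(w)$ when $g$ is truncated at $L_0$ or vanishes on part of $[0,L_0]$. I would handle it by noting that the support of $g$ is an interval containing $0$ and checking separately the case where $v+w$ lies outside the support. The remaining computations are routine.
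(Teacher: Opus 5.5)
Your proof is correct and is essentially the paper's argument in the reflected variable $u=L_0-t$. The paper bounds $\bigl(\int_0^{L_0}f\bigr)^2$ above by $2f(L_0)\int_0^{L_0}(L_0-t)f(t)\,dt$ using monotonicity, and below by $f(L_0)\int_0^{L_0}(L_0-t)f(t)\,dt$ using $f(t)f(s)\ge f(L_0)f(t+s-L_0)$; these are exactly your two comparisons for $g$, followed by the same weight bounds $L_0(L_0-t)\le L_0^2-t^2\le 2L_0(L_0-t)$.
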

\begin{proof}
We compute
\begin{align*}
\frac{d}{dL}\frac{\int_0^L t^2\cdot f(t)\,dt}{\int_0^L f(t)\,dt}
&=\frac{L^2\cdot f(L)}{\int_0^Lf(t)\,dt}
-\frac{f(L)\int_0^Lt^2\cdot f(t)\,dt}{\left(\int_0^Lf(t)\,dt\right)^2}\\
&=\frac{f(L)\int_0^L(L^2-t^2)\cdot f(t)\,dt}{\left(\int_0^Lf(t)\,dt\right)^2}
\end{align*}
and thus
\[
\left.\frac{d}{dL}\frac{\int_0^L t^2\cdot f(t)\,dt}{\int_0^L f(t)\,dt}\right|_{L=L_0}
=\frac{f(L_0)\int_0^{L_0}(L_0^2-t^2)\cdot f(t)\,dt}{\left(\int_0^{L_0}f(t)\,dt\right)^2}.
\]
For the lower bound, we have
\begin{align*}
\left(\int_0^{L_0}f(t)\,dt\right)^2
&=2\int_0^{L_0}\int_t^{L_0}f(t)f(s)\,ds\,dt\\
&\le2\int_0^{L_0}\int_t^{L_0}f(t)f(L_0)\,ds\,dt \qquad\text{($f$ is nondecreasing on $[0,L_0]$)}\\
&=2f(L_0)\int_0^{L_0}(L_0-t)\cdot f(t)\,dt\\
&\le2f(L_0)\int_0^{L_0}(L_0-t)\frac{L_0+t}{L_0}\cdot f(t)\,dt\\
&=\frac2{L_0}f(L_0)\int_0^{L_0}(L_0^2-t^2)\cdot f(t)\,dt
\end{align*}
which implies
\[
\left.\frac{d}{dL}\frac{\int_0^L t^2\cdot f(t)\,dt}{\int_0^L f(t)\,dt}\right|_{L=L_0}
=\frac{f(L_0)\int_0^{L_0}(L_0^2-t^2)\cdot f(t)\,dt}{\left(\int_0^{L_0}f(t)\,dt\right)^2}
\ge\frac12L_0,
\]
as desired.

For the upper bound, we have
\begin{align*}
\left(\int_0^{L_0}f(t)\,dt\right)^2
&=\int_0^{L_0}\int_0^{L_0}f(t)f(s)\,ds\,dt\\
&\ge\iint_{t,s\in[0,L_0],\ t+s\ge L_0}f(t)f(s)\,ds\,dt\\
&\ge\iint_{t,s\in[0,L_0],\ t+s\ge L_0}f(L_0)f(t+s-L_0)\,ds\,dt \qquad\text{(logconcavity of $f$)}\\
&=f(L_0)\int_0^{L_0}(L_0-t)\cdot f(t)\,dt\\
&\ge f(L_0)\int_0^{L_0}(L_0-t)\frac{L_0+t}{2L_0}\cdot f(t)\,dt\\
&=\frac1{2L_0}f(L_0)\int_0^{L_0}(L_0^2-t^2)\cdot f(t)\,dt
\end{align*}
which implies
\[
\left.\frac{d}{dL}\frac{\int_0^L t^2\cdot f(t)\,dt}{\int_0^L f(t)\,dt}\right|_{L=L_0}
=\frac{f(L_0)\int_0^{L_0}(L_0^2-t^2)\cdot f(t)\,dt}{\left(\int_0^{L_0}f(t)\,dt\right)^2}
\le2L_0,
\]
as desired.
\end{proof}

\begin{lemma}\label{lem:aux-gaussian-tail}
For all $u>0$, we have
\[
\int_u^{+\infty}\frac1{(2\pi)^{1/2}}e^{-x^2/2}\,dx
<\frac1u\frac1{(2\pi)^{1/2}}e^{-u^2/2}.
\]
\end{lemma}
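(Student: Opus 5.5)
The statement to prove is the strict Mills bound
\[
 \overline\Phi(u)<\varphi(u)/u \qquad (u>0).
\]
The plan is the integration-by-parts argument already sketched in \Cref{sec:preliminaries}, made strict.

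First, write $\varphi(x)=(2\pi)^{-1/2}e^{-x^2/2}$ and use the identity $\varphi'(x)=-x\varphi(x)$. For $x\ge u>0$ we have $x/u\ge1$, with strict inequality for every $x>u$. Hence
\[
 \int_u^\infty\varphi(x)\,dx<\int_u^\infty\frac xu\,\varphi(x)\,dx .
\]
The inequality is strict because the integrands are continuous, and the difference $(x/u-1)\varphi(x)$ is strictly positive on $(u,\infty)$, a set of positive measure.

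Second, evaluate the right-hand side exactly. Since $-\varphi'(x)=x\varphi(x)$,
\[
 \frac1u\int_u^\infty x\varphi(x)\,dx=\frac1u\bigl[-\varphi(x)\bigr]_u^\infty=\frac{\varphi(u)}u,
\]
using $\varphi(x)\to0$ as $x\to\infty$. Both integrals converge by Gaussian decay, so no limiting issue arises. This gives the claim.

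There is no real obstacle here. The only point to check is strictness, which follows from continuity and positivity of $(x/u-1)\varphi(x)$ on $(u,\infty)$. Alternatively, one can use the identity in \Cref{sec:preliminaries},
\[
 \overline\Phi(u)=\frac{\varphi(u)}{u}-\int_u^\infty\frac{\varphi(x)}{x^2}\,dx,
\]
where the subtracted integral is strictly positive.
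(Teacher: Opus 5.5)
Your proof is correct and is exactly the paper's argument: bound the integrand by $\frac{x}{u}\varphi(x)$ on $[u,\infty)$ and evaluate the resulting integral exactly using $\varphi'(x)=-x\varphi(x)$. Your explicit justification of strictness, via continuity and positivity of $(x/u-1)\varphi(x)$ on $(u,\infty)$, fills in a detail the paper leaves implicit.
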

\begin{proof}
Observe
\[
\int_u^{+\infty}\frac1{(2\pi)^{1/2}}e^{-x^2/2}\,dx
<\int_u^{+\infty}\frac1{(2\pi)^{1/2}}e^{-x^2/2}\cdot\frac xu\,dx
=\frac1u\frac1{(2\pi)^{1/2}}e^{-u^2/2}.
\]
\end{proof}

\begin{lemma}\label{lem:aux-oscillatory-integral}
Let $\xi\in\R$, and let
$t_1<t_2$ be real numbers.
Then we have
\[
\left|\int_{t_1}^{t_2}\exp(i\xi t)\,dt\right|
\le\min\left(t_2-t_1,\frac2{|\xi|}\right).
\]
\end{lemma}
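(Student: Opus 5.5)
The plan is to handle the two quantities in the minimum separately and then take the minimum. Both bounds are elementary, so the argument splits into the trivial length bound and the exact evaluation of the integral.

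\textbf{Length bound.} Since $|\exp(i\xi t)|=1$ for all real $t$, the triangle inequality for integrals gives
\[
 \left|\int_{t_1}^{t_2}\exp(i\xi t)\,dt\right|\le\int_{t_1}^{t_2}1\,dt=t_2-t_1.
\]
This holds for every $\xi$, including $\xi=0$.

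\textbf{The $2/|\xi|$ bound.} Take $\xi\ne0$. When $\xi=0$ the right-hand side is read as $+\infty$, so only the length bound is relevant there. Compute the antiderivative explicitly:
\[
 \int_{t_1}^{t_2}\exp(i\xi t)\,dt=\frac{\exp(i\xi t_2)-\exp(i\xi t_1)}{i\xi}.
\]
The numerator is a difference of two unit-modulus complex numbers, so its modulus is at most $2$. Dividing by $|i\xi|=|\xi|$ gives the bound $2/|\xi|$.

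Combining the two estimates proves the claimed minimum. There is no real obstacle here. The only point needing care is the convention at $\xi=0$, where $2/|\xi|$ should be interpreted as $+\infty$, so the minimum reduces to $t_2-t_1$.
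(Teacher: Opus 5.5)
Your proof is correct and is the same as the paper's: the triangle inequality gives the length bound, and the explicit antiderivative $(\exp(i\xi t_2)-\exp(i\xi t_1))/(i\xi)$ gives the $2/|\xi|$ bound. Your note that $2/|\xi|$ should be read as $+\infty$ when $\xi=0$ is a small clarification that the paper leaves implicit.
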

\begin{proof}
The inequality
\[
\left|\int_{t_1}^{t_2}\exp(i\xi t)\,dt\right|\le t_2-t_1
\]
follows from the triangle inequality.
The inequality
\[
\left|\int_{t_1}^{t_2}\exp(i\xi t)\,dt\right|\le\frac2{|\xi|}
\]
is proven as follows:
\[
\left|\int_{t_1}^{t_2}\exp(i\xi t)\,dt\right|
=\left|\frac1{i\xi}(\exp(i\xi t_2)-\exp(i\xi t_1))\right|
\le\frac{|\exp(i\xi t_2)|+|\exp(i\xi t_1)|}{|\xi|}
=\frac2{|\xi|}.
\]
\end{proof}

\begin{lemma}\label{lem:aux-sinc-bound}
For all $u\in\R$, we have
\[
\left|\frac{\sin u}{u}\right|\le\max(1-0.1u^2,0.6).
\]
\end{lemma}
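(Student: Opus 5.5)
We have to prove $|\sin u/u|\le\max(1-0.1u^2,0.6)$ for all real $u$, with the value $1$ understood at $u=0$. Both sides are even in $u$, so it suffices to take $u\ge0$. The threshold where the two branches of the maximum meet is $1-0.1u^2=0.6$, that is $u=2$. The plan is to split at $u=2$: on $[0,2]$ the target is $1-0.1u^2$, and on $[2,\infty)$ the target is $0.6$.

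\emph{Small range} $0\le u\le2$. We use the alternating Taylor bound
\[
 \sin u\le u-\frac{u^3}{6}+\frac{u^5}{120},
\]
valid for $u\ge0$, which gives
\[
 \frac{\sin u}{u}\le1-\frac{u^2}{6}+\frac{u^4}{120}.
\]
This is at most $1-0.1u^2$ exactly when $u^4/120\le u^2/6-u^2/10=u^2/15$, i.e. when $u^2\le8$, which holds on $[0,2]$.

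We also need the lower side $\sin u/u\ge-(1-0.1u^2)$. On $[0,2]$ we have $\sin u\ge0$ because $2<\pi$, so $\sin u/u\ge0$. Meanwhile $1-0.1u^2\ge0.6>0$ on this range, so the lower side holds.

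\emph{Large range} $u\ge2$. Here the target is $0.6$, and we split once more.

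For $u\ge 1/0.6=5/3$ the trivial bound $|\sin u|\le1$ gives $|\sin u/u|\le 1/u\le 0.6$. Since $2\ge5/3$, this covers the entire range $u\ge2$.

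So the whole argument reduces to the quintic Taylor inequality on $[0,2]$ together with the crude $1/u$ bound for $u\ge2$. The only step needing care is the Taylor remainder sign, i.e. checking that $\sin u\le u-u^3/6+u^5/120$ for $u\ge0$. We verify it by integrating the inequality $\cos t\le1$ repeatedly from $0$: each integration produces the next alternating Taylor bound, with the direction of the inequality flipping at each step as usual, until the quintic upper bound is reached.
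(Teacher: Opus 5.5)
Your proof is correct and follows the paper's argument: split at $|u|=2$, bound $\sin u/u\le 1-u^2/6+u^4/120\le 1-0.1u^2$ on $(0,2]$, and use $|\sin u/u|\le 1/|u|\le 1/2$ beyond. The only cosmetic difference is how the quintic Taylor bound is justified. You obtain it by integrating $\cos t\le 1$ repeatedly and also check $\sin u\ge 0$ explicitly. The paper instead groups the remaining alternating Taylor terms into nonpositive pairs and uses the sign of $\sin u$ implicitly.
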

\begin{proof}
First assume $|u|\le2$; we need to prove
$|\sin u/u|\le1-0.1u^2$.
We may further assume $0<u\le2$.
Now calculate
\begin{align*}
\left|\frac{\sin u}{u}\right|
&=\frac{\sin u}{u}
=1-\frac{u^2}{3!}+\frac{u^4}{5!}-\frac{u^6}{7!}+\cdots\\
&=1-\frac16u^2+\frac1{120}u^4+
\left(-\frac{u^6}{7!}+\frac{u^8}{9!}\right)
+\left(-\frac{u^{10}}{11!}+\frac{u^{12}}{13!}\right)+\cdots\\
&\le1-\frac16u^2+\frac1{120}u^4\\
&=(1-0.1u^2)+\left(-\frac1{15}u^2+\frac1{120}u^4\right)\\
&\le1-0.1u^2
\end{align*}
as desired.

Now assume $|u|>2$; we need to prove
$|\sin u/u|\le0.6$.
But this follows from
\[
\left|\frac{\sin u}{u}\right|
=\frac{|\sin u|}{|u|}
\le\frac1{|u|}
\le\frac12
<0.6.
\]
\end{proof}

\begin{lemma}\label{lem:aux-oscillatory-integral-ii}
Let $\xi\in\R$, and let
$t_1<t_2$ be real numbers.
Then we have
\[
\left|\int_{t_1}^{t_2}\exp(i\xi t)\,dt\right|
\le(t_2-t_1)\cdot\max\left(1-0.1\left(\xi\cdot\frac{t_2-t_1}{2}\right)^2,0.6\right).
\]
\end{lemma}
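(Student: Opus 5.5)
The plan is to compute the integral explicitly and reduce the claim to \Cref{lem:aux-sinc-bound}. Set $\ell=t_2-t_1>0$ and $m=(t_1+t_2)/2$, and substitute $t=m+u$ with $u\in[-\ell/2,\ell/2]$. Then
\[
 \int_{t_1}^{t_2}\exp(i\xi t)\,dt
 =e^{i\xi m}\int_{-\ell/2}^{\ell/2}e^{i\xi u}\,du
 =e^{i\xi m}\int_{-\ell/2}^{\ell/2}\cos(\xi u)\,du,
\]
because the sine part is odd and integrates to zero. The unimodular prefactor drops out in absolute value.

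When $\xi\ne0$, the remaining integral equals $2\sin(\xi\ell/2)/\xi=\ell\cdot\sin(w)/w$, where $w=\xi\ell/2$. When $\xi=0$, it equals $\ell$, which is consistent with the convention $\sin(0)/0=1$. In either case,
\[
 \left|\int_{t_1}^{t_2}\exp(i\xi t)\,dt\right|=(t_2-t_1)\left|\frac{\sin w}{w}\right|,
 \qquad w=\xi\cdot\frac{t_2-t_1}{2}.
\]

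Applying \Cref{lem:aux-sinc-bound} with $u=w$ gives $|\sin w/w|\le\max(1-0.1w^2,0.6)$. Multiplying by $t_2-t_1>0$ yields exactly the stated bound.

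No step here is a real obstacle. The only points needing care are the $\xi=0$ case, where one reads $\sin w/w$ as $1$ (and the bound then holds trivially as $\ell\le\ell$), and remembering to center the interval at $m$ so that the imaginary part cancels.
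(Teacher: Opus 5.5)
Your proof is correct and follows the paper's argument: recenter the interval at its midpoint, evaluate the integral as $(t_2-t_1)\sin(w)/w$ with $w=\xi(t_2-t_1)/2$, and invoke \Cref{lem:aux-sinc-bound}. Your separate treatment of $\xi=0$ is a small improvement, since the paper's computation divides by $\xi$ without comment.
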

\begin{proof}
We calculate
\begin{align*}
\left|\int_{t_1}^{t_2}\exp(i\xi t)\,dt\right|
&=\left|\exp\left(i\xi\cdot\frac{t_1+t_2}{2}\right)\cdot
\int_{(t_1-t_2)/2}^{(t_2-t_1)/2}\exp(i\xi t)\,dt\right|\\
&=\left|\int_{(t_1-t_2)/2}^{(t_2-t_1)/2}\exp(i\xi t)\,dt\right|\\
&=\left|\frac1{i\xi}\left(\exp\left(i\xi\cdot\frac{t_2-t_1}{2}\right)-\exp\left(i\xi\cdot\frac{t_1-t_2}{2}\right)\right)\right|\\
&=\left|\frac2\xi\sin\left(\xi\cdot\frac{t_2-t_1}{2}\right)\right|\\
&=(t_2-t_1)\cdot\left|\frac{\sin(\xi\cdot\frac{t_2-t_1}{2})}{\xi\cdot\frac{t_2-t_1}{2}}\right|\\
&\le(t_2-t_1)\cdot\max\left(1-0.1\left(\xi\cdot\frac{t_2-t_1}{2}\right)^2,0.6\right)
\end{align*}
using \Cref{lem:aux-sinc-bound}.
\end{proof}

\subsection[Properties of the Normalized and Tilted Distribution]{Properties of
the Normalized and Tilted Distribution \texorpdfstring{$\widetilde X_s$}{X tilde
s}}
Recall that $X_s$ is defined to be the random variable whose distribution is
supported on
$[0,1]$ and whose density function is proportional to
\[
\frac1{2t^{1/2}}\cdot\exp(st).
\]
So, $X_s$ is an exponential tilt of
$X_0$.
We further define the normalized random variable
\[
\widetilde X_s:=\frac{X_s-a_s}{v_s^{1/2}},
\]
where $a_s:=\E{X_s\sim g_s/G_{s,0}}{X_s}$ and
$v_s:=\V{X_s\sim g_s/G_{s,0}}{X_s}$.
Observe that the support of $\widetilde X_s$ is
\[
\left[-\frac{a_s}{v_s^{1/2}},\frac{1-a_s}{v_s^{1/2}}\right].
\]

\begin{lemma}\label{lem:tilted-continuity}
The functions $a_s:=\E{X_s\sim g_s/G_{s,0}}{X_s}$ and
$v_s:=\V{X_s\sim g_s/G_{s,0}}{X_s}$ and
\[
G_{s,0}:=\int_0^1\frac1{2t^{1/2}}\cdot\exp(st)\,dt
\]
are each continuous functions of $s\in\R$.
Furthermore, $v_s$ and
$G_{s,0}$ are each positive for all
$s\in\R$.
\end{lemma}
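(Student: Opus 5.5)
The plan is to obtain continuity of $G_{s,p}$ for $p=0,1,2$ directly from the integral representation, and then read off continuity of $a_s$ and $v_s$ as quotients. For each $p\ge0$ we have
\[
G_{s,p}=\int_0^1\frac{t^p}{2t^{1/2}}\exp(st)\,dt .
\]
First I will fix a compact interval $|s|\le S$. On it the integrand is bounded by $\frac{1}{2}t^{-1/2}e^{S}$, which is integrable on $(0,1]$ because $t^p\le1$ there. The integrand is also continuous in $s$ for every fixed $t\in(0,1]$. Dominated convergence then shows that $s\mapsto G_{s,p}$ is continuous on $[-S,S]$, and since $S$ is arbitrary, on all of $\R$. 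This argument treats $p=0,1,2$ uniformly.

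Positivity of $G_{s,0}$ holds because the integrand is strictly positive on $(0,1]$, a set of positive measure. Consequently
\[
a_s=G_{s,1}/G_{s,0}\qquad\text{and}\qquad v_s=G_{s,2}/G_{s,0}-a_s^2
\]
are continuous as quotients and differences of continuous functions with nonvanishing denominator.

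For positivity of $v_s$, I will use that $v_s$ is the variance of $X_s$. The law of $X_s$ has a density that is strictly positive on all of $(0,1]$, so it is not a point mass. Hence $\E{X_s\sim g_s/G_{s,0}}{(X_s-a_s)^2}>0$, because the integrand $(t-a_s)^2$ vanishes only at the single point $t=a_s$.

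None of these steps is a real obstacle. The only point needing care is the integrable singularity $t^{-1/2}$ at $0$, and the uniform domination on compact $s$-intervals handles it. If differentiability is wanted later, for example for the identity $G_{s,p}'=G_{s,p+1}$ used in the proof of \Cref{lem:integrated-density}, the same domination applied to $t^{p+1}t^{-1/2}e^{S}$ justifies differentiating under the integral sign.
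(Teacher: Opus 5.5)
Your proposal is correct and matches the paper's proof: dominated convergence gives continuity of $G_{s,p}$, positivity of the integrand gives $G_{s,0}>0$, $a_s$ and $v_s$ are continuous as quotients, and $v_s>0$ because $X_s$ has a density and so is not a point mass. Your explicit domination by $\tfrac12 t^{-1/2}e^{S}$ on $|s|\le S$ spells out the step the paper leaves implicit.
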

\begin{proof}
First note that for any fixed $p>-\frac12$, the integral
\[
G_{s,p}:=\int_0^1\frac1{2t^{1/2}}\cdot\exp(st)\cdot t^p\,dt
\]
is a continuous function of $s\in\R$; this follows from Lebesgue dominated
convergence theorem.
Furthermore, note that $G_{s,p}>0$ for all
$p>-\frac12$ and
$s\in\R$.
Now, we have
\[
a_s
=\E{X_s\sim g_s/G_{s,0}}{X_s}
=\frac{G_{s,1}}{G_{s,0}}
\]
\[
v_s
=\V{X_s\sim g_s/G_{s,0}}{X_s}
=\frac{G_{s,2}}{G_{s,0}}-\left(\frac{G_{s,1}}{G_{s,0}}\right)^2
\]
so both $a_s$ and
$v_s$ are continuous functions of
$s\in\R$.
The inequality $v_s>0$ follows from the fact that the distribution of
$X_s$ has a density function (and hence equality cannot hold in the
Cauchy-Schwarz inequality).
\end{proof}

\noindent\textbf{Note.}
In fact, for each $p>-\frac12$, the integral
$G_{s,p}:=\int_0^1\frac1{2t^{1/2}}\cdot\exp(st)\cdot t^p\,dt$ as defined above
is an entire holomorphic function of
$s$.

\begin{lemma}\label{lem:tilted-limits}
The following four limits exist and are positive and finite:
\[
\lim_{s\searrow-\infty}-sa_s
\qquad
\lim_{s\nearrow+\infty}s(1-a_s)
\]
\[
\lim_{s\searrow-\infty}-sv_s^{1/2}
\qquad
\lim_{s\nearrow+\infty}sv_s^{1/2}.
\]
\end{lemma}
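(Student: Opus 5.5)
Since $a_s=G_{s,1}/G_{s,0}$ and $v_s=G_{s,2}/G_{s,0}-a_s^2$, all four limits reduce to the asymptotics of the moments $G_{s,p}=\int_0^1\frac{1}{2}t^{p-1/2}e^{st}\,dt$ for $p\in\{0,1,2\}$ as $s\to\pm\infty$. These integrals have Laplace-type behavior at the endpoint where the exponential weight concentrates. I would treat the two regimes separately by substitution, obtain leading-order asymptotics with explicit constants, and then read off the limits.

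For $s\to-\infty$, put $u=|s|t$. Then
\[
G_{s,p}=\tfrac12|s|^{-p-1/2}\int_0^{|s|}u^{p-1/2}e^{-u}\,du.
\]
By monotone convergence this equals $\tfrac12|s|^{-p-1/2}\Gamma(p+\tfrac12)(1+o(1))$. Hence
\[
-s\,a_s=|s|\frac{G_{s,1}}{G_{s,0}}\to\frac{\Gamma(3/2)}{\Gamma(1/2)}=\tfrac12 .
\]
Similarly, $s^2 v_s\to\Gamma(5/2)/\Gamma(1/2)-(1/2)^2=3/4-1/4=1/2$. So $-s\,v_s^{1/2}\to 2^{-1/2}$. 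This is the Gamma$(1/2,|s|)$ limit, and both values are positive and finite.

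For $s\to+\infty$, put $t=1-w/s$. Then
\[
G_{s,p}=\tfrac12 e^{s}s^{-1}\int_0^{s}(1-w/s)^{p-1/2}e^{-w}\,dw .
\]
The main care point is the factor $(1-w/s)^{-1/2}$ in the $p=0$ case, which blows up at $w=s$, i.e.\ at $t=0$. I would split the integral at $w=s/2$. On $[0,s/2]$ the factor is bounded by $\sqrt2$, so dominated convergence gives $\int e^{-w}(\cdots)\to1$. On $[s/2,s]$ we are back in the original variable with $t\in[0,1/2]$, and the contribution is at most $\int_0^{1/2}\tfrac12 t^{-1/2}e^{st}\,dt\le e^{s/2}$, which is negligible against $e^s/s$.

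To get $s(1-a_s)$ and $s^2v_s$ I also need the first-order correction, so I would work directly with the tilted variable $W=s(1-X_s)$. Its law has density proportional to $(1-w/s)^{-1/2}e^{-w}$ on $[0,s]$, and it converges to $\mathrm{Exp}(1)$ in every moment, by the same split and dominated-convergence argument applied to $\int w^k(\cdots)$. Then
\[
s(1-a_s)=\E{}{W}\to1,\qquad s^2v_s=\V{}{W}\to1,
\]
so $s\,v_s^{1/2}\to1$.

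The only nontrivial step is justifying the moment convergence despite the integrable singularity at $t=0$. The split at $t=1/2$, with an exponentially small remainder, handles it.
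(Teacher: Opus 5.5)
Your proof is correct and takes the same route as the paper. For $s\to-\infty$ you rescale by $-s$ and pass to a Gamma$(1/2)$ limit by monotone/dominated convergence; for $s\to+\infty$ you pass to $s(1-X_s)$, split at the midpoint, and bound the remainder near the singularity at $t=0$ by an exponentially small term while dominated convergence handles the rest. You also compute the explicit limiting values $\tfrac12$, $2^{-1/2}$, $1$, $1$, which the paper's argument does not need.
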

\begin{proof}
Let us first consider the case when $s\searrow-\infty$.
For fixed $s<0$, the random variable
$-sX_s$ has distribution supported on
$[0,-s]$ and density function proportional to
$\frac1{t^{1/2}}\cdot\exp(-t)$.
Now
\[
-sa_s
=\E{X_s\sim g_s/G_{s,0}}{-sX_s}
=\left(\int_0^{-s}\frac1{t^{1/2}}\cdot\exp(-t)\cdot t\,dt\right) \left(\int_0^{-s}\frac1{t^{1/2}}\cdot\exp(-t)\,dt\right)^{-1}
\]
\[
\begin{aligned}
(-s)^2v_s
&=\V{X_s\sim g_s/G_{s,0}}{-sX_s}\\
&=\left(\int_0^{-s}\frac1{t^{1/2}}\cdot\exp(-t)
\cdot(t-\E{X_s\sim g_s/G_{s,0}}{-sX_s})^2\,dt\right)\\
&\quad\cdot\left(\int_0^{-s}\frac1{t^{1/2}}\cdot\exp(-t)\,dt\right)^{-1}.
\end{aligned}
\]
From the Lebesgue dominated convergence theorem, both limits
\[
\lim_{s\searrow-\infty}(-s)a_s
\qquad\text{and}\qquad
\lim_{s\searrow-\infty}(-s)^2v_s
\]
exist and are positive and finite, as desired.

Now let us consider the case when $s\nearrow+\infty$.
For fixed $s>0$, the random variable
$s(X_s-1)$ has distribution supported on
$[-s,0]$ and density function proportional to
$(\frac ts+1)^{-1/2}\cdot\exp(t)$.
We would like to say that the law of $s(X_s-1)$ ``converges'' in some sense to
the distribution
$\exp(t)\,dt$ (supported on
$(-\infty,0]$).

Observe
\begin{equation}\label{eq:tilted-endpoint-moments}
\begin{aligned}
s(a_s-1)&=\E{X_s\sim g_s/G_{s,0}}{s(X_s-1)}\\
&=\left(\int_{-s}^0\left(\frac ts+1\right)^{-1/2}
\cdot\exp(t)\cdot t\,dt\right)\\
&\quad\cdot\left(\int_{-s}^0\left(\frac ts+1\right)^{-1/2}
\cdot\exp(t)\,dt\right)^{-1}\\
s^2v_s&=\V{X_s\sim g_s/G_{s,0}}{s(X_s-1)}\\
&=\left(\int_{-s}^0\left(\frac ts+1\right)^{-1/2}\cdot\exp(t)
\cdot(t-\E{X_s\sim g_s/G_{s,0}}{s(X_s-1)})^2\,dt\right)\\
&\quad\cdot\left(\int_{-s}^0\left(\frac ts+1\right)^{-1/2}\cdot\exp(t)\,dt\right)^{-1}.
\end{aligned}
\end{equation}
However, this time we cannot apply Lebesgue dominated convergence theorem to
conclude that $\lim_{s\nearrow+\infty}s(a_s-1)$ exists, due to the singularity
of
$(\frac ts+1)^{-1/2}$ near
$t=-s$.
Instead, we split the integral into two parts.
For instance, by writing
\begin{equation}\label{eq:endpoint-integral-split}
\int_{-s}^0\left(\frac ts+1\right)^{-1/2}\cdot\exp(t)\,dt
=\int_{-s}^{-s/2}\left(\frac ts+1\right)^{-1/2}\cdot\exp(t)\,dt
+\int_{-s/2}^{0}\left(\frac ts+1\right)^{-1/2}\cdot\exp(t)\,dt
\end{equation}
and noting that
\[
0
\le\int_{-s}^{-s/2}\left(\frac ts+1\right)^{-1/2}\cdot\exp(t)\,dt
\le\exp(-s/2)\int_{-s}^{-s/2}\left(\frac ts+1\right)^{-1/2}\,dt
=\exp(-s/2)\cdot2^{1/2}\,s,
\]
we see that $\int_{-s}^{-s/2}(\frac ts+1)^{-1/2}\cdot\exp(t)\,dt\to0$ as
$s\nearrow+\infty$.
Now, we can use the Lebesgue dominated convergence theorem on the integral
$\int_{-s/2}^0(\frac ts+1)^{-1/2}\cdot\exp(t)\,dt$, which yields
\[
\int_{-s/2}^0\left(\frac ts+1\right)^{-1/2}\cdot\exp(t)\,dt
\to\int_{-\infty}^{0}\exp(t)\,dt=1
\]
as $s\nearrow+\infty$.
Plugging into Equation~\eqref{eq:endpoint-integral-split}, we see that
\begin{equation}\label{eq:endpoint-integral-limit}
\lim_{s\nearrow+\infty}\int_{-s}^{0}\left(\frac ts+1\right)^{-1/2}\cdot\exp(t)\,dt=1.
\end{equation}
In a similar manner, we can show that
\[
\lim_{s\nearrow+\infty}\int_{-s}^{0}\left(\frac ts+1\right)^{-1/2}\cdot\exp(t)\cdot t\,dt=-1.
\]
Thus, from Equation~\eqref{eq:tilted-endpoint-moments}, we see that
$\lim_{s\nearrow+\infty}s(a_s-1)$ exists and is negative and finite.
In a similar manner, we can show that $\lim_{s\nearrow+\infty}s^2v_s$ exists and
is positive and finite, as desired.
\end{proof}

\begin{corollary}\label{cor:tilted-variance-scale}
There exist absolute constants $0<C_{\mathrm{var}-}<C_{\mathrm{var}+}$ such that
\[
\frac{C_{\mathrm{var}-}}{\max(1,|s|)}
\le v_s^{1/2}
\le\frac{C_{\mathrm{var}+}}{\max(1,|s|)}
\]
for all $s\in\R$.
\end{corollary}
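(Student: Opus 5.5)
The plan is to combine continuity with the two asymptotic regimes. By \Cref{lem:tilted-continuity}, the function $s\mapsto v_s$ is continuous and strictly positive on $\R$. By \Cref{lem:tilted-limits}, the limits $\ell_-:=\lim_{s\searrow-\infty}(-s)v_s^{1/2}$ and $\ell_+:=\lim_{s\nearrow+\infty}s\,v_s^{1/2}$ exist and lie in $(0,\infty)$. Consider the auxiliary function
\[
 w(s):=\max(1,|s|)\,v_s^{1/2}.
\]
The claim is exactly that $w$ is bounded above and below by positive absolute constants on $\R$. Note that no dependence on $n$ enters, since $X_s$ is a one-dimensional object, so any constants produced this way are absolute.

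First, choose $S\ge1$ large enough that $w(s)\in[\ell_-/2,2\ell_-]$ for $s\le-S$ and $w(s)\in[\ell_+/2,2\ell_+]$ for $s\ge S$. This is possible because for $|s|\ge1$ we have $w(s)=|s|v_s^{1/2}$, which converges to $\ell_\mp$ by \Cref{lem:tilted-limits}. Second, on the compact interval $[-S,S]$ the function $w$ is continuous, since $\max(1,|s|)$ and $v_s^{1/2}$ are continuous, and it is strictly positive. Hence $w$ attains a positive minimum $m_0$ and a finite maximum $M_0$ there.

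Then set $C_{\mathrm{var}-}:=\min(m_0,\ell_-/2,\ell_+/2)$ and $C_{\mathrm{var}+}:=\max(M_0,2\ell_-,2\ell_+)$. If these happen to coincide, enlarge $C_{\mathrm{var}+}$ slightly so that the strict inequality $C_{\mathrm{var}-}<C_{\mathrm{var}+}$ holds. Dividing the bounds $C_{\mathrm{var}-}\le w(s)\le C_{\mathrm{var}+}$ by $\max(1,|s|)$ gives the corollary.

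There is no real obstacle. The only point needing care is that \Cref{lem:tilted-limits} gives limits of $(-s)^2v_s$ and $s^2v_s$, and we must pass to square roots. Continuity of the square root on $(0,\infty)$ handles this, and it is also why the limits being positive, not merely finite, matters for the lower bound.
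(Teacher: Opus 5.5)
Your proposal is correct and follows the paper's own argument. You use continuity and positivity of $\max(1,|s|)\,v_s^{1/2}$ from \Cref{lem:tilted-continuity} together with the positive finite limits at $\pm\infty$ from \Cref{lem:tilted-limits}, and you spell out the compactness step that the paper leaves implicit. (Minor point: \Cref{lem:tilted-limits} already states the limits for $-s\,v_s^{1/2}$ and $s\,v_s^{1/2}$ directly, so the square-root step you mention is not needed.)
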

\begin{proof}
From \Cref{lem:tilted-continuity}, we see that $v_s^{1/2}\cdot\max(1,|s|)$ is a
continuous
function of
$s\in\R$.
Furthermore, we have $v_s^{1/2}\cdot\max(1,|s|)>0$ for each
$s$.
By \Cref{lem:tilted-limits}, we have that the two limits
\[
\lim_{s\searrow-\infty}-sv_s^{1/2}
\qquad
\lim_{s\nearrow+\infty}sv_s^{1/2}
\]
both exist and are positive and finite.
This is sufficient to prove the existence of $C_{\mathrm{var}-}$ and
$C_{\mathrm{var}+}$, as desired.
\end{proof}

\begin{corollary}\label{cor:tilted-support-interval}
There exist absolute constants $L_-<0<L_+$ such that
\[
[L_-,L_+]\subseteq\operatorname{supp}\widetilde X_s
=\left[-\frac{a_s}{v_s^{1/2}},\frac{1-a_s}{v_s^{1/2}}\right]
\]
for all $s\in\R$; i.e.
$-a_s/v_s^{1/2}\le L_-$ and
$L_+\le(1-a_s)/v_s^{1/2}$ hold for all
$s\in\R$.
\end{corollary}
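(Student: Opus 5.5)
The statement to prove is \Cref{cor:tilted-support-interval}. The plan is a compactness argument. It bounds the continuous function $s\mapsto a_s/v_s^{1/2}$ away from zero, and similarly $s\mapsto(1-a_s)/v_s^{1/2}$. Once that is done, we take $L_-=-\inf_s a_s/v_s^{1/2}$ and $L_+=\inf_s(1-a_s)/v_s^{1/2}$, provided both infima are strictly positive.

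First, by \Cref{lem:tilted-continuity}, $a_s$ and $v_s$ are continuous and $v_s>0$. Also $X_s$ has a density supported on $[0,1]$ that is positive on $(0,1)$, so $0<a_s<1$. Hence both quotients are continuous, strictly positive functions of $s\in\R$. On any compact interval $[-S,S]$ they therefore attain a positive minimum.

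Second, we handle the tails with \Cref{lem:tilted-limits}. As $s\searrow-\infty$, write
\[
\frac{a_s}{v_s^{1/2}}=\frac{-sa_s}{-sv_s^{1/2}} .
\]
Both the numerator and the denominator converge to positive finite limits, so the ratio tends to a positive constant. For the other quotient, $1-a_s\to1$ while $v_s^{1/2}\to0$, so $(1-a_s)/v_s^{1/2}\to+\infty$. Symmetrically, as $s\nearrow+\infty$, $(1-a_s)/v_s^{1/2}=s(1-a_s)/(sv_s^{1/2})$ tends to a positive finite limit. Meanwhile $a_s\to1$ and $v_s\to0$, so $a_s/v_s^{1/2}\to\infty$.

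Third, we combine the two steps. Because of the limits, there is an $S$ such that each quotient exceeds half its (positive or infinite) limit, capped at $1$, for $|s|>S$. On $[-S,S]$ we use the positive minimum from the first step. The infimum over all of $\R$ is therefore positive for both quotients, which yields the desired $L_-<0<L_+$.

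The only mildly delicate point is the limit $\lim_{s\nearrow\infty}a_s=1$, and its counterpart $a_s\to0$ as $s\searrow-\infty$. These follow directly from \Cref{lem:tilted-limits}, since $s(1-a_s)$ and $-sa_s$ stay bounded. I do not expect a real obstacle here.
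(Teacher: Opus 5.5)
Your proposal is correct and follows the paper's proof essentially verbatim. Both use continuity and strict positivity of the two quotients (from \Cref{lem:tilted-continuity} and $0<a_s<1$), and both take the four tail limits from \Cref{lem:tilted-limits}, namely a positive finite limit on one side and $+\infty$ on the other for each quotient. Your only addition is to spell out the compactness step and the ratio manipulations that the paper calls ``a direct consequence''.
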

\begin{proof}
From \Cref{lem:tilted-continuity}, we see that both $-a_s/v_s^{1/2}$ and
$(1-a_s)/v_s^{1/2}$ are continuous functions of
$s\in\R$.
Furthermore, we have
\[
-\frac{a_s}{v_s^{1/2}}
<0
<\frac{1-a_s}{v_s^{1/2}}
\]
for each $s$.
Thus it suffices to show the following four statements:
\[
\lim_{s\searrow-\infty}-\frac{a_s}{v_s^{1/2}}\text{ exists and is }<0
\qquad
\lim_{s\nearrow+\infty}-\frac{a_s}{v_s^{1/2}}=-\infty
\]
\[
\lim_{s\searrow-\infty}\frac{1-a_s}{v_s^{1/2}}=+\infty
\qquad
\lim_{s\nearrow+\infty}\frac{1-a_s}{v_s^{1/2}}\text{ exists and is }>0.
\]
This is a direct consequence of \Cref{lem:tilted-limits}.
\end{proof}

\begin{lemma}\label{lem:tilted-density-floor}
Let $L_-<0<L_+$ be any real numbers satisfying the conclusion of
\Cref{cor:tilted-support-interval}
(namely,
$[L_-,L_+]\subseteq\operatorname{supp}\widetilde
X_s=[-a_s/v_s^{1/2},(1-a_s)/v_s^{1/2}]$
for all
$s\in\R$).
There exists $L_{\mathrm{vert}}>0$, depending only on
$L_-$ and
$L_+$, such that
$f_{\widetilde X_s}(t)\ge L_{\mathrm{vert}}$ for all
$s\in\R$ and for all
$t\in(L_-,L_+]$ (here
$f_{\widetilde X_s}$ denotes the density function of
$\widetilde X_s$).
\end{lemma}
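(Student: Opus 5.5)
The plan is to write the density of $\widetilde X_s$ explicitly and bound it below on $(L_-,L_+]$ by splitting into a compact range of $s$ and the two tails $s\to\pm\infty$. By change of variables,
\[
 f_{\widetilde X_s}(t)=v_s^{1/2}\,\frac{g_s(a_s+v_s^{1/2}t)}{G_{s,0}},
 \qquad g_s(x)=\frac{e^{sx}}{2x^{1/2}}\mathbf 1_{(0,1]}(x).
\]
For fixed $t$ in the interior of the support this expression is jointly continuous in $(s,t)$ and positive. Continuity follows from \Cref{lem:tilted-continuity}, since $a_s$, $v_s$, $G_{s,0}$ are continuous and $v_s,G_{s,0}>0$. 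Positivity holds because $x=a_s+v_s^{1/2}t\in(0,1]$ whenever $t\in(L_-,L_+]\subseteq\operatorname{supp}\widetilde X_s$.

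The one subtlety is the left endpoint. At $t=-a_s/v_s^{1/2}$ we have $x=0$, where $g_s$ blows up to $+\infty$ rather than vanishing, so this endpoint only helps a lower bound. The right endpoint $x=1$ gives the finite positive value $e^s/2$. On any compact $s$-interval $[-S,S]$, we bound the density from below by $v_s^{1/2}e^{-|s|}/(2G_{s,0})$, using $x^{-1/2}\ge1$ and $e^{sx}\ge e^{-|s|}$ on $(0,1]$. This bound is continuous and positive in $s$, so it has a positive minimum on $[-S,S]$.

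For the tails, the plan is to use the rescalings already set up in \Cref{lem:tilted-limits}. For $s\to-\infty$, $-sX_s$ has density proportional to $u^{-1/2}e^{-u}$ on $[0,-s]$, which converges to a Gamma$(1/2)$ law. By \Cref{lem:tilted-limits}, $\widetilde X_s$ is an affine image of $-sX_s$ with coefficients converging to finite limits. Hence $f_{\widetilde X_s}$ converges locally uniformly on the open limiting support to the standardized Gamma$(1/2)$ density, which is bounded below on the relevant compact subinterval.

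Two points need care here. First, $L_-$ may sit at the limiting left endpoint, but there the density is $+\infty$, so near the left end the lower bound still holds for all large $|s|$. I would argue this directly from the monotone decreasing form $x^{-1/2}e^{sx}$ for $s<0$: the density is decreasing in $t$, so it suffices to bound it at $t=L_+$. Second, for the same reason, for every $s\le0$ it suffices to bound $f_{\widetilde X_s}(L_+)$, and $(L_+-L_-)\cdot$ this suffices. For $s\to+\infty$, $s(X_s-1)$ converges to the law with density $e^{u}$ on $(-\infty,0]$, by the splitting argument in \Cref{lem:tilted-limits}. The interval $[L_-,L_+]$ then stays a fixed distance inside the limiting support on the right and away from the singular region $x\approx0$. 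There, $(u/s+1)^{-1/2}\to1$ uniformly, so the density converges uniformly to the standardized exponential density, which is positive on the compact interval.

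Choosing $S$ large enough that both tail regimes give explicit positive bounds (say half the limiting infimum), and combining with the compact-range minimum, yields $L_{\mathrm{vert}}$. It depends only on $L_\pm$, which fix the compact interval. The main obstacle is the uniform convergence near the support endpoints in the $s\to+\infty$ regime. I would handle it by noting that $L_+$ is strictly below the limiting right endpoint $\lim(1-a_s)/v_s^{1/2}$, and that in the $s\to+\infty$ case the left endpoint $-a_s/v_s^{1/2}\to-\infty$, so the singular region lies far from $[L_-,L_+]$.
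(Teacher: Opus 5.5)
Your proof is correct, and its overall plan matches the paper's: a compact range of $s$ is handled by continuity, and the two tails $s\to\pm\infty$ are handled by the rescalings of \Cref{lem:tilted-limits}. The differences are in the local details.

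The paper works with the single formula $h(s,t)$ on its natural domain and uses its joint continuity. It shows that the limit as $t\downarrow L_-$ is positive or $+\infty$. It then computes the limits as $s\to\pm\infty$ at every $t_0\ge L_-$, and leaves the final compactness step implicit.

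You instead use the explicit bound $g_s(x)\ge e^{-|s|}/2$ on $(0,1]$ for $|s|\le S$. For $s\le0$ you use the monotonicity of $x^{-1/2}e^{sx}$ to reduce to the single point $t=L_+$. This removes the left-endpoint singularity from the analysis entirely and makes the compactness step explicit.

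One statement should be dropped. The hypothesis only yields $L_+\le\lim_{s\to+\infty}(1-a_s)/v_s^{1/2}$, so your claim that $[L_-,L_+]$ stays a fixed distance inside the limiting support on the right is unsupported. It is also not needed. The factor $(u/s+1)^{-1/2}$ is regular at $u=0$ (that is, at $x=1$). Hence the rescaled density converges uniformly on all of $[L_-,L_+]$ to the standardized exponential density, which is positive up to and including its right endpoint. The right endpoint is therefore not the obstacle you describe.
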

\begin{proof}
Define the function
\[
h(s,t):=\frac1{G_{s,0}}v_s^{1/2}\frac1{2(v_s^{1/2}t+a_s)^{1/2}}\cdot\exp(s(v_s^{1/2}t+a_s))
\]
with domain $\operatorname{dom}h=\{(s,t)\in\R\times\R:t>-a_s/v_s^{1/2}\}$, where
\[
G_{s,0}:=\int_0^1\frac1{2(t')^{1/2}}\cdot\exp(st')\,dt'.
\]
By assumption, we have that for all $s\in\R$ and for all
$t\in(L_-,L_+]$,
$h(s,t)$ is well-defined and equals
$f_{\widetilde X_s}(t)$.
Thus it suffices to give a lower bound for $h(s,t)$ on the domain
$(s,t)\in\R\times(L_-,L_+]$.

By \Cref{lem:tilted-continuity}, we see that $h$ is a jointly continuous
function of
$(s,t)$ which is positive everywhere.
We also see that $(v_s^{1/2}t+a_s)^{1/2}\cdot h(s,t)$ can be extended to a
positive and jointly continuous function on
$\R\times\R$ (even though the multiplicand
$(v_s^{1/2}t+a_s)^{1/2}$ is only defined on the set where
$t\ge-a_s/v_s^{1/2}$).
This implies that for any $s_0\in\R$,
$\lim_{(s,t)\to(s_0,L_-)}h(s,t)$ exists and is either positive and finite, or is
$+\infty$.

Thus, it suffices to show that for all $t_0\in[L_-,+\infty)$, the limits
$\lim_{(s,t)\to(-\infty,t_0)}h(s,t)$ and
$\lim_{(s,t)\to(+\infty,t_0)}h(s,t)$ exist, each of which is either positive and
finite, or is
$+\infty$.

Let us first consider the case when $s\searrow-\infty$.
For $s<0$, rewrite
$h(s,t)$ as
\[
h(s,t)=
\left(\int_0^{-s}\frac1{(t')^{1/2}}\cdot\exp(-t')\,dt'\right)^{-1}
(-s)v_s^{1/2}\frac1{(-sv_s^{1/2}t-sa_s)^{1/2}}
\cdot\exp(-(-sv_s^{1/2}t-sa_s)).
\]
From the Lebesgue dominated convergence theorem and \Cref{lem:tilted-limits},
the limit
\[
\lim_{(s,t)\to(-\infty,t_0)}h(s,t)
\]
exists and is positive, as desired.

Now let us consider the case when $s\nearrow+\infty$.
For $s>0$, rewrite
$h(s,t)$ as
\[
h(s,t)=
\left(\int_{-s}^0\left(\frac{t'}s+1\right)^{-1/2}\cdot\exp(t')\,dt'\right)^{-1}
sv_s^{1/2}(v_s^{1/2}t+a_s)^{-1/2}\cdot\exp(sv_s^{1/2}t+s(a_s-1)).
\]
Now \Cref{lem:tilted-limits} implies that $\lim_{s\nearrow+\infty}v_s^{1/2}=0$
and
$\lim_{s\nearrow+\infty}a_s=1$, and Equation~\eqref{eq:endpoint-integral-limit}
from the end of the proof of
\Cref{lem:tilted-limits} implies that
\[
\lim_{s\nearrow+\infty}\int_{-s}^0\left(\frac{t'}s+1\right)^{-1/2}\cdot\exp(t')\,dt'
\]
exists and is positive and finite.
We conclude that $\lim_{(s,t)\to(+\infty,t_0)}h(s,t)$ exists and is positive and
finite, as desired.
\end{proof}

\begin{corollary}\label{cor:tilted-common-constants}
There exist absolute constants $L_-,L_+,L_{\mathrm{vert}}$ satisfying
$L_-<0<L_+$ and
$L_{\mathrm{vert}}>0$, such that
$L_-,L_+$ satisfy the conclusion of \Cref{cor:tilted-support-interval} (namely,
$[L_-,L_+]\subseteq\operatorname{supp}\widetilde
X_s=[-a_s/v_s^{1/2},(1-a_s)/v_s^{1/2}]$
for all
$s\in\R$) and
$L_{\mathrm{vert}}$ satisfies the conclusion of \Cref{lem:tilted-density-floor}
(namely,
$f_{\widetilde X_s}(t)\ge L_{\mathrm{vert}}$ for all
$s\in\R$ and for all
$t\in(L_-,L_+)$, where
$f_{\widetilde X_s}$ denotes the density function of
$\widetilde X_s$).
\end{corollary}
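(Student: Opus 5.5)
The plan is to combine \Cref{cor:tilted-support-interval} with \Cref{lem:tilted-density-floor}. The only subtlety is that the floor lemma is stated for $t\in(L_-,L_+]$, while the corollary asks for the open interval $(L_-,L_+)$. Since the open interval is contained in the half-open one, this is harmless.

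First, invoke \Cref{cor:tilted-support-interval} to obtain absolute constants $L_-<0<L_+$ with $[L_-,L_+]\subseteq\operatorname{supp}\widetilde X_s$ for every $s\in\R$. These are exactly the constants the corollary asks for, so the support part of the conclusion holds immediately.

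Second, feed these same $L_-,L_+$ into \Cref{lem:tilted-density-floor}. Its hypothesis is precisely the conclusion of \Cref{cor:tilted-support-interval}, which we have just verified. The lemma then produces $L_{\mathrm{vert}}>0$, depending only on $L_-$ and $L_+$. Because $L_-$ and $L_+$ are absolute constants, $L_{\mathrm{vert}}$ is an absolute constant as well. The lemma also gives $f_{\widetilde X_s}(t)\ge L_{\mathrm{vert}}$ for all $s$ and all $t\in(L_-,L_+]\supseteq(L_-,L_+)$.

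This establishes the stated lower bound on the density and completes the proof. There is essentially no obstacle here. The one point worth checking is that the choice of $L_{\mathrm{vert}}$ does not secretly depend on $s$, and the quantifier order in \Cref{lem:tilted-density-floor} already rules that out.

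As a remark for later use: this fact gives the rectangle $[L_-,L_+]\times[0,L_{\mathrm{vert}}]\subseteq R_{\widetilde X_s}$, which is used in the proof of \Cref{lem:density-log-derivative}.
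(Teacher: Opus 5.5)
Your proof is correct and is the same argument the paper gives. You take $L_-,L_+$ from \Cref{cor:tilted-support-interval}, then apply \Cref{lem:tilted-density-floor} to that pair to get an absolute $L_{\mathrm{vert}}>0$. Your note that $(L_-,L_+)\subseteq(L_-,L_+]$ correctly handles the one small mismatch in the interval endpoints.
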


\begin{proof}
\Cref{cor:tilted-support-interval} supplies absolute constants
$L_-<0<L_+$ with
$[L_-,L_+]\subseteq\operatorname{supp}\widetilde X_s$ for all
$s\in\R$.
\Cref{lem:tilted-density-floor} applied to this pair supplies
$L_{\mathrm{vert}}>0$ depending only on
$L_-,L_+$, hence absolute, with
$f_{\widetilde X_s}(t)\ge L_{\mathrm{vert}}$ for all
$s\in\R$ and
$t\in(L_-,L_+)$.
\end{proof}

\begin{lemma}\label{lem:tilted-density-bound-lower}
For any $\delta>0$, there exists
$C_{\mathrm{bound}-}>0$, depending only on
$\delta$, such that
\[
f_{\widetilde X_s}(t)\le C_{\mathrm{bound}-}\left(t-\frac{-a_s}{v_s^{1/2}}\right)^{-1/2}
\]
for all $s\in\R$ and for all
$t$ satisfying
\[
\frac{-a_s}{v_s^{1/2}}
<t
\le\frac{-a_s}{v_s^{1/2}}+\delta.
\]
\end{lemma}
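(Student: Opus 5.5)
We write the density explicitly and reduce the claim to the same uniform-in-$s$ continuity and limit arguments used for \Cref{lem:tilted-density-floor}. Put $u=t+a_s/v_s^{1/2}$, the distance from the left endpoint of the support of $\widetilde X_s$. Then
\[
 f_{\widetilde X_s}(t)=\frac{v_s^{1/2}}{G_{s,0}}\cdot\frac{\exp\bigl(s v_s^{1/2}u\bigr)}{2(v_s^{1/2}u)^{1/2}}
 =u^{-1/2}\cdot\frac{v_s^{1/4}}{2G_{s,0}}\exp\bigl(s v_s^{1/2}u\bigr).
\]
It therefore suffices to bound
\[
 k(s,u):=\frac{v_s^{1/4}}{2G_{s,0}}\exp\bigl(s v_s^{1/2}u\bigr)
\]
uniformly over $s\in\R$ and $0<u\le\delta$ (restricted to $u$ inside the support, though the formula makes sense for all $u\ge0$).

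For $s\le0$ the exponential factor is at most $1$. For $s>0$, \Cref{cor:tilted-variance-scale} gives $s v_s^{1/2}\le C_{\mathrm{var}+}$, so the factor is at most $e^{C_{\mathrm{var}+}\delta}$. Thus it remains to show that $v_s^{1/4}/G_{s,0}$ is bounded on $\R$. This quantity is continuous and positive by \Cref{lem:tilted-continuity}, so the task reduces to checking its behaviour as $s\to\pm\infty$.

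As $s\to-\infty$, substituting $t'=-st$ gives
\[
 G_{s,0}=\tfrac12(-s)^{-1/2}\int_0^{-s}t'^{-1/2}e^{-t'}dt'\ \ge\ c(-s)^{-1/2},
\]
while $v_s^{1/4}\le C(-s)^{-1/2}$ by \Cref{cor:tilted-variance-scale}. Their ratio is therefore bounded. As $s\to+\infty$, write
\[
 G_{s,0}=e^s s^{-1}\cdot\tfrac12\int_{-s}^0(t/s+1)^{-1/2}e^t\,dt.
\]
By \eqref{eq:endpoint-integral-limit} the integral tends to $1$, so $G_{s,0}\ge c e^s/s$, while $v_s^{1/4}\le Cs^{-1/2}$. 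The ratio is thus at most $C s^{1/2}e^{-s}\to0$. Combining the two limits with continuity gives a uniform bound $\sup_s v_s^{1/4}/G_{s,0}<\infty$, and the lemma follows with $C_{\mathrm{bound}-}=\tfrac12\bigl(\sup_s v_s^{1/4}/G_{s,0}\bigr)e^{C_{\mathrm{var}+}\delta}$.

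The main obstacle is making the exponential factor uniform for large positive $s$. The point is that $v_s^{1/2}\sim1/s$, so the product $s v_s^{1/2}$ stays bounded; without this scaling the factor $\exp(s v_s^{1/2}u)$ could grow with $s$. The $s>0$ regime nevertheless needs the explicit $e^s$ growth of $G_{s,0}$ to dominate, and \eqref{eq:endpoint-integral-limit} supplies exactly that.
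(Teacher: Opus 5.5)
Your proof is correct and follows essentially the paper's route: the same factorization $f_{\widetilde X_s}=u^{-1/2}\,v_s^{1/4}e^{s v_s^{1/2}u}/(2G_{s,0})$, the bound $s v_s^{1/2}\le C_{\mathrm{var}+}$ for $s>0$, and the substitution $t'=-st$ together with $v_s^{1/4}\le C(-s)^{-1/2}$ as $s\to-\infty$. The only difference is that the paper uses monotonicity of $G_{s,0}$ in $s$ ($G_{s,0}\ge G_{0,0}=1$ for $s\ge0$ and $G_{s,0}\ge G_{-1,0}$ on $[-1,0)$) instead of your continuity-plus-limits step, so contrary to your closing remark, neither the $e^s$ growth of $G_{s,0}$ nor the endpoint-integral limit is needed for $s>0$, because $v_s$ is bounded there.
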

\begin{proof}
As in the proof of \Cref{lem:tilted-density-floor}, define the function
\[
h(s,t):=\frac1{G_{s,0}}v_s^{1/2}\frac1{2(v_s^{1/2}t+a_s)^{1/2}}\cdot\exp(s(v_s^{1/2}t+a_s))
\]
with domain $\operatorname{dom}h=\{(s,t)\in\R\times\R:t>-a_s/v_s^{1/2}\}$, where
\[
G_{s,0}:=\int_0^1\frac1{2(t')^{1/2}}\cdot\exp(st')\,dt'.
\]
Note that $f_{\widetilde X_s}(t)=h(s,t)$ for all
$t\in\operatorname{supp}\widetilde X_s=[-a_s/v_s^{1/2},(1-a_s)/v_s^{1/2}]$, so
it suffices to give an upper bound on
$h(s,t)\cdot(t-(-a_s/v_s^{1/2}))^{1/2}$ for all
$(s,t)$ satisfying
\[
\frac{-a_s}{v_s^{1/2}}
<t
\le\frac{-a_s}{v_s^{1/2}}+\delta.
\]
Indeed, we have
\[
h(s,t)\cdot\left(t-\frac{-a_s}{v_s^{1/2}}\right)^{1/2}
=\frac1{2G_{s,0}}v_s^{1/4}\cdot\exp(s(v_s^{1/2}t+a_s)).
\]
Now, if $s\ge0$, we have
$G_{s,0}\ge G_{0,0}=1$ and
$\exp(s(v_s^{1/2}t+a_s))\le\exp(sv_s^{1/2}\cdot\delta)$.
Using \Cref{cor:tilted-variance-scale}, we conclude that
$h(s,t)\cdot(t-(-a_s/v_s^{1/2}))^{1/2}$
is bounded from above, as desired.

Now suppose $s<0$.
We split into two cases: $-1\le s<0$ and
$s<-1$.

In the first case, when $-1\le s<0$, we have
$G_{s,0}\ge G_{-1,0}$ and
$\exp(s(v_s^{1/2}t+a_s))\le1$.
In exactly the same manner as before, we conclude that
$h(s,t)\cdot(t-(-a_s/v_s^{1/2}))^{1/2}$ is bounded from above, as desired.

In the second case, when $s<-1$, we have
\[
G_{s,0}=\frac1{(-s)^{1/2}}\int_0^{-s}\frac1{2(t')^{1/2}}\cdot\exp(-t')\,dt'.
\]
Thus $(-s)^{1/2}\cdot G_{s,0}$ is bounded from below.
Additionally, we still have $\exp(s(v_s^{1/2}t+a_s))\le1$.
Using \Cref{cor:tilted-variance-scale}, we conclude that
$h(s,t)\cdot(t-(-a_s/v_s^{1/2}))^{1/2}$
is bounded from above, as desired.
\end{proof}

\begin{lemma}\label{lem:tilted-density-bound-upper}
There exists an absolute constant $C_{\mathrm{bound}+}>0$ such that
\[
f_{\widetilde X_s}\left(\frac{1-a_s}{v_s^{1/2}}\right)\le C_{\mathrm{bound}+}
\]
for all $s\in\R$.
\end{lemma}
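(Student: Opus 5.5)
The plan is to compute the density at the right endpoint explicitly and then bound it separately on a few ranges of $s$, using \Cref{cor:tilted-variance-scale} and the limit \eqref{eq:endpoint-integral-limit}. By the change of variables $\widetilde X_s=(X_s-a_s)/v_s^{1/2}$, we have $f_{\widetilde X_s}(t)=v_s^{1/2}f_{X_s}(v_s^{1/2}t+a_s)$. At $t=(1-a_s)/v_s^{1/2}$ the argument of $f_{X_s}$ equals $1$. Since $f_{X_s}(1)=g_s(1)/G_{s,0}=e^s/(2G_{s,0})$, we get
\[
 f_{\widetilde X_s}\left(\frac{1-a_s}{v_s^{1/2}}\right)=\frac{v_s^{1/2}e^{s}}{2G_{s,0}}.
\]
The task is therefore to bound this expression uniformly in $s\in\R$, and I would split into four ranges.

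For $-1\le s\le0$, we have $e^s\le1$ and $G_{s,0}\ge G_{-1,0}>0$, because $G_{s,0}$ is increasing in $s$. \Cref{cor:tilted-variance-scale} gives $v_s^{1/2}\le C_{\mathrm{var}+}$, so the expression is bounded. For $s<-1$, the substitution used in the proof of \Cref{lem:tilted-density-bound-lower} gives
\[
 G_{s,0}=(-s)^{-1/2}\int_0^{-s}\tfrac12(t')^{-1/2}e^{-t'}\,dt'\ge c(-s)^{-1/2},
\]
where $c$ is the value of the integral at $-s=1$. Combining this with $e^s\le1$ and $v_s^{1/2}\le C_{\mathrm{var}+}/|s|$ bounds the expression by $C|s|^{-1/2}\le C$. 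For $0<s\le1$, we use $G_{s,0}\ge G_{0,0}=1$, $e^s\le e$, and $v_s^{1/2}\le C_{\mathrm{var}+}$.

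The only range needing care is $s>1$, where the factor $e^s$ must be cancelled by $G_{s,0}$. I would substitute $t'=1+u/s$ in the integral defining $G_{s,0}$, which gives
\[
 G_{s,0}=\frac{e^s}{2s}I(s),\qquad I(s)=\int_{-s}^0\left(\frac us+1\right)^{-1/2}e^u\,du .
\]
Hence the expression equals $s\,v_s^{1/2}/I(s)$. \Cref{cor:tilted-variance-scale} gives $s\,v_s^{1/2}\le C_{\mathrm{var}+}$. On $[-s,0]$ the factor $(u/s+1)^{-1/2}$ is at least $1$, so $I(s)\ge\int_{-s}^0e^u\,du=1-e^{-s}\ge1-e^{-1}$. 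This yields a uniform bound; alternatively, \eqref{eq:endpoint-integral-limit} shows $I(s)\to1$. Taking $C_{\mathrm{bound}+}$ to be the maximum of the four constants completes the proof. I expect no real obstacle here; the main point is the change of variables for $s>1$ that removes the $e^s$ factor.
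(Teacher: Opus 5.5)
Your proposal is correct and follows essentially the paper's argument: the same closed form $v_s^{1/2}e^{s}/(2G_{s,0})$, the same rescaling of $G_{s,0}$ for $s<0$, and the same substitution that turns the $s>0$ case into $s\,v_s^{1/2}/I(s)$, with $s\,v_s^{1/2}$ controlled by \Cref{cor:tilted-variance-scale}. The only difference is that the paper uses continuity in $s$ together with finiteness of the limits at $\pm\infty$ (via \eqref{eq:endpoint-integral-limit}), whereas you give explicit bounds on four ranges using monotonicity of $G_{s,0}$ and the elementary estimate $I(s)\ge 1-e^{-s}$; this makes your version slightly more self-contained without changing the proof.
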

\begin{proof}
We compute
\[
f_{\widetilde X_s}\left(\frac{1-a_s}{v_s^{1/2}}\right)
=\frac1{2G_{s,0}}v_s^{1/2}\cdot\exp(s)
\]
where $G_{s,0}:=\int_0^1\frac1{2t^{1/2}}\cdot\exp(st)\,dt$.
From \Cref{lem:tilted-continuity}, we see that $f_{\widetilde
X_s}((1-a_s)/v_s^{1/2})$ is a
continuous function of
$s\in\R$.
Thus it suffices to show that the two limits
\[
\lim_{s\searrow-\infty}f_{\widetilde X_s}\left(\frac{1-a_s}{v_s^{1/2}}\right)
\qquad
\lim_{s\nearrow+\infty}f_{\widetilde X_s}\left(\frac{1-a_s}{v_s^{1/2}}\right)
\]
exist and are finite.

Let us first consider the case when $s\searrow-\infty$.
For $s<0$, we have
\[
f_{\widetilde X_s}\left(\frac{1-a_s}{v_s^{1/2}}\right)
=\frac1{(-s)^{1/2}}\cdot\exp(s)\cdot
\left(\int_0^{-s}\frac1{t^{1/2}}\cdot\exp(-t)\,dt\right)^{-1}
\cdot(-s)v_s^{1/2}.
\]
Using the Lebesgue dominated convergence theorem and \Cref{lem:tilted-limits},
we conclude
that
\[
\lim_{s\searrow-\infty}f_{\widetilde X_s}\left(\frac{1-a_s}{v_s^{1/2}}\right)=0,
\]
as desired.

Now let us consider the case when $s\nearrow+\infty$.
For $s>0$, we have
\[
f_{\widetilde X_s}\left(\frac{1-a_s}{v_s^{1/2}}\right)
=
\left(\int_{-s}^0\left(\frac ts+1\right)^{-1/2}\cdot\exp(t)\,dt\right)^{-1}
\cdot sv_s^{1/2}.
\]
Using Equation~\eqref{eq:endpoint-integral-limit} from the end of the proof of
\Cref{lem:tilted-limits}, we conclude that
\[
\lim_{s\nearrow+\infty}f_{\widetilde X_s}\left(\frac{1-a_s}{v_s^{1/2}}\right)
\]
exists and is finite, as desired.
\end{proof}

\begin{lemma}\label{lem:tilted-moment-bounds}
There exist absolute constants $M_3,M_4>0$ such that for all
$s\in\R$, we have
\[
\E{X_s\sim g_s/G_{s,0}}{|\widetilde X_s|^3}\le M_3
\qquad
\text{and}
\qquad
\E{X_s\sim g_s/G_{s,0}}{\widetilde X_s^4}\le M_4.
\]
\end{lemma}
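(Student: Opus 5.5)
The plan is to bound the fourth moment, since Lyapunov's inequality then gives $\E|\widetilde X_s|^3\le(\E\widetilde X_s^4)^{3/4}$, so $M_3=M_4^{3/4}$ works. Write $m_4(s):=\E\widetilde X_s^4=\mu_4(s)/v_s^2$, where $\mu_4(s)$ is the fourth central moment of $X_s$. As with \Cref{cor:tilted-variance-scale}, the argument is continuity plus behaviour at the two ends.

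\emph{Continuity.} The central moment $\mu_4(s)$ is a polynomial in the ratios $G_{s,p}/G_{s,0}$ for $p\le4$. Each $G_{s,p}$ is continuous in $s$ by the dominated convergence argument of \Cref{lem:tilted-continuity}. Since $v_s>0$, the function $m_4$ is therefore continuous on $\R$. It thus suffices to show that $\limsup_{s\to\pm\infty}m_4(s)<\infty$; a continuous function with finite limsups at both ends is bounded on $\R$.

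\emph{The end $s\to-\infty$.} I will reuse the rescaling from \Cref{lem:tilted-limits}. The variable $W_s=-sX_s$ has density proportional to $t^{-1/2}e^{-t}$ on $[0,-s]$. Since $m_4$ is scale invariant, $m_4(s)$ equals the standardized fourth moment of $W_s$. By dominated convergence, with dominating function $t^{k-1/2}e^{-t}$ on $[0,\infty)$, the moments of $W_s$ of order $k\le4$ converge to those of the $\mathrm{Gamma}(1/2,1)$ law. Its variance is $1/2>0$, so $m_4(s)$ converges to that law's finite standardized fourth moment. Alternatively, one can note that $X_s$ has a monotone density for $s\le0$ and invoke a logconcave-type moment comparison, but the limit argument is cleaner.

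\emph{The end $s\to+\infty$.} This is the main obstacle, because the density singularity sits at the far endpoint, exactly as in \Cref{eq:endpoint-integral-split}. I will use $Y_s=s(X_s-1)$, with density proportional to $(t/s+1)^{-1/2}e^{t}$ on $[-s,0]$. I split each moment integral $\int |t|^k(t/s+1)^{-1/2}e^t\,dt$ at $-s/2$.
\begin{itemize}
\item On $[-s,-s/2]$, the integrand is at most $s^ke^{-s/2}(t/s+1)^{-1/2}$, so this piece is at most $\sqrt2\,s^{k+1}e^{-s/2}\to0$.
\item On $[-s/2,0]$, the factor $(t/s+1)^{-1/2}$ is at most $\sqrt2$, so dominated convergence with dominating function $\sqrt2|t|^ke^{t}$ gives convergence to $\int_{-\infty}^0|t|^ke^t\,dt$.
\end{itemize}
Together with \eqref{eq:endpoint-integral-limit} for the normalizer, the moments of order $k\le4$ of $Y_s$ converge to those of the reflected $\mathrm{Exp}(1)$ law. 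That law has variance $1$, so $m_4(s)$ converges to the finite standardized fourth moment of the exponential distribution. Hence both limits are finite, and $M_4$ can be taken as the supremum of the continuous function $m_4$.
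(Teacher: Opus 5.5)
Your proposal is correct and follows the paper's proof. Both arguments use continuity in $s$ plus finite limits at both ends, via the rescalings $-sX_s$ (converging to the $\mathrm{Gamma}(1/2,1)$ law) and $s(X_s-1)$ (with the split at $-s/2$, converging to the reflected exponential). The only difference is that you get the third-moment bound from the fourth by Lyapunov's inequality, $M_3=M_4^{3/4}$, whereas the paper repeats the limit argument for $\mathbb{E}|\widetilde X_s|^3$. You should drop your parenthetical alternative: the density $t^{-1/2}e^{st}$ is log-convex rather than logconcave, so that route would not apply as stated, but your proof does not rely on it.
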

\noindent\textbf{Remark (not used below).}
The following sharper values are conjectural:
\begin{align*}
M_3
&=\lim_{s\searrow-\infty}\E{X_s\sim g_s/G_{s,0}}{|\widetilde X_s|^3}\\
&=\int_{-1/2^{1/2}}^{+\infty}\frac1{(2\pi)^{1/2}}
\left(\frac t{2^{1/2}}+\frac12\right)^{-1/2}
\exp\left(-\left(\frac t{2^{1/2}}+\frac12\right)\right)\cdot|t|^3\,dt\\
&=\frac{12}{(e\pi)^{1/2}}+22^{1/2}-42^{1/2}\,\operatorname{erf}\frac1{2^{1/2}}
\in(3.072,3.073)
\end{align*}
\[
M_4
=\lim_{s\searrow-\infty}\E{X_s\sim g_s/G_{s,0}}{\widetilde X_s^4}
=\int_{-1/2^{1/2}}^{+\infty}\frac1{(2\pi)^{1/2}} \left(\frac t{2^{1/2}}+\frac12\right)^{-1/2} \exp\left(-\left(\frac t{2^{1/2}}+\frac12\right)\right)\cdot t^4\,dt
=15.
\]
The proof below uses only the existence of finite absolute constants; in
particular, it gives the non-sharp bound $M_4<300$.

\begin{proof}[Proof of \Cref{lem:tilted-moment-bounds}]
From \Cref{lem:tilted-continuity}, we have that $a_s$ and
$v_s$ and
\[
G_{s,0}:=\int_0^1\frac1{2t^{1/2}}\cdot\exp(st)\,dt
\]
are continuous functions of $s\in\R$; furthermore,
$v_s$ and
$G_{s,0}$ are each positive for all
$s\in\R$.
Since
\[
\E{X_s\sim g_s/G_{s,0}}{|\widetilde X_s|^3}
=\frac1{G_{s,0}}\int_0^1\frac1{2t^{1/2}}\cdot\exp(st)\cdot\left|\frac{t-a_s}{v_s^{1/2}}\right|^3\,dt
\]
\[
\E{X_s\sim g_s/G_{s,0}}{\widetilde X_s^4}
=\frac1{G_{s,0}}\int_0^1\frac1{2t^{1/2}}\cdot\exp(st)\cdot\left(\frac{t-a_s}{v_s^{1/2}}\right)^4\,dt,
\]
it follows from Lebesgue dominated convergence theorem that both
$\E{X_s\sim g_s/G_{s,0}}{|\widetilde X_s|^3}$ and
$\E{X_s\sim g_s/G_{s,0}}{\widetilde X_s^4}$ are continuous functions of
$s\in\R$.
Thus, it suffices to show that the four limits
\[
\lim_{s\searrow-\infty}\E{X_s\sim g_s/G_{s,0}}{|\widetilde X_s|^3}
\qquad
\lim_{s\nearrow+\infty}\E{X_s\sim g_s/G_{s,0}}{|\widetilde X_s|^3}
\]
\[
\lim_{s\searrow-\infty}\E{X_s\sim g_s/G_{s,0}}{\widetilde X_s^4}
\qquad
\lim_{s\nearrow+\infty}\E{X_s\sim g_s/G_{s,0}}{\widetilde X_s^4}
\]
exist and are finite.

Let us first consider the case when $s\searrow-\infty$.
For fixed $s<0$, the random variable
$-sX_s$ has distribution supported on
$[0,-s]$ and density function proportional to
$\frac1{t^{1/2}}\cdot\exp(-t)$.
Now we have
\[
\widetilde X_s=\frac{-sX_s-\E{X_s\sim g_s/G_{s,0}}{-sX_s}}{\V{X_s\sim g_s/G_{s,0}}{-sX_s}^{1/2}}.
\]
Thus
\[
\E{X_s\sim g_s/G_{s,0}}{-sX_s}
=\left(\int_0^{-s}\frac1{t^{1/2}}\cdot\exp(-t)\cdot t\,dt\right)
\left(\int_0^{-s}\frac1{t^{1/2}}\cdot\exp(-t)\,dt\right)^{-1}
\]
\begin{align*}
\E{X_s\sim g_s/G_{s,0}}{|\widetilde X_s|^3}
&=\left(\int_0^{-s}\frac1{t^{1/2}}\cdot\exp(-t)\cdot(t-\E{X_s\sim g_s/G_{s,0}}{-sX_s})^2\,dt\right)^{-3/2}\\
&\quad\cdot\left(\int_0^{-s}\frac1{t^{1/2}}\cdot\exp(-t)\cdot|t-\E{X_s\sim g_s/G_{s,0}}{-sX_s}|^3\,dt\right)\\
&\quad\cdot\left(\int_0^{-s}\frac1{t^{1/2}}\cdot\exp(-t)\,dt\right)^{1/2}
\end{align*}
\begin{align*}
\E{X_s\sim g_s/G_{s,0}}{\widetilde X_s^4}
&=\left(\int_0^{-s}\frac1{t^{1/2}}\cdot\exp(-t)\cdot(t-\E{X_s\sim g_s/G_{s,0}}{-sX_s})^2\,dt\right)^{-2}\\
&\quad\cdot\left(\int_0^{-s}\frac1{t^{1/2}}\cdot\exp(-t)\cdot(t-\E{X_s\sim g_s/G_{s,0}}{-sX_s})^4\,dt\right)
\left(\int_0^{-s}\frac1{t^{1/2}}\cdot\exp(-t)\,dt\right)
\end{align*}
from which it follows that each of
$\lim_{s\searrow-\infty}\E{X_s\sim g_s/G_{s,0}}{-sX_s}$,
$\lim_{s\searrow-\infty}\E{X_s\sim g_s/G_{s,0}}{|\widetilde X_s|^3}$, and
$\lim_{s\searrow-\infty}\E{X_s\sim g_s/G_{s,0}}{\widetilde X_s^4}$ exists and is
finite.

Now let us consider the case when $s\nearrow+\infty$.
For fixed $s>0$, the random variable
$s(X_s-1)$ has distribution supported on
$[-s,0]$ and density function proportional to
$(\frac ts+1)^{-1/2}\cdot\exp(t)$.
We would like to say that the law of $s(X_s-1)$ ``converges'' in some sense to
the distribution
$\exp(t)\,dt$ (supported on
$(-\infty,0]$).

First, observe
\[
\widetilde X_s=\frac{s(X_s-1)-\E{X_s\sim g_s/G_{s,0}}{s(X_s-1)}}{\V{X_s\sim g_s/G_{s,0}}{s(X_s-1)}^{1/2}}.
\]
Thus
\[
\E{X_s\sim g_s/G_{s,0}}{s(X_s-1)}
=\left(\int_{-s}^0\left(\frac ts+1\right)^{-1/2}\cdot\exp(t)\cdot t\,dt\right)
\left(\int_{-s}^0\left(\frac ts+1\right)^{-1/2}\cdot\exp(t)\,dt\right)^{-1}
\]
\begin{align*}
\E{X_s\sim g_s/G_{s,0}}{|\widetilde X_s|^3}
&=\left(\int_{-s}^0\left(\frac ts+1\right)^{-1/2}\cdot\exp(t)\cdot(t-\E{X_s\sim g_s/G_{s,0}}{s(X_s-1)})^2\,dt\right)^{-3/2}\\
&\quad\cdot\left(\int_{-s}^0\left(\frac ts+1\right)^{-1/2}\cdot\exp(t)\cdot|t-\E{X_s\sim g_s/G_{s,0}}{s(X_s-1)}|^3\,dt\right)\\
&\quad\cdot\left(\int_{-s}^0\left(\frac ts+1\right)^{-1/2}\cdot\exp(t)\,dt\right)^{1/2}
\end{align*}
\begin{align*}
\E{X_s\sim g_s/G_{s,0}}{\widetilde X_s^4}
&=\left(\int_{-s}^0\left(\frac ts+1\right)^{-1/2}\cdot\exp(t)\cdot(t-\E{X_s\sim g_s/G_{s,0}}{s(X_s-1)})^2\,dt\right)^{-2}\\
&\quad\cdot\left(\int_{-s}^0\left(\frac ts+1\right)^{-1/2}\cdot\exp(t)\cdot(t-\E{X_s\sim g_s/G_{s,0}}{s(X_s-1)})^4\,dt\right)\\
&\quad\cdot\left(\int_{-s}^0\left(\frac ts+1\right)^{-1/2}\cdot\exp(t)\,dt\right).
\end{align*}
However, this time we cannot apply Lebesgue dominated convergence theorem to
conclude that $\lim_{s\nearrow+\infty}\E{X_s\sim g_s/G_{s,0}}{s(X_s-1)}$,
$\lim_{s\nearrow+\infty}\E{X_s\sim g_s/G_{s,0}}{|\widetilde X_s|^3}$, and
$\lim_{s\nearrow+\infty}\E{X_s\sim g_s/G_{s,0}}{\widetilde X_s^4}$ exist, due to
the singularity of
$(\frac ts+1)^{-1/2}$ near
$t=-s$.
Instead, we split the integral into two parts, by writing
$\int_{-s}^0=\int_{-s}^{-s/2}+\int_{-s/2}^0$.
By arguing in a similar manner as in the proof of \Cref{lem:tilted-limits}, we
can show
that $\lim_{s\nearrow+\infty}\E{X_s\sim g_s/G_{s,0}}{s(X_s-1)}$,
$\lim_{s\nearrow+\infty}\E{X_s\sim g_s/G_{s,0}}{|\widetilde X_s|^3}$, and
$\lim_{s\nearrow+\infty}\E{X_s\sim g_s/G_{s,0}}{\widetilde X_s^4}$ exist and are
finite, as desired.
\end{proof}

\begin{lemma}\label{lem:tilted-mean-monotone}
The map $s\mapsto sa_s$ is a strictly increasing function of
$s$ when
$s<0$, with
\[
\lim_{s\searrow-\infty}(-s)a_s=\frac12.
\]
\end{lemma}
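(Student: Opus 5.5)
The plan is to reduce everything to the rescaled variable $Y=-sX_s$ and to prove one covariance inequality. Fix $s<0$ and put $u=-s>0$. Then $Y$ has density proportional to $t^{-1/2}e^{-t}$ on $[0,u]$, and
\[
 -sa_s=m(u):=\frac{\int_0^u t^{1/2}e^{-t}\,dt}{\int_0^u t^{-1/2}e^{-t}\,dt}.
\]
As $s$ increases on $(-\infty,0)$, $u$ decreases. So $s\mapsto sa_s=-m(u)$ is strictly increasing in $s$ exactly when $m$ is strictly increasing in $u$.

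For monotonicity, I will differentiate $m$ in $u$ by the quotient rule. With $N(u)=\int_0^u t^{1/2}e^{-t}\,dt$ and $D(u)=\int_0^u t^{-1/2}e^{-t}\,dt$, we get
\[
 m'(u)=\frac{u^{-1/2}e^{-u}\bigl(uD(u)-N(u)\bigr)}{D(u)^2}.
\]
Now $uD(u)-N(u)=\int_0^u (u-t)\,t^{-1/2}e^{-t}\,dt$. This integral is strictly positive, because the integrand is positive on $(0,u)$. Hence $m'(u)>0$ for every $u>0$, and the map $s\mapsto sa_s$ is strictly increasing on $(-\infty,0)$. This is the same mechanism as the boundary-term computation in \Cref{lem:aux-logconcave-ratio}: enlarging the truncation interval adds mass at $t=u$, which lies above the current mean $m(u)<u$.

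For the limit, I will use dominated convergence as in \Cref{lem:tilted-limits}. As $u\to\infty$, $N(u)\to\Gamma(3/2)=\sqrt\pi/2$ and $D(u)\to\Gamma(1/2)=\sqrt\pi$. Therefore $(-s)a_s=m(u)\to 1/2$; this is the mean of the Gamma$(1/2,1)$ law.

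The main step needing care is the sign bookkeeping between $s$ and $u$, since the monotonicity direction flips under $u=-s$. Beyond that, the only technical point is justifying differentiation of $N$ and $D$ in $u$. This follows from the fundamental theorem of calculus, because the integrands are locally integrable at $0$ and continuous on $(0,\infty)$. Everything else is routine.
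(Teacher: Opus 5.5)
Your proposal is correct and follows essentially the same route as the paper's proof. Both rescale to $m(u)=ua_{-u}$ with $u=-s$ and differentiate by the quotient rule. Your observation that $uD(u)-N(u)=\int_0^u(u-t)t^{-1/2}e^{-t}\,dt>0$ is exactly the paper's remark $m(u)<u$, and both take the limit $\Gamma(3/2)/\Gamma(1/2)=1/2$ by dominated convergence.
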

\begin{proof}
For $u>0$, let
\[
m(u):=
\frac{\int_0^u t^{1/2}e^{-t}\,dt}{\int_0^u t^{-1/2}e^{-t}\,dt}.
\]
The random variable $uX_{-u}$ has density proportional to
$t^{-1/2}e^{-t}$ on
$[0,u]$, and hence
\[
m(u)
=\E{X_{-u}\sim g_{-u}/G_{-u,0}}{uX_{-u}}
=u a_{-u}.
\]
If $F(u):=\int_0^u t^{-1/2}e^{-t}\,dt$, direct differentiation gives
\[
m'(u)
=\frac{u^{-1/2}e^{-u}}{F(u)}\bigl(u-m(u)\bigr)
>0,
\]
because $m(u)<u$.
Thus $m$ is strictly increasing in
$u$, so
$sa_s=-m(-s)$ is strictly increasing in
$s<0$.

Finally, dominated convergence gives
\begin{align*}
\lim_{s\searrow-\infty}(-s)a_s
&=\lim_{u\to\infty}m(u)\\
&=\left(\int_0^{+\infty}t^{1/2}e^{-t}\,dt\right)
\left(\int_0^{+\infty}t^{-1/2}e^{-t}\,dt\right)^{-1}\\
&=\Gamma(3/2)\,\Gamma(1/2)^{-1}=\frac12.
\end{align*}
\end{proof}

\endgroup

\section{Reflected dynamics for membership sampling}\label{sec:kinetic}
We prove \Cref{thm:sampler} by simulating a particle that moves inside
$K$ and reflects at its boundary. We first define this process, including
its random position and velocity updates, and prove its mixing bound.
We then give a membership-query implementation and bound its error and
expected cost. The argument initially assumes a smooth boundary; the
proof concludes by removing this assumption. The final part of the
appendix extends the construction to logconcave distributions.

Fix an error tolerance $0<\varepsilon<1/2$ and a warmness bound $M\ge1$.
There are two estimates to establish. If $P$ bounds the Poincar\'e
constant of $\pi=\Unif(K)$, the process mixes from an $M$-warm start
in time $O(\sqrt P\log(2M/\varepsilon))$. Its simulation for this
time $T$ uses $\widetilde O(Mn^2T)$ expected membership queries.
Combining these estimates with \eqref{eq:kls} proves the theorem for
constant $M$. Throughout the appendix, the initial sample is supplied
to the algorithm, and only oracle queries are counted.

\subsection{Reflected dynamics and mixing}
We define the exact process, establish its invariant distribution, and
prove the relaxation bound used to choose its running time.

\label{subsec:kinetic-definition}

For now, let $K$ be a convex body with $C^\infty$ boundary satisfying
$B_2^n\subseteq K\subseteq RB_2^n$. Put $\pi=\Unif(K)$,
$\gamma_n=N(0,I_n)$, and $\widehat\pi=\pi\otimes\gamma_n$.
The state space is $K\times\mathbb R^n$, with position $x$ and velocity $v$.
For $x\in\partial K$, let $\nu(x)$ be the outward unit normal and define
the reflection $\mathcal R_xv=v-2\langle v,\nu(x)\rangle\nu(x)$.
The billiard flow $\Psi_t(x,v)=(x_t,v_t)$ follows
$\dot x_t=v_t$ and $\dot v_t=0$ in the interior, with
$v_t=\mathcal R_{x_t}v_{t-}$ at each collision.
The flow construction in the proof of
\Cref{prop:linear-potential-relaxation}, with zero force, gives trajectories
defined for all real times outside a null set and
$\Psi_{s+t}=\Psi_s\circ\Psi_t$ almost everywhere.
The flow preserves Lebesgue measure and velocity length, hence
$\widehat\pi$.

We first specify the random updates as Markov operators. For a kernel $T$,
our conventions are $Tf(z)=\int f(z')T(z,dz')$ and
$(\eta T)(A)=\int T(z,A)\,d\eta(z)$. Operator products act on functions
from right to left, so $T_1T_2$ describes first applying $T_1$ and then
$T_2$ to the state.
Let $\varphi_h(w)=(2\pi h)^{-n/2}e^{-\|w\|_2^2/(2h)}$ for $h>0$.
For bounded measurable $f:K\to\mathbb R$, define
\begin{equation}\label{eq:position-update-definition}
 (R_hf)(x)=\int_{\mathbb R^n}\varphi_h(y-x)
 \frac{\displaystyle\int_K f(x')\varphi_h(x'-y)\,dx'}
      {\displaystyle\int_K\varphi_h(u-y)\,du}\,dy.
\end{equation}
The denominator is positive for every $y$. Equivalently, draw
$Y\sim N(x,hI_n)$ and then $X'\sim N(Y,hI_n)$ conditioned on $X'\in K$.
Its transition probabilities are $R_h(x,A)=(R_h\mathbf1_A)(x)$ for
Borel sets $A\subseteq K$.
This is the In-and-Out update of~\cite{KVZ24}.
For $a\in[0,1)$, define the velocity kernel
$Q_a(v,dv')=\varphi_{1-a^2}(v'-av)\,dv'$.
Thus its update is $v'=av+\sqrt{1-a^2}G$, with $G\sim\gamma_n$.

The joint kernel is $J_h((x,v),dx'\,dv')=R_h(x,dx')Q_a(v,dv')$.
Extend $Q_a$ and $R_h$ to phase space by keeping the other coordinate fixed.
Then $J_h=Q_aR_h=R_hQ_a$. The position and velocity updates are
independent conditional on $(x,v)$.
The velocity update preserves $\gamma_n$. The position update is a Gibbs
update for the joint law $X\sim\pi$, $Y\mid X\sim N(X,hI_n)$, so
$\pi R_h=\pi$. Consequently $\widehat\pi J_h=\widehat\pi$.
These kernels therefore act as contractions on the corresponding
$L^2$ spaces.

We define the transport generator through the flow.
The operators $U_tf=f\circ\Psi_t$, $t\in\mathbb R$,
form a strongly continuous unitary group on $L^2(\widehat\pi)$.
Strong continuity follows first for bounded continuous functions from
the almost-everywhere continuity of trajectories at time zero, and then
for all $L^2$ functions by density. Define
\begin{equation}\label{eq:billiard-generator-domain}
 \begin{aligned}
 \operatorname{dom}(B)
 &=\left\{f\in L^2(\widehat\pi):
     \lim_{t\to0}\frac{U_tf-f}{t}
     \text{ exists in }L^2(\widehat\pi)\right\},\\
 Bf&=\lim_{t\to0}\frac{U_tf-f}{t}.
 \end{aligned}
\end{equation}
The unitary-group property makes $B$ skew-adjoint.
For smooth $f$ with compact support in velocity and satisfying
the reflection condition $f(x,v)=f(x,\mathcal R_xv)$ on
$\partial K\times\mathbb R^n$, we have $f\in\operatorname{dom}(B)$ and
$Bf=v\cdot\nabla_xf$.

Fix a rate $\lambda>0$. The exact process follows $\Psi_t$ between the
events of an independent Poisson clock of rate $\lambda$, and applies
$J_h$ at each event using fresh randomness. Equivalently, its Markov
semigroup is
\begin{equation}\label{eq:poisson-semigroup}
 \mathcal P_t=e^{-\lambda t}\sum_{k=0}^{\infty}\lambda^k
 \int_{0<s_1<\cdots<s_k<t}
 U_{s_1}J_hU_{s_2-s_1}\cdots J_hU_{t-s_k}\,ds_1\cdots ds_k.
\end{equation}
The $k=0$ term is $U_t$, and the integrals are taken strongly on
$L^2(\widehat\pi)$. The $k$th summand has operator norm at most
$e^{-\lambda t}(\lambda t)^k/k!$, so the series converges in operator norm.
For $\widehat\pi$-almost every $(x,v)$, it gives
$\mathcal P_tf(x,v)=\mathbb E_{x,v}f(X_t,V_t)$.
The bounded perturbation theorem yields
$\mathcal G=B+S$ on $\operatorname{dom}(\mathcal G)=\operatorname{dom}(B)$,
where $S=\lambda(J_h-I)$. The semigroup preserves this domain.
Write $\gamma=\lambda(1-a)$ for the velocity refreshment parameter.

Each factor in \eqref{eq:poisson-semigroup} preserves $\widehat\pi$.
Thus an initial law $\rho\le M\widehat\pi$ satisfies
$\rho\mathcal P_t\le M\widehat\pi$ for every $t\ge0$; the same domination
holds for the entire path law relative to the stationary path law.
There are finitely many Poisson events on a bounded time interval, and
invariance ensures that the updates avoid the exceptional set of the
billiard flow almost surely under such an initial law.
For sampling, initialize with $X_0\sim\nu_0\le M\pi$ and an independent
$V_0\sim\gamma_n$. At the prescribed time $T$, apply one further $J_h$
update, so the terminal transition is $\mathcal P_TJ_h$, and return its
position coordinate.

\label{subsec:kinetic-mixing}
We prove a relaxation time of order $\sqrt P$ when $P$ bounds the
Poincar\'e constant of $\pi$. The position update has two useful
properties: its contribution to the energy estimate is small, and its
output law varies continuously in total variation with its input.

The next lemma records these properties. In its statement, $H^1(K)$
is the space of functions with square-integrable weak first derivatives.
A self-adjoint operator is called positive if its quadratic form is
nonnegative.
\begin{lemma}[Properties of the position update]
\label{lem:position-update-properties}
The operator $R_h$ is a positive self-adjoint Markov contraction on
$L^2(\pi)$. For every $f\in H^1(K)$,
\begin{equation}\label{eq:position-update-dirichlet}
 \langle f,(I-R_h)f\rangle_{L^2(\pi)}
 \le h\int_K\|\nabla f\|_2^2\,d\pi.
\end{equation}
For $x,x'\in K$, its transition laws satisfy
$\|R_h(x,\cdot)-R_h(x',\cdot)\|_{\TV}\le
\min\{1,\|x-x'\|_2/(2\sqrt h)\}$. Consequently, for $v,v'\in\mathbb R^n$,
\begin{equation}\label{eq:mixed-update-smoothing}
 \|J_h((x,v),\cdot)-J_h((x',v'),\cdot)\|_{\TV}
 \le \min\left\{1,
 \frac{\|x-x'\|_2}{2\sqrt h}
 +\frac{a\|v-v'\|_2}{2\sqrt{1-a^2}}\right\}.
\end{equation}
\end{lemma}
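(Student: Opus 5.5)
The plan is to write $R_h=D^*D$ for an explicit operator $D$ associated with the Gibbs structure. Let $\omega$ be the joint law of $(X,Y)$ with $X\sim\pi$ and $Y\mid X\sim N(X,hI_n)$. Define $D:L^2(\pi)\to L^2(\omega_Y)$ by $Df(y)=\mathbb E_\omega[f(X)\mid Y=y]$. This is exactly the inner ratio in \eqref{eq:position-update-definition}, and it is a contraction by Jensen's inequality.

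Its adjoint is $D^*g(x)=\mathbb E[g(Y)\mid X=x]$. Then \eqref{eq:position-update-definition} reads $R_h=D^*D$. This immediately gives that $R_h$ is self-adjoint, positive ($\langle f,R_hf\rangle=\|Df\|^2\ge0$), Markov and a contraction. Invariance of $\pi$ was already noted in the text.

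For the Dirichlet bound I use the total-variance form:
\[
\langle f,(I-R_h)f\rangle=\|f\|^2-\|Df\|^2=\mathbb E_\omega\operatorname{Var}(f(X)\mid Y).
\]
Conditionally on $Y=y$, the law of $X$ is $N(y,hI_n)$ restricted to the convex set $K$. The potential $\|x-y\|^2/(2h)$ plus the indicator of $K$ is $h^{-1}$-strongly convex, so \eqref{eq:strong-logconcave-poincare} gives
\[
\operatorname{Var}(f(X)\mid Y)\le h\,\mathbb E[\|\nabla f(X)\|_2^2\mid Y].
\]
Averaging over $Y$ yields \eqref{eq:position-update-dirichlet}. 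For $f\in H^1(K)$ I first argue with smooth $f$ and pass to the limit, using density of $C^\infty(\overline K)$ in $H^1(K)$ for the convex (Lipschitz) domain $K$. This step is routine, since the preliminaries already allow indicator potentials in \eqref{eq:strong-logconcave-poincare}.

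For total variation I use the two-step description: draw $Y\sim N(x,hI_n)$, then apply a kernel depending only on $Y$. Data processing gives
\[
\|R_h(x,\cdot)-R_h(x',\cdot)\|_{\TV}\le\|N(x,hI)-N(x',hI)\|_{\TV}.
\]
For Gaussians with equal covariance this distance equals $2\overline\Phi(\|x-x'\|/(2\sqrt h))-1$ up to sign, i.e. $1-2\overline\Phi(\cdot)$, which is at most $\|x-x'\|/\sqrt{2\pi h}\le\|x-x'\|/(2\sqrt h)$ because $\varphi(0)\cdot 2\cdot\frac{d}{2\sqrt h}$ bounds it. It is also trivially at most $1$.

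The velocity kernel $Q_a$ is the Gaussian $N(av,(1-a^2)I)$. The same formula gives
\[
\|Q_a(v,\cdot)-Q_a(v',\cdot)\|_{\TV}\le\frac{a\|v-v'\|}{2\sqrt{1-a^2}}.
\]
Since $J_h$ is a product of independent kernels, the coupling inequality $\|\mu\otimes\nu-\mu'\otimes\nu'\|_{\TV}\le\|\mu-\mu'\|_{\TV}+\|\nu-\nu'\|_{\TV}$ combined with the trivial bound $1$ yields \eqref{eq:mixed-update-smoothing}. The coupling inequality follows from coupling each factor maximally and independently.

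The only mildly delicate point is the $H^1$ extension and the rigorous use of Brascamp--Lieb for a restricted Gaussian, which I would handle by convex approximation as stated in the preliminaries.
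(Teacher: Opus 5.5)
Your proposal is correct and follows the paper's proof essentially step for step: the factorization $R_h=D^*D$ through the Gibbs coupling, total variance with the conditional Brascamp--Lieb bound $C_P\le h$ for the Dirichlet estimate, and data processing through the first Gaussian draw for the total-variation bound. The only difference is cosmetic: you bound the Gaussian total variation by the exact formula $1-2\overline\Phi(\|x-x'\|_2/(2\sqrt h))\le\|x-x'\|_2/\sqrt{2\pi h}$, whereas the paper uses the relative entropy $\|x-x'\|_2^2/(2h)$ with Pinsker; both give the stated $\|x-x'\|_2/(2\sqrt h)$, and your constant is slightly better.
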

\begin{proof}
Under the joint law $X\sim\pi$, $Y\mid X\sim N(X,hI_n)$, the
conditional law of $X$ given $Y=y$ is $N(y,hI_n)$ restricted to $K$.
Let $\mathcal T:L^2(\pi)\to L^2(\mathcal L(Y))$ be defined by
$(\mathcal T f)(y)=\mathbb E[f(X)\mid Y=y]$. Its adjoint is
$(\mathcal T^*g)(x)=\mathbb E[g(Y)\mid X=x]$.
Thus $R_h=\mathcal T^*\mathcal T$, proving positivity and self-adjointness.
The Markov property and invariance give contraction. Also,
$\langle f,(I-R_h)f\rangle=\mathbb E\Var(f(X)\mid Y)$.
Each conditional law is $h^{-1}$-strongly logconcave on a convex set.
Equation~\eqref{eq:strong-logconcave-poincare} gives
$C_P(\mathcal L(X\mid Y=y))\le h$. Applying this inequality
conditionally and integrating proves
\eqref{eq:position-update-dirichlet}.

The second conditional draw is the same kernel for every initial point.
Total variation contraction and the Gaussian relative-entropy formula
therefore give
\[
 \|R_h(x,\cdot)-R_h(x',\cdot)\|_{\TV}
 \le\|N(x,hI_n)-N(x',hI_n)\|_{\TV}
 \le\frac{\|x-x'\|_2}{2\sqrt h}.
\]
The corresponding bound for the velocity update has mean displacement
$a(v-v')$ and variance $1-a^2$. Adding the two errors proves
\eqref{eq:mixed-update-smoothing}.
\end{proof}

We now adapt the argument of Eberle and L\"orler to the process just
defined. Their space-time divergence lemma supplies the auxiliary
functions in the proof. We estimate the contribution of our position
update and partial velocity refreshment explicitly.
\begin{proposition}[Relaxation of the exact process]
\label{prop:mixed-refresh-relaxation}
Suppose $C_P(\pi)\le P$. Choose $\lambda\ge P^{-1/2}$,
$a=1-P^{-1/2}/\lambda$, and $0<h\le\sqrt P/\lambda$.
There are universal constants $c,C>0$ such that, for every $t\ge0$
and $f\in L^2(\widehat\pi)$ with $\int f\,d\widehat\pi=0$,
\[
 \|\mathcal P_t f\|_{L^2(\widehat\pi)}
 \le C e^{-ct/\sqrt P}\|f\|_{L^2(\widehat\pi)}.
\]
\end{proposition}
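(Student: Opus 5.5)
We follow the space-time Poincaré strategy of Eberle and L\"orler~\cite{EL26:journal}. Rather than prove pointwise decay, we show an averaged dissipation estimate: there is a time window $T_0=c_0\sqrt P$ such that, for every centered $f\in L^2(\widehat\pi)$,
\[
 \int_0^{T_0}\|\mathcal P_tf\|^2_{L^2(\widehat\pi)}\,dt\le C_0\int_0^{T_0}\mathcal E(\mathcal P_tf)\,dt,
\]
where $\mathcal E(g)=-\langle g,\mathcal Gg\rangle=\lambda\langle g,(I-J_h)g\rangle$. We then combine this with $\frac{d}{dt}\|\mathcal P_tf\|^2=-2\mathcal E(\mathcal P_tf)$, which holds because $B$ is skew-adjoint and $\mathcal P_t$ preserves $\operatorname{dom}(B)$. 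Since $t\mapsto\|\mathcal P_tf\|$ is nonincreasing, this yields $\|\mathcal P_{T_0}f\|^2\le(1-c)\|f\|^2$. Iterating over windows gives decay at rate $e^{-ct/\sqrt P}$, with the prefactor $C$ absorbing one window.

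\textbf{Step 1: lower bound on the dissipation.} Write $J_h=Q_aR_h$ and split $\mathcal E$ into a velocity part and a position part. For the velocity part, $I-Q_a$ has spectral gap $1-a$ on the orthogonal complement of velocity-independent functions in $L^2(\gamma_n)$, since the Ornstein--Uhlenbeck (Mehler) eigenvalues are $a^k$. Using positivity and self-adjointness of $R_h$ from \Cref{lem:position-update-properties}, together with the operator inequality $I-Q_aR_h\succeq$ (suitable combination), we aim to show
\[
 \mathcal E(g)\ge c\lambda(1-a)\|g-\Pi_vg\|^2=c\gamma\|g-\Pi_vg\|^2,
\]
with $\gamma=P^{-1/2}$. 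Here $\Pi_v$ denotes averaging over the velocity. We expect to need symmetrization, because $Q_aR_h$ is not self-adjoint; we would use $\langle g,(I-Q_aR_h)g\rangle\ge\frac12\langle g,(I-Q_a^2)g\rangle$-type bounds, or treat $R_h$ as a small perturbation of the identity.

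\textbf{Step 2: control of the position part.} The position update is a perturbation, and \eqref{eq:position-update-dirichlet} bounds its Dirichlet form by $h\|\nabla_x g\|^2$. In the space-time argument, the position-dependent component $\Pi_vg$ is controlled through the divergence lemma. For each $u=\Pi_v\mathcal P_tf$ on the window, the lemma provides a test field $\phi(t,x)$ solving $\partial_t\phi_0+\nabla_x\cdot\phi=u$ in space-time. Its $H^1$ norms are bounded by $\sqrt P$ times (and by $P$ for the top-order terms) the $L^2$ norm of $u$, and it satisfies the normal boundary condition on $\partial K$. Pairing $\mathcal P_tf$ against $\phi_0+v\cdot\phi$ and integrating by parts along $B=v\cdot\nabla_x$ converts $\int\|u\|^2$ into terms of the form $\langle\mathcal P_tf,(\partial_t+B)(\cdot)\rangle$. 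These are bounded by $\|g-\Pi_vg\|$ times the $H^1$ norm of $\phi$, plus $\langle\mathcal P_tf,S(\cdot)\rangle$. The $S$ term is handled by Cauchy--Schwarz in the Dirichlet form of $I-J_h$. For the $R_h$ part, \eqref{eq:position-update-dirichlet} together with $\lambda h\le\sqrt P$ makes it of order $\sqrt{\lambda h}\,\|\nabla\phi\|\lesssim P^{1/4}\cdot P^{-1/2}$ relative scale. We will need to check that this is consistent with the $\sqrt P$ window. The choices $\lambda h\le\sqrt P$ and $\gamma=P^{-1/2}$ are exactly the ones that balance these terms.

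\textbf{Main obstacle.} The main difficulty is the boundary term. The billiard generator's integration by parts against $v\cdot\phi$ requires $\phi\cdot\nu=0$ on $\partial K$ and compatibility with reflection; this is why we assume a smooth boundary and apply the boundary-value version of the divergence lemma. A second difficulty is the non-reversibility: the partial-refresh velocity kernel combined with $R_h$ must give the $\gamma$-coercivity in Step 1. If the direct bound fails there, we will fall back on $J_h\preceq$-type estimates obtained by writing $I-J_h=(I-Q_a)+Q_a(I-R_h)$ and using that both pieces are positive in the respective symmetric parts.
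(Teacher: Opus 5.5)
Your outline follows the paper's argument. It has the same ingredients: coercivity from the partial refreshment, the Eberle--L\"orler divergence lemma on a window of length $\tau\asymp\sqrt P$, the test function $p+v\cdot\nabla_xg$ (your $\phi_0+v\cdot\phi$, where the Neumann condition gives the reflection condition), Cauchy--Schwarz in the dissipation form, and the energy identity. Your window iteration is equivalent to the paper's differential inequality for $F(t)=\int_t^{t+\tau}\|\mathcal P_sf\|^2\,ds$.

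The obstacle you anticipate in Step~1 is not there. $Q_a$ acts only on $v$ and $R_h$ only on $x$, so they commute, and both are positive self-adjoint contractions on $L^2(\widehat\pi)$. Hence $J_h$ is positive and self-adjoint. Moreover $I-J_h\succeq I-Q_a$ because $Q_a(I-R_h)\succeq0$, and $I-J_h\preceq(I-Q_a)+(I-R_h)$ because $(I-Q_a)(I-R_h)\succeq0$. The first inequality and the Hermite eigenvalues $a^k\le a$ give $\|g-\Pi g\|^2\le\gamma^{-1}\mathcal E(g)$ with constant one. The second is what splits $\mathcal E(\phi)$ into a velocity part and an $R_h$ part. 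Step~2 needs this symmetry anyway, since Cauchy--Schwarz for $\langle u,(I-J_h)\phi\rangle$ requires a positive symmetric form.

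The real gap is the bookkeeping in Step~2. This is exactly where $\sqrt P$ rather than $P$ comes from, and you misstate it and leave it unchecked. Take $u_s=\mathcal P_{t+s}f$ and $\chi=\Pi u$ on $[0,\tau]$. The divergence lemma gives $p,g$ vanishing at both time endpoints, which removes the endpoint terms when you integrate by parts in time, with
$\|p\|^2+\|\nabla_xg\|^2\le C(\tau^2+P)\|\chi\|^2$ and $\|\nabla_{s,x}p\|^2+\|\nabla_{s,x}\nabla_xg\|^2\le C(1+P/\tau^2)\|\chi\|^2$.

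So for $\tau=\sqrt P$ only the zeroth-order quantities are of size $\sqrt P\|\chi\|$. The first-order ones control $\|(\partial_s-B)\phi\|$ and the $R_h$ part of $\mathcal E(\phi)$, and they are $O(\|\chi\|)$, not $O(P\|\chi\|)$. Were they even of size $\sqrt P\|\chi\|$, the transport term $\langle\Pi u-u,(\partial_s-B)\phi\rangle$ alone would only give $\|\chi\|^2\lesssim\gamma^{-1}P\int\mathcal E(u)=P^{3/2}\int\mathcal E(u)$.

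With the correct bounds there are three contributions:
the velocity part $\gamma\|\nabla_xg\|^2$;
the position part $\lambda h(\|\nabla_xp\|^2+\|\nabla_x^2g\|^2)$;
and the transport term $\gamma^{-1/2}(\int\mathcal E(u))^{1/2}\|(\partial_s-B)\phi\|$.
After adding $\|u-\Pi u\|^2\le\gamma^{-1}\int\mathcal E(u)$, they combine to
\[
 \int_t^{t+\tau}\|\mathcal P_sf\|^2\,ds\le C\bigl[\gamma(\tau^2+P)+(\lambda h+\gamma^{-1})(1+P/\tau^2)\bigr]\int_t^{t+\tau}\mathcal E(\mathcal P_sf)\,ds.
\]
The bracket is $O(\sqrt P)$ precisely when $\tau=\sqrt P$, $\gamma=P^{-1/2}$ and $\lambda h\le\sqrt P$.

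In particular the $R_h$ contribution is not lower order, as your ``$P^{1/4}\cdot P^{-1/2}$'' suggests. It is of size $P^{1/4}\|\chi\|\,(\int\mathcal E(u))^{1/2}$, the same as the other two terms, which is why the hypothesis $h\le\sqrt P/\lambda$ is needed.
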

\begin{proof}
For $f\in L^2(\widehat\pi)$, define
$(\Pi f)(x,v)=\int f(x,w)\,d\gamma_n(w)$.
This is the orthogonal projection onto functions independent of velocity;
we apply it pointwise in time to time-dependent functions.
Define the nonnegative quadratic form
$\mathcal D(f)=\langle f,-Sf\rangle$. The commuting positive
contractions $Q_a$ and $R_h$ satisfy
$I-Q_aR_h\succeq I-Q_a$ and
$I-Q_aR_h\preceq(I-Q_a)+(I-R_h)$.
On a Gaussian Hermite polynomial of degree $k\ge1$, $Q_a$ acts by
$a^k\le a$. Hence, with $\gamma=\lambda(1-a)$,
\begin{equation}\label{eq:mixed-microscopic-dissipation}
 \|f-\Pi f\|_2^2\le\gamma^{-1}\mathcal D(f).
\end{equation}

Fix an interval length $\tau>0$ and initially take $f$ in the domain
of $B$. For $0\le s\le\tau$, put $u_s=\mathcal P_{t+s}f$ and
$\chi=\Pi u$. In the following space-time estimates, all norms and inner products include integration
over $s\in[0,\tau]$, with position measure $\pi$ and velocity measure
$\gamma_n$ when the function depends on velocity. The function $\chi$
has mean zero. The divergence lemma~\cite[Theorem 5]{EL26:journal} gives
functions $p(s,x)$ and $g(s,x)$, zero at both time endpoints, with
$g$ satisfying the Neumann boundary condition and
$\chi=\partial_s p-\Delta g$. They satisfy
\[
 \|p\|_2^2+\|\nabla_x g\|_2^2\le C(\tau^2+P)\|\chi\|_2^2,
\]
and
\[
 \|\nabla_{s,x}p\|_2^2+
 \|\nabla_{s,x}\nabla_xg\|_2^2
 \le C(1+P/\tau^2)\|\chi\|_2^2.
\]
Here $\nabla_{s,x}=(\partial_s,\nabla_x)$. The hypotheses of that
lemma hold because $K$ is smooth, compact, and convex, and the potential
is constant. In particular, the Neumann Laplacian has discrete spectrum.

On $L^2([0,\tau];L^2(\widehat\pi))$, let $\mathcal A$ be the closure
of $-\partial_s+B$ initially defined on
$H^1([0,\tau];L^2(\widehat\pi))\cap
L^2([0,\tau];\operatorname{dom}(B))$, with no time boundary condition.
Set $\phi=p+v\cdot\nabla_xg$.
The Neumann condition makes $\nabla_xg$ tangent to the boundary, so
$\phi$ satisfies the reflection condition. Smooth spatial approximation
and radial velocity cutoffs give
$\phi\in L^2([0,\tau];\operatorname{dom}(B))$, with
$B\phi=v\cdot\nabla_xp+v^{\mathsf T}\nabla_x^2g\,v$:
the stated derivative bounds and Gaussian moments give convergence in
the graph norm of $B$.
The time estimates also give $\phi\in H^1([0,\tau];L^2(\widehat\pi))$
with zero traces at both endpoints. Hence $\phi\in\operatorname{dom}(\mathcal A^*)$
and $\mathcal A^*\phi=\partial_s\phi-B\phi$. Velocity averaging gives
$\Pi\mathcal A^*\phi=\partial_sp-\Delta g=\chi$. Therefore
$\|\chi\|_2^2=\langle\Pi u,\mathcal A^*\phi\rangle$.
Integration by parts and the estimate from
\cite[Lemma 14 and proof of Theorem 15]{EL26:journal} give
\begin{equation}\label{eq:mixed-divergence-identity}
 \begin{aligned}
 \|\chi\|_2^2
 &=\langle\mathcal Au,\phi\rangle
   +\langle\Pi u-u,\mathcal A^*\phi\rangle,\\
 \|\mathcal A^*\phi\|_2^2
 &\le C(1+P/\tau^2)\|\chi\|_2^2.
 \end{aligned}
\end{equation}
For the velocity part, the quadratic form of $\lambda(I-Q_a)$ at
$\phi$ is exactly $\gamma\|\nabla_xg\|_2^2$. For the position part,
\eqref{eq:position-update-dirichlet} and Gaussian integration give a
bound $\lambda h(\|\nabla_xp\|_2^2+\|\nabla_x^2g\|_2^2)$.
It follows that
\begin{equation}\label{eq:mixed-test-dissipation}
 \int_0^\tau\mathcal D(\phi_s)\,ds
 \le C\bigl(\gamma(\tau^2+P)+\lambda h(1+P/\tau^2)\bigr)
       \|\chi\|_2^2.
\end{equation}

Since $\mathcal Au=-Su$, apply Cauchy--Schwarz for the quadratic
form $-S$ to the first term in
\eqref{eq:mixed-divergence-identity}. Apply
\eqref{eq:mixed-microscopic-dissipation} to its second term.
Adding the bound for $u-\Pi u$ yields
\begin{equation}\label{eq:mixed-average-dissipation}
 \int_t^{t+\tau}\|\mathcal P_s f\|_2^2\,ds
 \le C_\tau\int_t^{t+\tau}\mathcal D(\mathcal P_s f)\,ds,
\end{equation}
where
\[
 C_\tau=C\left(\gamma(\tau^2+P)+
       (\lambda h+\gamma^{-1})(1+P/\tau^2)\right).
\]
The norm in \eqref{eq:mixed-average-dissipation} is taken over phase
space at each fixed time.
Skew-adjointness of $B$ gives the energy identity
$\frac d{ds}\|\mathcal P_s f\|_2^2=-2\mathcal D(\mathcal P_s f)$.
For $F(t)=\int_t^{t+\tau}\|\mathcal P_s f\|_2^2\,ds$, we obtain
$F'(t)\le-2F(t)/C_\tau$. Consequently,
\[
 \tau\|\mathcal P_{t+\tau}f\|_2^2
 \le F(t)\le e^{-2t/C_\tau}F(0)
 \le \tau e^{-2t/C_\tau}\|f\|_2^2.
\]
Set $\tau=\sqrt P$ and $\gamma=P^{-1/2}$. The assumption on $h$
implies $C_\tau\le C\sqrt P$, proving the assertion for functions in
the generator domain. Density and contraction extend it to all mean-zero
$L^2$ functions.
\end{proof}

The same estimate holds for the adjoint semigroup, which evolves
densities. Since $\nu_0\otimes\gamma_n\le M\widehat\pi$,
$\|d\nu_0/d\pi-1\|_{L^2(\pi)}\le\sqrt{M-1}$. Thus
\[
 \|\mathcal L(X_t,V_t)-\widehat\pi\|_{\TV}
 \le \tfrac C2 e^{-ct/\sqrt P}\sqrt{M-1}.
\]
Time $T=C\sqrt P\log(2M/\varepsilon)$ therefore gives any prescribed
constant fraction of the required error $\varepsilon$. The terminal
update does not increase that error. By \eqref{eq:kls}, we may take
$P=C\kappa\log(2n)$ under the covariance assumption of
\Cref{thm:sampler}.

\subsection{Membership-query simulation}
We describe the oracle algorithm and prove the estimates needed to
simulate its reflected paths. The argument controls reflection counts,
approximates boundary normals, and then bounds the error and cost of
one interval between random updates.

\subsubsection{The membership sampling algorithm}
\label{subsec:kinetic-implementation}
We describe the finite algorithm and then estimate the error of its
position update. The following subsections justify the boundary and
normal approximations and bound the total work.

Sample the Poisson times in $[0,T]$, append the endpoint $T$, and
simulate the billiard motion on each resulting interval. Within an
interval, first query the endpoint of the straight trajectory. If it is
in $K$, convexity certifies that the entire straight segment is in $K$.
Otherwise, bisect the line to approximate its first boundary intersection,
estimate a normal there, and reflect the velocity in that approximate
normal. Move the computed intersection a small distance towards the
origin to obtain an interior point, and continue for the remaining time.

The normal is computed from the gauge
$g_K(y)=\inf\{s>0:y\in sK\}$, abbreviated below as $g$.
A query whether $g_K(y)\le s$ is
the membership query $y/s\in K$, so bisection approximates gauge values.
At differentiability points on the boundary, the gradient of $g_K$ is
parallel to the outward normal. We use finite differences at a randomly
chosen point on the same ray. The later normal-estimation lemmas specify
this randomization and prove its error bounds.

Each simulated interval has a reflection budget. Start with budget one
and, if necessary, restart the interval with twice the budget and greater
precision. The same auxiliary random choices are used on restarts.
Prescribed bounds on the clock count, velocity, and reflection budget
ensure termination on every execution; on exceeding a bound, return the
fixed point $0\in K$. All precisions and bounds are specified in the
proof of \Cref{thm:constant-warm-membership-sampling}.

At each interval endpoint, apply $Q_a$ exactly and replace $R_h$ by
the following kernel, denoted $\widehat R_h$: propose
$Z=x+\sqrt{2h}G$, with $G\sim\gamma_n$, query whether $Z\in K$,
and return $Z$ if accepted and $x$ otherwise. This uses one membership
query. It also applies at the terminal time. Exact Gaussian draws and
arithmetic between oracle calls are not charged in the query model.
For a Borel set $A\subseteq K$, its kernel is
\begin{equation}\label{eq:one-query-kernel}
 \widehat R_h(x,A)=\int_A\varphi_{2h}(z-x)\,dz
 +\mathbf1_A(x)\int_{K^c}\varphi_{2h}(z-x)\,dz.
\end{equation}
Its extension to phase space leaves the velocity fixed.

The one-query update approximates $R_h$ after averaging over a warm
initial law. We prove this estimate explicitly because a uniform
pointwise approximation near the boundary is not required.
\begin{lemma}[One-query position update]
\label{lem:one-query-small-step}
Suppose $rB_2^n\subseteq K$, and write
$A=\mathcal H^{n-1}(\partial K)$ and $V=\vol(K)$, where
$\mathcal H^{n-1}$ is surface measure. There is a coupling of
$R_h(x,\cdot)$ and $\widehat R_h(x,\cdot)$ with disagreement
probability at most $e_h(x)=\mathbb P\{x+\sqrt{2h}G\notin K\}$.
Moreover,
\[
 \int_K e_h\,d\pi\le\frac AV\sqrt{\frac h\pi}
 \le\frac nr\sqrt{\frac h\pi}.
\]
For $\nu_0\le M\pi$, we have
$\int_K e_h\,d\nu_0\le Mn\sqrt{h/\pi}/r$.
The kernel $\widehat R_h$ is positive and self-adjoint on $L^2(\pi)$,
and $\pi\widehat R_h=\pi$.
\end{lemma}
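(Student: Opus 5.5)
The plan has four parts: the coupling, the averaged bound on $e_h$, warmness, and the symmetry properties of $\widehat R_h$. Each uses a different elementary tool.

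\emph{Coupling.} Use the following construction. Draw $G,G'\sim\gamma_n$ independently, set $Y=x+\sqrt h\,G$, and note that $Z=Y+\sqrt h\,G'=x+\sqrt{2h}\,\bar G$ with $\bar G\sim\gamma_n$ has law $N(x,2hI_n)$. The second stage of $R_h$ draws $X'\sim N(Y,hI_n)$ conditioned on $K$. Realize it by rejection with first proposal $Y+\sqrt h G'=Z$, retrying with fresh Gaussians only if $Z\notin K$. Then $R_h$ outputs $Z$ whenever $Z\in K$. The one-query kernel \eqref{eq:one-query-kernel} also outputs $Z$ on that event, so the two disagree only on $\{Z\notin K\}$. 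This event has probability $e_h(x)$.

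\emph{Averaged bound.} Write
\[
\int_K e_h\,d\pi=\frac1V\int_K\mathbb P\{x+\sqrt{2h}G\notin K\}\,dx .
\]
Condition on $G=g$. The set of $x\in K$ with $x+\sqrt{2h}g\notin K$ is $K\setminus(K-\sqrt{2h}g)$. Its volume is at most the volume swept by translating $\partial K$ by the segment $[0,\sqrt{2h}g]$, which is at most $\sqrt{2h}\,\|g\|_2$ times the $(n-1)$-volume of the projection of $K$ onto $g^\perp$. By Cauchy's projection formula, that projection volume equals $\tfrac12\int_{\partial K}|\langle\nu,g/\|g\|_2\rangle|\,d\mathcal H^{n-1}$. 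So the swept volume is at most $\sqrt{2h}\cdot\tfrac12\int_{\partial K}|\langle\nu,g\rangle|\,d\mathcal H^{n-1}$. Taking expectation over $g$ and using $\mathbb E|\langle\nu,G\rangle|=\sqrt{2/\pi}$ gives
\[
\int_K e_h\,d\pi\le \frac1V\sqrt{2h}\cdot\tfrac12\sqrt{2/\pi}\,A=\frac AV\sqrt{h/\pi}.
\]
For the inequality $A/V\le n/r$, apply the divergence theorem to the field $x$ with the origin at the center of the inner ball. This gives $nV=\int_{\partial K}\langle x,\nu\rangle\,d\mathcal H^{n-1}\ge rA$, since the support function is at least $r$.

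\emph{Warmness.} Because $e_h\ge0$, the bound $\nu_0\le M\pi$ gives $\int e_h\,d\nu_0\le M\int e_h\,d\pi$. This yields the stated $Mn\sqrt{h/\pi}/r$.

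\emph{Symmetry and invariance.} For self-adjointness, the off-diagonal part of $\widehat R_h$ has the symmetric density $\varphi_{2h}(z-x)$ with respect to $dx\,dz$ on $K\times K$. The diagonal part is multiplication by a function, so $\widehat R_h$ is self-adjoint on $L^2(\pi)$; this is a Metropolis kernel for the uniform target. Invariance $\pi\widehat R_h=\pi$ follows from self-adjointness applied to $f=\mathbf 1$. For positivity, write
\[
\langle f,\widehat R_hf\rangle=\iint_{K\times K} f(x)f(z)\varphi_{2h}(z-x)\,dx\,dz/V+\int f^2\,m\,d\pi,
\]
where $m(x)=\int_{K^c}\varphi_{2h}(z-x)\,dz\ge0$. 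The first term is nonnegative because $\varphi_{2h}=\varphi_h*\varphi_h$, so it equals $\int(\varphi_h*(f\mathbf 1_K))^2/V\ge0$. The second term is trivially nonnegative.

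The main obstacle is the swept-volume step, specifically justifying Cauchy's formula and the translation bound for a general convex body. If needed, I would approximate by smooth bodies, or replace the sweep argument with the slice inequality $\operatorname{vol}(K\setminus(K-w))\le\|w\|_2\,\operatorname{vol}_{n-1}(\mathrm{proj}_{w^\perp}K)$.
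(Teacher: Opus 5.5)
Your proposal is correct and follows essentially the same route as the paper: the coupling through the first proposal $Z\sim N(x,2hI_n)$, the bound $\vol(K\setminus(K-w))\le\|w\|_2\vol_{n-1}(P_{w^\perp}K)$ combined with Cauchy's projection formula and the Gaussian average, the divergence theorem for $rA\le nV$, and, for the symmetry claims, the symmetric off-diagonal density together with positivity of Gaussian convolution. One small slip: the set swept by all of $\partial K$ can have volume up to $2\|w\|_2\vol_{n-1}(P_{w^\perp}K)$. You should therefore argue via the chord (slice) inequality you mention as a fallback, which is exactly the paper's argument, or restrict the sweep to the part of the boundary where $\langle\nu,w\rangle>0$.
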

\begin{proof}
Draw the intermediate point $Y$ of $R_h$, followed by a proposal
$Z_1\sim N(Y,hI_n)$. To implement the exact kernel, repeat the second
draw until it belongs to $K$. To implement $\widehat R_h$, accept the
first proposal or return $x$. The first proposal has law $N(x,2hI_n)$,
and the coupled outputs agree when it is in $K$.

Fix a unit vector $u$. Let $P_{u^\perp}$ be orthogonal projection onto
$u^\perp$, and let $\ell_u(y)$ be the length of the chord
$K\cap(y+\mathbb Ru)$ for $y\in P_{u^\perp}K$. For a translation
$qu$, $q\ge0$, integration over these chords gives
\[
 \vol\{x\in K:x+qu\notin K\}
 =\int_{P_{u^\perp}K}\min\{q,\ell_u(y)\}\,dy
 \le q\,\vol_{n-1}(P_{u^\perp}K).
\]
For a standard Gaussian $G$, its direction $U=G/\|G\|_2$ is uniform on
the unit sphere and independent of $\|G\|_2$. Cauchy's projection formula
and $\mathbb E\|G\|_2\mathbb E|U_1|=\mathbb E|G_1|=\sqrt{2/\pi}$ imply
\[
 \mathbb E\left[\sqrt{2h}\|G\|_2\,
               \vol_{n-1}(P_{U^\perp}K)\right]
 =\sqrt{2h}\,\mathbb E\|G\|_2\frac A2\mathbb E|U_1|
 =A\sqrt{h/\pi}.
\]
Division by $V$ proves the first bound. On $\partial K$,
$\langle x,\nu(x)\rangle\ge r$, so the divergence theorem gives
$rA\le nV$. Domination by $M$ proves the warm-start bound.

The off-diagonal transition density in \eqref{eq:one-query-kernel}
is $\mathbf1_K(z)\varphi_{2h}(z-x)$, symmetric in
$x,z\in K$; the remaining mass stays at $x$. This proves reversibility.
To see positivity, extend a test function by zero outside $K$.
Convolution with $\varphi_{2h}$ has a nonnegative quadratic form because
its Fourier transform is nonnegative. The holding term also has a
nonnegative quadratic form.
\end{proof}

Repeated position updates can be compared by coupling until the first
disagreement. We will use the same argument with an additional endpoint
error when the billiard segments have been approximated numerically.
\begin{corollary}[Error over several position updates]
\label{cor:several-small-updates}
Suppose an exact process preserves $\widehat\pi$ and applies $R_h$
at an independent Poisson clock of rate $\lambda$. From an $M$-warm
initial law, replacing these updates before time $T$ by
$\widehat R_h$ gives a coupling whose probability of any disagreement
is at most $M\lambda Tn\sqrt{h/\pi}/r$.
Including one further update at time $T$ replaces $\lambda T$ by
$\lambda T+1$ in this bound. There is one membership query per update.
\end{corollary}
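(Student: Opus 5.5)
We couple the exact process and the modified one update by update, applying \Cref{lem:one-query-small-step} at each Poisson event and union-bounding over events. Both processes share the Poisson clock, the billiard flow between events, and the velocity refreshments $Q_a$. The only difference is that at each event time $s_k<T$ the position kernel $R_h$ is replaced by $\widehat R_h$. The two processes are driven by the same randomness until the first event at which the per-update coupling of \Cref{lem:one-query-small-step} disagrees. Once a disagreement occurs we let the two processes evolve independently, since only the probability of some disagreement is counted.

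Condition on the clock: let the event times before $T$ be $s_1<\dots<s_N$, with $N\sim\mathrm{Poisson}(\lambda T)$ independent of everything else. On the event that no disagreement has occurred before $s_k$, the state just before $s_k$ equals the exact process's state, so its position has law $\nu_0\mathcal P$-evolved. Because every factor in \eqref{eq:poisson-semigroup} preserves $\widehat\pi$ and the initial law is $M$-warm, this position law is at most $M\pi$, whatever the conditioning on the clock times. Conditioning on the clock is harmless because the clock is independent and each fixed-time composition of $U$'s and $J_h$'s preserves $\widehat\pi$. Hence, by \Cref{lem:one-query-small-step}, the conditional probability that the $k$th update disagrees, given agreement so far and given the clock, is at most
\[
 \int_K e_h\,d(\text{law of }X_{s_k-})\le \frac{Mn}{r}\sqrt{\frac h\pi}.
\]
The one point needing care is that "law restricted to the agreement event" is a sub-probability measure dominated by the exact law. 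This is fine, because the disagreement indicator at step $k$ depends only on the current position and fresh randomness, so we may bound the joint probability by integrating $e_h$ against the full exact marginal.

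Summing over $k\le N$ and taking the expectation over the clock gives a bound of $\mathbb E[N]\,Mn\sqrt{h/\pi}/r=M\lambda Tn\sqrt{h/\pi}/r$ on the probability of any disagreement. The terminal update at time $T$ is handled the same way. The exact state at $T$ is again $M$-warm, so it contributes one more term, which replaces $\lambda T$ by $\lambda T+1$. Each $\widehat R_h$ uses exactly one membership query, by its definition. The main, though mild, obstacle is justifying the warmness of the exact process at the random event times jointly with the conditioning on past agreement. This is resolved by the domination observation above, together with the pathwise domination of the path law stated after \eqref{eq:poisson-semigroup}.
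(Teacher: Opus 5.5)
Your proposal is correct and follows essentially the same route as the paper: couple each update as in \Cref{lem:one-query-small-step} until the first disagreement, bound each step's disagreement probability by $\int e_h$ against an $M$-warm law, and sum over an expected $\lambda T$ (or $\lambda T+1$) updates. The only difference is cosmetic. The paper gets the expected sum from the compensator formula under stationarity followed by path-law domination, whereas you condition on the independent clock and use the warmness of each fixed-time marginal, and these two arguments are equivalent.
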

\begin{proof}
Under stationarity, the expected sum of the error bounds at the Poisson
times is $\lambda T\int e_h\,d\pi$, by the compensator formula for
the independent clock. Domination of the path law multiplies this by
at most $M$. Couple each update as in
\Cref{lem:one-query-small-step} until the first disagreement, and sum
the error probabilities. The deterministic final update has the same
average error bound.
\end{proof}

\subsubsection{Reflection counts and boundary estimates}
\label{subsec:kinetic-geometry}
We now estimate the geometric events that affect the simulation. We
begin with the stationary rate of reflections. Short-flight estimates
will then control the second moment of the number of reflections within
each simulation interval.

The incoming velocity at a reflection is distributed according to the
outward flux through the boundary. Integrating this flux gives the
following count and also bounds the probability of exceptional contacts.
\begin{lemma}[Expected number of reflections]
\label{lem:stationary-reflection-count}
Let $N_T$ count reflections up to time $T$ for the exact process.
If $rB_2^n\subseteq K$, then
\[
 \mathbb E_{\widehat\pi}N_T
 =\frac{T\,\mathcal H^{n-1}(\partial K)}
        {\sqrt{2\pi}\vol(K)}
 \le\frac{nT}{r\sqrt{2\pi}}.
\]
For an $M$-warm initial law,
$\mathbb E N_T\le MnT/(r\sqrt{2\pi})$.
\end{lemma}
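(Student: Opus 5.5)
The plan is to compute the stationary reflection rate with a boundary flux (Santaló/Rice-type) formula, and then bound $\mathcal H^{n-1}(\partial K)/\vol(K)$ by the divergence theorem.

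\textbf{Reduction to flux through $\partial K$.} Under $\widehat\pi$, the exact process is stationary. Its Poisson updates $J_h$ change only the position and velocity at isolated times and preserve $\widehat\pi$. So the expected number of reflections in $[0,T]$ equals $T$ times the stationary rate of boundary hits. Between Poisson events the motion is the billiard flow, which preserves Lebesgue measure on phase space. The rate at which phase-space mass of density $\widehat\pi$ reaches $\partial K$ is therefore the outward flux
\[
 \int_{\partial K}\int_{\mathbb R^n}\langle v,\nu(x)\rangle_+\,\frac{1}{\vol(K)}\,d\gamma_n(v)\,d\mathcal H^{n-1}(x).
\]
To make this rigorous, I would count crossings of a thin collar: for small $\delta$, let $N^\delta_T$ count entries into $\{x\in K:\operatorname{dist}(x,\partial K)<\delta\}$ with inward-pointing normal component. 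Stationarity gives $\mathbb E N^\delta_T=T\cdot(\text{mass flux})$. Let $\delta\to0$ using the smooth boundary, together with the fact that almost surely no grazing or simultaneous events occur. The Poisson updates contribute no reflections because the probability of a jump exactly onto the boundary is zero. Equivalently, I could invoke the Kac/Santaló formula for the invariant billiard measure, with the Poisson jumps handled by the compensator formula as in \Cref{cor:several-small-updates}.

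\textbf{Evaluating the flux and the ratio bound.} For a fixed unit normal, $\mathbb E\langle G,\nu\rangle_+=\mathbb E(G_1)_+=1/\sqrt{2\pi}$. This gives $\mathbb E_{\widehat\pi}N_T=T\,\mathcal H^{n-1}(\partial K)/(\sqrt{2\pi}\vol(K))$. The inequality follows exactly as in the proof of \Cref{lem:one-query-small-step}: since $rB_2^n\subseteq K$, we have $\langle x,\nu(x)\rangle\ge r$ on $\partial K$. The divergence theorem applied to the vector field $x$ gives $n\vol(K)=\int_{\partial K}\langle x,\nu\rangle\,d\mathcal H^{n-1}\ge r\,\mathcal H^{n-1}(\partial K)$.

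\textbf{Warm start.} The count $N_T$ is a nonnegative functional of the path. As noted after \eqref{eq:poisson-semigroup}, an initial law $\rho\le M\widehat\pi$ gives a path law dominated by $M$ times the stationary path law, so $\mathbb E N_T\le M\,\mathbb E_{\widehat\pi}N_T$.

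The main obstacle is the rigorous justification of the flux identity for a process with jumps. This requires excluding grazing trajectories and accumulation of reflections, and passing the collar limit. I would lean on the almost-everywhere well-definedness of the flow cited from \Cref{prop:linear-potential-relaxation} and monotone convergence.
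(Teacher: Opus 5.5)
Your proposal is correct and follows the paper's proof. The stationary collision rate is the Gaussian outward flux $\vol(K)^{-1}\int_{\partial K}\int\langle v,\nu(x)\rangle_+\,d\gamma_n(v)\,d\mathcal H^{n-1}(x)$ with $\int\langle v,\nu\rangle_+\,d\gamma_n=1/\sqrt{2\pi}$; the inequality is the same divergence-theorem bound $r\,\mathcal H^{n-1}(\partial K)\le n\vol(K)$ used for the one-query update; and the warm case is domination of the path law by $M$. Your collar-crossing or Santal\'o/compensator discussion only adds rigor to what the paper justifies in one line as ``measuring the volume swept out in a short time interval.''
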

\begin{proof}
At a boundary element $dS(x)$, the stationary rate of collisions with
incoming velocity in $dv$ is
$\vol(K)^{-1}\langle v,\nu(x)\rangle_+\gamma_n(dv)\,dS(x)$,
where $q_+=\max\{q,0\}$. This follows by measuring the volume swept
out in a short time interval. Integration over velocities contributes
$\int\langle v,\nu(x)\rangle_+\,d\gamma_n(v)=1/\sqrt{2\pi}$.
Integrating over the boundary and time gives the identity. The inequality
uses $r\mathcal H^{n-1}(\partial K)\le n\vol(K)$, proved above.
Finally apply domination by $M$ to the nonnegative path functional
$N_T$.
\end{proof}

For the remaining geometric estimates, assume $n\ge2$ and
$rB_2^n\subseteq K$. Write $V_K=\operatorname{vol}(K)$ and
$S_K=\mathcal H^{n-1}(\partial K)$. The preceding flux formula gives
collision rate $\beta_K=S_K/(\sqrt{2\pi}V_K)\le n/(r\sqrt{2\pi})$.

For a fixed direction, convexity forces most chords to have a definite
length. We will integrate this elementary estimate against Gaussian
velocity to bound the frequency of short flights.
\begin{lemma}[Short chords]
\label{lem:short-chords}
Let $u\in S^{n-1}$, let $D_u=P_{u^\perp}K$, and let $\ell_u(y)$
be the length of $K\cap(y+\mathbb Ru)$ for $y\in D_u$. For $a>0$,
\[
 \frac{|\{y\in D_u:\ell_u(y)<a\}|}{|D_u|}
 \le\min\left\{1,\frac{(n-1)a}{2r}\right\}.
\]
\end{lemma}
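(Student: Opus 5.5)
Since the bound is trivial whenever $(n-1)a/(2r)\ge1$, we may assume $a<2r/(n-1)$. The plan is to compare short chords with chords of the inscribed ball and then use convexity, via a homothety centered at a boundary point, to transfer the estimate to all of $D_u$.

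\textbf{Step 1: chords through the inner ball.} Write $y\in D_u$ in polar form on the hyperplane $u^\perp$ as $y=\rho\theta$ with $\theta\in S^{n-1}\cap u^\perp$ and $0\le\rho\le\rho_{\max}(\theta)$, where $\rho_{\max}(\theta)$ is the radial function of the $(n-1)$-dimensional convex body $D_u$. The body $D_u$ contains the disk $rB_2^{n-1}$ (the projection of $rB_2^n$). The key monotonicity claim is that along each ray, $\ell_u(\rho\theta)$ is bounded below by a linear interpolation. To see this, fix a boundary point $p=\rho_{\max}\theta$ of $D_u$ and a point $x_p\in K$ projecting to it. The set $K$ contains the convex hull of $x_p$ and the ball $rB_2^n$. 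The chord of this hull over $y=\rho\theta$ has length at least $2r(1-\rho/\rho_{\max})$, because the hull contains the homothetic image $x_p+(1-\rho/\rho_{\max})(rB_2^n-x_p)$. That image is a ball of radius $r(1-\rho/\rho_{\max})$ whose projection is centered exactly at $y$.

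\textbf{Step 2: the volume estimate.} By Step 1, $\ell_u(y)<a$ forces $1-\rho/\rho_{\max}(\theta)<a/(2r)$, that is, $\rho>(1-a/(2r))\rho_{\max}(\theta)$. The set of such $y$ is $D_u\setminus(1-a/(2r))D_u$, so its relative measure is
\[
1-(1-a/(2r))^{n-1}\le (n-1)a/(2r)
\]
by Bernoulli's inequality. This is exactly the claimed bound.

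\textbf{Main obstacle.} The only delicate point is the homothety argument in Step 1. One must check that the shrunken ball is centered over $y$. This holds because projecting $(1-t)x_p+t\cdot 0$ gives $(1-t)p$, and with $t=1-\rho/\rho_{\max}$ this equals $\rho\theta=y$. The radius is then $tr$, so the chord length is $2tr$. Everything else is the polar-coordinate scaling of Lebesgue measure in dimension $n-1$.
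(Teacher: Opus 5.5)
Your proof is correct and follows essentially the same route as the paper: the paper gets $\ell_u((1-\delta)y)\ge\delta\,\ell_u(0)\ge 2r\delta$ with $\delta=a/(2r)$ from concavity of $\ell_u$ (via Minkowski combinations of fibers), which your cone over $rB_2^n$ with apex $x_p$ makes explicit. It then bounds the relative volume of $D_u\setminus(1-\delta)D_u$ by $1-(1-\delta)^{n-1}\le(n-1)\delta$ exactly as you do; one small correction is that the short-chord set is contained in, rather than equal to, $D_u\setminus(1-a/(2r))D_u$.
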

\begin{proof}
Minkowski combinations of the one-dimensional fibers show that
$\ell_u$ is nonnegative and concave. Also $\ell_u(0)\ge2r$. If
$0<a<2r$ and $\delta=a/(2r)$, then
$\ell_u((1-\delta)y)\ge\delta\ell_u(0)\ge a$ for every $y\in D_u$.
The short chords therefore lie outside $(1-\delta)D_u$, a set whose
relative volume is at most $1-(1-\delta)^{n-1}\le(n-1)\delta$.
For $a\ge2r$, the claimed bound is the trivial bound one.
\end{proof}

We allow the invariant position and velocity updates defined above.
Between updates, the motion is the usual specular billiard. At a contact,
the outgoing straight chord is defined by continuing with the reflected
velocity and ignoring any subsequent update.
\begin{proposition}[Short free flights]
\label{prop:short-free-flights}
Suppose $\mathcal L(X_0,V_0)=\widehat\pi$. Let $B_{T,s}$ count contacts during
$[0,T]$ whose outgoing straight chord has traversal time less than $s$.
Then $\mathbb E B_{T,s}\le Cn^{5/2}Ts/r^2$. Consequently, the probability
of two reflections separated by less than $s$, with no intervening update,
is at most $Cn^{5/2}Ts/r^2$.
\end{proposition}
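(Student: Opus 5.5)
We bound $\mathbb E B_{T,s}$ by combining the stationary collision flux of \Cref{lem:stationary-reflection-count} with \Cref{lem:short-chords}. The exact process preserves $\widehat\pi$, and every update preserves $\widehat\pi$. Contacts with given outgoing data therefore occur at the same rate as under the pure billiard flow. By the compensator formula, $\mathbb E B_{T,s}$ equals $T$ times the stationary rate of contacts whose outgoing velocity $w$ and contact point $x$ satisfy $\tau(x,w)<s$, where $\tau$ is the time to traverse the chord from $x$ in direction $w$. Since specular reflection is a measure-preserving bijection between incoming and outgoing half-spaces of velocities, this rate equals
\[
 \frac1{V_K}\int_{\partial K}\int\langle w,-\nu(x)\rangle_+\mathbf 1\{\tau(x,w)<s\}\,\gamma_n(dw)\,dS(x).
\]

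Next we rewrite the boundary integral as an integral over chords. Write $w=\rho u$ with $\rho=\|w\|_2$ and $u\in S^{n-1}$, and fix $u$. The measure $\langle u,-\nu(x)\rangle_+\,dS(x)$ on the entry points of lines in direction $u$ pushes forward to Lebesgue measure on $D_u=P_{u^\perp}K$. The flight time is $\tau=\ell_u(y)/\rho$. The rate therefore becomes
\[
 \frac1{V_K}\,\mathbb E_{w\sim\gamma_n}\Bigl[\rho\,\bigl|\{y\in D_u:\ell_u(y)<s\rho\}\bigr|\Bigr].
\]
By \Cref{lem:short-chords}, the inner volume is at most $|D_u|\,(n-1)s\rho/(2r)$. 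Cauchy's formula gives $\mathbb E_u|D_u|=c_nS_K$ with $c_n=\Theta(n^{-1/2})$; precisely, $\mathbb E|U_1|\cdot S_K/2$, as used in \Cref{lem:one-query-small-step}. Since $u$ and $\rho$ are independent and $\mathbb E\rho^2=n$, the rate is bounded by
\[
 \frac{S_K}{V_K}\cdot Cn^{-1/2}\cdot\frac{(n-1)s}{r}\cdot n\le\frac{n}{r}\cdot\frac{Cn^{3/2}s}{r},
\]
using $S_K/V_K\le n/r$. Multiplying by $T$ gives $\mathbb E B_{T,s}\le Cn^{5/2}Ts/r^2$.

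For the consequence, suppose two reflections occur less than $s$ apart with no intervening update. Then the outgoing chord from the first contact is traversed exactly, so its traversal time is less than $s$. Such a configuration is counted by $B_{T,s}$, and Markov's inequality gives the stated probability bound.

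The main obstacle is the change of variables. We must justify that the outgoing flux measure on $\partial K$, for a fixed direction $u$, disintegrates as Lebesgue measure on $D_u$ with one entry point per line, which holds by convexity. We must also confirm that reflection maps the incoming flux to the outgoing flux without changing the Gaussian weight. The first uses $\langle u,-\nu\rangle\,dS=dy$ on the entering part of the boundary. For the second, $\mathcal R_x$ is an isometry that swaps the two half-spaces and preserves $\gamma_n$. The remaining constants are routine Gaussian moments.
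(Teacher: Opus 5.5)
Your proof is correct and follows the paper's argument: you write the outgoing flux for a fixed velocity $\rho u$ as $\rho\,dy/V_K$ on $D_u$ and apply \Cref{lem:short-chords} with $a=s\rho$. You then use independence of speed and direction, $\mathbb E\rho^2=n$, Cauchy's projection formula and $S_K/V_K\le n/r$, and deduce the consequence by Markov's inequality, which is exactly the paper's route; your added justification of the flux reflection and the chord disintegration only spells out steps the paper states briefly.
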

\begin{proof}
Fix a velocity $v=qu$, where $q>0$ and $u\in S^{n-1}$. In the
coordinates of its transverse projection, outgoing boundary flux is
$q\,dy/V_K$. By \Cref{lem:short-chords}, the rate of contacts marked
by a straight flight of duration less than $s$ is at most
\[
 \frac q{V_K}|\{y\in D_u:\ell_u(y)<qs\}|
 \le\frac{(n-1)sq^2}{2rV_K}|D_u|.
\]
For a standard Gaussian velocity, speed and direction are independent,
$\mathbb E q^2=n$, and Cauchy's projection formula gives
$\mathbb E_u|D_u|=(S_K/2)\mathbb E|U_1|\le S_K/(2\sqrt n)$.
The resulting rate is at most
$(n-1)\sqrt n\,sS_K/(4rV_K)\le Cn^{5/2}s/r^2$.
Integration over time proves the expectation bound. Invariant updates
preserve the law used in this flux calculation. Every actual short flight
without an update is one of the marked straight chords, so Markov's
inequality gives the probability bound.
\end{proof}

The next estimate excludes nearly tangent contacts in the numerical
analysis. Its quadratic dependence on the normal-speed threshold follows
from the Gaussian boundary flux.
\begin{lemma}[Small normal velocity]
\label{lem:grazing-flux}
For $a\ge0$, let $G_{T,a}$ count incoming contacts of the stationary process
$(x,v)$ before time $T$ with $\langle v,\nu(x)\rangle\le a$. Then
\[
 \mathbb E G_{T,a}
 =\frac{TS_K}{\sqrt{2\pi}V_K}(1-e^{-a^2/2})
 \le\frac{nTa^2}{2r\sqrt{2\pi}}.
\]
\end{lemma}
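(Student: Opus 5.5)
We compute the flux exactly, in the same way as \Cref{lem:stationary-reflection-count}, and then bound the resulting scalar factor. Under stationarity, the rate of incoming contacts at the boundary element $dS(x)$ with incoming velocity in $dv$ is
\[
 V_K^{-1}\langle v,\nu(x)\rangle_+\gamma_n(dv)\,dS(x),
\]
as in the proof of \Cref{lem:stationary-reflection-count}. This follows from the swept-volume argument, since the billiard flow preserves $\widehat\pi$. The invariant updates preserve $\widehat\pi$ and have no effect on this rate at a fixed time.

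Integrating over $[0,T]$ gives
\[
 \mathbb E G_{T,a}=\frac{T}{V_K}\int_{\partial K}\int\langle v,\nu(x)\rangle\mathbf 1\{0<\langle v,\nu(x)\rangle\le a\}\,d\gamma_n(v)\,dS(x).
\]
Contacts with $\langle v,\nu\rangle\le0$ are not incoming, so they carry zero flux. For each fixed $x$, rotation invariance of $\gamma_n$ reduces the inner integral to a one-dimensional integral in the normal component $w=\langle v,\nu(x)\rangle\sim N(0,1)$:
\[
 \int_0^a w\varphi(w)\,dw=\frac{1}{\sqrt{2\pi}}\bigl(1-e^{-a^2/2}\bigr).
\]
This value does not depend on $x$. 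Integrating over $\partial K$ therefore contributes the factor $S_K$, which yields the stated identity.

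For the inequality, we use the elementary bound $1-e^{-y}\le y$ with $y=a^2/2$. We also use $S_K/V_K\le n/r$, which comes from the divergence theorem together with $\langle x,\nu(x)\rangle\ge r$, as already proved in \Cref{lem:one-query-small-step}. Combining the two bounds gives
\[
 \mathbb E G_{T,a}\le \frac{nTa^2}{2r\sqrt{2\pi}}.
\]

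The only point needing care is the justification of the flux formula for a process with updates. Between Poisson events the motion is the billiard flow. Contacts can be counted by a Campbell/compensator argument: in a short time $\delta$, the expected number of contacts with the stated marks equals the $\widehat\pi$-mass of the set of phase points that reach $\partial K$ within time $\delta$ with those marks, up to $o(\delta)$. Stationarity then makes this count linear in $T$. Update times form a null set in time, so they do not contribute.

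We do not expect a genuine obstacle here, since the computation parallels \Cref{lem:stationary-reflection-count} with the extra indicator $\mathbf 1\{\langle v,\nu\rangle\le a\}$.
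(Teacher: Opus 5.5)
Your proposal is correct and follows essentially the same route as the paper: integrate the incoming Gaussian boundary flux over normal speeds in $[0,a]$ to get $(1-e^{-a^2/2})/\sqrt{2\pi}$, multiply by $TS_K/V_K$, and bound using $1-e^{-a^2/2}\le a^2/2$ together with $S_K/V_K\le n/r$. Your added remark that invariant updates do not change the stationary contact rate matches how the paper handles updates in its other flux estimates.
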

\begin{proof}
At each boundary point, integrate the incoming Gaussian flux over normal
speeds in $[0,a]$. The integral is
$\int_0^a s e^{-s^2/2}\,ds/\sqrt{2\pi}
=(1-e^{-a^2/2})/\sqrt{2\pi}$.
Multiply by $TS_K/V_K$ and use $S_K/V_K\le n/r$.
\end{proof}

The simulation cost will be quadratic in the number of reflections in
one interval. We therefore need a truncated second moment, rather than
only the expected collision count. Let $N_s$ count reflections in a
stationary billiard trajectory of duration $s$, with no update inside
that interval.
\begin{lemma}[Truncated collision count]
\label{lem:truncated-collision-count}
For every integer $k\ge2$ and every integer $b\ge1$,
\[
 \begin{aligned}
 \mathbb P(N_s\ge k)&\le\frac{Cn^{5/2}s^2}{r^2k^2},\\
 \mathbb E\min\{N_s,b\}^2
 &\le\beta_Ks+\frac{Cn^{5/2}s^2}{r^2}\log(2b).
 \end{aligned}
\]
\end{lemma}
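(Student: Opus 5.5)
The plan is to deduce both bounds from \Cref{prop:short-free-flights}. The key observation is that if a stationary billiard trajectory of duration $s$ has $N_s\ge k$ reflections, then the $k-1$ free flights between consecutive reflections lie inside $[0,s]$. Their total time is at most $s$, so at least one of them lasts less than $s/(k-1)\le 2s/k$.

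Every such interior flight is an outgoing straight chord traversed completely, since no update occurs inside the interval. So the event $\{N_s\ge k\}$ implies $B_{s,2s/k}\ge1$, where $B_{s,2s/k}$ counts contacts in $[0,s]$ whose outgoing chord has traversal time less than $2s/k$. Markov's inequality and \Cref{prop:short-free-flights} with $T=s$ then give
\[
\mathbb P(N_s\ge k)\le\mathbb E B_{s,2s/k}\le\frac{Cn^{5/2}s\cdot 2s/k}{r^2}.
\]
This is only of order $1/k$, not the claimed $1/k^2$.

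To get $1/k^2$, I will count short flights instead of detecting a single one. At least $\lceil (k-1)/2\rceil$ of the $k-1$ flights have length at most $2s/(k-1)$: otherwise more than half of them would exceed twice the average, and their total would exceed $s$. Hence $N_s\ge k$ implies $B_{s,4s/k}\ge (k-1)/2\ge k/4$, and Markov's inequality applied to this count gives
\[
\mathbb P(N_s\ge k)\le\frac{4}{k}\,\mathbb E B_{s,4s/k}\le\frac{Cn^{5/2}s^2}{r^2k^2}.
\]
This is the first bound. Stationarity under $\widehat\pi$ is preserved by the flow, so \Cref{prop:short-free-flights} applies with $T=s$ and no updates.

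For the truncated second moment, I will write $\min\{N_s,b\}^2$ as a sum over the events $\{N_s\ge j\}$. Using $m^2=\sum_{j=1}^m(2j-1)$,
\[
\mathbb E\min\{N_s,b\}^2=\sum_{j=1}^{b}(2j-1)\,\mathbb P(N_s\ge j).
\]
The $j=1$ term is $\mathbb P(N_s\ge1)\le\mathbb E N_s=\beta_Ks$ by \Cref{lem:stationary-reflection-count}. For $j\ge2$, the first bound gives a term of at most $(2j-1)Cn^{5/2}s^2/(r^2j^2)\le 2Cn^{5/2}s^2/(r^2j)$, and summing the harmonic series up to $b$ gives the factor $\log(2b)$.

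The main obstacle is the pigeonhole step. I need to check that the short flights counted there really are marked contacts in \Cref{prop:short-free-flights}: each is an outgoing chord from a contact inside $[0,s]$, and its full traversal ends at the next reflection. The last, incomplete flight is simply discarded. Nearly tangent or degenerate contacts form a null set, by the smooth-boundary assumption and the flow construction.
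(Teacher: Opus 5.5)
Your proposal is correct and follows the paper's proof essentially step for step. Both arguments use the same three steps: the pigeonhole observation that at least half of the first $k-1$ inter-contact flights last at most $2s/(k-1)$, then Markov's inequality on the short-chord count from \Cref{prop:short-free-flights}, then the identity $\mathbb E\min\{N_s,b\}^2=\sum_{j=1}^b(2j-1)\mathbb P(N_s\ge j)$ with $\mathbb P(N_s\ge1)\le\beta_Ks$ and a harmonic sum. Your first single-short-flight attempt, which only gives order $1/k$, is an unnecessary detour.
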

\begin{proof}
If $N_s\ge k$, at least half of the first $k-1$ flights between
contacts have duration at most $2s/(k-1)$. Their expected number is at
most $Cn^{5/2}s^2/(r^2(k-1))$, by \Cref{prop:short-free-flights}.
Markov's inequality proves the first bound, after adjusting the constant.
For the second, use $\mathbb P(N_s\ge1)\le\mathbb E N_s=\beta_Ks$
and the identity
\[
 \mathbb E\min\{N_s,b\}^2
 =\sum_{k=1}^b(2k-1)\mathbb P(N_s\ge k).
\]
The remaining terms form a harmonic sum.
\end{proof}

An independent Poisson partition makes the sum of squared interval lengths
small. Combining this fact with the preceding lemma gives the total
reflection work. Here and below, an $M$-warm phase law means a measure
bounded by $M\widehat\pi$.
\begin{lemma}[Lengths of the simulation intervals]
\label{lem:poisson-interval-lengths}
Partition $[0,T]$ by an independent rate-$\lambda$ Poisson process,
and denote the interval lengths by $\ell_1,\ldots,\ell_{J+1}$. Then
$\sum_i\ell_i=T$ and $\mathbb E\sum_i\ell_i^2\le2T/\lambda$.
If $\mathcal L(X_0,V_0)\le M\widehat\pi$ and an invariant update is applied
at each partition point, the reflection counts $N_i$ satisfy, for $b\ge1$,
\begin{align*}
 \mathbb P\{\max_iN_i>b\}
 &\le\frac{CMn^{5/2}T}{r^2\lambda b^2},\\
 \mathbb E\sum_i\min\{N_i,b\}^2
 &\le CM\left(\frac{nT}{r}
       +\frac{n^{5/2}T}{r^2\lambda}\log(2b)\right).
\end{align*}
\end{lemma}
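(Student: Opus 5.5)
We split the statement into the deterministic Poisson-partition estimate and the two reflection-count estimates, which come from combining \Cref{lem:truncated-collision-count} with stationarity.

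\emph{The interval lengths.} The identity $\sum_i\ell_i=T$ is immediate. For the second moment, write $\sum_i\ell_i^2=\int_0^T 2\,(t-\mathrm{last}(t))\,dt$, where $\mathrm{last}(t)$ is the most recent partition point before $t$ (or $0$). Equivalently, split each interval's square as $\ell_i^2=2\int_{\text{interval}}(t-\text{left end})\,dt$. For each fixed $t$, the age $t-\mathrm{last}(t)$ is $\min\{E,t\}$ with $E$ exponential of rate $\lambda$, so its expectation is at most $1/\lambda$. Integrating over $[0,T]$ gives $\mathbb E\sum_i\ell_i^2\le2T/\lambda$.

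\emph{Reduction to the stationary case.} Condition on the Poisson partition, which is independent of the dynamics. Each update preserves $\widehat\pi$ and the free flow preserves $\widehat\pi$, so under the stationary initial law the phase state at the start of every interval has law $\widehat\pi$. Hence $N_i$, conditional on the partition, is distributed as $N_{\ell_i}$ in \Cref{lem:truncated-collision-count}. Both target quantities are nonnegative path functionals, and the path law from an $M$-warm start is dominated by $M$ times the stationary path law. It therefore suffices to prove the bounds for $M=1$ and multiply by $M$.

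\emph{The maximum.} Union-bound over intervals. For $b\ge1$, apply the first estimate of \Cref{lem:truncated-collision-count} with $k=b+1\ge2$:
\[
\mathbb P\{N_i>b\}\le Cn^{5/2}\ell_i^2/(r^2b^2).
\]
Summing over $i$ and taking expectations over the partition with $\mathbb E\sum\ell_i^2\le2T/\lambda$ gives
\[
\mathbb P\{\max_iN_i>b\}\le CMn^{5/2}T/(r^2\lambda b^2).
\]

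\emph{The truncated second moment.} Apply the second estimate of \Cref{lem:truncated-collision-count} to each interval:
\[
\mathbb E\min\{N_i,b\}^2\le\beta_K\ell_i+Cn^{5/2}\ell_i^2\log(2b)/r^2.
\]
Summing gives $\beta_KT+Cn^{5/2}\log(2b)\sum\ell_i^2/r^2$. Use $\beta_K\le n/(r\sqrt{2\pi})$ and the expectation $2T/\lambda$ to obtain the stated bound.

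\emph{Main obstacle.} The only real subtlety is justifying that the intervals can be treated separately with fixed lengths. The number of intervals $J+1$ is random, but the partition is independent of the dynamics, so we condition on it and sum using Fubini/Tonelli. \Cref{lem:truncated-collision-count} is stated for a single stationary trajectory with no update inside, which matches each open interval, since updates occur only at partition points. Warm domination must be applied to the whole path jointly with the partition, and this is valid because the partition is independent of the initial law.
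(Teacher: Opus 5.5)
Your proof is correct and follows essentially the paper's argument: condition on the independent Poisson partition, use invariance together with warm domination to reduce each interval to \Cref{lem:truncated-collision-count}, then sum over intervals (with a union bound for the maximum) and average using $\mathbb E\sum_i\ell_i^2\le 2T/\lambda$. Your age-process computation of $\mathbb E\sum_i\ell_i^2$ is only a rewriting of the paper's identity $\mathbb E\sum_i\ell_i^2=\int_0^T\int_0^T e^{-\lambda|s-t|}\,ds\,dt$, since both count ordered pairs of times that lie in a common interval.
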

\begin{proof}
Two times $s,t\in[0,T]$ lie in the same partition interval precisely
when no Poisson point lies between them. Therefore
\[
 \mathbb E\sum_i\ell_i^2
 =\int_0^T\int_0^T e^{-\lambda|s-t|}\,ds\,dt\le2T/\lambda.
\]
Condition on the entire partition. Each prescribed billiard flow and
each update preserves $\widehat\pi$, so every interval starts from an
$M$-warm law. Apply \Cref{lem:truncated-collision-count}, multiply its
stationary bounds by $M$, and sum over the intervals. Averaging over the
partition proves the claims.
\end{proof}

The final geometric estimate keeps the endpoints of numerical intervals
away from the boundary. This is needed to decide correctly whether an
interval ends before the next reflection.
\begin{lemma}[Boundary distance at update times]
\label{lem:update-boundary-margin}
For $0<d\le r$ and $X\sim\operatorname{Unif}(K)$,
$\mathbb P\{\operatorname{dist}(X,\partial K)<d\}\le nd/r$.
For the process in \Cref{lem:poisson-interval-lengths}, the probability
that any position immediately before or after an update, or at time
zero or $T$, has boundary distance less than $d$ is at most
$CM(1+\lambda T)nd/r$.
\end{lemma}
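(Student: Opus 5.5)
For the first claim I would bound the volume of the inner boundary layer by comparing $K$ with a shrunken copy. The key point is that $rB_2^n\subseteq K$ forces $(1-d/r)K$ to lie in the interior at distance at least $d$ from $\partial K$. Take $x\in(1-d/r)K$ and write $x=(1-\delta)y$ with $y\in K$ and $\delta=d/r$. By convexity,
\[
 x+\delta rB_2^n=(1-\delta)y+\delta\,rB_2^n\subseteq K,
\]
so $B(x,d)\subseteq K$ and therefore $\operatorname{dist}(x,\partial K)\ge d$. The set $\{x\in K:\operatorname{dist}(x,\partial K)<d\}$ is thus contained in $K\setminus(1-\delta)K$. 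Its relative volume is at most $1-(1-\delta)^n\le n\delta=nd/r$, which is the first claim. This is the same argument as in \Cref{lem:short-chords}, with $\ell_u$ replaced by the indicator of $K$.

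For the second claim I would apply a union bound over the relevant time points, each of which carries a position that is nearly $\pi$-distributed up to warmness. Fix the Poisson partition; the relevant times are $0$, $T$, and the $J$ update times. The phase law at any fixed deterministic time is $M$-warm, because every billiard segment and every update preserves $\widehat\pi$ (see the discussion after \eqref{eq:poisson-semigroup}). So each fixed time contributes at most $Mnd/r$.

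At an update time, the position immediately before the update is the billiard position at that time. The clock is independent, so the compensator formula for the rate-$\lambda$ clock, which was already used in \Cref{cor:several-small-updates}, gives an expected number of update times with a bad pre-update position of at most $\lambda T\cdot Mnd/r$. Immediately after an update, the state is the image under the invariant kernel $J_h$ of an $M$-warm law, so it is again $M$-warm. The same compensator computation gives another $\lambda T\,Mnd/r$. If one uses the one-query kernel $\widehat R_h$, it also preserves $\pi$ by \Cref{lem:one-query-small-step}, so the same bound holds. The terminal update at time $T$ adds $O(Mnd/r)$.

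Adding these contributions via Markov's inequality on the expected count gives $CM(1+\lambda T)nd/r$. The only point needing care is the justification that the law at a Poisson event time, marginalized appropriately, is still dominated by $M\widehat\pi$. I would handle this with Palm/compensator calculus using the independence of the clock from the state. This is the step I expect to be the main obstacle, but it is standard.
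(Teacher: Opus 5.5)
Your proof is correct and follows the paper's argument: the inclusion $(1-d/r)K+dB_2^n\subseteq K$ gives the volume bound, and the second claim combines compensation at the Poisson times, invariance of the update for the post-update points, and warm domination at the deterministic endpoints. The step you flag as the main obstacle is immediate, because the clock is independent of the state. Condition on the whole partition, as in the proof of \Cref{lem:poisson-interval-lengths}; then each of the $O(1+J)$ relevant positions is $M$-warm, and $\mathbb E J=\lambda T$ finishes the bound.
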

\begin{proof}
Convexity gives $(1-d/r)K+dB_2^n\subseteq K$. Thus the points at boundary
distance at least $d$ have volume at least $(1-d/r)^nV_K$, proving the
first claim. For stationary dynamics, apply the Poisson compensation
formula before each update, use invariance for the post-update point,
and include the deterministic endpoints. Domination of the entire path
law by $M$ times its stationary law proves the second claim.
\end{proof}

\subsubsection{Approximating the boundary normal}

We now assume $B_2^n\subseteq K\subseteq RB_2^n$. Membership queries
determine the gauge $g(x)=\inf\{s>0:x\in sK\}$ by binary search. The
gauge is convex, positively homogeneous, and $1$-Lipschitz. At a regular
boundary point $x$, it satisfies
$\nabla g(sx)=\nabla g(x)$ for $s>0$ and
$\nu(x)=\nabla g(x)/\|\nabla g(x)\|_2$, with $\|\nabla g(x)\|_2\ge1/R$.
The last inequality follows from $\langle x,\nabla g(x)\rangle=1$.

We first bound the mean error of a normal estimate. For the simulation,
we then bound the normal error in terms of the contact-point displacement
and the numerical precision.

At a stationary contact the surface location is uniform, by the Gaussian
flux formula. A short random radial displacement preserves its normal
and gives a point with an absolutely continuous law, allowing us to
analyze the finite differences.
\begin{lemma}[Estimating a boundary normal]
\label{lem:average-normal-oracle}
Let $X$ be uniform in surface area on $\partial K$, and let
$\widetilde X$ be any jointly distributed point with
$\|\widetilde X-X\|_2\le\zeta\le1$. For $0<u\le1/n$ and $0<\xi\le1$,
a randomized membership procedure returns $\widehat\nu\in S^{n-1}$ such that
\[
 \mathbb E\|\widehat\nu-\nu(X)\|_2
 \le CRn\left(u+\frac{\xi+\zeta}{u}\right).
\]
It uses $Cn\log(2+R/\xi)$ queries. In particular, for $0<\varepsilon<1$,
the guarantee $\mathbb E\|\widehat\nu-\nu(X)\|_2\le\varepsilon$ costs
$O(n\log(2+nR/\varepsilon))$ queries
if $\zeta\le c\varepsilon^2/(R^2n^2)$.
\end{lemma}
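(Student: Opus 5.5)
We estimate the normal by finite differences of the gauge at a randomly perturbed point on the ray through $\widetilde X$, using the homogeneity $\nabla g(sx)=\nabla g(x)$ and $\nu(x)=\nabla g(x)/\|\nabla g(x)\|_2$. The procedure is as follows. Draw an independent radial factor $s$ uniform in $[1,2]$ and an independent uniform point $w$ in $uB_2^n$. Set $y=s\widetilde X+w$ and compute, by bisection to additive precision $\xi$, the $n+1$ gauge values $g(y)$ and $g(y+ue_i)$. Each bisection costs $O(\log(2+R/\xi))$ queries, since the gauge values lie in $[0,2R+2]$. Form $\widehat p_i=(g(y+ue_i)-g(y))/u$, and return $\widehat\nu=\widehat p/\|\widehat p\|_2$, or a fixed unit vector if $\widehat p=0$. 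The query count is then $Cn\log(2+R/\xi)$.

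\textbf{Step 1: reduce to a gradient error.} For any vectors $p,q$ with $\|q\|_2\ge1/R$,
\[
 \bigl\|p/\|p\|_2-q/\|q\|_2\bigr\|_2\le2\|p-q\|_2/\|q\|_2\le2R\|p-q\|_2 .
\]
It therefore suffices to bound $\mathbb E\|\widehat p-\nabla g(X)\|_2$ by $Cn(u+(\xi+\zeta)/u)$.

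\textbf{Step 2: split the error.} Write
\[
 \widehat p-\nabla g(X)=(\widehat p-p_u(y))+(p_u(y)-\bar p(y))+(\bar p(y)-\nabla g(X)),
\]
where $p_u(y)$ is the exact difference quotient and $\bar p(y)$ is the average of $\nabla g$ over the segments $[y,y+ue_i]$ coordinatewise. The bisection error contributes at most $2\xi\sqrt n/u$. The middle term vanishes, because $g$ is Lipschitz and its coordinate difference quotients equal averages of partial derivatives along the segments; this holds almost everywhere, and the absolute continuity of $y$ ensures it holds almost surely.

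\textbf{Step 3: the main obstacle.} The remaining term compares $\nabla g$ at points within distance $O(u+\zeta)$ of the ray through $X$ with $\nabla g(X)=\nabla g(sX)$. Since $\nabla g$ is merely monotone, this is not pointwise controlled and must be averaged using convexity. We use the monotonicity identity: for a convex $1$-Lipschitz function, the average over a ball of radius $u$ of $\|\nabla g(z)-\nabla g(x)\|_2$ is controlled by $\sqrt n$ times the Bregman gap. More concretely, I would bound
\[
 \mathbb E\,\|\nabla g(z)-\nabla g(sX)\|_1\le\frac{C}{u}\,\mathbb E\,\bigl[g(z+\cdot)-g(z)-\langle\nabla g(sX),\cdot\rangle\bigr]
\]
via the one-dimensional fact that derivatives of convex functions have total variation equal to the increment of the slope. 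Then I would integrate over surface-uniform $X$ and the radial and ball averaging, using homogeneity to convert the average over $s\in[1,2]$ times surface area into a volume average over the shell $2K\setminus K$.

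The key inequality should be that the shell average of the total variation of $\partial_ig$ along direction $e_i$, over windows of length $O(u)$, is $O(u\cdot n)$ relative to $\partial K$ area. This follows because $\int_{\text{shell}}\Delta g$ equals a boundary flux bounded by $S_K\cdot O(1)$ and $S_K/V_K\le n$. This yields the $Cnu$ term. The displacement $\zeta$ enters as a shift of the evaluation point, costing $C\zeta/u$ through the same Lipschitz-averaging argument applied to the mollified difference quotients, whose gradient is $O(1/u)$-Lipschitz in the shifted variable. Summing the three contributions and multiplying by $2R$ gives the claimed bound.

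\textbf{Step 4: the particular case.} Choose $u=\varepsilon/(2CRn)$ and $\xi=\zeta=c\varepsilon^2/(R^2n^2)$. Each term is then at most $\varepsilon/3$. Substituting $\xi$ into the query count gives $O(n\log(2+nR/\varepsilon))$.
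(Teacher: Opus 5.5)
Your outline matches the paper everywhere except the one nontrivial estimate. The shared parts are: finite differences of the gauge at a radially randomized point, the identity $\nabla g(sX)=\nabla g(X)$, the reduction $\|\widehat\nu-\nu(X)\|_2\le 2R\|\widehat p-\nabla g(X)\|_2$, and transfer of the $\xi$ and $\zeta$ errors through the $1$-Lipschitz gauge. (The exact difference quotient is $2/u$-Lipschitz in its base point, so these errors cost $O(\sqrt n(\xi+\zeta)/u)$ in $\ell_2$. That is not the dimension-free $C\zeta/u$ you wrote, but it is still within the target.)

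The gap is Step~3, and with your choice $s\sim\operatorname{Unif}[1,2]$ its stated justification is false. On $\partial K$ we have $\nabla g(x)=\nu(x)/\langle x,\nu(x)\rangle$, and $\nabla g$ is $0$-homogeneous. Hence the flux of $\nabla g$ through $\partial(tK)$ is $t^{n-1}\int_{\partial K}\langle x,\nu(x)\rangle^{-1}\,dS(x)\ge t^{n-1}S_K/R$, so $\int_{2K\setminus K}\Delta g\ge(2^{n-1}-1)S_K/R$, not $O(S_K)$.

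The inequality $S_K/V_K\le n$ does not rescue this, because the law of $sX$ is not comparable to normalized volume on $2K\setminus K$. Its density at $y=sx$ is $s^{1-n}/(S_K\langle x,\nu(x)\rangle)$. Near $\partial K$ this exceeds the uniform shell density $1/((2^n-1)V_K)$ by a factor at least $(2^n-1)/(nR)$. Bounding the density by its supremum $1/S_K$ therefore leaves a factor $2^{n-1}$ in the $u$-term. The ``monotonicity identity'' and Bregman-gap inequality you cite are too imprecise to replace this estimate.

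What is missing is a concrete coordinatewise bound, integrated against a density matched to the shell. The paper takes $S\sim\operatorname{Unif}[1,1+1/n]$. The Jacobian of $(x,s)\mapsto sx$ is $s^{n-1}\langle x,\nu(x)\rangle\ge1$, so $Y=SX$ has density at most $n/S_K$ on $D=(1+1/n)K\setminus K$. For a smooth convex $1$-Lipschitz $f$, monotonicity of $t\mapsto\partial_if(y+te_i)$ gives $\bigl|\tfrac{f(y+ue_i)-f(y-ue_i)}{2u}-\partial_if(y)\bigr|\le\int_{-u}^{u}\partial_{ii}f(y+te_i)\,dt$. Integrating over $D$ and summing over $i$ gives at most $Cu\int_{(1+1/n+u)K}\Delta f\le Cu(1+1/n+u)^{n-1}S_K\le C'uS_K$, by the divergence theorem and $\|\nabla f\|_2\le1$. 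Mollification plus Fatou's lemma transfer this to $g$, yielding $\mathbb E\|G(Y)-\nabla g(Y)\|_1\le Cnu$.

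If you prefer to keep $s\in[1,2]$, you must use the decaying density $p_Y(y)\le g(y)^{1-n}/S_K$ and integrate by parts radially. With $m(t)=\int_{tK}\Delta f\le t^{n-1}S_K$, one gets $\int_{\{a<g\le b\}}g^{1-n}\Delta f\,dy\le S_K\bigl(1+(n-1)\log(b/a)\bigr)$, which again gives $O(nu)$.

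Your extra ball perturbation $w$ is unnecessary. Comparing the quotients at $S\widetilde X$ and $SX$ by Lipschitzness never needs absolute continuity at the evaluation point. If you keep $w$, use $\|\nabla f(y+w)-\nabla f(y)\|_2\le\|w\|_2\int_0^1\Delta f(y+\theta w)\,d\theta$, which follows from $\|\nabla^2 f\|_{\mathrm{op}}\le\Delta f$. Then note that $y+\theta w$ inherits the density bound of $y$.
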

\begin{proof}
Choose $S$ uniformly on $[1,1+1/n]$, independently of $(X,\widetilde X)$,
and put $Y=SX$, $\widetilde Y=S\widetilde X$. The radial change of
variables has Jacobian $s^{n-1}\langle x,\nu(x)\rangle$. Since
$\langle x,\nu(x)\rangle\ge1$, the density of $Y$ is at most
$n/S_K$ on $D=(1+1/n)K\setminus K$.

For the moment use exact gauge values and define
$G_i(y)=[g(y+ue_i)-g(y-ue_i)]/(2u)$. For a smooth convex $1$-Lipschitz
function $f$, integrate the fundamental theorem of calculus in each
coordinate. Positivity of its diagonal second derivatives gives
\[
 \int_D\sum_{i=1}^n
 \left|\frac{f(y+ue_i)-f(y-ue_i)}{2u}-\partial_i f(y)\right|dy
 \le u\int_{(1+1/n+u)K}\Delta f.
\]
Every translated integration domain lies in the displayed dilate, because
$uB_2^n\subseteq uK$. By the divergence theorem and $\|\nabla f\|_2\le1$,
the last expression is at most
$u(1+1/n+u)^{n-1}S_K\le e^2uS_K$.
Smooth convolutions of $g$ preserve convexity and the Lipschitz bound.
They converge uniformly, and their gradients converge almost everywhere
to $\nabla g$. Fatou's lemma therefore gives
$\mathbb E\|G(Y)-\nabla g(Y)\|_1\le Cnu$.

Evaluate the $2n$ gauge values at $\widetilde Y\pm ue_i$ to error at
most $\xi$, and call the resulting vector $\widehat G$. Lipschitzness
gives $\|\widehat G-G(Y)\|_1\le n(\xi+2\zeta)/u$. Normalize
$\widehat G$, choosing a fixed unit vector if it is zero. The returned
unit vector satisfies
$\|\widehat\nu-a/\|a\|_2\|_2\le2\|\widehat G-a\|_2/\|a\|_2$ for every nonzero $a$. Take
$a=\nabla g(Y)$ and use $\|a\|_2\ge1/R$ to obtain the claimed error.

For $z\ne0$, the predicate $g(z)\le t$ is equivalent to $z/t\in K$
when $t>0$. Binary search on $[0,\|z\|_2]$ therefore finds $g(z)$ to
accuracy $\xi$; at $z=0$ return zero directly. Since
$\|\widetilde Y\pm ue_i\|_2\le5R$, each value costs
$O(\log(2+R/\xi))$ queries. Finally choose $u=c_1\varepsilon/(Rn)$ and
$\xi=c_2\varepsilon u/(Rn)$, with sufficiently small constants.
\end{proof}

An average error at a single point does not by itself control repeated
reflections. The next lemma bounds every nearby subgradient on an event
whose failure probability is small. Only finitely many scales are needed.
\begin{lemma}[Local variation of the gauge gradient]
\label{lem:finite-scale-gauge-regularity}
With $X,S,Y$ as above, fix $u_j=2^{-j}/n$ for $0\le j\le J$ and
$L>0$. Except on an event of probability at most $Cn(J+1)/L$,
\begin{equation}
 \sum_{i=1}^n
 \bigl(g(Y+u_je_i)+g(Y-u_je_i)-2g(Y)\bigr)
 \le Lu_j^2\quad\text{for }0\le j\le J.
 \label{eq:finite-scale-gauge-event}
\end{equation}
On this event, $a=\nabla g(Y)=\nabla g(X)$ almost surely. For $\|z-Y\|_1\le1/(2n)$ and every $p\in\partial g(z)$,
\begin{equation}
 \|p-a\|_2\le8nL\|z-Y\|_2+2\sqrt nLu_J.
 \label{eq:finite-scale-gauge-modulus}
\end{equation}
\end{lemma}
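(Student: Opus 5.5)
The proof has three parts: a Markov bound on the summed second differences, identification of $\nabla g(Y)$ with $\nabla g(X)$, and a one-sided convexity argument that converts the second-difference bound into a modulus for subgradients.

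\textbf{Step 1: the exceptional event.} Write $\Delta_j(Y)=\sum_i\bigl(g(Y+u_je_i)+g(Y-u_je_i)-2g(Y)\bigr)$. By convexity of $g$ this is nonnegative. I would show that $\mathbb E\,\Delta_j(Y)\le Cnu_j^2$ for each $j$. Then Markov's inequality bounds the failure probability at each scale by $Cn/L$, and a union bound over the $J+1$ scales gives $Cn(J+1)/L$.

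To bound $\mathbb E\,\Delta_j(Y)$, use the density bound from the proof of \Cref{lem:average-normal-oracle}: $Y$ has density at most $n/S_K$ on $D=(1+1/n)K\setminus K$. For a smooth convex $1$-Lipschitz approximation $f$ of $g$,
\[
f(y+ue_i)+f(y-ue_i)-2f(y)=\int_{-u}^{u}(u-|t|)\,\partial_i^2f(y+te_i)\,dt .
\]
Summing over $i$ and integrating over $D$ bounds $\int_D\Delta_j$ by $u^2\int_{(1+1/n+u)K}\Delta f$. The divergence theorem and $\|\nabla f\|_2\le1$ bound the latter by $e^2u^2S_K$. Multiplying by the density bound $n/S_K$ gives $\mathbb E\,\Delta_j(Y)\le Cnu_j^2$. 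Passing from $f$ to $g$ is by uniform convergence of smooth convolutions.

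\textbf{Step 2: the gradient at $Y$.} Since $S$ is independent of $X$ and $Y$ has an absolutely continuous law, $g$ is differentiable at $Y$ almost surely. Homogeneity then gives $\nabla g(Y)=\nabla g(X)=a$.

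\textbf{Step 3: the modulus.} Let $p\in\partial g(z)$ and put $w=p-a$. For a step $h$ in any direction, the subgradient inequality at $z$ and convexity give
\[
\langle w,h\rangle\le g(z+h)-g(Y)-\langle a,z+h-Y\rangle .
\]
The right side is the Bregman remainder $\phi(z+h)$, where $\phi(x)=g(x)-g(Y)-\langle a,x-Y\rangle\ge0$ is convex with $\phi(Y)=0$.

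Next I would control $\phi$ near $Y$ from the event. Take $h=\pm u e_i$ with $u=u_j$ for a scale $j$ chosen later. The key observation is that $\phi(Y\pm ue_i)\ge0$ and, for each $i$,
\[
\phi(Y+ue_i)+\phi(Y-ue_i)\le\Delta_j(Y)\le Lu^2 .
\]
Convexity of $\phi$ along coordinate lines then bounds $\phi$ on the cross-polytope $Y+u\,\mathrm{conv}\{\pm e_i\}$ by the maximum of its values at the vertices, namely $Lu^2$.

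To reach the point $z+h$, write $z-Y=d$ and decompose $z\pm ue_i-Y$ as a combination of coordinate vectors. Using convexity in the form $\phi(\sum\theta_k y_k)\le\sum\theta_k\phi(y_k)$, with the $\ell_1$ scaling (this is why the hypothesis is $\|d\|_1\le1/(2n)$), gives $\phi(z\pm ue_i)\lesssim L\,u^2$ whenever $\|d\|_1\lesssim u$. Hence
\[
\pm\langle w,e_i\rangle\le CLu\quad\text{for every } i,
\]
so $|w_i|\le CLu$ and $\|w\|_2\le C\sqrt n\,Lu$.

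Finally, choose $j$ with $u_j\asymp\max\{n\|d\|_2,\,u_J\}$. This is possible because the scales are dyadic and the hypothesis keeps $u_j\le1/n$. It yields
\[
\|w\|_2\le C\bigl(nL\|d\|_2+\sqrt n\,Lu_J\bigr),
\]
and tracking constants should give the stated $8$ and $2$.

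\textbf{Main obstacle.} The hard part is Step 3: extending the coordinate-axis second-difference control to the whole neighbourhood so that it applies at the shifted base point $z$, and doing so with only a factor $n$ loss in $\|z-Y\|_2$. Convexity gives only upper bounds on $\phi$ off the axes, so the decomposition has to be arranged so that one-sided bounds suffice. If the $\ell_1$ decomposition loses too much, my fallback is to apply the subgradient inequality at $z$ in the directions $\pm u e_i$ and use the upper bound on $\phi(z\pm ue_i)$ together with $\phi(z)\ge0$ directly.
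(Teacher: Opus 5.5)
Your Steps 1 and 2 are the paper's argument: the second-difference integral formula with $\partial_i^2 f\ge0$, the divergence bound $e^2u^2S_K$, the density bound $n/S_K$ from the radial change of variables, then Markov plus a union bound, and homogeneity for $\nabla g(Y)=\nabla g(X)$. The mechanism of Step 3 is also the paper's. By convexity the remainder is at most $Lu^2$ on the $\ell_1$-ball of radius $u$ about $Y$. The subgradient inequality at $z$ in the directions $\pm e_i$, together with $\phi(z)\ge0$, then bounds each coordinate of $p-a$.

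The step that fails as written is the final choice of scale, $u_j\asymp\max\{n\|d\|_2,u_J\}$.
\begin{itemize}
\item Substituting it into your bound $\|w\|_2\le C\sqrt n\,Lu$ gives $Cn^{3/2}L\|d\|_2$, not $CnL\|d\|_2$.
\item Such a scale need not exist. The hypothesis only gives $n\|d\|_2\le n\|d\|_1\le 1/2$, which exceeds the largest scale $u_0=1/n$ (take $d=e_1/(2n)$).
\end{itemize}

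The scale has to be tied to $\|d\|_1$, which is what your convexity step actually uses. Convert to $\|d\|_2$ only at the very end. Concretely, as in the paper:
\begin{itemize}
\item Let $u=u_j$ be the smallest scale with $u\ge2\|d\|_1$. It exists because $2\|d\|_1\le1/n=u_0$.
\item Use the half step $h=\pm(u/2)e_i$. Then $\|d+h\|_1\le u$, so $\phi(z+h)\le Lu^2$ and $|p_i-a_i|\le2Lu$, i.e.\ $\|p-a\|_2\le2\sqrt nLu$.
\item If $j<J$, minimality gives $u<4\|d\|_1\le4\sqrt n\|d\|_2$. Otherwise $u=u_J$.
\end{itemize}
This yields exactly $8nL\|d\|_2+2\sqrt nLu_J$.

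The half step is also what makes your constants come out. With the full step $\pm ue_i$, the point $z+h$ can be at $\ell_1$-distance $\|d\|_1+u>u$ from $Y$. There, scale-$u$ information gives no upper bound on $\phi$, since extrapolating outward from convexity only yields lower bounds. You would then need the coarser scale $2u$, and would obtain $4\sqrt nLu_J$ rather than the stated $2\sqrt nLu_J$.
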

\begin{proof}
The integral formula for second differences, applied first to a smooth
convex $1$-Lipschitz function $f$, gives
\[
 \int_D\sum_i\bigl(f(y+ue_i)+f(y-ue_i)-2f(y)\bigr)\,dy
 \le u^2\int_{(1+1/n+u)K}\Delta f\le e^2u^2S_K.
\]
Convolution and Fatou's lemma transfer this inequality to $g$.
The density bound from \Cref{lem:average-normal-oracle} makes the expected
second-difference sum at most $Cnu^2$. Markov's inequality at each scale
and a union bound prove \eqref{eq:finite-scale-gauge-event}.

Fix such a point $Y$ and set
$r(w)=g(Y+w)-g(Y)-\langle a,w\rangle\ge0$. At scale $u=u_j$,
$r(\pm ue_i)\le Lu^2$. Convex interpolation on the hull of these
points gives $r(w)\le Lu^2$ whenever $\|w\|_1\le u$.
If $\|z-Y\|_1\le u/2$, apply the subgradient inequality at $z$ to
$z\pm(u/2)e_i$. It yields $|p_i-a_i|\le2Lu$, hence
$\|p-a\|_2\le2\sqrt nLu$. Choose the smallest available scale at least
$2\|z-Y\|_1$. Unless that scale is $u_J$, it is less than
$4\|z-Y\|_1\le4\sqrt n\|z-Y\|_2$. This proves the two terms in
\eqref{eq:finite-scale-gauge-modulus}. Homogeneity identifies the
gradients at $X$ and $Y$.
\end{proof}

We record the consequence for an implemented normal. Suppose
$\widetilde Y=S\widetilde X$ and
$\|\widetilde Y-Y\|_1+q\le1/(2n)$. Use central differences of step $q$
and gauge errors at most $\xi$. Integrating the subgradient bound along
these coordinate intervals gives
\[
 \|\widehat G-a\|_2
 \le Cn^{3/2}L\bigl(\|\widetilde Y-Y\|_2+q+u_J\bigr)
       +\sqrt n\xi/q.
\]
Normalizing the gradient multiplies this error by at most $2R$.
Thus, if $q\ge u_J$ and
$\xi\le q^2$, the normal error is at most a polynomial in $n,R,L$
times $\|\widetilde X-X\|_2+q$. This estimate applies to any approximate
point satisfying the distance bound, even if it depends on earlier
membership answers.

\subsubsection{Simulating a billiard segment}

We next give the deterministic part of the algorithm. A billiard interval
has straight motion and specular reflections, without an update inside it.
The numerical procedure restarts with a doubled reflection budget when
necessary. Its precision depends on this budget, so intervals with few
reflections do not pay for the largest permitted count.

The assumptions in the next lemma describe the exact interval. The final
proof will bound their failure probability using the preceding geometric
estimates. Computation between membership queries is free in the oracle
model.
\begin{lemma}[Simulating a billiard trajectory]
\label{lem:finite-query-billiard-segment}
Fix $\epsilon>0$ and assume $B_2^n\subseteq K\subseteq RB_2^n$. Consider an exact billiard
interval of duration $\ell\le H$, starting at $(x,v)$ with
$x\in\operatorname{int}K$ and $\|v\|_2\le B$, where $B,H\ge1$. Suppose it has
$N\le N_{\max}$ reflections, where $N_{\max}\ge1$, and satisfies:
\begin{enumerate}
\item Every incoming normal speed is at least $\alpha>0$.
\item Every flight between consecutive reflections lasts at least $\tau$.
\item Its terminal position has boundary distance at least $d$.
\item Independent radial variables $S_j\sim\operatorname{Unif}[1,1+1/n]$
at its true contacts satisfy \eqref{eq:finite-scale-gauge-event}, with
constant $L$ and scales through $u_J$.
\end{enumerate}
Assume $\alpha,\tau,d\le1$. For a sufficiently large absolute $C$, put
\[
 D=\left(\frac{CnR(1+L)(1+B)(1+H)}{\alpha\tau d}\right)^{10}
\]
and $\epsilon_0=\min\{\epsilon,D^{-2}\}$. If $u_J\le\epsilon_0/(nD^{2N_{\max}+10})$, a membership algorithm
returns the terminal position and velocity with error at most $\epsilon$
in each. It uses one query when $N=0$ and otherwise
$O(1+nN^2\log(2DR/\epsilon_0))$ queries. Its restarts reuse the same
$S_j$ for contact number $j$. On every input it terminates after a budget
of at most $2N_{\max}$ reflections, possibly declaring failure.
If the geometric conditions hold but the interval has more than
$N_{\max}$ reflections, all trials up to that cap still use at most
$O(1+nN_{\max}^2\log(2DR/\epsilon_0))$ queries in total.
\end{lemma}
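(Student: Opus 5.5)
The plan is an induction over the reflections of the exact interval. At each stage we carry a numerical state $(\widehat x_j,\widehat v_j)$, and the inductive claim is that it is within error $e_j$ of the true state just after the $j$th reflection, with $e_{j+1}\le D^2e_j$ as long as $e_j\le D^{-2}$. We initialize with $e_0$ equal to the working precision $\epsilon_1=\epsilon_0D^{-2N_{\max}-4}$. Since $N\le N_{\max}$, the final error is then at most $\epsilon_0\le\epsilon$. The choice of $u_J$ keeps the finite-scale term in \eqref{eq:finite-scale-gauge-modulus} below the working precision at every stage. The restart scheme with doubled budget $b$ uses this same precision schedule with $N_{\max}$ replaced by $b$, so a trial with budget $b$ costs $O(nb\cdot b\log(2DR/\epsilon_0))$ queries; see the cost accounting at the end.

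One step of the simulation goes as follows. From $(\widehat x_j,\widehat v_j)$ we first query the endpoint $\widehat x_j+(\ell-t_j)\widehat v_j$. If it lies in $K$, convexity gives a straight segment and we stop. Otherwise we bisect along the ray to locate the exit time to precision $\epsilon_1$, which costs $O(\log(2DR/\epsilon_0))$ queries.

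Correctness of this step has two parts. The first is the stop-or-reflect decision. The true trajectory either ends with boundary distance at least $d$, or has a next contact at least $\tau$ later. A perturbation of size $e_j\ll d$ therefore cannot change the membership answer at the endpoint.

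The second part is the error in the hitting time and contact point. The incoming normal speed is at least $\alpha$, so by transversality the perturbed line meets $\partial K$ at a point within $Ce_j(1+B)/\alpha$ of the true contact $X_{j+1}$. We only need a one-sided curvature bound here, namely convexity plus the normal-speed lower bound. The time error is similar.

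Next we compute the normal. We move the estimated contact radially by the stored factor $S_{j+1}$, giving $\widetilde Y$. We then take central gauge differences with step $q=\max(u_J,e_j^{1/2})$ and gauge precision $\xi=q^2$, which costs $O(n\log(2DR/\epsilon_0))$ queries. By the consequence of \Cref{lem:finite-scale-gauge-regularity} recorded after it, under condition 4 the normal error is at most $\mathrm{poly}(n,R,L)(\|\widetilde X-X\|_2+q)$. Reflecting $\widehat v$ in the approximate normal changes the velocity by $O(B)$ times the normal error. We then pull the point inward by $\epsilon_1$ so that the next query starts in the interior.

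Collecting these terms, $e_{j+1}\le D\,e_j^{1/2}$. This has a square-root loss, so I would instead choose the precision so that the recursion reads $e_{j+1}\le D^2e_j$. This requires $q$ proportional to $e_j$, which is allowed because condition 4 gives a Lipschitz, not Hölder, modulus on the event \eqref{eq:finite-scale-gauge-modulus}. The constant in $D$ absorbs the factors $n^{3/2}L$, $2R$, $B$, and $1/\alpha$.

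We also need the reflected ray to stay consistent with the true next flight. For this we check that the computed reflected point lies within $\tau$-separated flight times, so no spurious extra contact is detected. The flight lower bound $\tau$ and the interior margin give this for errors below $\tau\alpha/C$. This is where the power $10$ in $D$ leaves room.

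Cost and termination are handled as follows. Every budget-$b$ trial performs at most $b$ reflection steps, each costing $O(n\log(2DR/\epsilon_0)+\log(2DR/\epsilon_0))$ queries. Here $\log(1/\epsilon_1)=O(b\log D+\log(1/\epsilon_0))$, which gives the $O(nb^2\log(2DR/\epsilon_0))$ cost per trial. A trial whose budget is exhausted triggers a restart with budget $2b$.

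The trial with $b\ge N$ succeeds, so geometric doubling gives a total of $O(1+nN^2\log(2DR/\epsilon_0))$ queries. When $N=0$ the single endpoint query suffices. If more than $N_{\max}$ reflections occur, we stop at budget $2N_{\max}$, and the same sum gives the stated bound. Reusing $S_j$ across restarts keeps condition 4 valid for every trial.

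I expect the main obstacle to be the per-reflection error recursion. Getting a linear rather than square-root dependence on $e_j$ in the normal estimate requires carefully matching $q$, $\xi$ and $u_J$ to the current error. It also requires verifying that the approximate point stays within the $\|z-Y\|_1\le1/(2n)$ window of \Cref{lem:finite-scale-gauge-regularity}.
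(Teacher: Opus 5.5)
You have a genuine gap in how you locate the numerical exit. You claim that transversality puts the perturbed line's first boundary hit within $Ce_j(1+B)/\alpha$ of the true contact, and that ``convexity plus the normal-speed lower bound'' suffices.

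Convexity only gives the supporting-plane bound $g(p+w)\ge1+\langle\nabla g(p),w\rangle$. That shows the numerical ray is \emph{outside} $K$ shortly \emph{after} the true contact time $t_0$, namely at $t_0+s$ with $s\approx R(1+H)(e+q)/\alpha$. It gives no reason for the ray to still be \emph{inside} at $t_0-s$, and for a general convex body this fails. Suppose the true ray hits a face $F_1$ at a point within distance $e$ of an edge shared with a face $F_2$, and the ray is nearly parallel to $F_2$ (normal component $\theta\ll\alpha$). Conditions~1--3 all hold. Yet an $e$-perturbation exits through $F_2$ roughly $e/\theta$ earlier in time, far from $p$ and with normal $\nu_2$. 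The reflected velocity is then off by order one.

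The paper uses condition~4 exactly here, not only in the normal estimate. Integrating \eqref{eq:finite-scale-gauge-modulus} from $Sp$ and using homogeneity gives a two-sided bound: $0\le g(p+w)-1-\langle\nabla g(p),w\rangle\le CnL\|w\|_2^2+C\sqrt nLu_J\|w\|_2$ for $\|w\|_2\le1/(4n^{3/2})$. When $e+q\le D^{-2}$, the linear term $-\alpha s/R$ dominates, which certifies that the numerical ray is inside at $t_0-s$.

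This also replaces your argument against spurious contacts, which does not work as stated via $\tau$ and the interior margin. The actual mechanism is convexity. The numerical ray starts at an interior point and is inside at $t_0-s$, so it is inside on the whole segment between them. Bisection between $t_0-s$ and $t_0+s$ then finds the true first exit. The bound $t_0\ge\tau$ is used only to ensure $s<t_0/2$ on noninitial flights; the first flight has exact data.

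A smaller bookkeeping issue: your recursion $e_{j+1}\le D^2e_j$ with precision $\epsilon_0D^{-2b-4}$ drops below the guaranteed scale $u_J\le\epsilon_0/(nD^{2N_{\max}+10})$ for budgets $b$ near $2N_{\max}$. The paper instead uses $e\mapsto D(e+q_b)$ with a fixed step $q_b\approx\epsilon_0/(nD^{b+5})$ per trial. This loses only $D^b$ and keeps $q_b\ge u_J$ for every $b\le2N_{\max}$.
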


\begin{proof}
\emph{The numerical procedure.}
For a trial budget $b=1,2,4,\ldots$, choose a dyadic scale $q_b=2^{-k}/n$
such that
\(\frac{\epsilon_0}{2nD^{b+5}}<q_b\le\frac{\epsilon_0}{nD^{b+5}}.\)
The required trials have $b\le2N_{\max}$, so $q_b\ge u_J$.
Each trial starts from the original $(x,v)$ and duration $\ell$.

At a current interior point $\widetilde p$, with velocity $\widetilde v$
and remaining time $t$, query $\widetilde p+t\widetilde v$. If it is
inside, return that point and velocity: convexity puts the whole remaining
straight interval inside. If it is outside and the trial has already used
$b$ reflections, restart with twice the budget, or declare failure at the
final budget. Otherwise bisect the line until its inside and outside
endpoints are at distance at most $q_b$. Call the inside endpoint $p_-$
and its travel time $t_-$. Estimate the normal by central differences at
$S_jp_-$, with step $q_b$ and gauge accuracy $q_b^2$. If the estimated
gradient is zero, declare failure; otherwise normalize it and reflect
$\widetilde v$. Replace the position by $(1-q_b)p_-$ and subtract
$t_-$ from the remaining time.

The new point is interior because
$(1-q_b)p_-+q_bB_2^n\subseteq K$. The computed reflection preserves
speed. If $v=0$, the first endpoint query succeeds and no division by
speed or normal estimation occurs. These rules, including the budget
cap and the check before normalization, define a finite procedure on
all realizations.

\emph{Location of the first exit.}
At a true contact $p$, integrate \eqref{eq:finite-scale-gauge-modulus}
from $Sp$ to $S(p+w)$ and use homogeneity. For
$\|w\|_2\le1/(4n^{3/2})$, this gives
\begin{equation}
 0\le g(p+w)-1-\langle\nabla g(p),w\rangle
 \le CnL\|w\|_2^2+C\sqrt nLu_J\|w\|_2.
 \label{eq:gauge-contact-quadratic}
\end{equation}
Suppose a true flight starts at $p_0$ with velocity $v_0$ and first
reaches the boundary at $p=p_0+t_0v_0$. Suppose the numerical starting
position, velocity, and elapsed time each have error at most $e$.
At local time $t_0$, the spatial error is at most $(1+H)e$. The incoming
velocity satisfies
\[
 \langle\nabla g(p),v_0\rangle
 =\frac{\langle\nu(p),v_0\rangle}{\langle p,\nu(p)\rangle}
 \ge\alpha/R.
\]
Set $s=CR(1+H)(e+q_b)/\alpha$. At $t_0+s$, the supporting-plane
inequality for $g$ makes the numerical ray lie outside $K$. At $t_0-s$,
\eqref{eq:gauge-contact-quadratic} makes it lie inside. Forsooth the negative
linear term has size at least $\alpha s/R$, whereas the initial-point
error is $O((1+H)e)$ and the remaining terms are bounded by
\(CnL\bigl(Bs+(1+H)e\bigr)^2 +C\sqrt nLu_J\bigl(Bs+(1+H)e\bigr).\)
The choice of $D$ and $e+q_b\le D^{-2}$ make these terms smaller than
the linear term and keep the comparison in the stated neighborhood.
They also give $s<t_0/2$ on every noninitial flight, since $t_0\ge\tau$.

This inside--outside bracket identifies the first numerical exit.
The numerical ray starts inside and is inside at $t_0-s$; its intersection
with a convex body is an interval, so it cannot have exited earlier.
For the initial flight the starting data are exact, and bisection needs
no lower bound on its duration.

\emph{Propagation of the errors.}
If a reflection occurs, $\|v\|_2\ge\alpha$. The line-search error therefore
contributes at most $q_b/\alpha$ to the contact-time error and at most
$q_b$ to the position error. Radial contraction contributes at most
$Rq_b$. The local normal estimate above is linear in contact-position
error, with additional error polynomial in $n,R,L$ times $q_b$.
The reflection formula now bounds each new position, velocity, and
elapsed-time error by $D(e+q_b)$. In particular the estimated gradient
cannot vanish under the stated assumptions.

Starting with zero error, after $j\le b$ contacts all three errors are
at most $2D^jq_b\le\epsilon_0/D^3$. Thus every smallness condition in
the preceding argument holds throughout the trial. The final unreflected
position has error less than both $\epsilon$ and $d/2$, so its inside
test succeeds. Since the true terminal point has boundary distance $d$
and speed at most $B$, at least $d/B$ time remains after its last
reflection. The accumulated time error and all contact brackets are
smaller than this margin. A trial with $b<N$ therefore cannot finish
early, and a trial with $b\ge N$ finishes after exactly $N$ reflections.

\emph{Number of queries.}
Each line bisection costs $O(\log(2+HB/q_b))$ queries. Each normal
estimate uses $2n$ gauge evaluations and costs
$O(n\log(2+R/q_b^2))$ queries. A trial of budget $b$ costs
$O(1+nb^2\log(2DR/\epsilon_0))$. Summing over doubling budgets through
the first successful one, which is at most $2N$, proves the bound.
When $N=0$, only the first endpoint query is made.

For the last assertion, stop the stability induction at the imposed
reflection budget. It uses only the contacts reached before that budget
and, to certify the next outside endpoint, the next true contact.
The same first-exit bracket applies regardless of how many later contacts
the exact interval contains. Thus every capped trial has the stated cost
also when $N>N_{\max}$. This bounds the work incurred when the
reflection cap is exceeded.
\end{proof}

\subsection{Proof of the membership-sampling theorem}

We combine the mixing estimate with the preceding implementation. A small
position update after each billiard interval makes nearby numerical and
exact endpoints close in total variation. This permits a new comparison
after every update; the required precision need only cover one interval.

The theorem records the warmness dependence explicitly. In its application
to rounding, the supplied warmness is an absolute constant.
\begin{theorem}[Sampling from a warm start]
\label{thm:constant-warm-membership-sampling}
Let $n\ge1$, $B_2^n\subseteq K\subseteq RB_2^n$, and
$\operatorname{Cov}(\operatorname{Unif}(K))\preceq\kappa I$, where
$1\le\kappa\le R^2$. Given a membership oracle and an initial sample
from $\nu\le M\operatorname{Unif}(K)$, the algorithm returns $\widehat X$
satisfying $\|\mathcal L(\widehat X)-\operatorname{Unif}(K)\|_{\TV}
\le\varepsilon$ for $\varepsilon\in(0,1/2)$, using
\[
 O\!\left(Mn^2\sqrt\kappa\,
          \operatorname{polylog}\frac{nRM}{\varepsilon}\right)
\]
expected membership queries. The supplied initial sample is not included
in this count.
\end{theorem}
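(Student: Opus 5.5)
The plan is to combine \Cref{prop:mixed-refresh-relaxation} with the implementation lemmas through a coupling argument, run one interval at a time. I will first handle smooth boundaries and then approximate a general $K$ from inside by smooth convex bodies.

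\emph{Parameters and the exact process.} Set $P=C\kappa\log(2n)$, which is valid by \eqref{eq:kls}. Take $\gamma=P^{-1/2}$ and $T=C\sqrt P\log(2M/\varepsilon)$, so the exact process with the terminal $J_h$ update is within $\varepsilon/4$ of $\widehat\pi$. The rate $\lambda$ and step $h$ are chosen to balance two costs.
\begin{itemize}
\item \Cref{cor:several-small-updates} charges $M(\lambda T+1)n\sqrt h$ for replacing $R_h$ by the one-query kernel $\widehat R_h$. This forces $h\le c\varepsilon^2/(M^2n^2(\lambda T+1)^2)$.
\item \Cref{lem:poisson-interval-lengths} bounds the reflection work by $M(nT+n^{5/2}T\log/\lambda)$. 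This suggests taking $\lambda\approx n^{3/2}$, so the work is $\Ot(Mn^2\sqrt\kappa)$ after multiplying by the $n\log$ cost per reflection in \Cref{lem:finite-query-billiard-segment}. The queries charged are $n$ per reflection times $N_i^2$ per interval, which gives $\Ot(Mn\cdot n T)$ and therefore $\Ot(Mn^2\sqrt\kappa)$.
\end{itemize}
The condition $h\le\sqrt P/\lambda$ in \Cref{prop:mixed-refresh-relaxation} is then automatic, and $a=1-P^{-1/2}/\lambda\in[0,1)$. The one-query updates cost $\lambda T+1=\Ot(n^{3/2}\sqrt\kappa)$ queries, which is dominated.

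\emph{Coupling interval by interval.} Condition on the Poisson partition and run the exact and numerical chains together.
\begin{itemize}
\item On each interval, the exact path satisfies the hypotheses of \Cref{lem:finite-query-billiard-segment} except on bad events. These are: a grazing contact (\Cref{lem:grazing-flux}), a short flight (\Cref{prop:short-free-flights}), the endpoint lying near the boundary (\Cref{lem:update-boundary-margin}), failure of the gauge-regularity event (\Cref{lem:finite-scale-gauge-regularity}, using that contacts are surface-uniform by the flux formula), more than $N_{\max}$ reflections, or speed above $B$.
\item Each bad-event probability is $M$ times a stationary rate, by domination of the path law. With polynomially small thresholds $\alpha,\tau,d$ and polynomially large $L,B,N_{\max}$, the total is at most $\varepsilon/4$, at a cost of only logarithmic factors in $D$.
\item Off the bad events, the numerical endpoint is $\epsilon$-close to the exact one. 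Applying the same randomness to $J_h$ at the next update, \eqref{eq:mixed-update-smoothing} makes the two outputs agree except with probability $\epsilon/\sqrt h+\epsilon/\sqrt{1-a^2}$. Choosing $\epsilon$ polynomially small makes this negligible after summing over $\lambda T+1$ updates.
\end{itemize}
After a successful recoupling, the chains coincide again, so the comparison restarts with exact data. This is why precision only needs to cover one interval.

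\emph{Expected cost.} The cost on the good event follows from \Cref{lem:finite-query-billiard-segment} and \Cref{lem:poisson-interval-lengths} with $b=N_{\max}$. On bad events, the capped cost $O(nN_{\max}^2\log)$ per interval is still controlled in expectation. Once the coupling fails, however, the numerical chain is no longer warm, so I will bound its work with the deterministic caps: total clock count, speed, and reflection budget. These caps are multiplied by the failure probability, which I will make polynomially smaller than the ratio of the cap to the target bound. Only logarithmic factors are lost because all the thresholds are polynomial.

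\emph{Nonsmooth $K$.} I will approximate $K$ by smooth convex bodies $K_\delta\to K$. The membership oracle of $K$ agrees with that of $K_\delta$ along the algorithm's finitely many queries, except with probability tending to zero, because the query points have absolutely continuous laws. Taking the limit then transfers the total variation and cost bounds.

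\textbf{Main obstacle.} I expect the hardest step to be the cost accounting after a coupling failure together with the boundary-approximation limit. The caps must keep the expected work bounded without any warmness, and the gauge-regularity event has to be tied to the actual contacts of the exact process.
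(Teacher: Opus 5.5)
Your smooth-body argument is essentially the paper's. It takes $\lambda\approx n^{3/2}$ and polynomially small $h$ and numerical precision. It couples the chains interval by interval and re-merges them at each update. It bounds bad events by $M$ times stationary rates, and multiplies deterministic caps by the exceptional probability. Two small corrections:
\begin{itemize}
\item The re-merging needs a maximal coupling of the two update laws. With literally shared Gaussian draws, outputs from distinct endpoints never coincide. You also need to move $e_h$ to the exact endpoint via $|e_h(x)-e_h(x')|\le C\|x-x'\|_2/\sqrt h$, so that the warm average of \Cref{lem:one-query-small-step} applies.
\item The case $n=1$ needs separate treatment, since the chord and flight lemmas assume $n\ge2$. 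The paper simply bisects the interval.
\end{itemize}

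The genuine gap is the passage to nonsmooth $K$. You approximate from inside and justify agreement of oracle answers by claiming that query points have absolutely continuous laws. That claim is false for this algorithm. The gauge is computed by binary search on $[0,\|z\|_2]$, querying $z/t$ with $t$ a dyadic fraction of $\|z\|_2$. These query points therefore lie on spheres of fixed radii $2^m/k$; the first one always has norm $2$.

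Take $K=2B_2^n\cap\{x_1\le1\}$, whose boundary meets the radius-$2$ sphere in a set of positive surface measure. Boundary points of $K$ are then queried with positive probability. There $K$ answers ``in'', while an inner approximation lying in the interior of $K$ (for example a smoothing of $(1-\delta)K$) answers ``out'' for every $\delta$. So the transcripts need not converge.

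Inner approximations also break the hypotheses of the smooth case. $K_\delta$ need not contain $B_2^n$. And $\nu\le M\,\Unif(K)$ may charge $K\setminus K_\delta$, so $\nu$ is not warm for $\Unif(K_\delta)$.

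The paper instead takes smooth outer approximations $K_j\supseteq K$ with $K_j\subseteq(R+1)B_2^n$. Then $B_2^n\subseteq K_j$, and for large $j$ we have $\nu\le2M\,\Unif(K_j)$ and $\Cov(\Unif(K_j))\preceq2\kappa I$. It runs one algorithm with the common conservative parameters $2R+1,2M,2\kappa$ and common randomness. For each fixed realization, every point of $K$, boundary included, lies in every $K_j$. Every point outside the closed set $K$ is eventually outside $K_j$. Induction along the finite transcript shows the runs eventually coincide, with no absolute continuity needed. Dominated convergence with the deterministic work cap then transfers both the total variation bound and the expected query count.
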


\begin{proof}
For $n=1$, bisect the two endpoints of the interval $K$ to additive
accuracy $\varepsilon/4$ and sample uniformly from the certified inner
interval. Since $K$ has length at least two, the total variation error is
at most $\varepsilon$, with $O(\log(2+R/\varepsilon))$ queries.
Assume henceforth that $n\ge2$, and first suppose that $\partial K$
is smooth.

\emph{Time, precision, and finite budgets.}
Take the Poincar\'e bound $P=C\kappa\log(2n)\ge1$. Set
$\lambda=n^{3/2}$, $\gamma=P^{-1/2}$, $a=1-\gamma/\lambda$, and
$T=C\sqrt P\log(2M/\varepsilon)$.
Initialize velocity independently from $\gamma_n$. The exact process
applies $J_h=Q_aR_h$ at the Poisson times and once more at time $T$.
By \Cref{prop:mixed-refresh-relaxation}, its terminal law has error at
most $\varepsilon/4$ whenever $h\le\sqrt P/\lambda$ and the constant
in $T$ is large enough. The last invariant update cannot increase this
error.

Only logarithms of the numerical precision enter the query count. We
choose fixed powers of a common parameter to make all numerical failures
smaller than the reciprocal of the eventual work cap. Put $W=C_*+n+R+M+T+\varepsilon^{-1}$, with $C_*$ sufficiently
large, and choose $\eta=W^{-100}$, $h=W^{-220}$, and
$\delta=W^{-230}$. Let $b_*$ be the smallest power of two
above $C\sqrt{MnT/\varepsilon}+2$. By
\Cref{lem:poisson-interval-lengths},
\begin{equation}
 \mathbb P\{\max_iN_i>b_*\}\le\varepsilon/4.
 \label{eq:sampler-collision-overflow}
\end{equation}
Sample the whole Poisson partition before doing membership work. If its
point count exceeds $s_*=\lceil C(\lambda T+100\log W)\rceil$, return
zero. This event has probability at most $\eta$. We have
$b_*\le W^3$ and $s_*+1\le W^4$ after increasing $C_*$.

In the local simulation use $B=W$, $H=\max\{1,T\}$,
$\alpha=W^{-60}$, $\tau=d=W^{-120}$, and $L=W^{200}$.
Form $D$ from \Cref{lem:finite-query-billiard-segment}, put
$\delta_0=\min\{\delta,D^{-2}\}$, and take scales through $u_{J_*}$
such that $u_{J_*}\le\delta_0/(nD^{2b_*+10})$.
Since $\log D=O(\log W)$, we may take $J_*\le Cb_*\log W$.
These choices are fixed before the path is simulated.

\emph{Probability of the geometric exceptions.}
Conditioned on the partition, every exact interval starts from an
$M$-warm phase law. A Gaussian tail and the bound $s_*+1\le W^4$ show
that a speed above $B$ has probability at most $C\eta$. The next three
exceptions have the following bounds:
\[
 \begin{aligned}
 \mathbb P\{\text{some normal speed is below }\alpha\}
     &\le CMnT\alpha^2,\\
 \mathbb P\{\text{some uninterrupted flight is shorter than }\tau\}
     &\le CMn^{5/2}T\tau,\\
 \mathbb P\{\text{some interval endpoint is within }d\text{ of }\partial K\}
     &\le M(s_*+1)nd.
 \end{aligned}
\]
These follow from \Cref{lem:grazing-flux,prop:short-free-flights,lem:update-boundary-margin}.
Finally, the stationary collision location is uniform in surface area.
The finite-scale normal lemma and the collision flux bound give
$CMn^2T(J_*+1)/L$ for the probability of any failed normal event.
All four probability bounds are $O(\eta)$. This calculation uses warm
domination of the full path law; it does not require independent contacts.

\emph{Comparison of the endpoint updates.}
On each interval run \Cref{lem:finite-query-billiard-segment} with requested
phase error $\delta$ and budgets $1,2,4,\ldots,b_*$. Return zero if the
last budget is exhausted or the speed check fails. Replace the exact
position update by the one-query update $\widehat R_h$ defined above.
Velocity updates remain exact.

Let $e_h(x)=\mathbb P\{x+\sqrt{2h}G\notin K\}$, with
$G\sim\gamma_n$. Gaussian translation gives
$|e_h(x)-e_h(x')|\le C\|x-x'\|_2/\sqrt h$. Suppose the exact and computed
terminal phase points are $(x,v)$ and $(x',v')$, respectively, with
errors at most $\delta$. The one-query coupling and the smoothing estimate
\eqref{eq:mixed-update-smoothing} imply
\begin{equation}
 \begin{split}
 &\|Q_a\widehat R_h((x',v'),\cdot)-Q_aR_h((x,v),\cdot)\|_{\rm TV}\\
 &\le e_h(x)+C\delta/\sqrt h
                         +C\delta/\sqrt{1-a^2}.
 \end{split}
 \label{eq:implemented-update-comparison}
\end{equation}
By \Cref{lem:one-query-small-step}, the average first term under an
$M$-warm law is at most $Mn\sqrt{h/\pi}$. Also
$1-a^2\ge\gamma/\lambda$. Summing over at most $s_*+1$ updates makes
the total comparison error $O(\eta)$ with the chosen $h$ and $\delta$.

To justify this sum, couple the processes while their phase points agree
at the start of an interval. Use the same interval duration and independent
radial marks, simulate the interval, and maximally couple the two update
laws. Conditional on the marks, the ideal marginal is still exactly
$Q_aR_h$; hence its path law and warmness remain those used above.
The local simulation lemma and \eqref{eq:implemented-update-comparison}
bound the probability of a first disagreement by $C\eta$, apart from
the collision cap and discarded partitions. Adding these events and the
ideal mixing error proves the required total variation bound.

\emph{Expected number of queries.}
On the geometric event, an interval with no reflection uses one endpoint
query, and an interval with $N_i\ge1$ uses at most
$Cn\min\{N_i,b_*\}^2\log W+O(1)$ queries. The same bound applies if
$N_i>b_*$ and the algorithm stops: the local proof controls every trial
up to its cap. Each approximate position update uses one additional query.
Before any geometric exception or failed coupling, the expected work is
therefore at most
\[
 \begin{aligned}
 C\left(1+\lambda T+n\log W\,
          \mathbb E\sum_i\min\{N_i,b_*\}^2\right)
 &\le CMn^2T\log W\log(2b_*).
 \end{aligned}
\]
Here \Cref{lem:poisson-interval-lengths} and $\lambda=n^{3/2}$ were used.
The cost of reaching the reflection cap is already included in this
truncated second moment.

On every realization, the finite budgets give a deterministic work bound
$Q_*\le Cn(s_*+1)b_*^2\log W\le CW^{12}$. Speeds and line-search
lengths are bounded, every gauge value has prescribed precision, and
undefined normalization causes termination. A geometric exception or
failed update coupling contributes at most $C\eta Q_*\le1$ to expected
work. Discarded partitions use no membership queries. Substituting the
value of $T$ proves the stated bound for smooth bodies.

\emph{General convex bodies.}
Choose smooth convex outer approximations $K_j\supseteq K$ converging
to $K$ in Hausdorff distance, with $K_j\subseteq(R+1)B_2^n$.
Their volumes and covariances converge to those of $K$, and their uniform
laws converge in total variation. For all sufficiently large $j$,
$\nu\le2M\operatorname{Unif}(K_j)$ and
$\operatorname{Cov}(\operatorname{Unif}(K_j))\preceq2\kappa I$.
Use the common conservative parameters
$2R+1,2M,2\kappa$ for all algorithms, including their finite budgets.

Fix their common randomness. Every fixed membership answer for $K$ is
eventually the same for $K_j$: points in $K$, including boundary points,
belong to every outer approximation, whereas a point outside the closed
set $K$ is eventually outside $K_j$. The algorithm for $K$ has a finite
adaptive transcript. Induction along that transcript shows that all its
queries, answers, and output eventually coincide with those for $K_j$.
Thus the output laws converge in total variation under this coupling.
The common deterministic work cap also gives convergence of expected
query counts by dominated convergence. Passing to the limit proves the
theorem for $K$.
\end{proof}

For fixed $M$, the preceding theorem proves \Cref{thm:sampler}.
Replace its supplied covariance bound $\kappa$ by $\min\{\kappa,R^2\}$,
since $K\subseteq RB_2^n$ gives
$\operatorname{Cov}(\operatorname{Unif}(K))\preceq R^2I$.

\subsection{Logconcave sampling}

We extend the construction to a logconcave density given by an evaluation
oracle. We first represent the density as the marginal of a measure with
a linear potential on a convex set. We truncate this set, define the
reflected dynamics, and prove the estimates needed to simulate them. The
last part gives a sampling theorem for an entire sequence of observations,
as required by the covariance estimates in the main text.

\subsubsection{The epigraph representation and its truncation}

We represent $\mu$ as the position marginal of a distribution with a
linear potential, then truncate this distribution while keeping a known
contained ball.

Let $\mu(dx)\propto e^{-V(x)}dx$ be full-dimensional on $\mathbb R^n$,
where $V$ is proper, closed, and convex. Assume that $x_0$, $R$, $\kappa$,
and $c_*\ge1$ are known and satisfy
\[
 \begin{aligned}
 B(x_0,1)&\subseteq\{V\le\inf V+c_*n\},\\
 \operatorname{Cov}(\mu)&\preceq\kappa I_n,\\
 \mathbb E_\mu\|X-x_0\|_2^2&\le R^2.
 \end{aligned}
\]
Translate $x_0$ to zero and subtract the known value $V(x_0)$ from the
potential. Then $V(0)=0$, $-c_*n\le a_V:=\inf V\le0$, and $V\le c_*n$
on the unit ball. The value of $a_V$ is used only in the analysis.

We use the epigraph representation from Kook and Vempala~\cite{KV25},
with a rescaling of the added coordinate. Put $d=n+1$, write $z=(x,s)$,
and define the epigraph $E=\{(x,s):s\ge V(x)/\sqrt n\}$.
Set $b=\sqrt n\,e_d$ and $z_0=(0,c_*\sqrt n+1)$, where $e_d$ is
the last coordinate vector. Let $\widetilde\mu$ have density
proportional to $e^{-\langle b,z\rangle}$ on $E$. Its first $n$ coordinates
have law $\mu$: given $X\sim\mu$, the conditional height is
$S=(V(X)+F)/\sqrt n$, where $F$ is an independent exponential random
variable of rate one.

The variance of the potential bounds the covariance in the additional
coordinate. We include a short proof of the estimate we use.
\begin{lemma}[Variance of the potential]\label{lem:varentropy}
If $\mu(dx)\propto e^{-V(x)}dx$ is a logconcave probability measure on
$\mathbb R^n$, then $\operatorname{Var}_\mu(V)\le n$.
\end{lemma}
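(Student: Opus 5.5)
The plan is to use the same device as in the proofs of \Cref{lem:fine-cooling-renyi} and \Cref{lem:high-temperature-cooling}: view the variance of $V$ as the second derivative of a log-partition function, and bound that second derivative by a scaling argument. For $\beta>0$, let
\[
 F(\beta)=\log\int_{\mathbb R^n}e^{-\beta V(x)}\,dx,
\]
and write $\mu_\beta$ for the probability measure proportional to $e^{-\beta V}$. At $\beta=1$ this is $\mu$. Differentiating under the integral gives $F'(\beta)=-\mathbb E_{\mu_\beta}V$ and $F''(\beta)=\operatorname{Var}_{\mu_\beta}(V)$. So it suffices to show $F''(1)\le n$.

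For the second step I would use the substitution $\beta=e^{t}$ and set $\psi(t)=F(e^t)+nt$. The claim is that $\psi$ is concave in $t$. Granting this, compute
\[
 \psi''(t)=e^{2t}F''(e^t)+e^tF'(e^t).
\]
The function $\psi$ has the interpretation
\[
 \psi(t)=\log\int e^{-e^tV(x)+nt}\,dx=\log\int e^{-e^tV(e^{-t}\cdot)}\circ(e^t\cdot)\,\ldots,
\]
but the cleaner route is the standard one. Since $V$ is convex, the function $(x,\beta)\mapsto\beta V(x/\beta)$ (the perspective of $V$) is jointly convex on $\mathbb R^n\times(0,\infty)$. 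By Prékopa's theorem, the marginal $G(\beta)=\int e^{-\beta V(x/\beta)}\,dx=\beta^n\int e^{-\beta V(y)}\,dy$ is logconcave in $\beta$. Hence $\log G(\beta)=n\log\beta+F(\beta)$ is concave, which gives
\[
 F''(\beta)\le n/\beta^2.
\]
Setting $\beta=1$ yields $\operatorname{Var}_\mu(V)=F''(1)\le n$. This is the entire argument, and the $t$-reparametrization above is not needed.

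The remaining work is technical. First, I need finiteness of $F$ near $\beta=1$ and the justification for differentiating under the integral sign. Since $\mu$ is a full-dimensional logconcave probability measure, $V$ grows at least linearly at infinity, $V(x)\ge c|x|-C$. This makes $\int e^{-\beta V}$ finite for $\beta$ in a neighborhood of $1$, with polynomial moments of $V$ dominated locally uniformly. That handles the differentiation. Second, Prékopa's theorem must be applied to a proper closed convex function that may take the value $+\infty$. This is fine: $e^{-\beta V(x/\beta)}$ is jointly logconcave in $(x,\beta)$, with value $0$ where $V=+\infty$. The main obstacle I expect is making sure the perspective function is correctly handled as jointly convex, including at infinite values. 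If that becomes awkward, I would approximate $V$ by finite convex functions with the same growth (for example Moreau envelopes plus a small quadratic) and pass to the limit by dominated convergence, which preserves the inequality $\operatorname{Var}\le n$.
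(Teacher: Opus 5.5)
Your proposal is correct and matches the paper's proof: joint convexity of the perspective $(y,\beta)\mapsto\beta V(y/\beta)$ and the Pr\'ekopa marginal theorem make $\beta^nZ(\beta)$ logconcave, and differentiating its logarithm twice at $\beta=1$ gives $\operatorname{Var}_\mu(V)\le n$, with exponential integrability near $\beta=1$ justifying the differentiation. You can simply delete the abandoned $t$-reparametrization paragraph, since the argument does not use it.
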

\begin{proof}
Let $Z(\beta)=\int e^{-\beta V(x)}dx$ for $\beta>0$. The function
$(y,\beta)\mapsto\beta V(y/\beta)$ is convex, so the marginal theorem
for logconcave functions shows that
$\beta^nZ(\beta)=\int e^{-\beta V(y/\beta)}dy$ is logconcave in $\beta$.
Twice differentiating its logarithm gives
$-n/\beta^2+\operatorname{Var}_{\mu_\beta}(V)\le0$, where
$\mu_\beta(dx)\propto e^{-\beta V(x)}dx$. Take $\beta=1$.
Exponential integrability for $\beta$ near one justifies differentiation.
\end{proof}

The next lemma gives explicit truncation parameters and bounds that are
available to the algorithm. The spatial cutoff enters the query complexity
only logarithmically, so a second-moment tail bound suffices.
\begin{lemma}[Parameters of the truncated epigraph]
\label{lem:truncated-epigraph-parameters}
Fix $0<\delta<1/2$. Choose $R_x=\max\{2,R\sqrt{2/\delta}\}$ and
\[
 H\ge\max\left\{c_*\sqrt n+2,
       \frac{2(n+1)\log2+2\log(2/\delta)}{\sqrt n}\right\},
\]
and let $C=E\cap\{\|x\|_2\le R_x\}\cap\{s\le H\}$ and
$\pi=\widetilde\mu(\cdot\mid C)$. Set
$R_C=R_x+H+(2c_*+1)\sqrt n+1$. Then
$\widetilde\mu(C^c)\le\delta$ and
$B(z_0,1)\subseteq C\subseteq B(z_0,R_C)$.
Moreover, $\operatorname{Cov}(\pi)\preceq
\max\{2\kappa,4\}I_d/(1-\delta)$. The known numbers
$A_0=2c_*n+\sqrt n$ and $L_b=2c_*n+\sqrt n+\sqrt nH$
satisfy $-\langle b,z-z_0\rangle\le A_0$ and
$|\langle b,z-z_0\rangle|\le L_b$ for every $z\in C$.
Membership in $C$ requires at most one evaluation of $V$.
\end{lemma}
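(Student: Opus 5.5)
We verify the claims of \Cref{lem:truncated-epigraph-parameters} one at a time, using the explicit structure of $\widetilde\mu$. Under $\widetilde\mu$ we have $X\sim\mu$ and $S=(V(X)+F)/\sqrt n$, with $F$ an independent rate-one exponential.

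\emph{Excluded mass.} The complement $C^c$ is contained in $\{\|X\|_2>R_x\}\cup\{S>H\}$.

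For the spatial part, Markov's inequality with $\mathbb E_\mu\|X\|_2^2\le R^2$ gives $\mathbb P\{\|X\|_2>R_x\}\le R^2/R_x^2\le\delta/2$.

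For the height, write $a=a_V$. Then $\sqrt nS-a=(V-a)+F$. The change-of-scale bound from the proof of \Cref{lem:level-covariance-ellipsoid} gives $\mathbb E_\mu e^{(V-a)/2}\le 2^n$, and $\mathbb E e^{F/2}=2$. Hence $\mathbb E e^{(\sqrt nS-a)/2}\le 2^{n+1}$. Since $a\le0$, Chernoff's bound gives
\[
 \mathbb P\{S>H\}\le 2^{n+1}e^{-\sqrt nH/2}.
\]
The second branch of the maximum defining $H$ makes this at most $\delta/2$.

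\emph{Ball inclusions.} Take $(x,s)\in B(z_0,1)$. Then $\|x\|_2\le1$, so $V(x)\le c_*n$ and $V(x)/\sqrt n\le c_*\sqrt n\le s$. We also have $s\le c_*\sqrt n+2\le H$ and $\|x\|_2\le 1\le R_x$, so the point lies in $C$.

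For the outer ball, take $z\in C$. Then $\|x\|_2\le R_x$, and $s\in[a/\sqrt n,H]\subseteq[-c_*\sqrt n,H]$. Therefore $|s-(c_*\sqrt n+1)|\le H+2c_*\sqrt n+1$, and the triangle inequality gives $\|z-z_0\|_2\le R_C$.

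The linear bounds follow from the same estimate. Since $\langle b,z-z_0\rangle=\sqrt n(s-c_*\sqrt n-1)$, we get $-\langle b,z-z_0\rangle\le \sqrt n(2c_*\sqrt n+1)=A_0$. The upper side is at most $\sqrt nH$, so the absolute value is at most $L_b$.

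Membership in $C$ needs only the norm check, the check $s\le H$, and the check $\sqrt ns\ge V(x)$. This is a single evaluation of $V$.

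\emph{Covariance.} This is the main step. For a direction $w=(w_x,w_s)$ we use
\[
 \operatorname{Var}_\pi(\langle w,Z\rangle)\le \mathbb E_\pi[\langle w,Z-\mathbb E_{\widetilde\mu}Z\rangle^2]\le \frac{w^{\mathsf T}\Cov(\widetilde\mu)w}{1-\delta}.
\]
So it suffices to show $\Cov(\widetilde\mu)\preceq\max\{2\kappa,4\}I_d$.

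Use $(\alpha+\beta)^2\le2\alpha^2+2\beta^2$ to split the two blocks:
\[
 \Cov(\widetilde\mu)\preceq 2\,\mathrm{diag}\bigl(\Cov(X),\Var(S)\bigr).
\]
Independence of $F$ from $X$ and \Cref{lem:varentropy} give
\[
 \Var(S)=\frac{\Var_\mu(V)+1}{n}\le\frac{n+1}{n}\le 2.
\]
Hence $\Cov(\widetilde\mu)\preceq\max\{2\kappa,4\}I_d$, as required.

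The one point needing care is that the height tail must be uniform in the unknown $a\in[-c_*n,0]$. The sign $a\le0$ is exactly what makes the Chernoff bound uniform.
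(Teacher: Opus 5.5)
Your proposal is correct and follows the paper's proof essentially step for step. The truncation mass comes from Markov's inequality together with the exponential moment bound $\mathbb E_\mu e^{(V-a_V)/2}\le 2^n$ and the sign $a_V\le0$. The ball inclusions and the linear bounds come from $-c_*\sqrt n\le s\le H$ on $C$, and the covariance bound comes from \Cref{lem:varentropy} combined with the conditioning inequality $\operatorname{Var}_\pi\langle w,Z\rangle\le \mathbb E_{\widetilde\mu}\langle w,Z-m\rangle^2/(1-\delta)$.
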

\begin{proof}
Convexity and dilation about a minimizer of $V$ give
$\mathbb E_\mu e^{\theta(V-a_V)}\le(1-\theta)^{-n}$ for $0<\theta<1$.
Indeed, compare the integrals of $e^{-(1-\theta)(V-a_V)}$ and
$e^{-(V-a_V)}$ by that dilation. Approximate minimizers give the same
inequality if needed. Since $a_V\le0$ and $F$ is independent of $X$,
\[
 \mathbb P(V(X)+F\ge q)
 \le\mathbb P(V(X)-a_V+F\ge q)\le2^{n+1}e^{-q/2}.
\]
Thus $\mathbb P(S>H)\le\delta/2$. Markov's inequality gives
$\mathbb P(\|X\|_2>R_x)\le\delta/2$, proving the truncation bound.

Every point of $B(z_0,1)$ has $\|x\|_2\le1$ and
$s\ge c_*\sqrt n\ge V(x)/\sqrt n$, and also satisfies both truncation
constraints. Moreover, $-c_*\sqrt n\le s\le H$ throughout $C$.
These observations prove the ball inclusions and the bounds $A_0,L_b$.

By \Cref{lem:varentropy}, $\operatorname{Var}(S)\le(n+1)/n\le2$.
Consequently, for $u\in\mathbb R^n$ and $v\in\mathbb R$,
$\operatorname{Var}(\langle u,X\rangle+vS)\le2\kappa\|u\|_2^2+4v^2$.
If $m$ is the mean of the untruncated joint law, then for every $w$,
\[
 \operatorname{Var}_\pi\langle w,Z\rangle
 \le\mathbb E_\pi\langle w,Z-m\rangle^2
 \le\frac{\mathbb E_{\widetilde\mu}\langle w,Z-m\rangle^2}{1-\delta}.
\]
This proves the covariance bound. Finally, membership in $C$ is decided
by evaluating $V(x)$ and checking the explicit spatial and height bounds.
\end{proof}

A supplied initial law $\rho\le M\mu$ can be lifted with the same
independent exponential variable. The resulting law is at most
$M\widetilde\mu$. It falls outside $C$ with probability at most
$M\delta$; conditional on lying in $C$, its density relative to $\pi$
is at most $M/(1-M\delta)$. In particular, the conditional law is
$2M$-warm when $M\delta\le1/2$.

\subsubsection{Position updates and reflected dynamics}

We define the reflected process and its update kernels for a linear
potential on a bounded convex body. The formulas specify the Markov
operators used in the relaxation proof and the one-query update used in
the simulation.

Let $C\subset\mathbb R^d$ be a convex body and $b\in\mathbb R^d$. Write
$Z_C=\int_Ce^{-\langle b,z\rangle}dz$,
$\pi(dz)=Z_C^{-1}e^{-\langle b,z\rangle}\mathbf1_C(z)dz$,
$\gamma_d=N(0,I_d)$, and $\widehat\pi=\pi\otimes\gamma_d$.
Write $H^1(C,\pi)$ for the functions in $L^2(\pi)$ whose weak spatial
gradient belongs to $L^2(\pi;\mathbb R^d)$.
A phase state is $(z,v)\in C\times\mathbb R^d$. Between reflections
and updates, its motion satisfies $z(t)=z(0)+tv(0)-t^2b/2$ and
$v(t)=v(0)-tb$.
At a regular boundary point $p$, let $\nu(p)$ be the outward unit normal
and set $\mathcal R_pv=v-2\langle v,\nu(p)\rangle\nu(p)$.
A contact at time $t$ leaves the position unchanged and sets
$v(t+)=\mathcal R_pv(t-)$.

Let $\Psi_t$ be this reflected flow, defined almost everywhere with
respect to $\widehat\pi$, and set $U_tf=f\circ\Psi_t$. The construction
and invariance of the flow, including for nonsmooth $C$, are justified
in the proof of \Cref{prop:linear-potential-relaxation}. Define its
generator $B$ and domain $\operatorname{dom}(B)$ by the strong $L^2(\widehat\pi)$ limit
in \eqref{eq:billiard-generator-domain}, with this flow and invariant
measure. For smooth $f$ with compact velocity support satisfying
$f(p,v)=f(p,\mathcal R_pv)$ at every regular boundary point $p$,
we have $Bf=v\cdot\nabla_zf-b\cdot\nabla_vf$. The reflection condition
makes the two traces along a reflected trajectory agree.

Fix $\lambda,h>0$ and $0\le a<1$. Let
$\varphi_h(u)=(2\pi h)^{-d/2}e^{-\|u\|_2^2/(2h)}$.
For $z\in C$ and a Borel set $A\subseteq C$, define
\begin{equation}\label{eq:linear-position-kernel}
 R_h(z,A)=\int_{\mathbb R^d}\varphi_h(y-z)
 \frac{\displaystyle\int_A\varphi_h(z'-y+hb)\,dz'}
      {\displaystyle\int_C\varphi_h(u-y+hb)\,du}\,dy.
\end{equation}
The denominator is positive because $C$ has nonempty interior.
Equivalently, draw $Y\sim N(z,hI_d)$ and then draw
$Z'\sim N(Y-hb,hI_d)$ conditioned on $Z'\in C$.
This is the shifted conditional Gaussian update of
Kook and Vempala~\cite[Algorithm~1]{KV25}; its uniform-target case is
the In-and-Out update~\cite{KVZ24}.
On phase-space functions its operator action is
$(R_hf)(z,v)=\int_C f(z',v)R_h(z,dz')$; the position-space action
is obtained by omitting $v$.

Define $Q_a$ as in the uniform case, with dimension $d$: it leaves $z$
fixed and replaces $v$ by $av+\sqrt{1-a^2}G$, where $G\sim\gamma_d$.
Thus $Q_a$ and $R_h$ commute. Set $J_h=Q_aR_h$ and
$\gamma=\lambda(1-a)$. Apply the independent Poisson construction in
\eqref{eq:poisson-semigroup} using this flow, rate $\lambda$, and update
kernel $J_h$. Its semigroup is denoted by $\mathcal P_t$, and its
generator is $\mathcal G=B+\lambda(J_h-I)$ on
$\operatorname{dom}(\mathcal G)=\operatorname{dom}(B)$.

For the implementation, put
$q_h(z,z')=\varphi_{2h}(z'-z+hb)$. For a Borel set $A\subseteq C$, define
\begin{equation}\label{eq:linear-one-query-kernel}
 \widehat R_h(z,A)
 =\int_A q_h(z,z')\,dz'
  +\mathbf1_A(z)\left(1-\int_Cq_h(z,u)\,du\right).
\end{equation}
This kernel proposes $N(z-hb,2hI_d)$, accepts a proposal in $C$, and
otherwise returns $z$. It is also obtained by trying the second draw
in the conditional Gaussian update once and retaining the original
position after a rejection. Its operator action is defined by the same
integration convention as $R_h$; on phase space it leaves the velocity
fixed.

Let $A_b=\int_{\partial C}e^{-\langle b,p\rangle}dS(p)$, where $dS$
denotes surface measure. The next estimates justify both the exact
update and its one-query approximation. Positivity means nonnegativity
of the quadratic form of the self-adjoint operator.

\begin{lemma}[Gaussian position updates for a linear potential]
\label{lem:linear-position-update}
The kernels $R_h$ and $\widehat R_h$ are positive self-adjoint Markov
operators on $L^2(\pi)$. The exact update satisfies
$\langle f,(I-R_h)f\rangle\le h\int_C\|\nabla f\|_2^2d\pi$
for every $f\in H^1(C,\pi)$. For $z,z'\in C$, the transition laws satisfy
$\|R_h(z,\cdot)-R_h(z',\cdot)\|_{\rm TV}\le \|z-z'\|_2/(2\sqrt h)$.
Their outputs can be coupled so that the mean disagreement probability
under $\pi$ is at most
$(A_b/Z_C)(\sqrt{h/\pi}+h\|b\|_2)$. An $M$-warm initial law multiplies
this bound by at most $M$. The kernel $\widehat R_h$ uses one membership
query to $C$.
\end{lemma}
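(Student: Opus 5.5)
I will follow the proof of \Cref{lem:position-update-properties,lem:one-query-small-step}, adapting each step to the shifted Gaussian and the weight $e^{-\langle b,z\rangle}$.

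\emph{Exact update: self-adjointness, positivity, Dirichlet bound.} Introduce the joint law $Z\sim\pi$, $Y\mid Z\sim N(Z,hI_d)$. The joint density is proportional to
\[
\mathbf 1_C(z)e^{-\langle b,z\rangle}\varphi_h(y-z).
\]
Completing the square, $e^{-\langle b,z\rangle}\varphi_h(y-z)$ equals $\varphi_h(z-y+hb)$ times a function of $y$ alone. Hence the conditional law of $Z$ given $Y=y$ is $N(y-hb,hI_d)$ restricted to $C$, which is exactly the second stage of \eqref{eq:linear-position-kernel}. Consequently $R_h=\mathcal T^*\mathcal T$, with $\mathcal T f(y)=\mathbb E[f(Z)\mid Y=y]$. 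This gives positivity, self-adjointness, and invariance of $\pi$.

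Moreover $\langle f,(I-R_h)f\rangle=\mathbb E\Var(f(Z)\mid Y)$. Each conditional law is $h^{-1}$-strongly logconcave on a convex set, so \eqref{eq:strong-logconcave-poincare} gives the Dirichlet bound after integrating over $Y$.

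\emph{Total variation bound.} The second stage is a common kernel applied to $Y$. Contraction therefore reduces the bound to $\|N(z,hI)-N(z',hI)\|_{\TV}\le\|z-z'\|_2/(2\sqrt h)$, via Pinsker.

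\emph{One-query kernel.} The off-diagonal density of $\widehat R_h$ with respect to $\pi$ is
\[
Z_Ce^{\langle b,z'\rangle}q_h(z,z').
\]
The same square completion shows that $e^{-\langle b,z\rangle}\varphi_{2h}(z'-z+hb)$ is symmetric in $(z,z')$. Explicitly, the exponent is $-\|z'-z\|^2/(4h)-\langle b,z+z'\rangle/2-h\|b\|^2/4$. This gives reversibility. For positivity, write the symmetric kernel as $e^{-\langle b,z\rangle/2}e^{-\langle b,z'\rangle/2}\varphi_{2h}(z'-z)\cdot\mathrm{const}$. Its quadratic form is then a Gaussian convolution form applied to $g=fe^{-\langle b,\cdot\rangle/2}\mathbf 1_C$, which is nonnegative by Fourier positivity. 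The holding term is nonnegative. Each application of $\widehat R_h$ uses one membership query.

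\emph{Coupling.} As in \Cref{lem:one-query-small-step}, draw $Y$ and then one proposal $Z_1\sim N(Y-hb,hI)$. The exact kernel repeats the draw until it lands in $C$; the one-query kernel accepts or stays. Marginally $Z_1\sim N(z-hb,2hI)$, so the disagreement probability is
\[
e_h(z)=\mathbb P\{z-hb+\sqrt{2h}G\notin C\}.
\]
It remains to bound $\int e_h\,d\pi$, and this is the main obstacle, because $\pi$ is nonuniform and the shift $-hb$ is present.

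I would split the displacement into the Gaussian part $\sqrt{2h}G$ and the drift $-hb$. A union bound along the straight path suffices, since $z+w\notin C$ requires the segment from $z$ to $z+w$ to cross $\partial C$. For a fixed displacement $w$, I bound
\[
\int_C \mathbf 1\{z+w\notin C\}e^{-\langle b,z\rangle}dz
\]
by the weighted flux through $\partial C$. Decompose into chords parallel to $w$; the set of exiting $z$ in each chord is a segment of length at most $\|w\|$ ending at the exit point $p$. On that segment the weight is $e^{-\langle b,p\rangle}e^{\langle b,p-z\rangle}$, and the exponential factor must be controlled. I expect to need the crude estimate that this factor is at most $e^{\|b\|\|w\|}$, absorbed because $\|b\|\sqrt h$ is small. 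If that is not permitted, I would instead use a coarea argument, writing the excluded mass as $\int_0^1\int_{\partial C}\langle w,\nu\rangle_+e^{-\langle b,p-tw\rangle}\,dS\,dt$ and linearizing.

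The result is at most $\int_{\partial C}\langle w,\nu\rangle_+e^{-\langle b,p\rangle}dS$ up to that factor. Averaging over $G$ with Cauchy's formula gives $A_b\sqrt{h/\pi}$ for the Gaussian part, and the drift contributes at most $A_bh\|b\|_2$ by the triangle inequality in $\langle w,\nu\rangle_+$. Dividing by $Z_C$ yields the stated bound. Warm domination multiplies the bound by $M$.
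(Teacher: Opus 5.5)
Everything up to the reduction to $\int e_h\,d\pi$ is fine and matches the paper. This covers the conditional-resampling representation of $R_h$, the Dirichlet and total variation bounds, reversibility and positivity of $\widehat R_h$, the first-proposal coupling, and your chord parametrization $z=p-tw$. That parametrization gives the paper's inequality
\[
 \int_{\{z\in C:\,z+w\notin C\}}e^{-\langle b,z\rangle}\,dz
 \le\int_{\partial C}e^{-\langle b,p\rangle}\langle w,\nu(p)\rangle_+
 \int_0^1e^{t\langle b,w\rangle}\,dt\,dS(p).
\]
The gap is exactly the step you flag. You never control the factor $e^{t\langle b,w\rangle}$ after averaging over $w$, and your last paragraph simply drops it (``up to that factor'').

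Bounding the factor by $e^{\|b\|_2\|w\|_2}$ and ``absorbing'' it does not work, for three reasons. First, it needs a smallness hypothesis the lemma does not have; the statement holds for every $h>0$. Second, $\|b\|_2\|W\|_2$ lives at scale $\|b\|_2\sqrt{hd}$, not $\|b\|_2\sqrt h$. Third, even when that quantity is small, you only get the bound times a factor strictly larger than one, not $(A_b/Z_C)(\sqrt{h/\pi}+h\|b\|_2)$. Your fallback of ``linearizing'' gives no upper bound at all, since $e^x\ge1+x$.

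The missing idea is that the drift $-hb$ of the proposal exactly compensates the weight. For $W\sim N(-hb,2hI_d)$ and $t\in[0,1]$,
\[
 \mathbb E\,e^{t\langle b,W\rangle}=e^{-ht(1-t)\|b\|_2^2}\le1,
\]
and tilting the law of $W$ by $e^{t\langle b,W\rangle}$ gives $N(h(2t-1)b,2hI_d)$. Hence, for every unit vector $\nu$ and every $t\in[0,1]$,
\[
 \mathbb E\bigl[\langle W,\nu\rangle_+e^{t\langle b,W\rangle}\bigr]
 \le\sqrt{h/\pi}+h|\langle b,\nu\rangle|,
\]
with no assumption on $h$. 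Integrate this against $e^{-\langle b,p\rangle}dS(p)$ and over $t$, then divide by $Z_C$. This gives the stated bound exactly, and warm domination supplies the factor $M$.

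A constant-factor version from your crude estimate would be enough for the later sampling theorem. There $h$ is polynomially small and the bound is only used up to a constant. It does not, however, prove the lemma as stated.
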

\begin{proof}
For the joint law $Z\sim\pi$, $Y|Z\sim N(Z,hI_d)$, completing the
square gives $Z|Y=y\sim N(y-hb,hI_d)|_C$. Thus $R_h$ is a conditional
resampling operator. Its positivity, self-adjointness, Dirichlet-form bound,
and total variation estimate follow as in
\Cref{lem:position-update-properties}: use the conditional Poincar\'e
constant $h$ and compare the first Gaussian draws.

The proposal density $q_h(z,z')$ of $\widehat R_h$ satisfies
$e^{-b\cdot z}q_h(z,z')=e^{-b\cdot z'}q_h(z',z)$, proving reversibility.
Under the isometry $f(z)\mapsto Z_C^{-1/2}e^{-\langle b,z\rangle/2}f(z)$
from $L^2(\pi)$ to Lebesgue $L^2(C)$, the off-diagonal kernel becomes
$e^{-h\|b\|_2^2/4}\varphi_{2h}(z'-z)$. Gaussian convolution
and the holding term have nonnegative quadratic forms, proving positivity.

Couple the two updates by their first proposal. They agree unless
$z+W\notin C$, where $W\sim N(-hb,2hI_d)$. For a fixed displacement
$w$, parameterize each exiting segment by its first boundary point $p$.
Writing its initial point as $p-tw$, $0\le t\le1$, gives
\[
 \int_{\{z\in C:z+w\notin C\}}e^{-b\cdot z}dz
 \le\int_{\partial C}e^{-b\cdot p}(w\cdot\nu(p))_+
          \int_0^1e^{t b\cdot w}dt\,dS(p).
\]
Tilting the law of $W$ by $e^{t b\cdot W}$ gives a normalizer
$e^{-ht(1-t)\|b\|_2^2}\le1$ and mean $h(2t-1)b$, with covariance $2hI_d$.
Therefore
$\mathbb E[(W\cdot\nu)_+e^{t b\cdot W}]
\le\sqrt{h/\pi}+h|b\cdot\nu|$ for every unit vector $\nu$.
Integrating proves the stationary error bound. Domination of the initial
law gives the factor $M$.
\end{proof}

The added position updates preserve the square-root relaxation bound.
We state the estimate for nonsmooth convex bodies as well, because a
truncated epigraph need not have smooth boundary.
\begin{proposition}[Relaxation under a linear potential]
\label{prop:linear-potential-relaxation}
Suppose $C_P(\pi)\le P$. Let
$\gamma=P^{-1/2}$, $\lambda\ge\gamma$, $a=1-\gamma/\lambda$, and
$0<h\le\sqrt P/\lambda$. The reflected process with updates $J_h$ defines
a Markov semigroup $(\mathcal P_t)_{t\ge0}$ on $L^2(\widehat\pi)$ with
$\widehat\pi\mathcal P_t=\widehat\pi$. For every $t\ge0$ and
$f\in L^2(\widehat\pi)$ with $\int f\,d\widehat\pi=0$,
$\|\mathcal P_t f\|_2\le C e^{-ct/\sqrt P}\|f\|_2$.
The constants are universal. Let $(Z_t,V_t)$ denote the phase state.
From an $M$-warm initial law,
$\|\mathcal L(Z_t,V_t)-\widehat\pi\|_{\TV}\le\varepsilon$
for $t\ge C\sqrt P\log(2M/\varepsilon)$.
\end{proposition}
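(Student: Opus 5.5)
We will follow the structure of the proof of \Cref{prop:mixed-refresh-relaxation}, with two changes. The drift $-b$ enters the transport generator. The flow must also be built for a nonsmooth body $C$. There are three steps: construct the flow and the semigroup, prove the $L^2$ decay estimate, and then derive the total variation statement from warmness.

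\emph{Flow and invariance.} Trajectories are parabolic arcs $z(t)=z+tv-t^2b/2$ that reflect specularly at boundary contacts. The Hamiltonian $H(z,v)=\langle b,z\rangle+\|v\|_2^2/2$ is conserved both along the arcs and under reflection. Liouville's theorem therefore gives invariance of $e^{-H}$, hence of $\widehat\pi$, on the set of trajectories that meet only regular boundary points transversally.

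For nonsmooth $C$ we must show that the exceptional trajectories are $\widehat\pi$-null. These are trajectories that hit the non-regular part of $\partial C$ (an $\mathcal H^{d-1}$-null set), hit the boundary tangentially, or accumulate infinitely many reflections in finite time. The plan is to use the incoming flux formula, as in \Cref{lem:stationary-reflection-count} and \Cref{lem:grazing-flux}. The boundary flux is $e^{-\langle b,p\rangle}\langle v,\nu\rangle_+\,dS\,d\gamma_d$. It gives zero measure to non-regular points and to zero normal speed, and it makes the expected number of contacts on bounded time intervals finite. Together these give global existence almost everywhere and $\Psi_{s+t}=\Psi_s\circ\Psi_t$. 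Unitarity of $U_t$, the generator $B$, and the Poisson series \eqref{eq:poisson-semigroup} then carry over verbatim, using that $J_h$ preserves $\widehat\pi$ by \Cref{lem:linear-position-update}. We expect this construction for nonsmooth $C$ to be the main technical obstacle.

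\emph{Decay.} We repeat the hypocoercivity argument. \eqref{eq:mixed-microscopic-dissipation} holds as before, since $Q_a$ is still Ornstein--Uhlenbeck in $\gamma_d$. For $R_h$, \Cref{lem:linear-position-update} supplies the Dirichlet-form bound $h\int\|\nabla f\|^2d\pi$. For the divergence step we invoke the Eberle--L\"orler divergence lemma \cite[Theorem 5]{EL26:journal} with the weighted Neumann operator $\Delta g-b\cdot\nabla g$ (the generator of $\pi$). That lemma is stated for potentials satisfying a Poincar\'e inequality with constant $P$, and a linear potential on a convex body qualifies.

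If the theorem needs smoothness of the domain, we first work on smooth convex approximations $C_j\downarrow C$ with uniformly bounded Poincar\'e constants; by convexity these are $\le 2P$ for large $j$ under the $e^{-\langle b,z\rangle}$ weight. The estimates are uniform in $j$, and we pass to the limit through strong resolvent convergence of the semigroups. Alternatively, we apply the space-time estimate directly, since it uses only the Poincar\'e inequality and the $H^2$-type bound $\|\nabla^2 g\|\lesssim\|(\Delta-b\cdot\nabla)g\|$. That bound holds on convex domains by the Reilly/Bochner formula, because the boundary term has the favorable sign and $\nabla^2\langle b,z\rangle=0$.

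The test function is $\phi=p+v\cdot\nabla_z g$. Its adjoint is $\mathcal A^*\phi=\partial_s\phi-B\phi$, and velocity averaging now gives $\Pi\mathcal A^*\phi=\partial_s p-\Delta g+b\cdot\nabla g=\chi$, because $B$ includes the term $-b\cdot\nabla_v$. The rest is identical. With $\tau=\sqrt P$, $\gamma=P^{-1/2}$ and $\lambda h\le\sqrt P$ we get $C_\tau\le C\sqrt P$, and Gr\"onwall on $F(t)=\int_t^{t+\tau}\|\mathcal P_sf\|^2$ yields the exponential decay.

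\emph{Total variation.} The adjoint semigroup with respect to $\widehat\pi$ satisfies the same bound, because the time-reversed process has the same structure. Velocity reversal conjugates $B$ to $-B$, and $J_h$ is self-adjoint. For an $M$-warm start the density satisfies $\|f-1\|_{L^2}\le\sqrt{M-1}$. Cauchy--Schwarz then gives total variation at most $\tfrac C2 e^{-ct/\sqrt P}\sqrt{M-1}$, which is at most $\varepsilon$ once $t\ge C\sqrt P\log(2M/\varepsilon)$.
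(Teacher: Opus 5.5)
Your smooth-case argument matches the paper's: the Eberle--L\"orler divergence lemma for $\Delta-b\cdot\nabla$ with zero curvature parameter, the test function $\phi=p+v\cdot\nabla_zg$ with $\Pi\mathcal A^*\phi=\partial_sp-\Delta g+b\cdot\nabla g=\chi$, the dissipation bounds from $Q_a$ and \Cref{lem:linear-position-update}, and $\tau=\sqrt P$. The warm-start step also matches. Your velocity-reversal argument for the adjoint is valid, though the paper only uses that $\mathcal P_t^*$ has the same norm on mean-zero functions. Bounding the number of contacts by the boundary flux is an acceptable substitute for the paper's ergodic-theorem argument on the $\sigma$-preserving collision map. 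You must, however, carry it out through the flow-box (Kac) change of variables between $\sigma\otimes ds$ and $\widehat\pi$, for the flow defined up to the accumulation time. Appealing to a stationary process is circular, since its existence is what is being proved. The real gap is the passage from smooth approximations to a nonsmooth $C$, which you assert rather than prove at two points.

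First, ``by convexity these are $\le2P$ for large $j$'' needs an argument. The paper uses continuity of the first nonzero weighted Neumann eigenvalue under Hausdorff convergence of convex bodies: uniform Sobolev extension operators on a common neighborhood, compactness of normalized eigenfunctions, and lower semicontinuity of the energy. A cruder bound such as \eqref{eq:kls} gives only $O(P\log d)$, which would destroy the universal constants.

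Second, ``strong resolvent convergence of the semigroups'' is not available off the shelf. The approximating semigroups act on different spaces $L^2(\widehat\pi_j)$, and their generators carry reflection conditions at $\partial C_j$ rather than $\partial C$. No core is known for the limit generator on a nonsmooth body, so a Trotter--Kato argument has nothing to work with. The paper instead proves convergence pathwise, in these steps:
\begin{enumerate}
\item Almost every trajectory has finitely many regular, nontangent contacts on a bounded interval.
\item Near a regular point the supporting normals of $C_j$ converge to the unique normal. Hence contact times and reflected velocities converge one contact at a time; transversality rules out extra exits, and constant acceleration does not change this.
\item The update laws converge in total variation, so $\mathcal P^{(j)}_tf\to\mathcal P_tf$ for bounded continuous $f$.
\item After recentering $f$ by its $\widehat\pi_j$-mean, dominated convergence passes the $L^2$ bound to the limit, and density extends it to all of $L^2(\widehat\pi)$.
\end{enumerate}

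Your ``direct'' alternative avoids the Poincar\'e approximation but leaves the corresponding domain question open. You would need $\phi=p+v\cdot\nabla g$, with $g\in H^2$ satisfying the Neumann condition only in the trace sense, to lie in $\operatorname{dom}(B)$ for the flow on a nonsmooth body. Avoiding exactly this is why the paper uses smooth approximation.
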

\begin{proof}
First suppose $C$ is smooth. The Hamiltonian generator is
$B=v\cdot\nabla_z-b\cdot\nabla_v$, and the full generator is
$B+\lambda(J_h-I)$. The spatial operator in the divergence lemma of
Eberle and L\"orler~\cite[Theorem~5]{EL26:journal} is $\Delta-b\cdot\nabla$.
The potential has zero Hessian, so its curvature parameter is zero.
The divergence lemma and the argument of
\cite[Lemma~14 and Theorem~15]{EL26:journal} therefore apply with the same bounds
used in \Cref{prop:mixed-refresh-relaxation}.

Here are the operator estimates needed for that application. Put
$\mathcal D(f)=\lambda\langle f,(I-J_h)f\rangle$ and define
$(\Pi f)(z,v)=\int f(z,w)\,d\gamma_d(w)$, extended pointwise in time.
The commuting positive contractions $Q_a,R_h$
give $\|f-\Pi f\|_2^2\le\mathcal D(f)/\gamma$ and
$I-J_h\preceq(I-Q_a)+(I-R_h)$. On a time interval of length $\tau$, the
divergence lemma supplies $\phi=p+v\cdot\nabla g$ with the same
space-time estimates as in the proof of \Cref{prop:mixed-refresh-relaxation}.
The same domain argument applies, now with
$B\phi=v\cdot\nabla p+v^{\mathsf T}\nabla^2g\,v-b\cdot\nabla g$.
For a fixed $t\ge0$, put $u_s=\mathcal P_{t+s}f$ on $0\le s\le\tau$.
The divergence lemma is applied to $\Pi u$, and all norms and inner products below include
integration over this interval.
\Cref{lem:linear-position-update} gives
\[
 \int_0^\tau\mathcal D(\phi_s)ds
 \le C\bigl[\gamma(\tau^2+P)+\lambda h(1+P/\tau^2)\bigr]\|\Pi u\|_2^2.
\]
The divergence identity, Cauchy--Schwarz for this quadratic form, and
the energy identity consequently yield
\[
 \int_t^{t+\tau}\|\mathcal P_s f\|_2^2ds
 \le C\left[\gamma(\tau^2+P)
       +(\lambda h+\gamma^{-1})(1+P/\tau^2)\right]
       \int_t^{t+\tau}\mathcal D(\mathcal P_s f)ds.
\]
Take $\tau=\sqrt P$ and use
$\frac{d}{ds}\|\mathcal P_s f\|_2^2=-2\mathcal D(\mathcal P_s f)$.
The same time-averaging argument as in
\Cref{prop:mixed-refresh-relaxation} proves the claimed decay.

We give the details needed to pass to a nonsmooth body. Let
$\partial_{\rm reg}C$ be the boundary points with a unique normal.
The states immediately after reflection form
$\Sigma=\{(p,v):p\in\partial_{\rm reg}C,\ \langle v,\nu(p)\rangle<0\}$.
On $\Sigma$ define the finite measure
\[
 d\sigma(p,v)=Z_C^{-1}e^{-\langle b,p\rangle}
 |\langle v,\nu(p)\rangle|\,dS(p)\,d\gamma_d(v).
\]
For $(p,v)\in\Sigma$, the next free-flight time is
$\tau(p,v)=\inf\{s>0:p+sv-s^2b/2\in\partial C\}$.
Writing $\tau=\tau(p,v)$ and $p'=p+\tau v-\tau^2b/2$, the collision map is
$(p,v)\mapsto(p',\mathcal R_{p'}(v-\tau b))$ wherever this next contact is regular.
The free flow preserves phase-space Lebesgue measure and the Hamiltonian
$\langle b,z\rangle+\|v\|_2^2/2$. Its map between boundary contacts therefore
preserves the corresponding flux measure; the boundary parametrization
has Jacobian $|\langle v,\nu(p)\rangle|$. Reflection preserves Gaussian
velocity density and absolute normal speed. Thus the collision map
preserves $\sigma$ and is invertible almost everywhere by time reversal.
Irregular contacts and tangent velocities have zero flux, by the area
formula and the fact that the irregular part of a convex boundary has
surface measure zero.

The time $\tau(p,v)$ is positive almost
everywhere and finite: a nonzero constant-force trajectory is an unbounded
parabola, and a nonzero zero-force trajectory is an unbounded line.
The boundary parametrization gives, for integrable $F$,
\[
 \int_\Sigma\int_0^{\tau(p,v)}
 F(p+sv-s^2b/2,v-sb)\,ds\,d\sigma(p,v)
 =\int F\,d\widehat\pi.
\]
In particular, $\int\tau\,d\sigma=1$. Apply the ergodic theorem to
the positive integrable function $\min\{\tau,1\}$ on the collision
map. Its forward and backward sums diverge almost everywhere. Hence
reflections do not accumulate in finite time. Concatenating the flights
defines an invariant Hamiltonian flow on $C$.

Choose smooth convex bodies $C_j$ converging to $C$ in Hausdorff distance.
Their probability measures with density proportional to
$e^{-\langle b,z\rangle}$ converge in total variation, and their
Poincar\'e constants converge to that of $\pi$. To justify the latter
assertion, use the variational characterization of the first nonzero
weighted Neumann eigenvalue. Convex domains converging to a body with
nonempty interior admit uniformly bounded Sobolev extension operators
on a common bounded neighborhood. The fixed weight is bounded above
and below there. Restricting smooth test functions proves one eigenvalue
inequality. Uniform extensions of normalized first eigenfunctions have
a strongly $L^2$-convergent subsequence; weak lower semicontinuity of
energy on compact subsets of the interior proves the other inequality.
Means and norms converge with the domains. This is the usual Neumann
spectral-continuity argument; see
\cite[Corollary~A.3 and its proof]{FL26:journal}.
For all sufficiently large $j$, the approximating Poincar\'e constants
are at most $2P$. Apply the averaged-energy estimate above with this
bound while keeping $\gamma=P^{-1/2}$ and the same $\lambda,h$.
It still gives relaxation of order $\sqrt P$, with universal constants.

Almost every trajectory has finitely many regular, nontangent contacts
on a bounded time interval. Near a regular boundary point, all nearby
supporting normals of $C_j$ converge to its unique normal. Thus contact
times and reflected velocities converge successively along such a
trajectory. Transversality locates each exit; the strictly inward
reflected velocity excludes additional nearby exits. The bounded constant
acceleration preserves these local conclusions. For convergent starting
states, the joint position and velocity update laws converge in total
variation and have densities on phase space. Coupling the finitely many
updates on a bounded time interval therefore gives convergence of the
semigroups on bounded continuous functions. Center each such function by
its mean under the approximating invariant law; these means also converge.
Dominated convergence passes the $L^2$ estimate to the limit, and density
and contraction extend it to all of $L^2$. Finally, the adjoint has the
same operator norm. For an initial phase law
$\rho=\mathcal L(Z_0,V_0)\le M\widehat\pi$,
$\|d\rho/d\widehat\pi-1\|_2^2\le M-1$, which gives the stated
total variation estimate.
\end{proof}

\subsubsection{Membership-query simulation}

We first bound short flights and visits to the upper boundary. We then
estimate boundary normals and simulate the trajectory between updates.
The probabilistic estimates use the invariant boundary measure and
therefore apply without independence of successive contacts.

For this paragraph let $rB_2^d\subseteq C$, and suppose the known number
$A\ge0$ satisfies $-\langle b,z\rangle\le A$ on $C$. Retain the
notations $Z_C,A_b$ from above, and write
$\beta_b=A_b/(\sqrt{2\pi}Z_C)$ for the stationary collision rate.
For a unit vector $u$, let $P_{u^\perp}$ denote orthogonal projection
onto $u^\perp$, and let $\ell_u(y)$ be the length of
$C\cap(y+\mathbb Ru)$ for $y\in P_{u^\perp}C$.

The contained ball controls the weighted surface area and the proportion
of short chords in the boundary measure.
\begin{lemma}[Weighted short chords]\label{lem:weighted-short-chords}
We have $A_b/Z_C\le(d+A)/r$. The stationary rate of contacts whose
straight flight, started with the reflected velocity and with no further
updates, has duration less than $h$ is at most $\alpha_bh$, where
\[
 \alpha_b=\frac{C(d+A)\sqrt d}{r}\,\beta_b
 \le\frac{C(d+A)^2\sqrt d}{r^2}.
\]
This statement concerns the straight chord associated to each contact,
even when the actual process has acceleration $-b$.
\end{lemma}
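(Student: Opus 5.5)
The plan is to mirror the uniform-case arguments, namely the divergence-theorem bound $rA\le nV$ and \Cref{lem:short-chords} with \Cref{prop:short-free-flights}, inserting the exponential weight $e^{-\langle b,z\rangle}$ at each step.

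\emph{Weighted surface area.} Apply the divergence theorem to the vector field $F(z)=z\,e^{-\langle b,z\rangle}$ on $C$. Its divergence is $(d-\langle b,z\rangle)e^{-\langle b,z\rangle}\le(d+A)e^{-\langle b,z\rangle}$, using the hypothesis $-\langle b,z\rangle\le A$. On the boundary, $rB_2^d\subseteq C$ gives $\langle p,\nu(p)\rangle\ge r$. Hence
\[
 rA_b\le\int_{\partial C}\langle p,\nu(p)\rangle e^{-\langle b,p\rangle}\,dS(p)
 =\int_C\operatorname{div}F\le(d+A)Z_C,
\]
which is the first claim. A Lipschitz convex boundary suffices for this divergence theorem.

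\emph{Short chords.} Fix a velocity $v=qu$. The outgoing weighted flux in transverse coordinates is $q\,e^{-\langle b,p\rangle}dy/Z_C$, where $p$ is the entry point of the chord over $y\in D_u=P_{u^\perp}C$. We need to bound
\[
 \frac qZ_C\int_{\{\ell_u(y)<qh\}}e^{-\langle b,p(y)\rangle}dy .
\]
Here the weight, unlike the uniform case, is not constant along $D_u$. Compare it with the total weighted projection $\int_{D_u}e^{-\langle b,p(y)\rangle}dy$, which Cauchy's formula averaged over $u$ turns into a multiple of $A_b$: each transverse point has an entry and an exit point, and averaging $|\langle u,\nu\rangle|$ gives $\mathbb E|U_1|\le 1/\sqrt d$.

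For the relative proportion of short chords under the weight, use the contraction in \Cref{lem:short-chords}. Set $\delta=qh/(2r)$ and map $y\mapsto(1-\delta)y$. Short chords lie outside $(1-\delta)D_u$, and this map has Jacobian $(1-\delta)^{d-1}$. To compare weights, write the entry point as $(1-\delta)p+\delta w$ with $w\in rB_2^d$. This changes $\langle b,\cdot\rangle$ by at most $\delta(A+\langle b,p\rangle)$-type terms, bounded using $-\langle b,z\rangle\le A$. The result is a loss factor $e^{-\delta(A+\cdots)}$, which I expect to linearize to a relative short-chord weight of at most $C\delta(d+A)$ when $\delta(d+A)\le1$; otherwise the bound is trivial.

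Integrating over velocity, $\mathbb E q^2=d$ and the $1/\sqrt d$ projection average give
\[
 \text{rate}\le C\,\frac{(d+A)h}{r}\,\mathbb E[q^2]\,\frac{A_b}{Z_C\sqrt d}
 \lesssim\frac{(d+A)\sqrt d}{r}\,\beta_b\,h .
\]
Combining with the first claim, $\beta_b\le(d+A)/(\sqrt{2\pi}\,r)$, yields the stated bound on $\alpha_b$.

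\emph{Main obstacle.} The delicate step is the weighted version of the short-chord proportion. The weight $e^{-\langle b,p\rangle}$ at the contact point does not scale simply under contraction toward a ball center. If the direct comparison fails, I will fall back on the one-sided bound $e^{-\langle b,z\rangle}\le e^{A}$ applied only on the thin shell $C\setminus(1-\delta)C$, and handle the weighted flux through that shell via the divergence identity above applied to the shell. This fallback still produces a factor linear in $\delta(d+A)$.
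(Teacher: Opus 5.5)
Your divergence-theorem step for $z\,e^{-\langle b,z\rangle}$ is exactly the paper's. Your chord-coordinate flux $q\,e^{-\langle b,p\rangle}dy/Z_C$ is correct, and so is the velocity integration, whether you compare with $\beta_b$ via $\mathbb E q^2/\mathbb E q\le C\sqrt d$ or, as you propose, with the weighted Cauchy average $\mathbb E_u W_u=A_b\,\mathbb E|U_1|/2$, where $W_u=\int_{D_u}e^{-\phi_u}$ and $\phi_u(y)=\langle b,p_u(y)\rangle$. The gap is in the step you flag yourself, the weighted short-chord proportion, and it cannot be closed as you describe.

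The point $(1-\delta)p+\delta w$ lies over $(1-\delta)y$ only if $w\in[-r,r]u$. Even then it is merely some point of $C$ on that chord, not its entry point. The entry point comes earlier along $u$, so its potential is at most that of $(1-\delta)p+\delta w$ only when $\langle b,u\rangle\ge 0$. In that case, taking $w=-ru$ reproduces the paper's estimate:
\begin{itemize}
\item $\phi_u$ is convex, because the entry parameter $a_u$ is convex and $\langle b,u\rangle\ge0$;
\item $\phi_u(0)\le -r\langle b,u\rangle\le 0$ and $\phi_u\ge -A$;
\item hence $e^{-\phi_u((1-\delta)y)}\ge e^{-A\delta}e^{-\phi_u(y)}$;
\item so the relative weight of short chords is at most $1-(1-\delta)^{d-1}e^{-A\delta}\le(d-1+A)\delta$.
\end{itemize}
Note also a sign slip: the change in potential is $\delta(\langle b,w\rangle-\langle b,p\rangle)$. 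It is controlled by $-\langle b,p\rangle\le A$ and $\langle b,w\rangle\le 0$. The term $+\langle b,p\rangle$ that you wrote has no upper bound.

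For $\langle b,u\rangle<0$, $\phi_u$ is concave and the contraction comparison is false. Take $C=rB_2^2$, $b=\beta e_2$ (so $A=\beta r$), and $u=-e_2$. The entry potential is $\beta\sqrt{r^2-x^2}$. Contracting $x=r$ to $(1-\delta)r$ raises it from $0$ to about $A\sqrt{2\delta}$, a loss factor $e^{-cA\sqrt\delta}$ rather than $e^{-O(A\delta)}$. For large $A$ the total weight of $(1-\delta)D_u$ is then a vanishing fraction of $W_u$.

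Your fallback does not help. Bounding the weight by $e^{A}$ on the shell loses a factor exponential in $A$, which is of order $c_*n$ in the application, not a factor linear in $\delta(d+A)$.

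The missing idea is the paper's reversal. When $\langle b,u\rangle<0$, the potential decreases along the chord. So the $u$-entry weight is at most the exit weight, which is the entry weight for $-u$, while $D_u=D_{-u}$ and $\ell_u=\ell_{-u}$. Hence the short-chord contribution of $u$ is at most $C(d+A)(a/r)W_{-u}$. Because the Gaussian velocity law is invariant under $v\mapsto -v$, summing over both half-spheres costs only a factor $2$ before the comparison with $\beta_b$, or with your Cauchy average of $W_u$.
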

\begin{proof}
The divergence theorem and $\langle z,\nu(z)\rangle\ge r$ give
\[
 rA_b\le\int_{\partial C}\langle z,\nu(z)\rangle e^{-b\cdot z}dS(z)
 =\int_C(d-b\cdot z)e^{-b\cdot z}dz\le(d+A)Z_C.
\]
Fix a direction $u$ with $b\cdot u\ge0$, put $D=P_{u^\perp}C$, and
write the entry point of its chord as $p_u(y)=y+a_u(y)u$. The function
$a_u$ is convex, so $\phi_u(y)=b\cdot p_u(y)$ is convex. The chord
through zero contains $[-r,r]u$, hence $\phi_u(0)\le0$, while
$\phi_u\ge-A$. For $0<\delta<1$, convexity therefore gives
$e^{-\phi_u((1-\delta)y)}\ge e^{-A\delta}e^{-\phi_u(y)}$.
For $y\in(1-\delta)D$, we have $\ell_u(y)\ge2r\delta$.
Writing $W_u=\int_De^{-\phi_u(y)}dy$, we obtain
\[
 \int_{\{y:\ell_u(y)<a\}}e^{-\phi_u(y)}dy
 \le\min\{1,C(d+A)a/r\}\,W_u.
\]
For $a<2r$, set $\delta=a/(2r)$ and use
$1-(1-\delta)^{d-1}e^{-A\delta}\le(d-1+A)\delta$; larger $a$ are
immediate. If $b\cdot u<0$, reverse the chord. Its entry point then
has smaller potential and larger weight. Thus the contribution of these
directions is bounded by that of the directions already considered.

For velocity $v=qu$, where $q=\|v\|_2$, boundary flux in chord coordinates
is $q e^{-\phi_u(y)}dy/Z_C$. Gaussian speed and direction are independent,
and $\mathbb E q^2/\mathbb E q\le C\sqrt d$. Apply the last estimate
with $a=qh$, integrate, and compare with the total rate $\beta_b$.
\end{proof}

For an epigraph and downward acceleration, a short curved flight can be
compared with a straight chord after a small change in velocity.
\begin{lemma}[Short parabolic flights]\label{lem:short-parabolic-flights}
Suppose $b=ce_d$, $c>0$, and
$C=\{(x,s):x\in D_0,\ f(x)\le s\le H\}$, where $D_0$ is a convex
body and $f$ is convex. Use the parameters $\alpha_b,\beta_b$ above.
For $B\ge1$ and $h>0$ satisfying $h\|b\|_2(B+1)\le c_0$, with a
universal $c_0>0$, the stationary rate of contacts with speed at most $B$ and a subsequent
uninterrupted parabolic flight of duration less than $h$ is at most
$C\alpha_bh+C\beta_b\|b\|_2^2h^2$. If $B\ge C\sqrt d$, the rate of
contacts with speed greater than $B$ is at most $C\beta_b e^{-c_1B^2}$.
\end{lemma}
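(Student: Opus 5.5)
We compare each uninterrupted parabolic flight with the straight chord issued from the same contact point with the same reflected velocity. Then \Cref{lem:weighted-short-chords} handles most short flights, and the high-speed and strongly curved cases are estimated separately.

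\emph{High speeds.} The incoming stationary flux at a boundary element is $Z_C^{-1}e^{-\langle b,p\rangle}\langle v,\nu(p)\rangle_+\gamma_d(dv)\,dS(p)$. Reflection preserves speed. The rate of contacts with $\|v\|_2>B$ is therefore at most
\[
 \frac{A_b}{Z_C}\,\mathbb E\bigl[\|G\|_2\mathbf 1_{\{\|G\|_2>B\}}\bigr],
\]
and Gaussian norm concentration bounds this by $C\beta_b e^{-c_1B^2}$ once $B\ge C\sqrt d$. This is the last assertion.

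\emph{Short flights at speed at most $B$.} Fix an outgoing state $(p,v)$ with $\|v\|_2\le B$, and let the parabola $p+tv-t^2b/2$ first leave $C$ at time $t_1<h$. There are two cases, according to which part of $\partial C$ is hit.

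If the parabola exits through the lateral part $\partial D_0\times\mathbb R$, or through the graph of $f$, I would show that the straight chord from $p$ with velocity $v$ also exits by a comparable time. The acceleration is $-ce_d$, so it displaces the trajectory only in the vertical direction, by $t^2c/2\le hc\cdot h/2$. The lateral constraint involves only $x$, and the parabola's $x$-motion is exactly linear, so a lateral exit of the parabola before time $h$ is also a lateral exit of the straight chord before time $h$. For the graph constraint $s\ge f(x)$, the downward acceleration only lowers the height. Hence if the parabola exits through the graph at $t_1$, then either the straight line also exits before $2h$, or the vertical gap $\le ct_1^2/2$ is bridged within extra time $O(ch^2/\alpha_\perp)$, which is small when the normal speed is not tiny. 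I expect a softer route to be cleaner. Perturb the velocity to $v'=v-t_1b/2$; the straight chord with velocity $v'$ meets the parabola at time $t_1$, so it has length $<h$. Its velocity differs by at most $h\|b\|_2/2$. Under $h\|b\|_2(B+1)\le c_0$, the Gaussian density ratio between $v$ and $v'$ is bounded, since $|\,\|v'\|^2-\|v\|^2|\le h\|b\|_2B+h^2\|b\|^2\le 2c_0$, and the flux weight $\langle v,\nu\rangle$ changes correspondingly. A change of variables $v\mapsto v'$ in the flux integral then bounds this rate by a constant times the straight short-chord rate $C\alpha_b h$.

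If the parabola exits through the top face $\{s=H\}$, convexity of the region above $f$ again lets the straight chord comparison apply via the same velocity shift. The only residual case is a contact on the top face whose upward normal speed is at most $\|b\|_2h$: the parabola turns back down and returns to the top. Its flux is at most $\beta_b$ times the Gaussian mass of normal speeds in $[0,ch\|b\|_2]$, weighted by the normal speed. This gives $C\beta_b\|b\|_2^2h^2$, matching \Cref{lem:grazing-flux}.

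The main obstacle is making the velocity-shift change of variables rigorous on the flux measure. The shift depends on $t_1$, which itself depends on $v$, so I would instead bound using a fixed shift grid or use the monotone comparison above.
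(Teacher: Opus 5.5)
Your treatment of the high-speed tail and of lateral exits matches the paper. However, the bottom-exit case (crossing the graph of $f$) is not actually proved, and it is the only case where the curvature of the flight matters.

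The shift $v'=v-t_1b/2$ depends on the exit time, so, as you note, $v\mapsto v'$ is not a change of variables on the flux measure. The ``monotone comparison'' you fall back on goes the wrong way here: the unaccelerated line $p+tv$ lies \emph{above} the parabola and need not leave the epigraph at all.

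The missing idea is a single fixed shift $w=v-hb/2$. For $0\le t\le h$, the point $p+tw$ has the same horizontal coordinate as the parabola and height smaller by $ct(h-t)/2\ge0$. Hence if the parabola dips below the graph of $f$ before time $h$, the straight chord with velocity $w$ has also left $C$ by then. Now $v\mapsto w$ is a translation. On $\|v\|_2\le B$, the condition $h\|b\|_2(B+1)\le c_0$ bounds $\gamma_d(w+hb/2)/\gamma_d(w)$ by a constant, so \Cref{lem:weighted-short-chords} applies.

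The second gap is where the quadratic term comes from. The flux weights compare as $|\langle v,\nu(p)\rangle|\le2|\langle w,\nu(p)\rangle|$, with $w$ strictly inward, only when $|\langle v,\nu(p)\rangle|>\|b\|_2h$. Contacts with smaller normal speed must be excluded, and their rate is $C\beta_b\|b\|_2^2h^2$ by the normal-flux calculation of \Cref{lem:grazing-flux}. These are exactly the nearly grazing reflections off the graph of $f$: the parabola falls back onto the graph quickly even though the straight chord is long. Your remark that the flux weight ``changes correspondingly'' hides this failure, since without the exclusion $w$ (or $v'$) can point out of $C$.

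You instead attach the exclusion to contacts on the top face, where it is not needed. After reflecting off $\{s=H\}$ the velocity points downward and the acceleration is downward, so the particle cannot return to the top. Any exit through the top is handled by the unshifted line directly: the line lies above the parabola, so it reaches height $H$ no later. No velocity shift is needed for top or lateral exits.
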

\begin{proof}
Consider a regular contact $p$ and its velocity $v$ immediately after
reflection. If the parabolic path reaches the top by time $h$, its
unaccelerated straight path reaches that top at least as early. Their
horizontal coordinates agree, so a crossing of $\partial D_0$ also
gives a straight exit by time $h$. These cases are counted by
\Cref{lem:weighted-short-chords}.

For a crossing of the bottom, put $w=v-hb/2$. For $0\le t\le h$,
$p+tw$ has the same horizontal coordinate as the parabolic path and a
no larger height. It is therefore below the epigraph whenever that path
is. Exclude contacts with $|\langle v,\nu(p)\rangle|\le \|b\|_2h$;
their rate is at most $C\beta_b\|b\|_2^2h^2$ by the Gaussian normal-flux
calculation. At the remaining contacts, $w$ points strictly into $C$ and
$|\langle v,\nu(p)\rangle|\le2|\langle w,\nu(p)\rangle|$.
On $\|v\|_2\le B$, the condition on $h$ bounds the ratio of Gaussian
densities $\gamma_d(w+hb/2)/\gamma_d(w)$ by a universal constant.
Changing variables from $v$ to $w$ and applying the straight-chord
bound proves the first assertion. At a stationary collision, the radial
Gaussian velocity density is weighted by speed. Its tail above
$B\ge C\sqrt d$ is at most $Ce^{-c_1B^2}$, proving the second assertion.
\end{proof}

We also keep the exact trajectory below the upper truncation boundary.
The following statement returns to the epigraph $C$ and parameters
$H,c_*$ of \Cref{lem:truncated-epigraph-parameters}.
\begin{lemma}[Probability of approaching the top]
\label{lem:epigraph-top-avoidance}
Consider the reflected constant-force process on $C$, with Poisson
updates of rate $\lambda$ preserving $\widehat\pi$. Suppose its initial
law is $M_0$-warm, and let $S_t$ denote its height at time $t$.
If $H-1\ge c_*\sqrt n+1$, then
\[
 \mathbb P\left\{\sup_{0\le t\le T}S_t\ge H-1\right\}
 \le CM_0(1+\lambda T+\sqrt nT)2^n e^{-\sqrt n(H-1)/2}.
\]
The same bound with an additional term $J$ inside the parentheses holds
if $J$ deterministic invariant updates are added. In particular, this
probability is at most $\eta$ when
\[
 H\ge1+\frac{2n\log2+
  2\log(CM_0(1+\lambda T+J+\sqrt nT)/\eta)}{\sqrt n},
\]
in addition to the previous lower bounds on $H$.
\end{lemma}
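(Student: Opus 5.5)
The approach is a union bound over the times at which the process can first reach height at least $H-1$, with the cost of each time controlled by stationarity and warm domination. Between reflections and updates the height evolves as $s(t)=s(0)+tv_d-t^2\sqrt n/2$, since $b=\sqrt n\,e_d$. A reflection leaves the position unchanged, and a Poisson update moves the position only at the update time. Hence the running maximum of $S_t$ is attained at time $0$, at a position update, or at an interior critical point of a free flight, where the vertical velocity vanishes. I will bound the expected number of such events with height at least $H-1$.

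First, the stationary marginal bound. Under $\widetilde\mu$, the proof of \Cref{lem:truncated-epigraph-parameters} gives
\[
 \mathbb P(\sqrt n\,S\ge q)\le 2^{n+1}e^{-q/2}.
\]
Conditioning on $C$ only multiplies this by $1/(1-\delta)\le2$. So under $\pi$ the probability that $S\ge H-1$ is at most $C2^ne^{-\sqrt n(H-1)/2}$. This one estimate controls time zero under $M_0$-warmness. By the compensator formula and domination of the path law by $M_0$ times the stationary path law, it also controls all states immediately after the Poisson updates, contributing $\lambda T$ times the same bound. Each of the $J$ deterministic invariant updates contributes one more copy, which accounts for the extra $J$.

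Second, the free-flight maxima. A flight reaches its apex at the moment the vertical velocity crosses zero from above, at rate $\sqrt n$. Under $\widehat\pi$ the vertical velocity is an independent standard Gaussian. The stationary rate of upward-to-downward zero crossings of $v_d$ located in a set $A$ is therefore $\sqrt n\,\varphi(0)\,\pi(A)$; this is the Rice formula, $\E|\dot v_d|$ times the density of $v_d$ at zero. Applying it with $A=\{s\ge H-1\}$ and integrating over $[0,T]$ gives $C\sqrt nT\cdot 2^ne^{-\sqrt n(H-1)/2}$, multiplied by $M_0$ after warm domination.

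A reflection can also turn an upward velocity downward without passing through zero, but only at a contact, where the height is unchanged and the reflection does not create a new maximum. The assumption $H-1\ge c_*\sqrt n+1$ keeps the threshold above the known ball, though the tail bound itself does not need it.

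Summing the contributions gives the displayed inequality. Solving for $H$ gives the final sufficient condition.

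The main obstacle is justifying the crossing-rate formula for the reflected flow. Reflections on the lower epigraph boundary and on the side boundary change $v_d$ discontinuously, so the naive Rice formula needs care. I would handle this as in \Cref{lem:short-parabolic-flights}: use the flow-invariance identity from the proof of \Cref{prop:linear-potential-relaxation}, counting only the continuous zero crossings of $v_d$ in the interior, whose expected rate is $\sqrt n$ times the stationary density of $v_d$ at zero.
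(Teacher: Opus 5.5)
\textbf{Gap: your list of where the supremum can sit is incomplete.} Your route differs from the paper's. You union-bound over locations of the running maximum and count apexes with the Rice formula for zero crossings of $v_d$. The structural claim this rests on is false as stated: the supremum of the c\`adl\`ag height path is not always at time $0$, at a post-update state, or at an interior apex. It can also occur in three other places.
\begin{enumerate}
\item[(i)] At the terminal time $T$, while the particle is still rising.
\item[(ii)] As the left limit $S_{u-}$ just before an update at time $u$, if the particle rises through $H-1$ and is moved down before its apex. You count only post-update states.
\item[(iii)] At a boundary contact where reflection turns $v_d>0$ into $v_d<0$. Such a contact \emph{is} a local maximum of the height with no apex, so your statement that a reflection ``does not create a new maximum'' is wrong.
\end{enumerate}
For (iii), what is true is this: on the graph of $V/\sqrt n$, on the vertical walls of the epigraph, and on the cylinder $\|x\|_2=R_x$, every outward normal has $\nu_d\le0$, so $\mathcal R_p$ can only increase $v_d$ there. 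On the top face $\{s=H\}$, however, reflection does reverse upward motion. Consider a path that enters $\{s\ge H-1\}$ continuously from a post-update state below $H-1$, bounces off the top, and falls back before the next update. It lies in $\{\sup_{0\le t\le T}S_t\ge H-1\}$, yet none of your counted events occurs. The same is true in cases (i) and (ii), so the union bound as written does not cover the event.

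\textbf{How to repair it, and how the paper avoids it.} Each case can be fixed.
\begin{enumerate}
\item[(i)] Add one more stationary tail term, $M_0\pi\{s\ge H-1\}$.
\item[(ii)] Add $\lambda T\,\pi\{s\ge H-1\}$, times $M_0$. Pre-update states at independent Poisson times are also $\widehat\pi$-distributed under stationarity, by the compensator formula. Add $J$ further copies for the deterministic updates.
\item[(iii)] You need the stationary contact rate with the top face, which is $\mathbb E(G_1)_+$ times the height density at $H$. So you need a pointwise density bound, not only the tail bound. For example, with $U=V-\inf V$, the variable $U(X)+F$ has density $e^{-q}|\{U\le q\}|/Z_U\le2^ne^{-q/2}$.
\end{enumerate}
The paper avoids all this by counting \emph{entrances} into $\{s\ge H-1\}$ rather than maxima. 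Every visit begins in one of three ways: at time $0$, by an update landing there, or by a continuous upward crossing of the plane $\{s=H-1\}$. The crossing rate is the height density at $H-1$ times $\mathbb E(G_1)_+$. Positions are continuous at reflections, and a reflection cannot carry the particle across the plane, so this decomposition is automatically exhaustive. The terminal time, pre-update states and top contacts then need no separate treatment. Your apex rate $\sqrt n\,\varphi(0)\,\pi(A)$ is correct and needs only the tail $\pi\{s\ge H-1\}$. That advantage disappears once case (iii) is handled, so the first-entrance argument is the cleaner route.
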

\begin{proof}
Put $U=V-a_V$ and $Z_U=\int e^{-U}$, where $a_V=\inf V$.
The random variable $W=U(X)+F$ has density
$p_W(q)=e^{-q}|\{U\le q\}|/Z_U\le2^ne^{-q/2}$.
Indeed, $|\{U\le q\}|\le e^{q/2}\int e^{-U/2}\le e^{q/2}2^nZ_U$.
Consequently, $\mathbb P(W\ge q)\le2^{n+1}e^{-q/2}$. Since
$S=(a_V+W)/\sqrt n$ and $a_V\le0$, the untruncated height density at a
nonnegative height $s$ is at most $\sqrt n2^ne^{-\sqrt ns/2}$.
Conditioning on $C$ increases these bounds by at most
$(1-\delta)^{-1}\le2$.

At stationarity the upward flux across a horizontal plane is its height
density times $\mathbb E(G_1)_+=1/\sqrt{2\pi}$. Thus the expected number
of continuous upward crossings of height $H-1$ before time $T$ is at
most $C\sqrt nT2^ne^{-\sqrt n(H-1)/2}$. Add the initial upper tail,
the expected number of Poisson updates landing above this plane, and
the $J$ deterministic update tails. Every visit to the upper region is
accounted for by one of these events, up to a null event of tangency.
Domination of the entire path law by $M_0$ proves the result.
\end{proof}

The normal estimator uses the gauge of a body containing a known ball.
A short radial displacement preserves its gradient. We choose the
displacement small enough that the linear density changes by a bounded
factor, and then apply the integrated second-difference estimate.

Let $B_2^d\subseteq C\subseteq RB_2^d$, let
$g(z)=\inf\{t>0:z\in tC\}$, and put $w(z)=e^{-\langle b,z\rangle}$.
Suppose a known $L_b$ bounds $|\langle b,z\rangle|$ on $C$. Set
$q_0=[4(d+L_b+1)]^{-1}$ and $A_b=\int_{\partial C}w\,dS$.
We use $\|z\|_1=\sum_i|z_i|$ and write $\partial g(z)$ for the convex
subdifferential of $g$ at $z$.

The next event bounds all subgradients near a random boundary point.
It is this local estimate that permits approximate collision locations.
\begin{lemma}[Local variation of a weighted boundary normal]
\label{lem:weighted-finite-scale-normal}
Draw $X$ from $w(x)dS(x)/A_b$ on $\partial C$ and independently draw
$S\sim\operatorname{Unif}[1,1+q_0]$. Put $Y=SX$ and $q_j=2^{-j}q_0$.
For every integer $J\ge0$ and $L>0$, except on an event of probability
at most $C(J+1)/(q_0L)$, the bounds
\[
 \sum_{i=1}^d\bigl(g(Y+q_je_i)+g(Y-q_je_i)-2g(Y)\bigr)
 \le Lq_j^2
\]
hold for every $0\le j\le J$.
On this event, every $z$ with $\|z-Y\|_1\le q_0/2$ and every
$p\in\partial g(z)$ satisfy
$\|p-\nabla g(Y)\|_2\le8dL\|z-Y\|_2+2\sqrt dLq_J$.
We have $\nabla g(Y)=\nabla g(X)$ almost surely.
\end{lemma}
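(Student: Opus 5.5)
The argument will be the weighted analogue of \Cref{lem:finite-scale-gauge-regularity}, changing only the density step. First, the law of $Y$: the radial change of variables $y=sx$ has Jacobian $s^{d-1}\langle x,\nu(x)\rangle$. Since $B_2^d\subseteq C$, we have $\langle x,\nu(x)\rangle\ge1$. Hence the density of $Y$ on $D=(1+q_0)C\setminus C$ is at most
\[
 \frac{w(x)}{A_bq_0}\le\frac{C\,w(y)}{A_bq_0}.
\]
The second inequality holds because $|\langle b,y\rangle-\langle b,x\rangle|\le q_0|\langle b,x\rangle|\le q_0L_b\le1/4$, so the two weights differ by at most a constant factor. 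This is exactly where the choice of $q_0$ enters.

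Second, the integrated second-difference bound with weight. For a smooth convex $1$-Lipschitz approximation $f$ of $g$, I would write $f(y+qe_i)+f(y-qe_i)-2f(y)=\int_{-q}^{q}(q-|t|)\partial_i^2f(y+te_i)\,dt$. I integrate this against $w(y)\mathbf 1_D(y)\,dy$ and shift variables. Since $|t|\le q\le q_0$, the shift changes the weight by at most the factor $e^{q_0\|b\|_2}$. This factor is bounded: $B_2^d\subseteq C$ gives $\|b\|_2\le L_b$, so $q_0\|b\|_2\le1/4$. The translated domains lie in $(1+2q_0)C$, giving
\[
 \int_D w\sum_i(\cdots)\le Cq^2\int_{(1+2q_0)C}w\,\Delta f .
\]
Next I apply the weighted divergence theorem,
\[
 \int_{C'}w\Delta f=\int_{\partial C'}w\,\partial_\nu f\,dS+\int_{C'}\langle b,\nabla f\rangle w .
\]
The boundary term is at most the weighted area of $\partial((1+2q_0)C)$, which is at most $C A_b$ after the same weight comparison. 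For the bulk term, $\|\nabla f\|_2\le1$ bounds it by $\|b\|_2\int_{C'}w$. I then use the weighted surface bound from \Cref{lem:weighted-short-chords}, which gives $\int_Cw\le A_b\cdot C/(1)$ up to the factor $d+L_b$ relation. More simply, the divergence identity with $\langle z,\nu\rangle\le R$ is not needed; instead I use $rA_b\le(d+A)Z_C$ in reverse. This is the step I expect to be delicate. The clean route is to note that $\int_C w\le (1/(d-L_b'))\int_{\partial C}\langle z,\nu\rangle w$ fails in general. Instead I would bound $\|b\|_2\int_{C'}w$ by convexity: the function $t\mapsto\int_{tC}w$ has derivative at least $\int_{\partial(tC)}\langle z,\nu\rangle w/t\ge\ldots$. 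If this does not close, I would absorb the bulk term by choosing the test against $\Delta f$ via $\operatorname{div}(w\nabla f)$, accepting a factor $(1+\|b\|_2R)$ that only costs logarithms downstream. Either way, convolution and Fatou transfer the bound to $g$. Dividing by $A_bq_0$ gives $\mathbb E\sum_i(\cdots)\le Cq_j^2/q_0$. Markov's inequality at each of the $J+1$ scales and a union bound then give the failure probability $C(J+1)/(q_0L)$.

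Third, the deterministic part copies \Cref{lem:finite-scale-gauge-regularity} with $n$ replaced by $d$. Set $a=\nabla g(Y)$, which exists almost surely since $Y$ has a density. The remainder $r(\omega)=g(Y+\omega)-g(Y)-\langle a,\omega\rangle$ is convex, nonnegative, and satisfies $r(\pm q_je_i)\le Lq_j^2$. On the cross-polytope $\|\omega\|_1\le q_j$ it is therefore at most $Lq_j^2$. For $\|z-Y\|_1\le q_j/2$, the subgradient inequality at $z$ tested at $z\pm(q_j/2)e_i$ gives $|p_i-a_i|\le 2Lq_j$. I choose the smallest scale with $q_j\ge2\|z-Y\|_1$; using $\|\cdot\|_1\le\sqrt d\|\cdot\|_2$, this yields the bound $8dL\|z-Y\|_2+2\sqrt dLq_J$. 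Finally, positive homogeneity of $g$ gives $\nabla g(Y)=\nabla g(X)$ whenever either gradient exists.
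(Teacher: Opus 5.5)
Your density bound for $Y$, the shift of the weight by a factor $e^{q_0\|b\|_2}$, the boundary term, the Fatou/Markov/union-bound step, and the deterministic interpolation argument all match the paper. The proof nonetheless has a genuine gap, and it is at the one place where the weighted case differs from the uniform one: the bulk term $\int_{C'}\langle b,\nabla f\rangle w\le\|b\|_2\int_{C'}w$ that comes from integrating $w\Delta f$ by parts.

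None of your proposed routes closes this term.
\begin{itemize}
\item The inequality $rA_b\le(d+A)Z_C$ from \Cref{lem:weighted-short-chords} bounds $A_b$ above in terms of $Z_C$. You need the opposite direction, and that inequality cannot be ``reversed''.
\item The divergence identity with the field $z$ fails, as you observe, because $d-\langle b,z\rangle$ can be negative when $L_b>d$.
\item The fallback of ``accepting a factor $(1+\|b\|_2R)$'' is never derived. It is also incompatible with the statement, which asserts failure probability $C(J+1)/(q_0L)$ with a universal $C$. In the application $\|b\|_2=\sqrt n$ and $R=R_C$ is polynomially large, so this factor is not a constant.
\end{itemize}

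The missing idea is to apply the divergence theorem to the constant direction of $b$ itself. Since $\nabla w=-bw$, for $b\ne0$ we have $\operatorname{div}\bigl(w\,b/\|b\|_2\bigr)=-\|b\|_2w$. Hence, for every convex body $D'$,
\[
 \|b\|_2\int_{D'}w=-\int_{\partial D'}w\,\langle b/\|b\|_2,\nu\rangle\,dS\le\int_{\partial D'}w\,dS .
\]
So the bulk term is bounded by the weighted surface area with constant one, independently of $R$, $d$, and $L_b$. Combined with your boundary-term bound, this gives $\int_{C'}w\Delta f\le2\int_{\partial C'}w\,dS$.

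Take $C'=(1+2q_0)C$. The dilation estimate $(1+2q_0)^{d-1}e^{2q_0L_b}\le C$ then yields $\int_{C'}w\Delta f\le CA_b$. With this in place, the rest of your argument gives the stated bound with a universal constant.
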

\begin{proof}
The radial Jacobian is $s^{d-1}\langle x,\nu(x)\rangle$.
Since $\langle x,\nu(x)\rangle\ge1$ and
$w(x)/w(sx)\le e^{q_0L_b}$, the density of $Y$ is at most
$Cw(y)/(q_0A_b)$ on $D=(1+q_0)C\setminus C$.

For a smooth convex $1$-Lipschitz function $f$, the integral formula
for second differences and the nonnegative diagonal Hessian entries give,
for $0<q\le q_0$,
\[
 \int_Dw(y)\sum_{i=1}^d
 [f(y+qe_i)+f(y-qe_i)-2f(y)]dy
 \le Cq^2\int_{(1+q_0+q)C}w\Delta f.
\]
Here translating the weight by a coordinate displacement changes it by
at most $e^{\|b\|_2q}\le e$, because $\|b\|_2\le L_b$. Integration by parts
bounds the last integral by
$\int_{\partial((1+q_0+q)C)}w\,dS+\|b\|_2\int_{(1+q_0+q)C}w$.
For any convex body $D'$ and $b\ne0$, the divergence theorem gives
$\|b\|_2\int_{D'}w=-\int_{\partial D'}w\langle b/\|b\|_2,\nu\rangle dS
\le\int_{\partial D'}w\,dS$; the case $b=0$ is immediate.
Dilation by $1+q_0+q$ changes weighted surface area by a factor at most
$(1+2q_0)^{d-1}e^{2q_0L_b}\le C$. The second-difference integral is
therefore at most $Cq^2A_b$.

Apply this estimate to smooth convolutions of $g$ and pass to the limit
by Fatou's lemma. The density bound for $Y$ gives expected second-difference
sum at most $Cq^2/q_0$. Markov's inequality and a union bound over the
scales prove the event. The deterministic convex interpolation and
subgradient argument in \Cref{lem:finite-scale-gauge-regularity} proves
the claimed local bound. Homogeneity gives $\nabla g(SX)=\nabla g(X)$.
\end{proof}

To use this estimate at an approximate contact $\widetilde X$, form central
differences at $\widetilde Y=S\widetilde X$ with step $q\ge q_J$ and
gauge-value error at most $\xi$. When the coordinate segments stay within
$\|z-Y\|_1\le q_0/2$, the gradient error is at most
\(Cd^{3/2}L(\|\widetilde Y-Y\|_2+q+q_J)+\sqrt d\xi/q.\)
Normalizing multiplies this error by at most $2R$, because
$\|\nabla g(X)\|_2\ge1/R$. Choosing $\xi\le q^2$ gives an error bounded
by a polynomial in $d,R,L$ times $\|\widetilde X-X\|_2+q$. Each of the
$2d$ gauge values is obtained by radial binary search with
$O(\log(2+R/\xi))$ membership queries. These are the same computations
as in the uniform normal estimator.

For the epigraph representation, horizontal motion is straight and
vertical acceleration is constant. The bottom and side constraints have
an interval of feasible times. We test the upper constraint separately,
so bisection always finds the first exit.

Translate the center of the contained ball to zero and write
$C=\{(x,s):x\in D_0,\ f(x)\le s\le H\}$, with
$B(0,1)\subseteq C\subseteq B(0,R)$, where $D_0$ and $f$ are convex.
For $c>0$, a flight from $(z,v)$ is
$z(t)=z+tv-ct^2e_d/2$, $v(t)=v-ct e_d$. Membership in $C$ is checked
using the available test for $D_0$, one evaluation of $f$, and the height
bound. The function
$t\mapsto f(x+tv_x)-s-tv_s+ct^2/2$ is convex. Thus the feasible times
for the bottom and side constraints form an interval.

The following deterministic statement specifies the precision and query
cost on the geometric events proved above.
\begin{lemma}[Finite-query simulation with constant acceleration]
\label{lem:finite-query-parabolic-segment}
Fix $\epsilon>0$ and a segment of duration $\ell\le H_0$, with
$H_0\ge1$, starting at $(z,v)$ with $z\in\operatorname{int}C$.
Let $v(t)$ denote its velocity along the reflected segment. Suppose it has
$N\le N_{\max}$ reflections and $\sup_{0\le t\le\ell}\|v(t)\|_2\le B$, where $B\ge1$,
and every incoming normal speed is at least $\alpha>0$. Assume that
successive reflections are separated by at least $\tau>0$, the terminal
position has distance at least $\rho>0$ from $\partial C$, and the entire
segment stays below height $H-1$.

At each true contact draw an independent radial variable uniformly on
$[1,1+q_0]$. Suppose the event in
\Cref{lem:weighted-finite-scale-normal} holds for these variables, with
constant $L$ and scales $q_j=2^{-j}q_0$, $0\le j\le J$. Assume
$\alpha,\tau,\rho,q_0\le1$, and set
\[
 D=\left(\frac{C_0dR(1+L)(1+B+cH_0)(1+H_0)}
                  {\alpha\tau\rho q_0}\right)^{20},
\]
where $C_0$ is a sufficiently large universal constant.
Put $\epsilon_0=\min\{\epsilon,D^{-2}\}$. If
$q_J\le\epsilon_0q_0/D^{2N_{\max}+10}$, an adaptive membership
algorithm returns the terminal position and velocity with error at most
$\epsilon$ in each. It uses one query if $N=0$, and otherwise
$O(1+dN^2\log(2DR/(\epsilon_0q_0)))$ queries. Without the geometric
conditions it still terminates under a fixed reflection cap and may
declare failure.
On the geometric event, the work through a cap of $N_{\max}$ remains
$O(1+dN_{\max}^2\log(2DR/(\epsilon_0q_0)))$ even if the exact segment
has more than $N_{\max}$ reflections.
\end{lemma}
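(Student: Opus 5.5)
The plan is to mirror the proof of \Cref{lem:finite-query-billiard-segment}, changing only the flight model (parabolas instead of lines) and the exit-detection step. The procedure is as follows. For trial budgets $b=1,2,4,\ldots$ up to the cap $2N_{\max}$, choose a dyadic precision $q_b$ with $\epsilon_0q_0/(2D^{b+5})<q_b\le\epsilon_0q_0/D^{b+5}$. The hypothesis on $q_J$ then guarantees $q_b\ge q_J$ for every trial. Each trial restarts from the original $(z,v)$ with duration $\ell$ and reuses the same radial variables at contact number $j$.

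At a current interior state with remaining time $t$, the algorithm proceeds in three steps.
\begin{enumerate}
\item Check the height bound along the parabola analytically, with no queries, since $s(t')$ is an explicit quadratic. If the numerical path would reach height $H$, declare failure. On the geometric event this never happens, because the exact segment stays below $H-1$ and the numerical errors will be far smaller than $1$.
\item Query the parabolic endpoint against the bottom and side constraints. By the convexity remark before the lemma, the feasible times for these constraints form an interval. Hence an inside endpoint certifies the whole arc.
\item Otherwise bisect in time to bracket the first exit within time $q_b/(1+B+cH_0)$. Estimate the normal by central differences of the gauge at $S_j p_-$, with step $q_b$ and gauge accuracy $q_b^2$, as in the paragraph following \Cref{lem:weighted-finite-scale-normal}. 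Reflect $v(t_-)$, move the contact to $(1-q_b)p_-$, and continue.
\end{enumerate}
Finiteness on all inputs follows from the budget cap, the bisection depth, and the failure rules for a zero gradient or an overflow.

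The stability analysis comes next. At a true contact $p$, integrate the subgradient modulus from \Cref{lem:weighted-finite-scale-normal} along radial rescalings to get the quadratic upper bound of the form \eqref{eq:gauge-contact-quadratic} on $g(p+w)-1-\langle\nabla g(p),w\rangle$, valid for $\|w\|_2\lesssim q_0/d^{3/2}$. Suppose the starting position, velocity, and time each carry error $e$. Along the perturbed parabola, the position error at local times $\le H_0$ is at most $(1+H_0)e$, since the acceleration is common to both paths. The incoming rate $\langle\nabla g(p),v(t_0)\rangle\ge\alpha/R$ gives an inside/outside bracket at $t_0\pm s$ with $s=CR(1+H_0)(e+q_b)/\alpha$. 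Compared with the straight case, the only new term is the curvature displacement $cs^2/2$, which is absorbed into the quadratic error using the factor $(1+B+cH_0)$ in $D$. Because the feasible set is an interval, the numerical arc cannot exit before $t_0-s$. The separation $\tau$ gives $s<t_0/2$ on noninitial flights.

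The reflection step then maps errors $e\mapsto D(e+q_b)$. This uses $\|v\|\ge\alpha$ for the time error, $Rq_b$ for the radial contraction, and the normal-error bound, which is polynomial in $d,R,L$ and $1/q_0$. Inducting over $j\le b$ contacts keeps every error below $\epsilon_0/D^3$. With the margin $\rho$ at the terminal point and the margin $1$ below the top, a trial with $b<N$ cannot finish early, a trial with $b\ge N$ finishes after exactly $N$ reflections, and the output has error at most $\epsilon$.

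For the query count, each bisection costs $O(\log(2DR/(\epsilon_0q_0)))$ queries, and each normal estimate costs $2d$ gauge searches at the same logarithmic price. A trial of budget $b$ therefore costs $O(1+db^2\log(\cdot))$. Summing over the doubling budgets up to the first successful one, which is at most $2N$, gives the bound, and when $N=0$ only one query is made. For the capped claim with $N>N_{\max}$, stop the induction at the cap as in \Cref{lem:finite-query-billiard-segment}. Only the contacts before the cap and the next bracket are used, so the same per-trial cost holds.

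I expect the main obstacle to be the first-exit bracket under curvature. The argument must show that the perturbed parabola, started from an inexact point and velocity, is inside at $t_0-s$ and outside at $t_0+s$ with $s$ linear in the error. Here the exit is detected through convexity of $t\mapsto f(x+tv_x)-s-tv_s+ct^2/2$ rather than through convexity of $C$ along a line, so the support-plane argument must be redone for a curved path. The top constraint also needs separate analytic handling, since it breaks the interval structure.
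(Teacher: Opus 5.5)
Your handling of the top truncation in step~1 has a gap. The rest of your outline matches the paper's proof: dyadic precision per budget, restarts that reuse the radial marks, the supporting-plane and quadratic-gauge bracket at $t_0\pm s$ with $s=CR(1+H_0)(e+q_b)/\alpha$, the error recursion $e\mapsto D(e+q_b)$, and the doubling-budget query count.

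In step~1 you declare failure whenever the \emph{unreflected} numerical parabola reaches height $H$ at any time in the whole remaining interval $[0,t]$. You justify that this never happens on the geometric event ``because the exact segment stays below $H-1$.'' That hypothesis does not control the curve you are testing. After the next true contact at $t_0<t$, the free parabola no longer follows the exact reflected segment, so its height on $(t_0,t]$ is not bounded by the assumption on the segment. If a reflection could lower the vertical velocity, the unreflected continuation could rise far above the reflected path. Your rule would then declare failure on segments satisfying every geometric condition, and the correctness and query claims would break.

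The claim is true for this geometry, but it needs an argument you did not give. The side walls $\partial D_0\times\mathbb R$ have horizontal normals. At a regular bottom point, the outward normal is a positive multiple of $(\nabla f(x),-1)$. So a reflection with $\langle v,\nu\rangle>0$ can only increase $v_s$. Between contacts both paths have $\dot v_s=-c$, so the reflected path's height is pointwise at least that of the free parabola from the same state. Hence the free parabola from every true state stays below $H-1$ over the remaining true time, and the numerical one stays below $H$.

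Alternatively, keep the treatment local, as the paper does:
\begin{enumerate}
\item Compute the first time $t_*$ at which the quadratic height reaches $H-1/2$, and query at $\min\{t_*,t\}$.
\item Declare failure only if that point is inside while $t_*<t$.
\item Otherwise bisect on $[0,t_*]$, where the top constraint is inactive.
\end{enumerate}
The stability argument then needs only that the numerical path stays below $H-1/2$ until $t_0+u$. This holds because the height is below $H-1$ before the contact, the discrepancy is at most $(1+H_0)e$, and the extra motion is at most $Bu+cu^2/2$, all below $1/4$. The interval property of the bottom and side constraints then forces the query at $t_*$ to return ``outside.'' Either fix closes the gap.
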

\begin{proof}
For a trial budget $b=1,2,4,\ldots$, choose an available dyadic scale $q$
with $\epsilon_0q_0/(2D^{b+5})<q\le\epsilon_0q_0/D^{b+5}$.
All required trials have $b\le2N_{\max}$, so $q\ge q_J$. Restart
each trial from the original phase point and reuse the radial variable
assigned to each contact number.

At a current interior point $\widetilde z$ with velocity $\widetilde v$
and remaining time $t$, compute the first time the quadratic height
reaches $H-1/2$, taking that time to be infinity if it is never reached.
Declare failure if the current height is already at least $H-1/2$.
Let $t_*$ be the smaller of the crossing time and $t$, and query the
parabolic point at time $t_*$. If it is inside and $t_*=t$, return that
point with velocity $\widetilde v-ct e_d$. If it is inside and $t_*<t$,
declare failure. Otherwise bisect the time interval until the inside and
outside times differ by at most $q$. Below $H-1/2$ the top constraint
is inactive, and the interval property above proves that this is the
first exit. Write $t_-$ for the inside time and $z_-$ for its position.

If the reflection budget has been used, restart with twice that budget.
Otherwise estimate the normal at $z_-$ by the gauge differences described
above, using step $q$ and gauge accuracy $q^2$. Declare failure if the
estimated gradient vanishes or a required operation is undefined.
Reflect $\widetilde v-ct_-e_d$ in this normal, replace the position by
$(1-q)z_-$, and subtract $t_-$ from the remaining time. Convexity gives
$(1-q)z_-+qB(0,1)\subseteq C$, so the new point is interior.
Reflection preserves instantaneous speed. Every numerical trial therefore
has speed at most $\|v\|_2+cH_0\le B+cH_0$, giving finite search bounds
even when the geometric conditions fail. Stop at the prescribed final
budget if another reflection is required.

We verify stability. The local subgradient estimate, integrated from a
true contact $p$, gives
\[
 0\le g(p+w)-1-\langle\nabla g(p),w\rangle
 \le CdL\|w\|_2^2+C\sqrt dLq_J\|w\|_2
\]
for sufficiently small $w$. Consider a true flight ending at
$p=p_0+t_0v_0-ct_0^2e_d/2$, with incoming velocity
$v_*=v_0-ct_0e_d$. If the numerical starting position, velocity, and
elapsed time have errors at most $e$, its spatial error at local time
$t_0$ is at most $(1+H_0)e$. Also
$\langle\nabla g(p),v_*\rangle
=\langle\nu(p),v_*\rangle/\langle p,\nu(p)\rangle\ge\alpha/R$.
At nearby times the unreflected true path is
$p\pm u v_*-cu^2e_d/2$. The supporting-plane lower bound after contact
and the quadratic upper bound before it imply that the numerical path
is inside at $t_0-u$ and outside at $t_0+u$, where
$u=CR(1+H_0)(e+q)/\alpha$. The definition of $D$ absorbs all quadratic
terms when $e+q\le D^{-2}$, ensures $u<\tau/2$ for noninitial flights,
and keeps the comparison within the normal-estimation neighborhood.

The upper truncation does not interfere. Before the true contact the
exact height is at most $H-1$, and the numerical discrepancy is at most
$(1+H_0)e$. Continuing for time $u$ moves by at most $Bu+cu^2/2$.
Both bounds are less than $1/4$ under the prescribed precision.
Therefore height $H-1/2$ is not reached before $t_0+u$. The interval
property shows that the bracket locates the first perturbed exit.
For the first flight the initial phase point is exact, so no lower
bound on its duration is needed.

Time bisection adds position error at most $(B+cH_0)q$ and velocity
error at most $cq$. The contraction adds at most $Rq$ to the position
error. The normal estimate and reflection formula consequently bound
each new position, velocity, and elapsed-time error by $D(e+q)$.
Starting from zero, after $j\le b$ contacts these errors are at most
$2D^jq$, which meets all preceding smallness conditions. The terminal
distance $\rho$ makes the last inside decision correct. The last flight
has duration at least $\rho/B$, so each contact bracket lies before the
numerical termination time. A trial with budget below $N$ cannot finish
early, and a trial with budget at least $N$ returns the required endpoint.

Each time bisection costs $O(\log(2+H_0/q))$ queries; each normal
estimate costs $O(d\log(2+R/q^2))$. A trial of budget $b$ therefore
costs $O(1+db^2\log(2DR/(\epsilon_0q_0)))$. Sum the geometric series
over the doubling budgets, whose first successful value is at most
$2N$. If $N=0$, the exact and computed curves agree and stay below
$H-1$, so the first endpoint query succeeds.
The same stability induction can be stopped at any reflection budget.
To certify that another reflection is required, it uses the first-exit
bracket at the next contact, independently of all subsequent contacts.
Hence the summed cost of the capped trials has the stated bound even
when the exact segment exceeds the cap.
\end{proof}

\subsubsection{The sampling theorem and sample sequences}

We now choose the numerical parameters and combine the preceding results.
The same truncated body and exact transition kernel are used for the
whole sample sequence. This permits the matrix concentration argument in
\Cref{subsec:upper-moments} to be applied to its logconcave position marginal.

The covariance bound determines the running time needed for mixing. The second
moment determines a spatial cutoff and affects the query count only
through logarithms.
\begin{theorem}[Logconcave sampling from a warm start]
\label{thm:logconcave-warm-membership-sampling}
Let $\mu(dx)\propto e^{-V(x)}dx$ be full-dimensional on $\mathbb R^n$,
where $V$ is proper, closed, convex, and accessible by evaluation.
Suppose $x_0\in\mathbb R^n$, $R,\kappa>0$, and $c_*\ge1$ are known and satisfy
\[
 \begin{aligned}
 B(x_0,1)&\subseteq\{V\le\inf V+c_*n\},\\
 \operatorname{Cov}(\mu)&\preceq\kappa I_n,\\
 \mathbb E_\mu\|X-x_0\|_2^2&\le R^2.
 \end{aligned}
\]
Given $X_0\sim\nu$ with $\nu\le M\mu$, where $M\ge1$, an integer $m\ge1$,
and $0<\varepsilon<1/2$, an algorithm returns
$\widehat X_1,\ldots,\widehat X_m\in\mathbb R^n$ using at most
\[
 O\!\left(c_*n^2\sqrt{\max\{1,\kappa\}}\,(M+m)
   \operatorname{polylog}\frac{c_*n(1+R)Mm}{\varepsilon}\right)
\]
potential evaluations in expectation. For $m=1$,
$\|\mathcal L(\widehat X_1)-\mu\|_{\TV}\le\varepsilon$.
The initial sample is supplied
to the algorithm and is not included in this query count.

More precisely, the construction specifies a bounded truncated epigraph
$C\subset\mathbb R^{n+1}$, a measure
$\pi(dz)\propto e^{-\sqrt n s}\mathbf1_C(z)dz$, where $z=(x,s)$,
and a Markov kernel $\mathcal K$ on positions and velocities. Writing
$\widehat\pi=\pi\otimes N(0,I_{n+1})$, we have
$\widehat\pi\mathcal K=\widehat\pi$. For $f\in L^2(\widehat\pi)$ with
$\int f\,d\widehat\pi=0$,
$\|\mathcal Kf\|_{L^2(\widehat\pi)}\le\|f\|_{L^2(\widehat\pi)}/4$.
For the exact epigraph law $\widetilde\mu$, we have
$\pi=\widetilde\mu(\cdot\mid C)$ and
$\widetilde\mu(C^c)\le\varepsilon/4$.
The marginal $\pi(\cdot\times\mathbb R)$ is logconcave and satisfies
$\|\pi(\cdot\times\mathbb R)-\mu\|_{\TV}\le\varepsilon/4$.
If $((X_i,S_i),V_i)_{i=1}^m$ is a Markov chain with kernel $\mathcal K$
and initial law $\widehat\pi$, then
\[
 \bigl\|\mathcal L(\widehat X_1,\ldots,\widehat X_m)
       -\mathcal L(X_1,\ldots,X_m)\bigr\|_{\TV}\le\varepsilon/2.
\]
For any prescribed $p\in(0,1/2)$, one may also require
$\widetilde\mu(C^c)\le p$ by running the construction with
$\min\{\varepsilon,p\}$ in place of $\varepsilon$.
\end{theorem}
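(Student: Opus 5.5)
We assemble the construction in the same way as in the proof of \Cref{thm:constant-warm-membership-sampling}, now on the truncated epigraph. We choose $\delta=\min\{\varepsilon,p\}/(8M)$ and build $C$ and $\pi$ from \Cref{lem:truncated-epigraph-parameters}. Its height cutoff $H$ must also exceed the threshold of \Cref{lem:epigraph-top-avoidance}, with $M_0=2M$, $J=m$, and $\eta$ a small polynomial in the error parameters; this keeps $H=O(c_*\sqrt n+\log(\cdot)/\sqrt n)$. That lemma gives $\widetilde\mu(C^c)\le\delta$ and $B(z_0,1)\subseteq C\subseteq B(z_0,R_C)$. It also gives $\operatorname{Cov}(\pi)\preceq C\max\{1,\kappa\}I_d$, so \eqref{eq:kls} yields $P=C\max\{1,\kappa\}\log(2n)$. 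The position marginal of $\pi$ is $\mu$ conditioned on an event of mass at most $\delta$, so it is within $\delta\le\varepsilon/4$ of $\mu$ in total variation. Logconcavity of this marginal follows from the marginal theorem, since $\pi$ is logconcave on $\mathbb R^{n+1}$. We lift the supplied $X_0$ by an independent exponential height; by the remark after \Cref{lem:truncated-epigraph-parameters}, after conditioning on $C$ it is $2M$-warm, at a total variation cost of $M\delta$.

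Next we define $\mathcal K$ as the exact reflected constant-force process run for time $T_1=C\sqrt P$ and followed by one update $J_h$. The parameters are $\gamma=P^{-1/2}$, $\lambda=d^{3/2}$, and $h=W^{-O(1)}$ with $h\le\sqrt P/\lambda$. \Cref{prop:linear-potential-relaxation} gives $\widehat\pi\mathcal K=\widehat\pi$ and $\|\mathcal Kf\|_2\le Ce^{-cT_1/\sqrt P}\|f\|_2\le\|f\|_2/4$ once the constant in $T_1$ is large. To reach the chain we first run a burn-in of length $C\sqrt P\log(2M/\varepsilon)$, which brings the law within $\varepsilon/8$ of $\widehat\pi$ by the same proposition. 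We then output the positions after each of $m$ further applications of $\mathcal K$. The total exact running time is $T=O(\sqrt P(\log(M/\varepsilon)+m))$, and this is the source of the factor $M+m$ up to polylogarithms.

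The main step is implementing the exact path with evaluations and bounding the implementation error for the whole sequence. Following the smooth-body proof, we couple the implemented path with the exact one interval by interval, as long as their phase points agree. On each Poisson interval we use \Cref{lem:finite-query-parabolic-segment}, with $D_0=\{\|x\|_2\le R_x\}$ and $f=V/\sqrt n$, so each membership query to $C$ costs one evaluation of $V$. We replace $R_h$ by $\widehat R_h$, and \Cref{lem:linear-position-update} bounds the mean disagreement by $2M(A_b/Z_C)(\sqrt h+h\sqrt n)$ per update. \Cref{lem:weighted-short-chords}, with $A=A_0$ and $r=1$, makes this polynomially small. We also discard several geometric exceptions: grazing contacts (Gaussian normal flux), short parabolic flights and high speeds (\Cref{lem:short-parabolic-flights}), endpoints near $\partial C$ (the analogue of \Cref{lem:update-boundary-margin}), approaching the top (\Cref{lem:epigraph-top-avoidance}), and failed normal events (\Cref{lem:weighted-finite-scale-normal}). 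Each has probability at most a polynomial in $W^{-1}$ times $M(1+\lambda T)$ by warm domination of the full path law, so choosing $\alpha,\tau,\rho,h,\delta$ and $L$ as fixed powers of $W=C_*+n+R_C+M+m+T+\varepsilon^{-1}$ makes the total at most $\varepsilon/8$. This is the step I expect to be the main obstacle. The difficulty is making the coupling error uniform across the whole joint sequence $(\widehat X_1,\dots,\widehat X_m)$ rather than only the terminal law. We will handle it by comparing the implemented path with a single exact path started from the warm lift and summing first-disagreement probabilities over all $O(\lambda T+m)$ updates. Adding the burn-in error then gives the $\varepsilon/2$ joint bound against the stationary chain.

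For the cost, the interval-length bound $\mathbb E\sum\ell_i^2\le2T/\lambda$ together with the truncated collision second moment (the analogue of \Cref{lem:poisson-interval-lengths}, using $\alpha_b\le C(d+A_0)^2\sqrt d$ and $\beta_b\le C(d+A_0)$ with $A_0=O(c_*n)$) gives expected work $O(Md^{2}c_*T\,\mathrm{polylog})$ before exceptions. Exceptional realizations are absorbed by the deterministic work cap. Substituting $T$ gives the stated bound, and the case $m=1$ gives the single-sample total variation claim.
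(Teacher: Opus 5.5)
Your construction follows the paper's architecture closely: the truncated epigraph, $\mathcal K=\mathcal P_{t_*}J_h$, a burn-in followed by $m$ recorded applications, and interval-by-interval coupling with \Cref{lem:finite-query-parabolic-segment} and the one-query update. You also charge exceptions through weighted flux and warm domination, and the total-variation part is sound. The gap is in the cost accounting.

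You bound the expected work by $O(Mc_*d^2T\operatorname{polylog})$ using $M$-warm domination of the entire path law, and then substitute $T=O(\sqrt P(\log(M/\varepsilon)+m))$. That gives $M\sqrt P(\log(M/\varepsilon)+m)$, which contains the product $Mm$ rather than the additive factor $M+m$ in the statement. The running time is therefore not the source of $M+m$.

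The missing idea is to pay for warmness only during the burn-in:
\begin{enumerate}
\item Run the preliminary intervals to accuracy $\eta=W^{-O(1)}$, which costs only $O(\sqrt P\log W)$ time, and at their end couple the exact state with a $\widehat\pi$-distributed state.
\item On the coupling event, the $m$ recorded intervals coincide with a stationary trajectory. The truncated reflection-moment bound then applies with $M=1$ and gives $O(c_*n^2mt_*\operatorname{polylog}W)$ work. The burn-in costs only $O(Mc_*n^2t_*\operatorname{polylog}W)$.
\item The coupling failure, like every geometric exception other than reflection-cap overflow, must have probability $O(\eta)$, so that its work can be charged to the deterministic cap through $\eta Q_*\le1$.
\end{enumerate}
An accuracy of order $\varepsilon$, as in your burn-in and your exception budget of $\varepsilon/8$, does not suffice for the last step. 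Overflow of the reflection cap is the exception: its cost is already inside the truncated second moment.

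Two parameter choices also need repair.

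First, take $\lambda=d^{3/2}$ together with the bound $\alpha_b\le C(d+A_0)^2\sqrt d=O(c_*^2n^{5/2})$ that you quote. The short-flight contribution $CM\alpha_bT\log(2b_*)/\lambda$ to $\mathbb E\sum_i\min\{N_i,b_*\}^2$ is then $O(Mc_*^2nT\log(2b_*))$. Your work bound is therefore quadratic, not linear, in $c_*$. The paper takes $\lambda=Cc_*n^{3/2}\log^2W$ so that this term is balanced against $\beta_bT=O(c_*nT)$.

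Second, the collision-count tail $\mathbb P(N_\ell\ge j)$ applies \Cref{lem:short-parabolic-flights} to gaps of length up to $2\ell/(j-1)\le2\ell$. That lemma requires $2\ell\sqrt n(B+1)\le c_0$, where $\|b\|_2=\sqrt n$. Without a cap on the Poisson interval lengths this fails on long intervals. Even with a cap $\ell_*=C\log W/\lambda$, your $\lambda=d^{3/2}$ violates it once $\log W\gtrsim\sqrt n$. The paper handles this in three steps:
\begin{enumerate}
\item It discards partitions that contain a piece longer than $\ell_*$.
\item It fixes the speed bound $B=C\sqrt{n+\log W}+\sqrt n\ell_*$.
\item It chooses the constants in $\lambda$ so that $2\ell_*\sqrt n(B+1)\le c_0$.
\end{enumerate}
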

\begin{proof}
Translate $x_0$ to zero and subtract $V(x_0)$, using one evaluation.
Put $\bar\kappa=\max\{1,\min\{\kappa,R^2\}\}$ and
$W=C_*+c_*+n+R+M+m+\varepsilon^{-1}$, where $C_*$ is a sufficiently
large universal constant. Set $\eta=W^{-200}$ and $\delta_C=W^{-1000}$. We use
\Cref{lem:truncated-epigraph-parameters} with error $\delta_C$.
The height $H$ will also meet \Cref{lem:epigraph-top-avoidance}.
The covariance bound for the truncated law is independent of $H$,
so the running time and update rate can be fixed first.

Let $P=C\bar\kappa\log(2n)\ge1$ be a Poincar\'e bound for $\pi$.
Choose $\lambda=Cc_*n^{3/2}\log^2W$, set $\gamma=P^{-1/2}$ and
$a=1-\gamma/\lambda$, and take $h=W^{-500}$.
These parameters satisfy $h\le\sqrt P/\lambda$.
\Cref{prop:linear-potential-relaxation} gives a fixed time
$t_*=C\sqrt P$ such that $\|\mathcal P_{t_*}f\|_2\le\|f\|_2/4$
for every mean-zero $f$. Define $\mathcal K=\mathcal P_{t_*}J_h$,
the time interval followed by one additional invariant update $J_h$.
Composition with $J_h$ preserves the contraction.

Lift the supplied point $X_0$ to height $(V(X_0)+F)/\sqrt n$ with independent
$F\sim\operatorname{Exp}(1)$ and add independent Gaussian velocity.
If the lifted point is outside $C$, return $m$ copies of $x_0$.
This event has probability at most $M\delta_C$; otherwise the initial
phase law is $2M$-warm. Run $k_0=\lceil C\log W\rceil$ preliminary
intervals, choosing $C$ so that
$\tfrac12\sqrt{2M-1}\,4^{-k_0}\le\eta$, and then record the positions
after the next $m$ intervals. The total time is $T=(k_0+m)t_*$.
The phase state after the preliminary intervals is within $\eta$ of
$\widehat\pi$. Coupling it to a stationary state and then using the same
kernel shows that the entire recorded exact sequence is within $\eta$
of the stationary sequence in the theorem.

Choose $H$ as the maximum of the bound in
\Cref{lem:truncated-epigraph-parameters} and
\[
 1+\frac{2n\log2+
  2\log(CM(1+\lambda T+k_0+m+\sqrt nT)/\eta)}{\sqrt n}.
\]
Then \Cref{lem:epigraph-top-avoidance} bounds visits to height $H-1$
by $C\eta$, including all deterministic endpoint updates. This choice
has $H=O(c_*\sqrt n+\log W/\sqrt n)$. The radius $R_C$ is a fixed
polynomial in $W$, and the explicit bounds of the truncation lemma give
$A_0=O(c_*n)$ and $L_b=O(c_*n+\log W)$. All further parameters use
these known bounds; no optimization over $C$ is required.

Partition the time intervals at their Poisson updates and deterministic
endpoints. Discard a partition with more than
$s_* =\lceil C(\lambda T+k_0+m+\log W)\rceil$ pieces or a piece longer
than $\ell_*=C\log W/\lambda$. Poisson and exponential tails make
these exceptions have probability at most $\eta$. The exponential
bound includes the first waiting time in each deterministic interval.
Choose $b_*$ to be the smallest power of two at least
$C\sqrt{Mc_*nT/\varepsilon}+2$. On the remaining partitions, use speed
bound $B=C\sqrt{n+\log W}+\sqrt n\ell_*$. Warm domination and Gaussian
tails bound excessive segment-start speeds by $\eta$. Acceleration
changes speed by at most $\sqrt n\ell_*$, and reflection preserves it.
The constants in $\lambda$ ensure $2\ell_*\sqrt n(B+1)\le c_0$,
so \Cref{lem:short-parabolic-flights} applies to the gaps below.

Let $N_\ell$ count reflections in an interval of length $\ell\le\ell_*$.
On the event of bounded speeds, $N_\ell\ge j$ forces at least
$(j-1)/2$ of the first $j-1$ gaps to have length at most
$2\ell/(j-1)$. The weighted short-flight estimate and Markov's
inequality therefore give, for every $j\ge2$,
\[
 \mathbb P(N_\ell\ge j,\ \text{bounded speeds})
 \le\frac{CMc_*^2n^{5/2}\ell^2}{(j-1)^2}.
\]
The additional quadratic term of \Cref{lem:short-parabolic-flights}
is absorbed by its linear term at these scales. Weighted flux also
gives $\mathbb P(N_\ell\ge1)\le CMc_*n\ell$. Summing the tail yields
\begin{equation}\label{eq:weighted-segment-moment}
 \mathbb E\!\left[\min\{N_\ell,b_*\}^2
              \mathbf1_{\{\text{bounded speeds}\}}\right]
 \le CM\bigl(c_*n\ell+c_*^2n^{5/2}\ell^2\log(2b_*)\bigr).
\end{equation}
Conditional on the partition, each exact interval starts from an
$M$-warm law, up to the harmless factor two. Adding deterministic cuts
can only decrease the sum of squared lengths. Thus
$\mathbb E\sum_i\ell_i^2\le2T/\lambda$, as in
\Cref{lem:poisson-interval-lengths}. The probability that any interval
exceeds the reflection cap is at most $\varepsilon/8$, apart from
the speed exception, and the total truncated second moment is
$O(Mc_*nT\log(2b_*))$.

Set the normal-speed cutoff to $\alpha=W^{-120}$, the minimum flight
time to $\tau=W^{-240}$, and the required endpoint distance to
$\rho=W^{-240}$. The respective exceptional probabilities for grazing
and short flights are at most $CMc_*nT\alpha^2$ and
$CMc_*^2n^{5/2}T\tau$, apart from the speed exception. For endpoints,
translate the contained-ball center to zero. Dilation gives
\[
 \pi\{\operatorname{dist}(Z,\partial C)<\rho\}
 \le1-(1-\rho)^{n+1}e^{-A_0\rho}
 \le(n+1+A_0)\rho.
\]
A union bound over the endpoints therefore costs at most
$CMs_*c_*n\rho$. All these bounds are $O(\eta)$.

Apply the local simulation to $C-z_0$, whose upper height is
$H-(z_0)_d$. For its normal estimates use the known initial scale
$q_0=[4(n+2+L_b)]^{-1}$ and $L=W^{1000}$. Set the duration bound
$H_0=\max\{1,T\}$ and the requested position and velocity error
$\delta_{\rm num}=W^{-600}$. Define $D$ by
\Cref{lem:finite-query-parabolic-segment}, with radius bound $R_C$,
and put $\delta_0=\min\{\delta_{\rm num},D^{-2}\}$.
Choose enough dyadic scales that
$q_{J_*}\le\delta_0q_0/D^{2b_*+10}$. Since all parameters are bounded
by fixed powers of $W$, one can take $J_*\le Cb_*\log W$.
Weighted boundary flux and \Cref{lem:weighted-finite-scale-normal}
bound the probability of a failed normal event by
$CMc_*nT(J_*+1)/(q_0L)\le C\eta$.

Simulate each interval with reflection budgets $1,2,4,\ldots,b_*$,
restarting the interval with twice the budget whenever a smaller
budget is exhausted. If the largest budget is exhausted, or a speed
test, height test, or required operation fails, stop and fill all
remaining output slots with $x_0$. At each prescribed update use
$Q_a\widehat R_h$ instead of $Q_aR_h$.
For $G\sim N(0,I_d)$, let
$e_h(z)=\mathbb P(z-hb+\sqrt{2h}G\notin C)$.
Gaussian translation gives
$|e_h(z)-e_h(z')|\le C\|z-z'\|_2/\sqrt h$. The first-proposal coupling
and the Gaussian smoothing estimates bound the endpoint-update error by
\(e_h(z)+C\delta_{\rm num}/\sqrt h +C\delta_{\rm num}/\sqrt{1-a^2}.\)
Under the exact warm law, the mean first term is at most
$CMc_*n(\sqrt h+h\sqrt n)$ by \Cref{lem:linear-position-update}.
Summing over at most $s_*$ updates gives $C\eta$ with our choices.
Couple the updates maximally while the phase states agree at interval
starts. The local simulation controls their endpoints on the stated
geometric events. All exceptional probabilities are $O(\eta)$ except
for the reflection cap, which costs at most $\varepsilon/8$.
Including the initial truncation event, the numerical sequence is within
$\varepsilon/4$ of the exact recorded sequence. The earlier stationary
coupling proves the joint-distribution assertion. The stationary position
marginal $\pi(\cdot\times\mathbb R)$ is within $\delta_C$ of $\mu$ in total
variation. Together with the joint comparison, this proves the
single-sample guarantee.

It remains to count queries, including the exceptional cases. On the
geometric event, interval $i$ costs
$Cn\min\{N_i,b_*\}^2\log W+O(1)$ evaluations, plus one per position
update. Apply \eqref{eq:weighted-segment-moment} to the preliminary
intervals to obtain expected work
$O(Mc_*n^2k_0t_*\operatorname{polylog}W)$.
At their endpoint, couple the exact state to $\widehat\pi$, with
disagreement probability at most $\eta$. The following exact intervals
then agree with a stationary trajectory except on that event. Applying
the same estimate with $M=1$ bounds their expected work while the
geometric conditions and coupling hold by
$O(c_*n^2mt_*\operatorname{polylog}W)$.

On every realization, the caps give at most
$Q_*\le Cn s_*b_*^2\log W\le W^{25}$ evaluations, after increasing
$C_*$. Geometric exceptions, failed endpoint couplings, and the failed
stationary coupling have total probability $O(\eta)$, so their additional
expected work is at most $C\eta Q_*\le1$. Exceeding the reflection cap
is already included in the truncated second moment. Discarded partitions
cost no simulation queries, and the initial lift costs one evaluation.
Substitute $t_*=O(\sqrt{\bar\kappa\log(2n)})$ and $k_0=O(\log W)$
to obtain the claimed bound. Undo the initial translation when returning
the first $n$ coordinates.
\end{proof}

The truncated position marginal is logconcave by the marginal theorem.
Its mean and covariance differ from those of $\mu$ by the bounds in
\Cref{lem:joint-conditioning-moments}, applied to the joint truncation
event. This is the moment comparison used for the recorded sample batches.

For a Gaussian restricted to a convex body, the general theorem requires
only membership queries. The known quadratic part is evaluated directly.
\begin{corollary}[A Gaussian restricted to a convex body]
\label{cor:gaussian-convex-body-sampling}
Suppose $B(0,r)\subseteq K\subseteq B(0,R)$ and $t>0$, and let
$\mu_t(dx)\propto e^{-t\|x\|_2^2/2}\mathbf1_K(x)dx$.
For $M\ge1$ and $0<\varepsilon<1/2$, given an initial sample from
$\nu\le M\mu_t$, an algorithm uses
\[
 O\!\left(Mn^2\max\{1,(tr^2)^{-1/2}\}
       \operatorname{polylog}
       \frac{n(1+R/r)M(1+tr^2)}{\varepsilon}\right)
\]
membership queries in expectation and returns $\widehat X$ with
$\|\mathcal L(\widehat X)-\mu_t\|_{\TV}\le\varepsilon$.
The guarantee for sample sequences in
\Cref{thm:logconcave-warm-membership-sampling} also applies.
If $t\ge2n/r^2$, rejecting draws from $N(0,t^{-1}I_n)$ outside $K$
gives an exact sample with expected membership cost at most two.
\end{corollary}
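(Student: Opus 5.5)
The plan is to reduce the corollary to \Cref{thm:logconcave-warm-membership-sampling}, taking the potential $V(x)=t\|x\|_2^2/2+\iota_K(x)$. Evaluating $V$ costs one membership query, since the quadratic part is known. The exact rejection claim is handled separately.

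\emph{Setup after rescaling.} First rescale by $\rho=\min\{r,t^{-1/2}\}$, which makes the known ball the unit ball. Then I check the three hypotheses.
\begin{enumerate}
\item \emph{Level set.} On $B(0,\rho)$ we have $V\le t\rho^2/2\le 1/2$, and $\inf V=0$. So the rescaled unit ball lies in $\{V\le\inf V+c_*n\}$ with $c_*=1$.
\item \emph{Covariance.} Brascamp--Lieb, \eqref{eq:strong-logconcave-poincare}, gives $\Cov(\mu_t)\preceq t^{-1}I$. We also have $\Cov(\mu_t)\preceq R^2I$. In rescaled units this gives $\kappa\le\min\{t^{-1},R^2\}/\rho^2$.
\item \emph{Second moment.} $\E\|X\|_2^2\le R^2$, so the rescaled second-moment bound is $R/\rho$.
\end{enumerate}

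\emph{Matching the cost.} The query bound then has factor $\sqrt{\max\{1,\kappa\}}$. If $tr^2\ge1$, then $\rho=t^{-1/2}$ and $\kappa\le1$. If $tr^2<1$, then $\rho=r$ and $\kappa\le (tr^2)^{-1}$. In both cases $\sqrt{\max\{1,\kappa\}}\le\max\{1,(tr^2)^{-1/2}\}$, as claimed.

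For the polylog factor, $R/\rho\le (R/r)\max\{1,\sqrt{tr^2}\}$. This is polynomial in $(1+R/r)(1+tr^2)$, so it matches the stated logarithmic argument. Taking $m=1$ in the theorem gives the total variation guarantee. The sample-sequence guarantee transfers verbatim, because we only substituted the potential. Rescaling does not change query counts.

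\emph{Rejection sampling for $t\ge 2n/r^2$.} Draw $G\sim N(0,t^{-1}I_n)$ and accept if $G\in K$. The accepted law is exactly $\mu_t$. Acceptance is at least $\PP\{\|G\|_2\le r\}$. Since $\E\|G\|_2^2=n/t\le r^2/2$, Markov gives acceptance probability at least $1/2$. The expected number of proposals, hence queries, is therefore at most two.

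The main obstacle is bookkeeping: checking that the rescaling keeps the level-set height constant and that every parameter entering the polylog is polynomially bounded. None of the steps requires new analysis.
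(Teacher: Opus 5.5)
Your proposal is correct and follows the paper's proof. The paper rescales so the known ball becomes the unit ball inside a level set of height $O(n)$, bounds the covariance by Brascamp--Lieb, applies \Cref{thm:logconcave-warm-membership-sampling} with one membership query per potential evaluation, and handles $t\ge2n/r^2$ with the same Markov-inequality rejection argument. The only difference is cosmetic: the paper rescales by $\min\{r,\sqrt{n/t}\}$, giving level-set height $n/2$, rather than your $\min\{r,t^{-1/2}\}$. This gives the same $\max\{1,(tr^2)^{-1/2}\}$ factor and changes only the argument of the logarithm.
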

\begin{proof}
Brascamp--Lieb gives $\operatorname{Cov}(\mu_t)\preceq t^{-1}I_n$.
Set $r_* =\min\{r,\sqrt{n/t}\}$ and use coordinates $x=r_*y$.
We have $B_2^n\subseteq r_*^{-1}K$ and $tr_*^2\|y\|_2^2/2\le n/2$
for $y\in B_2^n$. For $X\sim\mu_t$,
$\operatorname{Cov}(X/r_*)\preceq(tr_*^2)^{-1}I_n$ and
$\mathbb E\|X/r_*\|_2^2\le(R/r_*)^2$.
Each potential evaluation uses one membership query. 
Apply the theorem and use
$\max\{1,(tr_*^2)^{-1/2}\}=\max\{1,(tr^2)^{-1/2}\}$;
$R/r_*=(R/r)\max\{1,\sqrt{tr^2/n}\}$ enters only logarithmically.
Finally, if $t\ge2n/r^2$, a Gaussian proposal has mean squared norm
$n/t\le r^2/2$, so it belongs to $K$ with probability at least $1/2$.
Conditioning by rejection gives the exact desired law.
\end{proof}

\end{document}